\documentclass[smallcondensed,envcountsect,envcountsame]{svjour3}
\usepackage{includes}
\pagestyle{headings}
\RequirePackage{fix-cm}

\renewcommand{\vec}{\mathaccent"017E }
\DeclareMathSymbol{\Omega}{\mathord}{operators}{"0A}
\DeclareMathSymbol{\Gamma}{\mathord}{operators}{"00}
\DeclareMathSymbol{\Sigma}{\mathord}{operators}{"06}
\DeclareMathSymbol{\Delta}{\mathord}{operators}{"01}

\smartqed
\spnewtheorem{notation}[theorem]{Notation}{\bfseries}{\rmfamily}
\spnewtheorem{fact}[theorem]{Fact}{\bfseries}{\rmfamily}

\begin{document}
\journalname{Journal of Automated Reasoning}
\date{Received: n/a}
\title{A Formal \C{} Memory Model for Separation Logic}
\author{Robbert Krebbers}
\institute{Robbert Krebbers \at
	ICIS, Radboud University Nijmegen, The Netherlands\\
	Aarhus University, Denmark \\
	\email{mail@robbertkrebbers.nl}}

\maketitle{}

\begin{abstract}
\parfillskip=2em plus 1fill
The core of a formal semantics of an imperative programming language is a
memory model that describes the behavior of operations on the memory.
Defining a memory model that matches the description of \C{} in the
\Celeven{} standard is challenging because \C{} allows both \emph{high-level}
(by means of typed expressions) and \emph{low-level} (by means of bit
manipulation) memory accesses.
The \Celeven{} standard has restricted the interaction between these two levels
to make more effective compiler optimizations possible, on the expense of
making the memory model complicated.

We describe a formal memory model of the (non-concurrent part of the)
\Celeven{} standard that incorporates these restrictions, and at the same time
describes low-level memory operations.
This formal memory model includes a rich permission model to make it
usable in separation logic and supports reasoning about program
transformations.
The memory model and essential properties of it have been fully
formalized using the \Coq{} proof assistant.

\keywords{
ISO \Celeven{} Standard \and
\C{} Verification \and
Memory Models \and
Separation Logic \and
Interactive Theorem Proving \and
\Coq{}}
\end{abstract}

\section{Introduction}
\label{section:introduction}

A memory model is the core of a semantics of an imperative programming
language.
It models the memory states and describes the behavior of memory operations.
The main operations described by a \C{} memory model are:

\begin{itemize}
\item Reading a value at a given address.
\item Storing a value at a given address.
\item Allocating a new object to hold a local variable or storage obtained via
	\lstinline|malloc|.
\item Deallocating a previously allocated object.
\end{itemize}

Formalizing the \Celeven{} memory model in a faithful way is challenging
because \C{} features both \emph{low-level} and \emph{high-level} data access.
Low-level data access involves unstructured and untyped byte representations
whereas high-level data access involves typed abstract values such as arrays,
structs and unions.

This duality makes the memory model of \C{} more complicated than the memory
model of nearly any other programming language.
For example, more mathematically oriented languages such as \Java{}
and \ML{} feature only high-level data access, in which case the memory can
be modeled in a relatively simple and structured way, whereas assembly languages
feature only low-level data access, in which case the memory can be modeled
as an array of bits.

The situation becomes more complicated as the \Celeven{} standard allows
compilers to perform optimizations based on a high-level view of
data access that are inconsistent with the traditional low-level view
of data access.
This complication has lead to numerous ambiguities in the standard text
related to aliasing, uninitialized memory, end-of-array pointers and
type-punning that cause problems for \C{} code when compiled with widely used
compilers.
See for example the message~\cite{mac:01} on the standard committee's mailing
list, Defect Reports \#236, \#260, and \#451~\cite{iso:09}, and the various
examples in this paper.

\paragraph{Contribution.}
This paper describes the \CHtwoO{} memory model, which is part of the
the \CHtwoO{} project~\cite{kre:wie:11,kre:wie:13,%
kre:13:cpp,kre:14:popl,kre:ler:wie:14,kre:14:vstte,kre:wie:15,kre:15:phd}.
\CHtwoO{} provides an operational, executable and axiomatic
semantics in \Coq{} for a large part of the non-concurrent fragment of \C{},
based on the official description of \C{} as given by the \Celeven{}
standard~\cite{iso:12}.

The key features of the \CHtwoO{} memory model are as follows:

\begin{itemize}
\item \textbf{Close to \Celeven{}.}
	\CHtwoO{} is faithful to the \Celeven{} standard in order to be
	compiler independent.
	When one proves something about a given program with respect to
	\CHtwoO, it should behave that way with \emph{any} \Celeven{}
	compliant compiler (possibly restricted to certain implementation defined
	choices).
\item \textbf{Static type system.}
	Given that \C{} is a statically typed language, \CHtwoO{} does not only
	capture the dynamic semantics of \Celeven{} but also its type system.
	We have established properties such as	type preservation of the memory
	operations.
\item \textbf{Proof infrastructure.}
	All parts of the \CHtwoO{} memory model and semantics have been
	formalized in \Coq{} (without axioms).
	This is essential for its application to program verification in
	proof assistants.
	Also, considering the significant size of \CHtwoO{} and its memory model, proving
	metatheoretical properties of the language would have been intractable
	without the support of a proof assistant.

	Despite our choice of using \Coq, we believe that nearly all parts of
	\CHtwoO{} could be formalized in any proof assistant based on higher-order
	logic.
\item \textbf{Executable.}
	To obtain more confidence in the accuracy of \CHtwoO{} with respect
	to \Celeven{}, the \CHtwoO{} memory model is executable.
	An executable memory model allows us to test the \CHtwoO{} semantics on
	example programs and to	compare the	behavior with that of widely used
	compilers~\cite{kre:wie:15,kre:15:phd}.
\item \textbf{Separation logic.}
	In order to reason about concrete \C{} programs, one needs a program logic.
	To that end, the \CHtwoO{} memory model incorporates a complex permission
	model suitable for separation logic.
	This permission system, as well as the memory model itself, forms a
	separation algebra.
\item \textbf{Memory refinements.}
	\CHtwoO{} has an expressive notion of memory refinements that relates memory
	states.
	All memory operations are proven invariant under this notion.
	Memory refinements form a general way to validate many common-sense
	properties of the memory model in a formal way.
	They also open the door to reasoning about program transformations, which
	is useful if one were to use the memory model as part of a verified
	compiler front-end.
\end{itemize}

This paper is an extended version of previously published conference papers at
CPP~\cite{kre:13:cpp} and VSTTE~\cite{kre:14:vstte}.
In the time following these two publications, the memory model has been extended
significantly and been integrated into an operational, executable and
axiomatic semantics~\cite{kre:wie:15,kre:15:phd}.
The memory model now supports more features, various improvements to the
definitions have been carried out, and more properties have been formally
proven as part of the \Coq{} development.

Parts of this paper also appear in the author's PhD thesis~\cite{kre:15:phd},
which describes the entire \CHtwoO{} project including its operational,
executable and axiomatic semantics, and metatheoretical results about these.

\paragraph{Problem.}
The \Celeven{} standard gives compilers a lot of freedom in what behaviors a
program may have~\cite[3.4]{iso:12}.
It uses the following notions of under-specification:

\begin{itemize}
\item \emph{Unspecified behavior}:
  two or more behaviors are allowed.
  For example: the execution order of expressions.
  The choice may vary for each use of the construct.
\item \emph{Implementation defined behavior}:
  like unspecified behavior, but the compiler has to document its choice.
  For example: size and endianness of integers.
\item \emph{Undefined behavior:}
  the standard imposes no requirements at all,
  the program is even allowed to crash.
  For example: dereferencing a \lstinline|NULL| pointer, or signed integer
  overflow.
\end{itemize}

\noindent
Under-specification is used extensively to make \C{} portable, and to allow
compilers to generate fast code.
For example, when dereferencing a pointer, no code has to be generated to check
whether the pointer is valid or not.
If the pointer is invalid (\lstinline|NULL| or a dangling pointer), the
compiled program may do something arbitrary instead of having to exit with a
\lstinline|NullPointerException| as in \Java.
Since the \CHtwoO{} semantics intends to be a formal version of the \Celeven{}
standard, it has to capture the behavior of \emph{any} \C{} compiler, and thus
has to take \emph{all} under-specification seriously (even if that makes the
semantics complex).

Modeling under-specification in a formal semantics is folklore:
unspecified behavior corresponds to non-determinism,
implementation defined behavior corresponds to parametrization, and
undefined behavior corresponds to a program having no semantics.
However, the extensive amount of underspecification in the \Celeven{}
standard~\cite[Annex J]{iso:12}, and especially that with respect to the memory
model, makes the situation challenging.
We will give a motivating example of subtle underspecification in the
introduction of this paper.
Section~\ref{section:challenges} provides a more extensive overview.

\paragraph{Motivating example.}
A drawback for efficient compilation of programming languages with pointers
is \emph{aliasing}.
Aliasing describes a situation in which multiple pointers refer to the same
object.
In the following example the pointers \lstinline|p| and \lstinline|q| are
said to be \emph{aliased}.

\begin{lstlisting}
int x; int *p = &x, *q = &x;
\end{lstlisting}

The problem of aliased pointers is that writes through one pointer may effect the
result of reading through the other pointer.
The presence of aliased pointers therefore often disallows one to change the
order of instructions.
For example, consider:

\begin{lstlisting}
int f(int *p, int *q) {
  int z = *q; *p = 10; return z;
}
\end{lstlisting}

When \lstinline|f| is called with pointers \lstinline|p|
and \lstinline|q| that are aliased, the assignment to \lstinline|*p| also
affects \lstinline|*q|.
As a result, one cannot transform the function body of \lstinline|f|
into the shorter \lstinline|*p = 10; return (*q);|.
The shorter function will return 10 in case \lstinline|p| and \lstinline|q|
are aliased, whereas the original \lstinline|f| will always return the original value of
\lstinline|*q|.

Unlike this example, there are many situations in which pointers
can be assumed \emph{not} to alias.
It is essential for an optimizing compiler to determine where aliasing cannot
occur, and use this information to generate faster code.
The technique of determining whether pointers can alias or not is called 
\emph{alias analysis}.

In \emph{type-based alias analysis}, type information is used to determine
whether pointers can alias or not.
Consider the following example:

\begin{lstlisting}
short g(int *p, short *q) {
  short z = *q; *p = 10; return z;
}
\end{lstlisting}

Here, a compiler is allowed to assume that \lstinline|p| and \lstinline|q|
are not aliased because they point to objects of different types.
The compiler is therefore allowed to transform the function body of
\lstinline|g| into the shorter \lstinline|*p = 10; return (*q);|.

The peculiar thing is that the \C{} type system does not statically enforce
the property that pointers to objects of different types are not aliased.
A union type can be used to create aliased pointers to different types:

\begin{lstlisting}
union int_or_short { int x; short y; } u = { .y = 3 };
int *p = &u.x;   // p points to the x variant of u
short *q = &u.y; // q points to the y variant of u
return g(p, q);  // g is called with aliased pointers p and q
\end{lstlisting}

\label{page:effective_types}
The above program is valid according to the rules of the \Celeven{} type system,
but has undefined behavior during execution of \lstinline|g|.
This is caused by the standard's notion of \emph{effective
types}~\cite[6.5p6-7]{iso:12} (also called \emph{strict-aliasing restrictions})
that assigns undefined behavior to incorrect usage of aliased pointers to
different types.

We will inline part of the function body of \lstinline|g| to indicate the
incorrect usage of aliased pointers during the execution of the example.

\begin{lstlisting}
union int_or_short { int x; short y; } u = { .y = 3 };
int *p = &u.x;   // p points to the x variant of u
short *q = &u.y; // q points to the y variant of u
@// g(p, q) is called, the body is inlined@
short z = *q;    // u has variant y and is accessed through y -> OK
*p = 10;         // u has variant y and is accessed through x -> bad
\end{lstlisting}

The assignment \lstinline|*p = 10| violates the rules for effective types.
The memory area where \lstinline|p| points to contains a union whose variant
is \lstinline|y| of type \lstinline|short|, but is accessed through a pointer
to variant \lstinline|x| of type \lstinline|int|.
This causes undefined behavior.

Effective types form a clear tension between the low-level and high-level way
of data access in \C.
The low-level representation of the memory is inherently untyped and
unstructured and therefore does not contain any information about
variants of unions.
However, the standard treats the memory as if it were typed.

\paragraph{Approach.}
Most existing \C{} formalizations (most notably Norrish~\cite{nor:99},
Leroy \etal~\cite{ler:09:cacm,ler:app:bla:ste:12} and Ellison and
Ro\c{s}u~\cite{ell:ros:12}) use an unstructured untyped memory model
where each object in the formal memory model consists of an array of bytes.
These formalizations therefore cannot assign undefined behavior to
violations of the rules for effective types, among other things.

In order to formalize the interaction between low-level and high-level data
access, and in particular effective types, we represent the formal memory state
as a forest of well-typed trees whose structure corresponds to the structure of
data types in \C.
The leaves of these trees consist of bits to capture low-level aspects of the
language.

The key concepts of our memory model are as follows:

\begin{itemize}
\item \emph{Memory trees} (Section~\ref{section:ctrees}) are used to
	represent each object in memory.
	They are abstract trees whose structure corresponds to the shape of \C{}
	data types.
	The memory tree of \lstinline|struct S { short x, *r; } s = { 33, &s.x }|
	might be (the precise shape and the bit representations are implementation
	defined):
	\begin{equation*}
	\begin{tikzpicture}[
		node distance=-\pgflinewidth,
		every node/.style={shape=rectangle,minimum height=1.1em,minimum width=5em}
	]
	\node (S) {\(\MStructSym {\mathtt S}\)};
	\node[draw] at (-1.5,-1) (x) {\scalebox{.50}{\texttt{1000010000000000}}};
	\node[draw,fill=gray,fill opacity=0.4,text opacity=1,right=of x] (pad)
		{\scalebox{.60}{\ntimes {16} \BIndet}};
	\node[draw,minimum width=10em,right=1em of pad] (r)
		{\scalebox{0.85}{\ntimes {32} {\(\cdot\)}}};
	\draw (S.south) -- (x.north);
	\draw[dashed] (S.south) -- (pad.north);
	\draw (S.south) -- (r.north);
	\end{tikzpicture}
	\end{equation*}

	The leaves of memory trees contain permission annotated bits
	(Section~\ref{section:bits}).
	Bits are represented symbolically: the integer value \lstinline|33|
	is represented as its binary representation \lstinline|1000010000000000|,
	the padding bytes as symbolic \emph{indeterminate} bits \(\BIndet\)
	(whose actual value should not be used), and the pointer \lstinline|&s.x| as
	a sequence of symbolic \emph{pointer bits}.
	
	The \emph{memory} itself is a forest of memory trees.
	Memory trees are explicit about type information (in particular the variants
	of unions) and thus give rise to a natural formalization of effective types.
\item \emph{Pointers} (Section~\ref{section:pointers}) are formalized using
	paths through memory trees.
	Since we represent pointers as paths, the formal representation contains
	detailed information about how each pointer has been obtained (in particular
	which variants of unions were used).
	A detailed formal representation of pointers is essential to describe
	effective types.
\item \emph{Abstract values} (Definition~\ref{section:values}) are trees
	whose structure is similar to memory trees, but have base values
	(mathematical integers and pointers) on their leaves.
	The abstract value of
	\lstinline|struct S { short x, *r; } s = { 33, &s.x }| is:
	\begin{equation*}
	\begin{tikzpicture}[
		node distance=-\pgflinewidth,
		every node/.style={shape=rectangle,minimum height=1.1em,minimum width=5em}
	]
	\node (S) {\(\VStructSym {\mathtt S}\)};
	\node[draw] at (-1.5,-1) (x) {\small\texttt{33}};
	\node[draw,minimum width=10em,right=5.3em of x] (r) {\(\bullet\)};
	\draw (S.south) -- (x.north);
	\draw (S.south) -- (r.north);
	\end{tikzpicture}
	\end{equation*}

	Abstract values hide internal details of the memory such as permissions,
	padding and	object representations.
	They are therefore used in the external interface of the memory model
	and throughout the operational semantics.
\end{itemize}

Memory trees, abstract values and bits with permissions can be converted into
each other.
These conversions are used to define operations internal to the memory model.
However, none of these conversions are bijective because different information
is materialized in these three data types:

\begin{center}
\begin{tabular}{l|c|c|c}
& Abstract values & Memory trees & Bits with permissions \\ \hline\hline
Permissions &  & \checkmark & \checkmark \\
Padding & & always \(\BIndet\) & \checkmark \\
Variants of union  & \checkmark & \checkmark & \\
Mathematical values & \checkmark & &
\end{tabular}
\end{center}

This table indicates that abstract values and sequences of bits are complementary.
Memory trees are a middle ground, and therefore suitable to describe both the
low-level and high-level aspects of the \C{} memory.

%%%%%%%%%%%%%%%%%%%%%%%%%%%%%%%%%%%%%%%%%%%%%%%%%%%%%%%%%%%%%%%%%%%%%%%%%%%%%%%%
\paragraph{Outline.}
This work presents an executable mathematically precise version of a large part
of the (non-concurrent) \C{} memory model.
\begin{itemize}
\item Section~\ref{section:challenges} describes some challenges that a
	\Celeven{} memory model should address; these include end-of-array pointers,
	byte-level operations, indeterminate memory,	and type-punning.	
\item Section~\ref{section:types} describes the types of \C.
	Our formal development is parametrized by an abstract interface to
	characterize implementation defined behavior.
\item Section~\ref{section:permissions} describes the permission model using
	a variant of separation algebras that is suitable for	formalization in
	\Coq.
	The permission model is built compositionally from simple separation
	algebras. 
\item Section~\ref{section:memory} contains the main part of this paper,
	it describes a memory model that can accurately deal with the challenges
	posed in Section~\ref{section:challenges}.
\item Section~\ref{section:formal_proofs} demonstrates that our memory model is
	suitable for formal proofs.
	We prove that the standard's notion	of effective types has the	desired effect
	of allowing type-based alias analysis (Section~\ref{section:aliasing}),
	we present a method to reason compositionally
	about memory transformations (Section~\ref{section:refinements}), and
	prove that the memory model has a separation algebra structure
	(Section~\ref{section:memory_separation}).
\item Section~\ref{section:coq} describes the \Coq{} formalization: all proofs
	about our memory model have been fully formalized using \Coq.
\end{itemize}

\noindent
As this paper describes a large formalization effort, we often just give
representative parts of definitions.
The interested reader can find all details online as part of our \Coq{}
formalization at:
\begin{center}
 \url{http://robbertkrebbers.nl/research/ch2o/}.
\end{center}

\section{Notations}
\label{section:notations}

This section introduces some common mathematical notions and notations that
will be used throughout this paper.

\begin{definition}
We let \(\nat\) denote the type of \emph{natural numbers} (including \(0\)),
let \(\Z\) denote the type of \emph{integers}, and let \(\Q\) denote the type of
\emph{rational numbers}.
We let \(i \divides j\) denote that \(i \in \nat\) is a \emph{divisor} of
\(j \in \nat\).
\end{definition}

\begin{definition}
We let \(\Prop\) denote the type of \emph{propositions}, and let \(\bool\)
denote the type of \emph{Booleans} whose elements are \(\true\) and \(\false\).
Most propositions we consider have a corresponding Boolean-valud decision
function.
In \Coq{} we use type classes to keep track of these correspondences, but in
this paper we leave these correspondences implicit.
\end{definition}

\begin{definition}
We let \(\option A\) denote the \emph{option type over \(A\)}, whose elements
are inductively defined as either \(\None\) or \(\Some x\) for some \(x \in A\).
We implicitly lift operations to operate on the option type, and often omit
cases of definitions that yield \(\None\).
This is formally described using the \emph{option monad} in the \Coq{}
formalization.
\end{definition}

\begin{definition}
A \emph{partial function \(f\) from \(A\) to \(B\)} is a function
\(f : A \to \option B\).
\end{definition}

\begin{definition}
A partial function \(f\) is called \emph{a finite partial function} or a
\emph{finite map} if its \emph{domain}
\(\mdom f \defined \{ x \separator \exists y\in B\wsdot f\,x = y\} \) is finite.
The type of finite partial functions is denoted as \(\map A B\).
The operation \mbox{\(\minsert x y f\)} yields \(f\) with the value \(y\)
for argument \(x\).
\end{definition}

\begin{definition}
We let \(A \times B\) denote \emph{the product of types \(A\) and \(B\).}
Given a \emph{pair} \(\pair x y \in A \times B\), we let
\(\fst {\pair x y} \defined x\) and \(\snd {\pair x y} \defined y\)
denote the \emph{first} and \emph{second projection} of \(\pair x y\).
\end{definition}

\begin{definition}
We let \(\lst A\) denote the \emph{list type over \(A\)}, whose elements are
inductively defined as either \(\nil\) or \(\cons x {\vec x}\) for some
\(x \in A\) and \(\vec x \in \lst A\).
We let \(x_i \in A\) denote the \(i\)th element of a list
\(\vec x \in \lst A\) (we count from 0).
Lists are sometimes denoted as \(\listlit {x_0,\dotsc,x_{n-1}} \in \lst A\)
for \(x_0,\dotsc,x_{n-1} \in A\).

We use the following operations on lists:
\begin{itemize}
\item We often implicitly lift a function \(f : A_0 \to \dotsb \to A_n\)
	point-wise to the function \(f : \lst {A_0} \to \dotsb \to \lst {A_n}\).
	The resulting list is truncated to the length of the smallest input
	list in case \(n > 1\).
\item We often implicitly lift a predicate \(P : A_0 \to A_{n-1} \to \Prop\)
	to the predicate
	\(P : \lst {A_0} \to \dotsb \to \lst {A_{n-1}} \to \Prop\) that guarantees
	that \(P\) holds for all (pairs of) elements of the list(s).
	The lifted predicate requires all lists to have the same length in case
	\(n > 1\).
\item We let \(\length {\vec x} \in \nat\) denote the length of
	\(\vec x \in \lst A\).
\item We let \(\sublist i j {\vec x} \in \lst A\) denote the sublist
	\(x_i \dotsc x_{j-1}\) of \(\vec x \in \lst A\).
\item We let \(\replicate n x \in \lst A\) denote the list consisting of
	\(n\) times	\(x \in A\).
\item We let \(\sublistpad i j y {\vec x} \in \lst A\) denote the sublist
	\(x_i \dotsc x_{j-1}\) of \(\vec x \in \lst A\) which is padded with
	\(y \in A\) in case \(\vec x\) is too short.
\item Given lists \(\vec x \in \lst A\) and \(\vec y \in \lst B\) with
	\(\length {\vec x} = \length {\vec y}\), we let
	\(\vv {xy} \in \lst{(A \times B)}\) denote the point-wise pairing of
	\(\vec x\) and \(\vec y\).
\end{itemize}
\end{definition}

\section{Challenges}
\label{section:challenges}

This section illustrates a number of subtle forms of underspecification in \C{}
by means of example programs, their bizarre behaviors exhibited by widely
used \C{} compilers, and their treatment in \CHtwoO.
Many of these examples involve delicacies due to the interaction between the
following two ways of accessing data:

\begin{itemize}
\item In a \emph{high-level} way using arrays, structs and unions.
\item In a \emph{low-level} way using unstructured and untyped byte
	representations.
\end{itemize}

The main problem is that compilers use a high-level view of data access
to perform optimizations whereas both programmers and traditional memory models
expect data access to behave in a concrete low-level way.

%%%%%%%%%%%%%%%%%%%%%%%%%%%%%%%%%%%%%%%%%%%%%%%%%%%%%%%%%%%%%%%%%%%%%%%%%%%%%%%%
\subsection{Byte-level operations and object representations}
\label{section:challenge:byte_level}
Apart from \emph{high-level} access to objects in memory by means of typed
expressions, \C{} also allows \emph{low-level} access by means of byte-wise
manipulation.
Each object of type \(\tau\) can be interpreted as an \lstinline|unsigned char|
array of length \(\mathtt{sizeof(\tau)}\), which is called the \emph{object
representation}~\cite[6.2.6.1p4]{iso:12}.
Let us consider:

\begin{lstlisting}
struct S { short x; short *r; } s1 = { 10, &s1.x };
unsigned char *p = (unsigned char*)&s1;
\end{lstlisting}

On 32-bits computing architectures such as \xeightysix{} (with
\lstinline|_Alignof(short*)|\(=4\)), the object representation of \lstinline|s1|
might be:
\begin{equation*}
\begin{tikzpicture}[
	start chain=going right,
	node distance=-\pgflinewidth,
	% change to minimum height=1.1em
	every node/.style={shape=rectangle,minimum height=1.3em, minimum width=4.2em},
	brace/.style={decorate,decoration={brace,amplitude=0.7em}},
	lab/.style={pos=0.5,anchor=south,yshift=0.7em}
]
\node[draw,on chain] (x1) {\scalebox{.75}{\texttt{01010000}}};
\node[draw,on chain] (x2) {\scalebox{.75}{\texttt{00000000}}};
\node[draw,on chain,fill=gray,fill opacity=0.4,text opacity=1] (pad1)
	{\scalebox{.65}{\ntimes 8 {\BIndet\,}}};
\node[draw,on chain,fill=gray,fill opacity=0.4,text opacity=1] (pad2)
	{\scalebox{.65}{\ntimes 8 {\BIndet\,}}};
\node[draw,on chain] (r1) {\scalebox{.60}{\ntimes 8 {\(\bullet\)\,}}};
\node[draw,on chain] {\scalebox{.60}{\ntimes 8 {\(\bullet\)\,}}};
\node[draw,on chain] {\scalebox{.60}{\ntimes 8 {\(\bullet\)\,}}};
\node[draw,on chain] (r2) {\scalebox{.60}{\ntimes 8 {\(\bullet\)\,}}};
\draw[brace] (x1.north west) -- (x2.north east) node [lab] {\lstinline|x|};
\draw[brace] (pad1.north west) -- (pad2.north east) node [lab] {padding};
\draw[brace] (r1.north west) -- (r2.north east)	node [lab] {\lstinline|r|};
\draw[thick,<-] (x1.south west) -- node[below,pos=1]{\lstinline|p|} +(0,-0.3);
\draw[thick,<-] (x2.south west) -- node[below,pos=1]{\lstinline|p + 1|} +(0,-0.3);
\draw[thick,<-] (pad1.south west) -- node[below,pos=1]{\lstinline|p + 2|} +(0,-0.3);
\end{tikzpicture}
\end{equation*}

The above object representation contains a hole due to \emph{alignment} of
objects.
The bytes belonging such holes are called \emph{padding bytes}.

Alignment is the way objects are arranged in memory.
In modern computing architectures, accesses to addresses that are a multiple of
a word sized chunk (often a multiple of 4 bytes on a 32-bits computing
architecture) are significantly faster due to the way the processor interacts
with the memory.
For that reason, the \Celeven{} standard has put restrictions on the addresses
at which objects may be allocated~\cite[6.2.8]{iso:12}.
For each type \(\tau\), there is an implementation defined integer constant
\lstinline|_Alignof($\tau$)|, and objects of type \(\tau\) are required to be
allocated at addresses that are a multiple of that constant.
In case \lstinline|_Alignof(short*)|\(=4\), there are thus two bytes of padding
in between the fields of \lstinline|struct S|.

An object can be copied by copying its object representation.
For example, the struct \lstinline|s1| can be copied to \lstinline|s2| as
follows:

\begin{lstlisting}
struct S { short x; short *r; } s1 = { 10, &s1.x };
struct S s2;
for (size_t i = 0; i < sizeof(struct S); i++)
  ((unsigned char*)&s2)[i] = ((unsigned char*)&s1)[i];
\end{lstlisting}

In the above code, \lstinline|size_t| is an unsigned integer type, which is able
to hold the results of the \lstinline|sizeof| operator~\cite[7.19p2]{iso:12}.

Manipulation of object representations of structs also involves access to
padding bytes, which are not part of the high-level representation.
In particular, in the example the padding bytes are also being
copied.
The problematic part is that padding bytes have indeterminate values, whereas
in general, reading an indeterminate value has undefined behavior 
(for example, reading from an uninitialized \lstinline|int| variable is
undefined).
The \Celeven{} standard provides an exception for
\lstinline|unsigned char|~\cite[6.2.6.1p5]{iso:12}, and the above example thus
has defined behavior.

Our memory model uses a symbolic
representation of bits (Definition~\ref{definition:bits}) to distinguish
determinate and indeterminate memory.
This way, we can precisely keep track of the situations in which access to
indeterminate memory is permitted.

\subsection{Padding of structs and unions}
\label{section:challenge:padding}

The following excerpt from the \Celeven{} standard points out another challenge
with respect to padding bytes~\cite[6.2.6.1p6]{iso:12}:

\begin{quoting}
When a value is stored in an object of structure or union type,
including in a member object, the bytes of the object representation that
correspond to any padding bytes take unspecified values.
\end{quoting}

\noindent Let us illustrate this difficulty by an example:

\begin{lstlisting}
struct S { char x; char y; char z; };
void f(struct S *p) { p->x = 0; p->y = 0; p->z = 0; }
\end{lstlisting}

On architectures with \lstinline|sizeof(struct S) = 4|, objects of type
\lstinline|struct S| have one byte of padding.
The object representation may be as follows:
\begin{equation*}
\begin{tikzpicture}[
	start chain=going right,
	node distance=-\pgflinewidth,
	% change to minimum height=1.1em
	every node/.style={shape=rectangle,minimum height=1.3em, minimum width=4.2em},
	brace/.style={decorate,decoration={brace,amplitude=0.7em}},
	lab/.style={pos=0.5,anchor=south,yshift=0.7em}
]
\node[draw,on chain] (x) {};
\node[draw,on chain] (y) {};
\node[draw,on chain] (z) {};
\node[draw,on chain,fill=gray,fill opacity=0.4] (pad) {};
\draw[brace] (x.north west) -- (x.north east) node [lab] {\lstinline|x|};
\draw[brace] (y.north west) -- (y.north east) node [lab] {\lstinline|y|};
\draw[brace] (z.north west) -- (z.north east) node [lab] {\lstinline|z|};
\draw[brace] (pad.north west) -- (pad.north east) node [lab] {padding};
\draw[thick,<-] (x.south west) -- node[below,pos=1]{\lstinline|p|} +(0,-0.3);
\end{tikzpicture}
\end{equation*}

Instead of compiling the function \lstinline|f| to three store
instructions for each field of the struct, the \Celeven{} standard allows a
compiler to use a single instruction to store zeros to the entire struct.
This will of course affect the padding byte.
Consider:

\begin{lstlisting}
struct S s = { 1, 1, 1 };
((unsigned char*)&s)[3] = 10;
f(&s);
return ((unsigned char*)&s)[3];
\end{lstlisting}

Now, the assignments to fields of \lstinline|s| by the function \lstinline|f|
affect also the padding bytes of \lstinline|s|,
including the one \lstinline|((unsigned char*)&s)[3]| that we have assigned to.
As a consequence, the returned value is unspecified.

From a high-level perspective this behavior makes sense.
Padding bytes are not part of the abstract value of a struct, so their actual
value should not matter.
However, from a low-level perspective it is peculiar.
An assignment to a specific field of a struct affects the object representation of
parts not assigned to.

None of the currently existing \C{} formalizations describes this
behavior correctly.
In our tree based memory model we enforce that padding bytes always have an
indeterminate value, and in turn we have the desired behavior implicitly.
Note that if the function call \lstinline|f(&s)| would have been removed, the
behavior of the example program remains unchanged in \CHtwoO.

%%%%%%%%%%%%%%%%%%%%%%%%%%%%%%%%%%%%%%%%%%%%%%%%%%%%%%%%%%%%%%%%%%%%%%%%%%%%%%%%
\subsection{Type-punning}
\label{section:challenge:type_punning}
Despite the rules for effective types, it is under certain conditions
nonetheless allowed to access a union through another variant than the current
one.
Accessing a union through another variant is called \emph{type-punning}.
For example:

\begin{lstlisting}
union int_or_short { int x; short y; } u = { .x = 3 };
printf("%d\n", u.y);
\end{lstlisting}

This code will reinterpret the bit representation of the \lstinline|int| value
\lstinline|3| of \lstinline|u.x| as a value of type \lstinline|short|.
The reinterpreted value that is printed is implementation defined (on
architectures where \lstinline|short|s do not have trap values).

Since \Celeven{} is ambiguous about the exact conditions under which
type-punning is allowed\footnote{The term \emph{type-punning} merely appears in
a footnote\cite[footnote 95]{iso:12}.
There is however the related \emph{common initial sequence}
rule~\cite[6.5.2.3]{iso:12}, for which the
\Celeven{} standard uses the notion of \emph{visible}.
This notion is not clearly defined either.}, we follow the
interpretation by the \GCC{} documentation~\cite{gnu:11}:

\begin{quoting}
Type-punning is allowed, provided the memory is accessed through the union type.
\end{quoting}

According to this interpretation the above program indeed has implementation
defined behavior because the variant \lstinline|y| is accessed via the
expression \lstinline|u.y| that involves the variable \lstinline|u| of the
corresponding union type.

However, according to this interpretation, type-punning via a pointer
to a specific variant of a union type yields undefined behavior.
This is in agreement with the rules for effective types.
For example, the following program has undefined behavior.

\begin{lstlisting}
union int_or_short { int x; short y; } u = { .x = 3 };
short *p = &u.y;
printf("%d\n", *p);
\end{lstlisting}

We formalize the interpretation of \Celeven{} by \GCC{} by decorating pointers
and l-values to subobjects with annotations (Definition~\ref{definition:pointers}).
When a pointer to a variant of a union is stored in memory, or used as
the argument of a function, the annotations are changed to ensure that
type-punning no longer has defined behavior via that pointer.
In Section~\ref{section:aliasing} we formally establish that this approach
is correct by showing that a compiler can perform type-based alias analysis
(Theorem~\ref{theorem:aliasing} on page~\pageref{theorem:aliasing}).

%%%%%%%%%%%%%%%%%%%%%%%%%%%%%%%%%%%%%%%%%%%%%%%%%%%%%%%%%%%%%%%%%%%%%%%%%%%%%%%%
\subsection{Indeterminate memory and pointers}
\label{section:challenge:indeterminate_pointers}
A pointer value becomes indeterminate when the object it points to
has reached the end of its lifetime~\cite[6.2.4]{iso:12} (it has gone out
of scope, or has been deallocated).
Dereferencing an indeterminate pointer has of course undefined behavior because
it no longer points to an actual value.
However, not many people are aware that using an indeterminate pointer in
pointer arithmetic and pointer comparisons also yields undefined behavior.
Consider:

\begin{lstlisting}
int *p = malloc(sizeof(int)); assert (p != NULL);
free(p);
int *q = malloc(sizeof(int)); assert (q != NULL);
if (p == q) { // undefined, p is indeterminate due to the free
  *q = 10;
  *p = 14;
  printf("%d\n", *q); // p and q alias, expected to print 14
}
\end{lstlisting}

In this code \lstinline|malloc(sizeof(int))| yields a pointer to a newly
allocated memory area that may hold an integer, or yields a \cNULL{}
pointer in case no memory is available.
The function \lstinline|free| deallocates memory allocated by
\lstinline|malloc|.
In the example we assert that both calls to \lstinline|malloc| succeed.

After execution of the second call to \lstinline|malloc| it may happen that
the memory area of the first call to \lstinline|malloc| is reused:
we have used \lstinline|free| to deallocate it after all.
This would lead to the following situation in memory:
\begin{equation*}
\begin{tikzpicture}[
	every node/.style={shape=rectangle,minimum height=1.1em,minimum width=5em},
	brace/.style={decorate,decoration={brace,amplitude=0.7em}},
	lab/.style={pos=0.5,anchor=south}
]
\node[fill=red,draw] (p) {\(\bullet\)};
\draw[brace] (p.north west) -- (p.north east)
	node [lab,yshift=0.7em] {\lstinline|p|};
\node[draw,right=5em of p] (malloc) {};
\draw[brace] (malloc.north west) -- (malloc.north east)
	node [lab,yshift=0.7em] {result of \lstinline|malloc|};
\node[fill=green,draw,right=5em of malloc] (q) {\(\bullet\)};
\draw[brace] (q.north west) -- (q.north east)
	node [lab,yshift=0.7em] {\lstinline|q|};
\draw[thick,->] (p.center) ..
	controls ($(p.south east) + (0,-0.6)$)
	and ($(malloc.south west) + (0.05,-1)$) .. (malloc.south west);
\draw[thick,->] (q.center) ..
	controls ($(malloc.south east) + (0,-0.6)$)
	and ($(malloc.south west) + (0.05,-1)$) .. (malloc.south west);
\end{tikzpicture}
\vspace{-0.7em}
\end{equation*}

Both \GCC{} (version 4.9.2) or \clang{} (version 3.5.0) use the fact that
\lstinline|p| and \lstinline|q| are obtained via different calls to
\lstinline|malloc| as a license to assume
that \lstinline|p| and \lstinline|q| do not alias.
As a result, the value 10 of \lstinline|*q| is inlined, and the program prints
the value 10 instead of the naively expected value 14.

The situation becomes more subtle because when the object a pointer points to
has been deallocated, not just the argument of \lstinline|free| becomes
indeterminate, but also all other copies of that pointer.
This is therefore yet another example where high-level representations interact
subtly with their low-level counterparts.

In our memory model we represent pointer values symbolically
(Definition~\ref{definition:pointers}), and keep track of memory areas
that have been previously deallocated.
The behavior of operations like \lstinline|==| depends on the memory state,
which allows us to accurately capture the described undefined behaviors.

%%%%%%%%%%%%%%%%%%%%%%%%%%%%%%%%%%%%%%%%%%%%%%%%%%%%%%%%%%%%%%%%%%%%%%%%%%%%%%%%
\subsection{End-of-array pointers}
\label{section:challenge:end_of_array}
The way the \Celeven{} standard deals with pointer equality is subtle.
Consider the following excerpt~\cite[6.5.9p6]{iso:12}:

\begin{quoting}
Two pointers compare equal if and only if  [...]  or one is a pointer to one
past the end of one array object and the other is a pointer to the start of
a different array object that happens to immediately follow the first array
object in the address space.
\end{quoting}

End-of-array pointers are peculiar because they cannot be dereferenced, they do
not point to any value after all.
Nonetheless, end-of-array are commonly used when looping through arrays.

\begin{lstlisting}
int a[4] = { 0, 1, 2, 3 };
int *p = a;
while (p < a + 4) { *p += 1; p += 1; }
\end{lstlisting}

The pointer \lstinline|p| initially refers to the first element of the array
\lstinline|a|.
The value \lstinline|p| points to, as well as \lstinline|p| itself, is being
increased as long as \lstinline|p| is before the end-of-array pointer
\lstinline|a + 4|.
This code thus increases the values of the array \lstinline|a|.
The initial state of the memory is displayed below:

\begin{equation*}
\begin{tikzpicture}[
	start chain=going right,
	node distance=-\pgflinewidth,
	every node/.style={shape=rectangle,minimum height=1.1em,minimum width=5em},
	brace/.style={decorate,decoration={brace,amplitude=0.7em}},
	lab/.style={pos=0.5,anchor=south}
]
\node[draw,on chain] (a) {0};
\node[draw,on chain] {1};
\node[draw,on chain] (a2) {2};
\node[draw,on chain] (a3) {3};
\draw[brace] (a.north west) -- (a3.north east)
	node [lab,yshift=0.7em] {\lstinline|a|};
\draw[thick,<-] (a.south west) -- node[below,pos=1]{\lstinline|p|} +(0,-0.3);
\draw[thick,<-] (a3.south east) -- node[below,pos=1]{\lstinline|a + 4|} +(0,-0.3);
\end{tikzpicture}
\vspace{-0.2em}
\end{equation*}

End-of-array pointers can also be used in a way where the result of a comparison
is not well-defined.
In the example below, the \lstinline|printf| is executed only
if \lstinline|x| and \lstinline|y| are allocated adjacently in the address
space (typically the stack).

\begin{lstlisting}
int x, y;
if (&x + 1 == &y) printf("x and y are allocated adjacently\n");
\end{lstlisting}

Based on the aforementioned excerpt of the \Celeven{}
standard~\cite[6.5.9p6]{iso:12}, one would naively say that the
value of \lstinline|&x + 1 == &y| is uniquely determined by the way
\lstinline|x| and \lstinline|y| are allocated in the address space.
However, the \GCC{} implementers disagree\footnote{See
\url{https://gcc.gnu.org/bugzilla/show_bug.cgi?id=61502}}.
They claim that Defect Report \#260~\cite{iso:09} allows them to take the
\emph{derivation of a pointer value} into account.

In the example, the pointers \lstinline|&x + 1| and \lstinline|&y| are
derived from unrelated objects (the local variables \lstinline|x| and
\lstinline|y|).
As a result, the \GCC{} developers claim that \lstinline|&x + 1| and
\lstinline|&y| may compare unequal albeit being allocated adjacently.
Consider:

\begin{lstlisting}
int compare(int *p, int *q) {
  // some code to confuse the optimizer
  return p == q;
}
int main() {
  int x, y;
  if (&x + 1 == &y) printf("x and y are adjacent\n");
  if (compare(&x + 1, &y)) printf("x and y are still adjacent\n");
}
\end{lstlisting}

When compiled with \GCC{} (version 4.9.2), we have observed that the string
\texttt{x and y are still adjacent} is being printed, wheras
\texttt{x and y are adjacent} is not being printed.
This means that the value of \lstinline|&x + 1 == &y| is not consistent among
different occurrences of the comparison.

Due to these discrepancies we assign undefined behavior to questionable uses of
end-of-array pointers while assigning the correct defined behavior to pointer
comparisons involving end-of-array pointers when looping through arrays (such
as in the first example above).
Our treatment is similar to our extension of \CompCert~\cite{kre:ler:wie:14}.

%%%%%%%%%%%%%%%%%%%%%%%%%%%%%%%%%%%%%%%%%%%%%%%%%%%%%%%%%%%%%%%%%%%%%%%%%%%%%%%%
\subsection{Sequence point violations and non-determinism}
\label{section:challenge:sequence_point}
Instead of having to follow a specific execution order,
the execution order of expressions is unspecified in \C{}.
This is a common cause of portability problems because a
compiler may use an arbitrary execution order for each expression, and each
time that expression is executed.
Hence, to ensure correctness of a \C{} program with respect to an arbitrary
compiler, one has to verify that \emph{each} possible execution order is free of
undefined behavior and gives the correct result.

In order to make more effective optimizations possible (for example, delaying of
side-effects and interleaving), the \C{} standard does not allow an object to
be modified more than once during the execution of an expression.
If an object is modified more than once, the program has undefined behavior.
We call this requirement the \emph{sequence point restriction}.
Note that this is not a static restriction, but a restriction on valid
executions of the program.
Let us consider an example:

\begin{lstlisting}
int x, y = (x = 3) + (x = 4);
printf("%d %d\n", x, y);
\end{lstlisting}

By considering all possible execution orders, one would naively expect this
program to print \lstinline|4 7| or \lstinline|3 7|, depending on
whether the assignment \lstinline|x = 3| or \lstinline|x = 4| is executed
first.
However, \lstinline|x| is modified twice within the same
expression, and thus both execution orders have undefined behavior.
The program is thereby allowed to exhibit any behavior.
Indeed, when compiled with \lstinline|gcc -O2| (version 4.9.2), the compiled
program prints \lstinline|4 8|, which does not correspond to any of the
execution orders.

Our approach to non-determinism and sequence
points is inspired by Norrish~\cite{nor:98} and Ellison and
Ro\c{s}u~\cite{ell:ros:12}.
Each bit in memory carries a permission
(Definition~\ref{definition:permissions}) that is set to a special \emph{locked}
permission when a store has been performed.
The memory model prohibits any access (read or store) to objects with
locked permissions.
At the next sequence point, the permissions of locked objects are changed back
into their original permission, making future accesses possible again.

It is important to note that we do not have non-determinism in the memory
model itself, and have set up the memory model in such a way that all
non-determinism is on the level of the small-step operational semantics.

\section{Types in \C}
\label{section:types}

This section describes the types used in the \CHtwoO{} memory model.
We support integer, pointer, function pointer, array, struct,
union and void types.
More complicated types such as enum types and typedefs are defined by
translation~\cite{kre:wie:15,kre:15:phd}.

This section furthermore describes an abstract interface, called an
\emph{implementation environment}, that describes properties such as size and
endianness of integers, and the layout of structs and unions.
The entire \CHtwoO{} memory model and semantics % as described in this paper
will be parametrized by an implementation environment.

\subsection{Integer representations}
\label{section:integers}

This section describes the part of implementation environments corresponding to
integer types and the encoding of integer values as bits.
Integer types consist of a \emph{rank} (\(\charrank\), \(\shortrank\),
\(\intrank\) \ldots) and a \emph{signedness} (\(\Signed\) or \(\Unsigned\)).
The set of available ranks as well as many of their properties are
implementation defined.
We therefore abstract over the ranks in the definition of integer types.

\begin{definition}
\label{definition:int_types}
\emph{Integer signedness} and \emph{integer types} over ranks \(k \in K\) are
inductively defined as:
\begin{flalign*}
\syntax{signedness} \\
\syntax{inttype}
\end{flalign*}
The projections are called \(\rankSym : \inttype \to K\) and
\(\signSym : \inttype \to \signedness\).
\end{definition}

\begin{definition}
\label{definition:int_coding_spec}
An \emph{integer coding environment with ranks \(K\)} consists of a total
order \((K,\subseteq)\) of \emph{integer ranks} having at least the following
ranks:
\[
	\charrank \subset \shortrank \subset \intrank \subset \longrank
	\subset \longlongrank \qquad\text{and}\qquad \sizerank.
\]
It moreover has the following functions:
\begin{align*}
\charbits :{}& \natge 8 & \intendianizeSym :{}& K \to \lst\bool \to \lst\bool \\
\charsignedness :{}& \signedness &
	\intdeendianizeSym :{}& K \to \lst\bool \to \lst\bool \\
\ranksizeSym :{}& K \to \natgt 0
\end{align*}
Here, \(\intendianizeSym\;k\) and \(\intdeendianizeSym\;k\) should be inverses,
\(\intendianizeSym\;k\) should be a permutation, \(\ranksizeSym\) should
be (non-strictly) monotone, and \(\ranksize \charrank = 1\).
\end{definition}

\begin{definition}
\label{definition:int_typed}
The judgment \(\inttyped x {\itype\tau}\) describes that \emph{\(x \in \Z\) has
integer type \(\itype\tau\)}.
\begin{gather*}
\AXC{\(-2 ^ {\charbits * \ranksize k - 1} \le x
	< 2 ^ {\charbits * \ranksize k - 1}\)}
\UIC{\(\inttyped x {\IntType \Signed k}\)}
\DP\quad
\AXC{\(0 \le x < 2 ^ {\charbits * \ranksize k}\)}
\UIC{\(\inttyped x {\IntType \Unsigned k}\)}
\DP
\end{gather*}
\end{definition}

The rank \(\charrank\) is the rank of the smallest integer type,
whose unsigned variant corresponds to bytes that constitute object
representations (see Section~\ref{section:challenge:byte_level}).
Its bit size is \(\charbits\) (called \lstinline|CHAR_BIT| in the standard
library header files~\cite[5.2.4.2.1]{iso:12}), and its signedness
\(\charsignedness\) is implementation defined~\cite[6.2.5p15]{iso:12}.

The rank \(\sizerank\) is the rank of the integer types \lstinline|size_t| and
\lstinline|ptrdiff_t|, which are defined in the standard library header
files~\cite[7.19p2]{iso:12}.
The type \lstinline|ptrdiff_t| is a signed integer type used to represent the
result of subtracting two pointers, and the type \lstinline|size_t| is an
unsigned integer type used to represent sizes of types.

An integer coding environment can have an arbitrary number of integer ranks
apart from the standard ones
\(\charrank\), \(\shortrank\), \(\intrank\), \(\longrank\), \(\longlongrank\),
and \(\sizerank\).
This way, additional integer types like those describe in~\cite[7.20]{iso:12}
can easily be included.
%its bit size is thus \(\charbits * \ranksize k\).

The function \(\ranksizeSym\) gives the byte size of an integer of a given rank.
Since we require \(\ranksizeSym\) to be monotone rather than strictly monotone,
integer types with different ranks can have the same
size~\cite[6.3.1.1p1]{iso:12}.
For example, on many implementations \lstinline|int| and \lstinline|long| have
the same size, but are in fact different.

The \Celeven{} standard allows implementations to use either sign-magnitude, 1's
complement or 2's complement signed integers representations.
It moreover allows integer representations to contain padding or parity
bits~\cite[6.2.6.2]{iso:12}.
However, since all current machine architectures use 2's complement
representations, this is more of a historic artifact.
Current machine architectures use 2's complement representations because these
do not suffer from positive and negative zeros and thus enjoy unique
representations of the same integer.
Hence, \CHtwoO{} restricts itself to implementations that use
2's complement signed integers representations.

Integer representations in \CHtwoO{} can solely differ with respect to
endianness (the order of the bits).
The function \(\intendianizeSym\) takes a list of bits in little endian order
and permutes them accordingly. % according to the implementation's endianness.
We allow \(\intendianizeSym\) to yield an arbitrary permutation and thus we not
just support big- and little-endian, but also mixed-endian variants.

\begin{definition}
\label{definition:int_encode}
Given an integer type \(\itype\tau\), the \emph{integer encoding functions}
\(\inttobitsSym {\itype\tau} : \Z \to \lst\bool\) and
\(\intofbitsSym {\itype\tau} : \lst\bool \to \Z\) are defined as follows:
\begin{flalign*}
\inttobits {\IntType {si} k} x \defined{}&
	\intendianize k {(\textnormal{\(x\) as little endian 2's complement})}\\
\intofbits {\IntType {si} k} {\vec \beta} \defined{}&
	\textnormal{of little endian 2's complement
	\((\intdeendianize k {\vec\beta})\)}
\end{flalign*}
\end{definition}

\begin{lemma}
The integer encoding functions are inverses.
That means:
\begin{enumerate}
\item We have \(\intofbits {\itype\tau} {\inttobits {\itype\tau} x} = x\) and
	\(\length {\inttobits {\itype\tau} x} = \ranksize {\itype\tau}\) provided
	that \(\inttyped x {\itype\tau}\).
\item We have 
	\(\inttobits {\itype\tau} {\intofbits {\itype\tau} {\vec\beta}} = \vec\beta\)
	and \(\inttyped {\intofbits {\itype\tau} {\vec\beta}} {\itype\tau}\)
	provided that \(\length{\vec\beta} = \ranksize {\itype\tau}\).
\end{enumerate}
\end{lemma}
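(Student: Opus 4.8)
The plan is to unfold the definitions of $\inttobitsSym$ and $\intofbitsSym$ from Definition~\ref{definition:int_encode} and reduce everything to two independent facts: (a) $\intendianizeSym\;k$ and $\intdeendianizeSym\;k$ are mutually inverse permutations (given by Definition~\ref{definition:int_coding_spec}), and (b) the classical bijection between integers in range and their fixed-width little-endian two's complement bit lists. Since the encoding functions are each a composition of an endianization step with a two's complement (de)coding step, both halves of the lemma follow by ``peeling off'' the endianization layer and then invoking the corresponding two's complement fact.

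For part~1, I would start from $\intofbits{\itype\tau}{\inttobits{\itype\tau}x}$, unfold both definitions so that it becomes ``decode of little endian 2's complement applied to $\intdeendianize k{(\intendianize k{(x\text{ as LE 2's complement})})}$''. The inner $\intdeendianize k \circ \intendianize k$ cancels by Definition~\ref{definition:int_coding_spec}, leaving ``decode of the LE 2's complement encoding of $x$'', which is $x$ by the two's complement round-trip fact, using the hypothesis $\inttyped x{\itype\tau}$ (Definition~\ref{definition:int_typed}) to guarantee $x$ lies in the representable range so that the $\ranksize{\itype\tau}$-wide encoding is well-defined and lossless. The length claim $\length{\inttobits{\itype\tau}x} = \ranksize{\itype\tau}$ follows because $\intendianizeSym\;k$ is a permutation (hence length-preserving) and the raw little-endian two's complement representation of an in-range integer has exactly $\charbits * \ranksize k$ bits — here one must be careful about whether $\ranksizeSym$ counts bytes or bits; reading Definition~\ref{definition:int_coding_spec} and~\ref{definition:int_typed}, $\ranksizeSym$ is the byte size and the bit width is $\charbits * \ranksize k$, so the length statement as written should really be read modulo that factor, or $\ranksizeSym$ is being overloaded; I would match whichever convention the surrounding text fixes.

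For part~2, I would argue symmetrically: from $\inttobits{\itype\tau}{\intofbits{\itype\tau}{\vec\beta}}$, unfold to get $\intendianize k{(\text{LE 2's complement encode of the decode of }(\intdeendianize k{\vec\beta}))}$. The encode/decode round-trip on bit lists holds provided the list fed to the decoder has the correct width, which is exactly the hypothesis $\length{\vec\beta} = \ranksize{\itype\tau}$ combined with the fact that $\intdeendianizeSym\;k$ preserves length; this collapses the inner part to $\intdeendianize k{\vec\beta}$, and then $\intendianize k\circ\intdeendianize k$ cancels. The typing claim $\inttyped{\intofbits{\itype\tau}{\vec\beta}}{\itype\tau}$ follows because a $\charbits * \ranksize k$-bit two's complement decode always lands in the range specified in Definition~\ref{definition:int_typed}, for both signednesses (for $\Signed$, two's complement of that width yields values in $[-2^{n-1}, 2^{n-1})$; for $\Unsigned$, in $[0, 2^n)$).

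The main obstacle is not conceptual but bookkeeping: getting the width/length arithmetic exactly right (the $\charbits$ vs.\ byte-count discrepancy noted above, and the implicit truncation conventions on lists from the Notations section), and making sure the two's complement round-trip lemmas are stated at precisely the width $\charbits * \ranksize k$ so that the $\intdeendianizeSym/\intendianizeSym$ permutation facts apply to lists of the matching length. Everything else is a routine chain of rewrites using the ``inverses'' and ``permutation'' clauses already granted in Definition~\ref{definition:int_coding_spec}.
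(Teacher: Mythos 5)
Your proposal is correct and matches the intended (and essentially only) argument: peel off the $\intendianizeSym\;k/\intdeendianizeSym\;k$ layer using the inverse and permutation clauses of Definition~\ref{definition:int_coding_spec}, then invoke the standard fixed-width two's complement round-trip facts, with $\inttyped x{\itype\tau}$ supplying the range condition for part~1 and $\length{\vec\beta}=\ranksize{\itype\tau}$ supplying the width condition for part~2; the paper itself omits the proof as routine. Your observation about the byte-count versus bit-count reading of $\ranksize{\itype\tau}$ in the length claim is a fair catch of a notational overloading in the statement, and resolving it the way you describe (matching the width $\charbits\cdot\ranksize k$ used in Definition~\ref{definition:int_typed}) is the right call.
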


\subsection{Definition of types}
\label{section:types_definition}

We support integer, pointer, function pointer, array,
struct, union and void types.
The translation that we have described in~\cite{kre:wie:15,kre:15:phd}
translates more complicated types, such as typedefs and enums, into these
simplified types.
This translation also alleviates other simplifications
of our simplified definition of types, such as the use of unnamed struct and union
fields.
Floating point types and type qualifiers like \lstinline|const| and
\lstinline|volatile| are not supported.

All definitions in this section are implicitly parametrized by an integer
coding environment with ranks \(K\) (Definition~\ref{definition:int_coding_spec}).

\begin{definition}
\emph{Tags \(t \in \stag\)} (sometimes called \emph{struct/union names}) and
\emph{function names \(f \in \funname\)} are represented as strings.
\end{definition}

\begin{definition}
\label{definition:types}
\emph{Types} consist of \emph{point-to types}, \emph{base types} and
\emph{full types}.
These are inductively defined as:
\begin{flalign*}
\syntax{ptrtype} \\
\syntax{basetype} \\
\syntax{type}
\end{flalign*}
\end{definition}

The three mutual inductive parts of types correspond to the different components
of the memory model.
Addresses and pointers have point-to types
(Definitions~\ref{definition:addr_typed} and~\ref{definition:ptr_typed}),
base values have base types (Definition~\ref{definition:base_val_typed}), and
memory trees and values have full types (Definitions~\ref{definition:ctree_typed}
and~\ref{definition:val_typed}).

The void type of \C{} is
used for two entirely unrelated purposes: \lstinline|void| is used for functions
without return type and \lstinline|void*| is used for pointers to objects of
unspecified type.
In \CHtwoO{} this distinction is explicit in the syntax of types.
The type \(\TVoid\) is used for function without return value.
Like the mathematical \emph{unit} type it has one value called \(\VVoid\)
(Definition~\ref{definition:base_values}).
The type \(\TPtr \TAny\) is used for pointers to objects of unspecified type.

Unlike more modern programming languages \C{} does not provide first class functions.
Instead, \C{} provides function pointers which are just addresses of
executable code in memory instead of closures.
Function pointers can be used in a way similar to ordinary pointers: they can
be used as arguments and return value of functions, they can be part of
structs, unions and arrays, \etc{}

The \C{} language sometimes allows function types to be used as shorthands for
function pointers, for example:

\begin{lstlisting}
void sort(int *p, int len, int compare(int,int));
\end{lstlisting}

The third argument of \lstinline|sort| is a shorthand for
\lstinline|int (*compare)(int,int)|
and is thus in fact a function pointer instead of a function.
We only have function pointer types, and the third argument of the type of the
function \lstinline|sort| thus contains an additional \(\TPtrSym\):
\[
\TFun {\listlit{
	\TBase {\TPtr {(\TType {\TBase {\TSignedInt}})}},\
	\TBase \TSignedInt,\
	\TPtr {(\TFun {\TBase \TSignedInt} {\TBase \TSignedInt})}
}} {\TBase \TVoid}.
\]

Struct and union types consist of just a name, and do
not contain the types of their fields.
An environment is used to assign fields to structs and unions, and to assign
argument and return types to function names.

\begin{definition}
\label{definition:env}
\emph{Type environments} are defined as:
\begin{flalign*}
\syntax{env}.
\end{flalign*}
The functions \(\compounddomSym : \env \to \finset \stag\) and
\(\fundomSym : \env \to \finset \funname\) yield the declared structs and
unions, respectively the declared functions.
We implicitly treat environments as functions
\(\map \stag {\lst\type}\) and
\(\map \funname {(\lst\type \times \type)}\) that
correspond to underlying finite partial functions.
\end{definition}

Struct and union names on the one hand, and function names on the other, have
their own name space in accordance with the \Celeven{}
standard~\cite[6.2.3p1]{iso:12}.

\begin{notation}
We often write an environment as a mixed sequence of struct and
union declarations \(t : \vec \tau\), and function declarations
\(f : \pair {\vec\tau} \tau\).
This is possible because environments are finite.
\end{notation}

Since we represent the fields of structs and unions as lists,
fields are nameless.
For example, the \C{} type \lstinline|struct S1 { int x; struct S1 *p; }| is
translated into the environment \(\mathtt{S1} : \listlit{
\TBase {\TInt {\IntType \Signed \intrank}},
\TBase {\TPtr {\TType {\TStruct {\mathtt{S1}}}}}}\).

Although structs and unions are semantically very different (products versus
sums, respectively), environments do not keep track of whether a tag has been
used for a struct or a union type.
Structs and union types with the same tag are thus allowed.
The translator in~\cite{kre:wie:15,kre:15:phd} forbids the same name
being used to declare both a struct and union type.

Although our mutual inductive syntax of types already forbids many
incorrect types such as functions returning functions (instead of function
pointers), still some ill-formed types such as \lstinline|int[0]| are
syntactically valid.
Also, we have to ensure that cyclic structs and unions are only allowed when
the recursive definition is guarded through pointers.
Guardedness by pointers ensures that the sizes of types are finite and
statically known.
Consider the following types:

\begin{lstlisting}
struct list1 { int hd; struct list1 tl; };     /* illegal */
struct list2 { int hd; struct list2 *p_tl; };  /* legal */
\end{lstlisting}

The type declaration \lstinline|struct list1| is illegal because it
has a reference to itself.
In the type declaration \lstinline|struct list2| the self reference is guarded
through a pointer type, and therefore legal.
Of course, this generalizes to mutual recursive types like:

\begin{lstlisting}
struct tree { int hd; struct forest *p_children; };
struct forest { struct tree *p_hd; struct forest *p_tl; };
\end{lstlisting}

\begin{definition}
\label{definition:type_valid}
The following judgments are defined by mutual induction:
\begin{itemize}
\item The judgment \(\ptrtypevalid \Gamma {\ptype\tau}\) describes
	\emph{point-to types \(\ptype\tau\) to which a pointer may point}:
\begin{gather*}
\AXC{\strut}
\UIC{\(\ptrtypevalid \Gamma \TAny\)}
\DP\qquad
\AXC{\strut\(\ptrtypevalid \Gamma {\vec\tau}\)}
\AXC{\strut\(\ptrtypevalid \Gamma \tau\)}
\BIC{\(\ptrtypevalid \Gamma {\TFun {\vec\tau} \tau}\)}
\DP \qquad
\AXC{\strut\(\basetypevalid \Gamma {\btype\tau}\)}
\UIC{\(\ptrtypevalid \Gamma {\TType {\TBase{\btype\tau}}}\)}
\DP\qquad
\AXC{\strut\(\typevalid \Gamma \tau\)}
\AXC{\strut\(n \neq 0\)}
\BIC{\(\ptrtypevalid \Gamma {\TType {\TArray \tau n}}\)}
\DP
\\[0.5em]
\AXC{}
\UIC{\(\ptrtypevalid \Gamma {\TType {\TStruct t}}\)}
\DP\qquad
\AXC{}
\UIC{\(\ptrtypevalid \Gamma {\TType {\TUnion t}}\)}
\DP   
\end{gather*}
\item The judgment \(\basetypevalid \Gamma {\btype\tau}\) describes \emph{valid
	base types \(\btype\tau\)}:
\begin{gather*}
\AXC{\strut}
\UIC{\(\basetypevalid \Gamma {\TInt{\itype\tau}}\)}
\DP\qquad
\AXC{\strut\(\ptrtypevalid \Gamma {\ptype\tau}\)}
\UIC{\(\basetypevalid \Gamma {\TPtr {\ptype\tau}}\)}
\DP\qquad
\AXC{\strut}
\UIC{\(\basetypevalid \Gamma \TVoid\)}
\DP
\end{gather*}
\item The judgment \(\typevalid \Gamma \tau\) describes \emph{valid types
	\(\tau\)}:
\begin{gather*}
\AXC{\strut\(\basetypevalid \Gamma {\btype\tau}\)}
\UIC{\(\typevalid \Gamma {\TBase{\btype\tau}}\)}
\DP\qquad
\AXC{\strut\(\typevalid \Gamma \tau\)}
\AXC{\strut\(n \neq 0\)}
\BIC{\(\typevalid \Gamma {\TArray \tau n}\)}
\DP\qquad
\AXC{\strut\(t \in \compounddom \Gamma\)}
\UIC{\(\typevalid \Gamma {\TStruct t}\)}
\DP\qquad
\AXC{\strut\(t \in \compounddom \Gamma\)}
\UIC{\(\typevalid \Gamma {\TUnion t}\)}
\DP
\end{gather*}
\end{itemize}
\end{definition}

\begin{definition}
\label{definition:env_valid}
The judgment \(\envvalid \Gamma\) describes \emph{well-formed environments
\(\Gamma\)}.
It is inductively defined as:
\begin{gather*}
\renewcommand{\defaultHypSeparation}{\hskip.7em}
\renewcommand{\ScoreOverhang}{0.1em}
\AXC{\strut}
\UIC{\(\envvalid \emptyset\)}
\DP\quad
\AXC{\strut\(\envvalid \Gamma\)}
\AXC{\(\typevalid \Gamma {\vec\tau}\)}
\AXC{\(\vec\tau \neq \nil\)}
\AXC{\(t \notin \compounddom \Gamma\)}
\QIC{\(\envvalid {\einsert t {\vec\tau} \Gamma}\)}
\DP\quad
\AXC{\strut\(\envvalid \Gamma\)}
\AXC{\(\ptrtypevalid \Gamma {\vec\tau}\)}
\AXC{\(\ptrtypevalid \Gamma \tau\)}
\AXC{\(f \notin \fundom \Gamma\)}
\QIC{\(\envvalid {\einsert f {\pair {\vec\tau} \tau} \Gamma}\)}
\DP
\end{gather*}
\end{definition}

Note that \(\typevalid \Gamma \tau\) does not imply \(\envvalid \Gamma\).
Most results therefore have \(\envvalid \Gamma\) as a premise.
These premises are left implicit in this paper.

In order to support (mutually) recursive struct and union types, pointers to
incomplete struct and union types are permitted in the judgment
\(\ptrtypevalid \Gamma {\ptype\tau}\) that describes types to which pointers
are allowed, but forbidden in the judgment \(\typevalid \Gamma \tau\) of
validity of types.
Let us consider the following type declarations:

\begin{lstlisting}
struct S2 { struct S2 x; };   /* illegal */
struct S3 { struct S3 *p; };  /* legal */
\end{lstlisting}

Well-formedness \(\envvalid \Gamma\) of the environment
\(\Gamma \defined \mathtt{S3} :
	\listlit{\TBase{\TPtr{\TType {\TStruct {\mathtt{S3}}}}}}\) can
be derived using the judgments
\(\ptrtypevalid \emptyset {\TStruct {\mathtt{S3}}}\),
\(\basetypevalid \emptyset {\TPtr {\TType {\TStruct {\mathtt{S3}}}}}\),
\(\typevalid \emptyset {\TBase{\TPtr {\TType {\TStruct {\mathtt{S3}}}}}}\), and
thus \(\envvalid \Gamma\).
The environment \(\mathtt{S2} : \listlit{\TStruct {\mathtt{S2}}}\) is ill-formed
because we do not have
\(\typevalid \emptyset {\TStruct {\mathtt{S2}}}\).

The typing rule for function pointers types is slightly more delicate.
This is best illustrated by an example:

\begin{lstlisting}
union U { int i; union U (*f) (union U); };
\end{lstlisting}

This example displays a recursive self reference to a union type through
a function type, which is legal in \C{} because function types are in fact
pointer types.
Due to this reason, the premises of \(\ptrtypevalid \Gamma {\TFun {\vec\tau} \tau}\)
are \(\ptrtypevalid \Gamma {\vec\tau}\) and \(\ptrtypevalid \Gamma \tau\)
instead of \(\typevalid \Gamma {\vec\tau}\) and \(\typevalid \Gamma \tau\).
Well-formedness of the above union type can be derived as follows:

\begin{equation*}
\AXC{\(\envvalid \Gamma\)}
	\AXC{}\UIC{\(\basetypevalid \emptyset \TSignedInt\)}
\UIC{\(\typevalid \emptyset {\TBase \TSignedInt}\)}
	\AXC{}\UIC{\(\ptrtypevalid \emptyset {\TUnion {\mathtt U}}\)}
	\AXC{}\UIC{\(\ptrtypevalid \emptyset {\TUnion {\mathtt U}}\)}
	\BIC{\(\ptrtypevalid \emptyset
		{\TFun {\TUnion {\mathtt U}} {\TUnion {\mathtt U}}}\)}
	\UIC{\(\basetypevalid \emptyset
		{\TPtr {(\TFun {\TUnion {\mathtt U}} {\TUnion {\mathtt U}})}}\)}
\UIC{\(\typevalid \emptyset
	{\TBase {\TPtr {(\TFun {\TUnion {\mathtt U}} {\TUnion {\mathtt U}})}}}\)}
\TIC{\(\envvalid {\einsert {\mathtt U} {\listlit {\TBase \TSignedInt,
	\TBase {\TPtr {(\TFun {\TUnion {\mathtt U}} {\TUnion {\mathtt U}})}}}}
	\Gamma}\)}
\DP
\end{equation*}

In order to define operations by recursion over the structure of well-formed
types (see for example Definition~\ref{definition:val_unflatten}, which turns a
sequence of bits into a value), we often
need to perform recursive calls on the types of fields of structs and unions.
In \Coq{} we have defined a custom recursor and induction principle using
well-founded recursion.
In this paper, we will use these implicitly.

Affeldt \etal~\cite{aff:mar:13,aff:sak:14} have formalized non-cyclicity of
types using a complex constraint on paths through types.
Our definition of validity of environments
(Definition~\ref{definition:env_valid}) follows the structure of type
environments, and is more easy to use (for example to implement the
aforementioned recursor and induction principle).

There is a close
correspondence between array and pointer types in \C{}.
Arrays are not first class types, and except for special cases such as
initialization, manipulation of arrays is achieved via pointers.
We consider arrays as first class types so as to
avoid having to make exceptions for the case of arrays all the time.

Due to this reason, more types are valid in \CHtwoO{} than
in \Celeven{}.
The translator in~\cite{kre:wie:15,kre:15:phd} resolves exceptional cases for arrays.
For example, a function parameter of array type acts like a parameter of pointer
type in \Celeven~\cite[6.7.6.3]{iso:12}\footnote{The array size is ignored
unless the \lstinline|static| keyword is used.
In case \lstinline|f| would have the prototype
\lstinline|void f(int a[static 10])|, the pointer
\lstinline|a| should provide access to an array of at least 10
elements~\cite[6.7.6.3]{iso:12}.
The \lstinline|static| keyword is not supported by \CHtwoO.}.

\begin{lstlisting}
void f(int a[10]);
\end{lstlisting}

The corresponding type of the function \lstinline|f| is thus
\(\TFun {\TBase {\TPtr {(\TBase \TSignedInt)}}} {\TBase \TVoid}\).
Note that the type
\(\TFun {\TArray {(\TBase \TSignedInt)} {10}} {\TBase \TVoid}\) is also
valid, but entirely different, and never generated by the translator
in~\cite{kre:wie:15,kre:15:phd}.

\subsection{Implementation environments}
\label{section:env_spec}

We finish this section by extending integer coding environments to describe
implementation defined properties related the layout of struct
and union types.
The author's PhD thesis~\cite{kre:15:phd} also considers the
implementation-defined behavior of integer operations (such as addition and
division) and defines inhabitants
of this interface corresponding to actual computing architectures.

\begin{definition}
\label{definition:env_spec}
A \emph{implementation environment with ranks \(K\)} consists of an integer
coding environment with ranks \(K\) and functions:
\begin{equation*}
\sizeofSym \Gamma :{} \type \to \nat \qquad
\alignofSym \Gamma :{} \type \to \nat \qquad
\fieldsizesSym \Gamma :{} \lst\type \to \lst\nat
\end{equation*}
These functions should satisfy:
\begin{align*}
\sizeof \Gamma {(\TBase {\TInt {\IntType {si} k}})} ={}& \ranksize k 
&&	\sizeof \Gamma {(\TBase {\TPtr {\ptype\tau}})} \ne{} 0 \qquad
	\sizeof \Gamma {\TBase \TVoid} \ne{} 0 \\
\sizeof \Gamma {(\TArray \tau n)} ={}& n * \sizeof \Gamma \tau \\
\sizeof \Gamma {(\TStruct t)} ={}& \Sigma\; \fieldsizes \Gamma {\vec \tau}
	&& \textnormal{if \(\elookup t \Gamma = \Some {\vec \tau}\)} \\
\sizeof \Gamma {\tau_i} \le{}& z_i
	\textnormal{ and } \length {\vec \tau} = \length {\vec z}
	&& \textnormal{if \(\fieldsizes \Gamma {\vec\tau} = \vec z\),
	for each \(i < \length {\vec\tau}\)} \\
\sizeof \Gamma {\tau_i} \le{}& \sizeof \Gamma {(\TUnion t)}
	&& \textnormal{if \(\elookup t \Gamma = \Some {\vec\tau}\),
	for each \(i < \length {\vec\tau}\)} \\
\alignof \Gamma \tau \divides{}& \alignof \Gamma {(\TArray \tau n)} \\
\alignof \Gamma {\tau_i} \divides{}& \alignof \Gamma {(\TStruct t)}
	&& \textnormal{if \(\elookup t \Gamma = \Some {\vec\tau}\),
	for each \(i < \length {\vec\tau}\)} \\
\alignof \Gamma {\tau_i} \divides{}& \alignof \Gamma {(\TUnion t)}
	&& \textnormal{if \(\elookup t \Gamma = \Some {\vec\tau}\),
	for each \(i < \length {\vec\tau}\)} \\
\alignof \Gamma \tau \divides{}& \sizeof \Gamma \tau
	&& \textnormal{if \(\typevalid \Gamma \tau\)} \\
\alignof \Gamma {\tau_i} \divides{}& \offsetof \Gamma {\vec\tau} i
	&& \textnormal{if \(\typevalid \Gamma {\vec\tau}\),
	for each \(i < \length {\vec \tau}\)}
\end{align*}
Here, we let \(\offsetof \Gamma {\vec \tau} i\) denote
\(\Sigma_{j < i} (\fieldsizes \Gamma {\vec\tau})_j\).
The functions \(\sizeofSym \Gamma\), \(\alignofSym \Gamma\), and
\(\fieldsizesSym \Gamma\) should be closed under weakening of \(\Gamma\).
\end{definition}

\begin{notation}
Given an implementation environment, we let:
\begin{flalign*}
\bitsizeof \Gamma \tau \defined{}&
	\sizeof \Gamma \tau \cdot \charbits \\
\bitoffsetof \Gamma \tau j \defined{}&
	\offsetof \Gamma \tau j \cdot \charbits \\
\fieldbitsizes \Gamma \tau \defined{}&
	\fieldsizes \Gamma \tau \cdot \charbits
\end{flalign*}
\end{notation}

We let \(\sizeof \Gamma \tau\) specify the number of bytes out of
which the object representation of an object of type \(\tau\) constitutes.
Objects of type \(\tau\) should be allocated at addresses that are a multiple
of \(\alignof \Gamma \tau\).
We will prove that our abstract notion of addresses satisfies this property
(see Lemma~\ref{lemma:align_of_addr_object_offset}).
The functions \(\sizeofSym \Gamma\), \(\alignofSym \Gamma\) correspond to the
\lstinline|sizeof| and \lstinline|_Alignof| operators~\cite[6.5.3.4]{iso:12},
and \(\offsetofSym \Gamma\) corresponds to the \lstinline|offsetof|
macro~\cite[7.19p3]{iso:12}.
The list \(\fieldsizes \Gamma {\vec\tau}\) specifies the layout of a struct
type with fields \(\vec\tau\) as follows:

\begin{equation*}
\begin{tikzpicture}[
	start chain=going right,
	node distance=-\pgflinewidth,
	every node/.style={shape=rectangle,minimum height=1.3em,minimum width=5em}
]
\node[draw,on chain] (x0) {\(\tau_0\)};
\node[draw,on chain,fill=gray,fill opacity=0.4,right=of x0] (pad0) {};
\node[draw,on chain,right=of pad0] (x1) {\(\tau_1\)};
\node[draw,on chain,fill=gray,fill opacity=0.4,right=of x1] (pad1) {};
\draw[dashed] (pad1.north east) -- +(1,0);
\draw[dashed] (pad1.south east) -- +(1,0);
\draw[transform canvas={yshift=0.5em},<->] (x0.north west) --
	node[above]{\(\sizeof \Gamma {\tau_0}\)} (x0.north east);
\draw[transform canvas={yshift=0.5em},<->] (x1.north west) --
	node[above]{\(\sizeof \Gamma {\tau_1}\)} (x1.north east);
\draw[transform canvas={yshift=-0.5em},<->] (x0.south west) --
	node[below]{\((\fieldsizes \Gamma {\vec\tau})_0\)} (pad0.south east);
\draw[transform canvas={yshift=-0.5em},<->] (x1.south west) --
	node[below]{\((\fieldsizes \Gamma {\vec\tau})_1\)} (pad1.south east);
\draw[transform canvas={yshift=-1.1em},<-] (x0.south west) -- +(0,-0.4)
	node[below] {\(\offsetof \Gamma {\vec \tau} 0\)};
\draw[transform canvas={yshift=-1.1em},<-] (x1.south west) -- +(0,-0.4)
	node[below] {\(\offsetof \Gamma {\vec \tau} 1\)};
\draw[transform canvas={yshift=-1.1em},<-] (pad1.south east) -- +(0,-0.4)
	node[below] {\(\offsetof \Gamma {\vec \tau} 2\)};
\end{tikzpicture}
\end{equation*}

\section{Permissions and separation algebras}
\label{section:permissions}

Permissions control whether memory operations such as a read or store are
allowed or not.
In order to obtain the highest level of precision, we tag each individual
bit in memory with a corresponding permission.
In the operational semantics, permissions have two main purposes:

\begin{itemize}
\item Permissions are used to formalize the \emph{sequence point restriction}
	which assigns undefined behavior to programs in which an object in memory
	is modified more than once in between two sequence points.
\item Permissions are used to distinguish objects in memory that are writable
	from those that are read-only (\emph{const qualified} in \C{} terminology).
\end{itemize}

In the axiomatic semantics based on separation logic, permissions play an
important role for \emph{share accounting}.
We use share accounting for \emph{subdivision of permissions} among multiple
subexpressions to ensure that:

\begin{itemize}
\item Writable objects are unique to each subexpression.
\item Read-only objects may be shared between subexpressions.
\end{itemize}

This distinction is originally due to Dijkstra~\cite{dij:68:seq} and
is essential in separation logic with permissions~\cite{bor:cal:hea:par:05}.
The novelty of our work is to use separation logic with permissions for
non-determinism in expressions in \C.
Share accounting gives rise to a natural treatment of \C's sequence point
restriction.

\emph{Separation algebras} as introduced by Calcagno
\etal{}~\cite{cal:hea:yan:07} abstractly capture common structure of
subdivision of permissions.
We present a generalization of separation algebras that is well-suited for \C{}
verification in \Coq{} and use this generalization to build the permission system
and memory model compositionally.
The permission system will be constructed as a telescope of
separation algebras:
\[
	\perm \quad \defined \quad
	\underbrace {\lockable
	  {\counter \Q}}_{\mathclap{\textnormal{non-const qualified}}}
	\qquad + \qquad
	\underbrace{\Q}_{\mathclap{\textnormal{const qualified}}}
\]
Here, \(\Q\) is the separation algebra of fractional permissions,
\(\counterSym\) is a functor that extends a separation algebra with a counting
component, and \(\lockableSym\) is a functor that extends a separation algebra
with a lockable component (used for the sequence point restriction).
This section explains these functors and their purposes.

\subsection{Separation logic and share accounting}
\label{section:separation_and_share}

Before we will go into the details of the \CHtwoO{} permission system, we
briefly introduce separation logic.
Separation logic~\cite{hea:rey:yan:01} is an extension of Hoare logic that
provides better means to
reason about imperative programs that use mutable data structures and pointers.
The key feature of separation logic is the \emph{separating conjunction}
\(P \axSep Q\) that allows one
to subdivide the memory into two disjoint parts: a part described by \(P\) and
another part described by \(Q\).
The separating conjunction is most prominent in the \emph{frame rule}.
\begin{equation*}
\AXC{\(\axShort P s Q\)}
\UIC{\(\axShort {P \axSep R} s {Q \axSep R}\)}
\DP
\end{equation*}

This rule enables local reasoning.
Given a Hoare triple \(\axShort P s Q\), this rule allows one to derive
that the triple also holds when the memory is extended with a disjoint part
described by \(R\).
The frame rule shows its merits when reasoning about functions.
There it allows one to consider a function in the context of the memory the
function actually uses, instead of having to consider the function in the
context of the entire program's memory.
However, already in derivations of small programs the use of the
frame rule can be demonstrated\footnote{Contrary to traditional separation
logic, we do not give local variables a special status of being \emph{stack
allocated}.
We do so, because in \C{} even local variables are allowed to have pointers
to them.}:

\vspace{-1em}
\begin{small}
\begin{equation*}
\renewcommand{\ScoreOverhang}{0.1em}
\AXC{\phantom{a}}\UIC{\(\axShort
	{\axSingletonShort {\mathtt x} {} 0}
	{\mbox{\lstinline|x:=10|}}
	{\axSingletonShort {\mathtt x} {} 10}\)}
\UIC{\(\axShort
	{\axSingletonShort {\mathtt x} {} 0 \axSep \axSingletonShort {\mathtt y} {} 0}
	{\mbox{\lstinline|x:=10|}}
	{\axSingletonShort {\mathtt x} {} 10 \axSep \axSingletonShort {\mathtt y} {} 0}\)}
\AXC{}\UIC{\(\axShort
	{\axSingletonShort {\mathtt y} {} 0}
	{\mbox{\lstinline|y:=12|}}
	{\axSingletonShort {\mathtt y} {} 12}\)}
\UIC{\(\axShort
	{\axSingletonShort {\mathtt x} {} 10 \axSep \axSingletonShort {\mathtt y} {} 0}
	{\mbox{\lstinline|y:=12|}}
	{\axSingletonShort {\mathtt x} {} 10 \axSep \axSingletonShort {\mathtt y} {} 12}\)}
\BIC{\(\axShort
	{\axSingletonShort {\mathtt x} {} 0 \axSep \axSingletonShort {\mathtt y} {} 0}
	{\mbox{\lstinline|x:=10; y:=12|}}
	{\axSingletonShort {\mathtt x} {} 10 \axSep \axSingletonShort {\mathtt y} {} 12}\)}
\DP
\end{equation*}
\end{small}

The \emph{singleton assertion} \(\axSingletonShort a {} v\) denotes that the
memory consists of exactly one object with value \(v\) at address \(a\).
The assignments are not considered in the context of the entire memory,
but just in the part of the memory that is used.

The key observation that led to our separation logic for
\C, see also~\cite{kre:14:popl,kre:15:phd}, is the correspondence between
non-determinism in expressions and a form of concurrency.
Inspired by the rule for the parallel composition~\cite{hea:04}, we
have rules for each operator \(\binOp\) that are of the following shape.
\begin{equation*}
\AXC{\(\axShort {P_1} {e_1} {Q_1}\)}
\AXC{\(\axShort {P_2} {e_2} {Q_2}\)}
\BIC{\(\axShort {P_1 \axSep P_2} {\EBinOp \binOp {e_1} {e_2}}
	{Q_1 \axSep Q_2}\)}
\DP
\end{equation*}

The intuitive idea of this rule is that if the memory can be subdivided into
two parts in which the subexpressions \(e_1\) and \(e_2\) can
be executed safely, then the expression \(\EBinOp \binOp {e_1} {e_2}\) can
be executed safely in the whole memory.
Non-interference of the side-effects of \(e_1\) and \(e_2\) is guaranteed
by the separating conjunction.
It ensures that the parts of the memory
described by \(P_1\) and \(P_2\) do not have overlapping areas that will be
written to.
We thus effectively rule out expressions with undefined behavior such as
\lstinline|(x = 3) + (x = 4)|
(see Section~\ref{section:challenge:sequence_point} for discussion).

Subdividing the memory into multiple parts is not a simple operation.
In order to illustrate this, let us consider a shallow embedding of assertions
of separation logic \(P,Q : \mem \to \Prop\) (think of \(\mem\) as being the
set of finite partial function from some set of object identifiers to some set
of objects.
The exact definition in the context of \CHtwoO{} is given in
Definition~\ref{definition:memory}).
In such a shallow embedding, one would define the separating conjunction
as follows:
\begin{flalign*}
P \axSep Q \defined{}& \lambda m \wsdot
	\exists m_1\,m_2 \wsdot m = \sepunion {m_1} {m_2} \land P\,m_1 \land Q\,m_2.
\end{flalign*}

The operation \(\sepunionSym\) is \emph{not} the
disjoint union of finite partial functions, but a more fine grained operation.
There are two reasons for that.
Firstly, subdivision of memories should allow for partial overlap, as
long as writable objects are unique to a single part.
For example, the expression \lstinline|x + x| has defined behavior, but the
expressions \lstinline|x + (x = 4)| and \lstinline|(x = 3) + (x = 4)| have not.
% Therefore, both subexpressions require read-only access to \lstinline|x|.

We use separation logic with permissions~\cite{bor:cal:hea:par:05} to deal
with partial overlap of memories.
That means, we equip the singleton assertion 
\(\axSingletonShort a \gamma v\) with a permission \(\gamma\).
The essential property of the singleton assertion is that given a writable
permission \(\gamma_w\) there is a readable permission \(\gamma_r\) with:
\[
	(\axSingletonShort a {\gamma_w} v) \quad\iff\quad
	(\axSingletonShort a {\gamma_r} v) \axSep (\axSingletonShort a {\gamma_r} v).
\]

The above property is an instance of a slightly more general property.
We consider a binary operation \(\sepunionSym\) on permissions so we can
write:
\[
	(\axSingletonShort a {\sepunion {\gamma_1} {\gamma_2}} v) \quad\iff\quad
	(\axSingletonShort a {\gamma_1} v) \axSep (\axSingletonShort a {\gamma_2} v).
\]

Secondly, it should be possible to subdivide array, struct and union objects
into subobjects corresponding to their elements.
For example, in the case of an array \lstinline|int a[2]|, the
expression \lstinline|(a[0] = 1) + (a[1] = 4)| has defined
behavior, and we should be able to prove so.
The essential property of the singleton assertion for an 
\(\VArray {} {\listlit{y_0, \dotsc, y_{n-1}}}\) value is:
\[
	(\axSingletonShort a \gamma {\VArray {} {\listlit{v_0, \dotsc, v_{n-1}}}})
	\quad\iff\quad
	(\axSingletonShort {a[0]} \gamma {v_0}) \axSep \dotsb \axSep
	(\axSingletonShort {a[n-1]} \gamma {v_{n-1}}).
\]

This paper does not describe the \CHtwoO{} separation logic
and its shallow embedding of assertions.
These are described in the author's PhD thesis~\cite{kre:15:phd}.
Instead, we consider just the operations \(\sepunionSym\) on permissions and
memories.

\subsection{Separation algebras}
\label{section:separation_algebra}

As shown in the previous section, the key operation needed to define a
shallow embedding of separation logic with permissions is a binary operation
\(\sepunionSym\) on memories and permissions.
Calcagno \etal{} introduced the notion of a
\emph{separation algebra}~\cite{cal:hea:yan:07} so as to capture common
properties of the \(\sepunionSym\) operation.
A \emph{separation algebra} \mbox{\((A,\sepempty,\sepunionSym)\)}
is a partial cancellative commutative monoid (see
Definition~\ref{definition:separation_algebra} for our actual
definition).
Some prototypical instances of separation algebras are:

\begin{itemize}
\item Finite partial functions \((\map A B,\sepempty,\sepunionSym)\), where
	\(\sepempty\) is the empty finite partial function, and \(\sepunionSym\) the
	disjoint union on finite partial functions.
\item The Booleans \((\bool,\false,\lor)\).
\item Boyland's fractional permissions \(([0,1]_\Q,0,+)\) where \(0\) denotes
	no access, \(1\) denotes writable access, and \(0 < \_ < 1\) denotes read-only
	access~\cite{bor:cal:hea:par:05,boy:03}.
\end{itemize}

Separation algebras are also closed under various constructs (such as products and
finite functions), and complex instances can thus be built compositionally.

When formalizing separation algebras in the \Coq{} proof assistant, we quickly
ran into some problems:

\begin{itemize}
\item Dealing with partial operations such as \(\sepunionSym\) is cumbersome,
	see Section~\ref{section:coq_partial}.
%	Standard solutions such as the option monad or dependent types are
%	inconvenient to use.
\item Dealing with subset types (modeled as \(\Sigma\)-types) is inconvenient.
%	due to the absence of proof irrelevance in \Coq{} and having to assign dummy
%	values.
\item Operations such as the difference operation \(\sepdifferenceSym\) cannot
	be	defined constructively from the laws of a separation algebra.
\end{itemize}

In order to deal with the issue of partiality, we turn \(\sepunionSym\) into a
total operation.
Only in case \(x\) and \(y\) are \emph{disjoint}, notation \(\sepdisjoint x y\),
we require \(\sepunion x y\) to satisfy the laws of a separation algebra.
Instead of using subsets, we equip separation algebras with a predicate
\(\sepvalidSym : A \to \Prop\) that explicitly describes a subset of the carrier
\(A\).
Lastly, we explicitly add a difference operation \(\sepdifferenceSym\).
%The following variant of traditional non-trivial, positive, and cancellative
%separation algebras remedies these problems.

\begin{definition}
\label{definition:separation_algebra}
A \emph{separation algebra} consists of a type \(A\), with:
\begin{itemize}
\item An element \(\sepempty : A\)
\item A predicate \(\sepvalidSym : A \to \Prop\)
\item Binary relations
	\(\sepdisjointSym,\ \sepsubseteqSym{} : A \to A \to \Prop\)
\item Binary operations \(\sepunionSym,\ \sepdifferenceSym{} : A \to A \to A\)
\end{itemize}

\noindent
Satisfying the following laws:
\begin{enumerate}[series=sep_alg]
\item If \(\sepvalid x\), then \(\sepdisjoint \sepempty x\)
	and \(\sepunion \sepempty x = x\)
	\label{item:sep_left_id}
\item If \(\sepdisjoint x y\), then \(\sepdisjoint y x\)
	and \(\sepunion x y = \sepunion y x\)
	\label{item:sep_commutative}
\item If \(\sepdisjoint x y\) and \(\sepdisjoint {\sepunion x y} z\),
	then
	%\begin{enumerate}
	\(\sepdisjoint y z\), \(\sepdisjoint x {\sepunion y z}\), and
	\(\sepunion x {(\sepunion y z)} = \sepunion {(\sepunion x y)} z\)
	%\label{item:sep_associative_associative}
	%\end{enumerate}
	\label{item:sep_associative}
\item If \(\sepdisjoint z x\), \(\sepdisjoint z y\) and
	\(\sepunion z x = \sepunion z y\), then \(x = y\)
	\label{item:sep_cancel}
\item If \(\sepdisjoint x y\), then \(\sepvalid x\) and
	\(\sepvalid {(\sepunion x y)}\)
	\label{item:sep_union_valid}
\item If \(\sepdisjoint x y\) and \(\sepunion x y = \sepempty\),
	then \(x = \sepempty\)
	\label{item:sep_positive}
\item If \(\sepdisjoint x y\), then \(\sepsubseteq x {\sepunion x y}\)
	\label{item:sep_union_subseteq}
\item If \(\sepsubseteq x y\), then \(\sepdisjoint x {\sepdifference y x}\)
	and \(\sepunion x {\sepdifference y x} = y\)
	\label{item:sep_union_difference}
\end{enumerate}
\end{definition}

Laws~\ref{item:sep_left_id}--\ref{item:sep_cancel} describe the
traditional laws of a separation algebra: identity, commutativity,
associativity and cancellativity.
Law~\ref{item:sep_union_valid} ensures that \(\sepvalidSym\) is closed under
the \(\sepunionSym\) operation.
Law~\ref{item:sep_positive} describes positivity.
Laws~\ref{item:sep_union_subseteq} and~\ref{item:sep_union_difference} fully
axiomatize the \(\sepsubseteqSym\) relation and \(\sepdifferenceSym\) operation.
Using the positivity and cancellation law, we obtain that \(\sepsubseteqSym\)
is a partial order in which \(\sepunionSym\) is order preserving and respecting.

In case of permissions, the \(\sepempty\) element is used to split objects of
compound types (arrays and structs) into multiple parts.
% as detailed in Section~\ref{section:separation_singleton}.
We thus use separation algebras instead of \emph{permission
algebras}~\cite{hea:rey:yan:01}, which are a variant of separation algebras
without an \(\sepempty\) element.
%Now that we have our definition of a separation algebra in place, we define
%instances for Booleans and fractional permissions.

\begin{definition}
\label{definition:boolean_algebra}
The \emph{Boolean separation algebra \(\bool\)} is defined as: \hfill
\begin{align*}
\sepvalid x \defined{}& \True & \sepempty \defined{}& \false \\
\sepdisjoint x y \defined{}& \neg x \lor \neg y
	& \sepunion x y \defined{}& x \lor y \\
\sepsubseteq x y \defined{}& x \impl y
	& \sepdifference x y \defined{}& x \land \neg y
\end{align*}
\end{definition}

In the case of fractional permissions \([0,1]_\Q\) the problem of partiality
and subset types already clearly appears.
The \(\sepunionSym\) operation (here \(+\)) can `overflow'.
We remedy this problem by having all operations operate on
pre-terms (here \(\Q\)) and the predicate \(\sepvalidSym\) describes validity
of pre-terms (here \(0 \le \_ \le 1\)).

\begin{definition}
\label{definition:fractional_algebra}
The \emph{fractional separation algebra \(\Q\)} is defined as: \hfill
\begin{align*}
\sepvalid x \defined{}& 0 \le x \le 1
	& \sepempty \defined{}& 0 \\
\sepdisjoint x y \defined{}& 0 \le x, y\ \land\ x + y \le 1
	& \sepunion x y \defined{}& x + y \\
\sepsubseteq x y \defined{}& 0 \le x \le y \le 1
	& \sepdifference x y \defined{}& x - y
\end{align*}
\end{definition}

The version of separation algebras by Klein \etal~\cite{kle:kol:boy:12} in
\Isabelle{} also models \(\sepunionSym\) as a total operation and uses a
relation \(\sepdisjointSym\).
There are some differences:

\begin{itemize}
\item We include a predicate \(\sepvalidSym\) to prevent having to deal
	with subset types.
\item They have weaker premises for associativity
	(law~\ref{item:sep_associative}), namely
	\(\sepdisjoint x y\), \(\sepdisjoint y z\) and \(\sepdisjoint x z\)
	instead of \(\sepdisjoint x y\) and \(\sepdisjoint {\sepunion x y} z\).
	Ours are more natural, \eg{} for fractional permissions
	one has \(\sepdisjoint {0.5} {0.5}\) but not \(\sepdisjoint {0.5 + 0.5}
	{0.5}\), and it thus makes no sense to require
	\(\sepunion {0.5} {(\sepunion {0.5} {0.5})}
	= \sepunion {(\sepunion {0.5} {0.5})} {0.5}\) to hold.
\item Since \Coq{} (without axioms) does not have a choice operator, the
	\(\sepdifferenceSym\) operation cannot be defined in terms of
	 \(\sepunionSym\).
	 \Isabelle{} has a choice operator.
\end{itemize}

Dockins \etal{}~\cite{doc:hob:app:09} have formalized a hierarchy of
different separation algebras in \Coq{}.
They have dealt with the issue of partiality by treating \(\sepunionSym\) as a
relation instead of a function.
This is unnatural, because equational reasoning becomes impossible and
one has to name all auxiliary results.
% In particular, dealing with associativity becomes quite awkward.

Bengtson~\etal{}~\cite{ben:jen:cie:bir:11} have formalized separation algebras
in \Coq{} to reason about object-oriented programs.
They have treated \(\sepunionSym\) as a partial function, and have not defined
any complex permission systems.
% like those that are used in our memory model
%(Section~\ref{section:ctrees}).

\subsection{Permissions}
\label{section:cpermissions}
%Our memory model uses a complex permission system to control whether memory
%operations such as a read or store are allowed or not.
We classify permissions using \emph{permission kinds}.

\begin{definition}
The lattice of \emph{permission kinds} \((\pkind,\subseteq)\) is defined as:
\begin{equation*}
\begin{tikzpicture}[node distance=3em]
\node (Writable) [yshift=1em] {\(\Some\Writable\)};
\node (Readable) [below left of=Writable] {\(\Some\Readable\)};
\node (Locked) [below right of=Writable] {\(\Some\Locked\)};
\node (Existing) [below left of=Locked] {\(\Some\Existing\)};
\node (None) [below of=Existing,yshift=1.2em] {\(\None\)};
\draw (Writable) -- (Readable) -- (Existing) -- (None);
\draw (Writable) -- (Locked) -- (Existing);
\end{tikzpicture}
\end{equation*}
\end{definition}

The order \(k_1 \subseteq k_2\) expresses that \(k_1\) allows fewer operations
than \(k_2\).
This organization of permissions is inspired by
Leroy \etal~\cite{ler:app:bla:ste:12}.
The intuitive meaning of the permission kinds is as follows:

\begin{itemize}
\item \(\Some\Writable\).
	\emph{Writable permissions} allow reading and writing.
\item \(\Some\Readable\).
	\emph{Read-only permissions} allow solely reading.
\item \(\Some\Existing\).
	\emph{Existence permissions}~\cite{bor:cal:hea:par:05} are used for objects
	that are known to exist but whose value cannot be used.
	Existence permissions are essential because \C{} only permits pointer
	arithmetic on pointers that refer to objects that have not been deallocated
	(see Section~\ref{section:challenge:indeterminate_pointers} for discussion).
\item \(\Some\Locked\).
	\emph{Locked permissions} are used to formalize the sequence point
	restriction.
	When an object is modified during the execution of an expression, it is
	temporarily given a locked permission to forbid any read/write accesses until
	the next sequence point.
%	Section~\ref{section:expression_operational} details the treatment of
%	sequence points in the operational semantics.
	
	For example, in \lstinline|(x = 3) + *p;| the assignment
	\lstinline|x = 3| locks the permissions of the object \lstinline|x|.
	Since future read/write accesses to \lstinline|x| are forbidden, accessing
	\lstinline|*p| results in undefined in case \lstinline|p| points to
	\lstinline|x|.
	At the sequence point \lstinline|;|, the original permission of
	\lstinline|x| is restored.
	
	Locked permissions are different from existence permissions because the
	operational semantics can change writable permissions into locked permissions
	and \viceversa, but cannot do that with existing permissions.
\item \(\None\).
	\emph{Empty permissions} allow no operations.
\end{itemize}

In our separation logic we do not only have control which operations are
allowed, but also have to deal with share accounting.

\begin{itemize}
\item We need to subdivide objects with writable or
	read-only permission into multiple parts with read-only permission.
	For example, in the expression \lstinline|x + x|, both subexpressions require
	\lstinline|x| to have at least read-only permission.
\item We need to subdivide objects with writable
	permission into a part with existence permission and a part
	with writable permission.
	For example, in the expression \lstinline|*(p + 1) = (*p = 1)|, the
	subexpression \lstinline|*p = 1| requires \lstinline|*p| to have writable
	permission, and the subexpression \lstinline|*(p + 1)| requires \lstinline|*p|
	to have at least existence permission so as to perform pointer arithmetic on
	\lstinline|p|.
\end{itemize}

When reassembling subdivided permissions (using \(\sepunionSym\)), we need
to know when the original permission is reobtained.
Therefore, the underlying permission system needs to have more structure, and 
cannot consist of just the permission kinds.

\begin{definition}
\label{definition:permissions}
\emph{\CHtwoO{} permissions \(\perm\)} are defined as:
\begin{equation*}
\begin{tikzpicture}[label/.style={pos=1,anchor=south,text depth=0em}]
\node (perm) {\(
	\gamma \in \perm \quad \defined \quad
	\underbrace {\lockable
	  {\counter \Q}}_{\mathclap{\textnormal{non-const qualified}}}
	\qquad\qquad + \qquad\qquad
	\underbrace{\Q}_{\mathclap{\textnormal{const qualified}}}\)};
\draw[thick,<-] ($(perm.north west) + (2.95,0)$) to[out=90,in=-90]
	node[label] {Lockable SA} +(-1.925,1.8em);
\draw[thick,<-] ($(perm.north west) + (3.3,0)$) to[out=90,in=-90]
	node[label] {Counting SA} +(0,1.8em);
\draw[thick,<-] ($(perm.north west) + (3.65,0)$) to[out=90,in=-90]
	node[label] (frac) {Fractional SA} +(2,1.8em);
\draw[thick,<-] ($(perm.north west) + (7.6,0)$) to[out=90,in=-90] (frac.south);
\end{tikzpicture}
\end{equation*}
where \(\lockable A \defined \{ \LLockedSym, \LUnlockedSym \} \times A\)
and \(\counter A \defined \Q \times A\).
\end{definition}

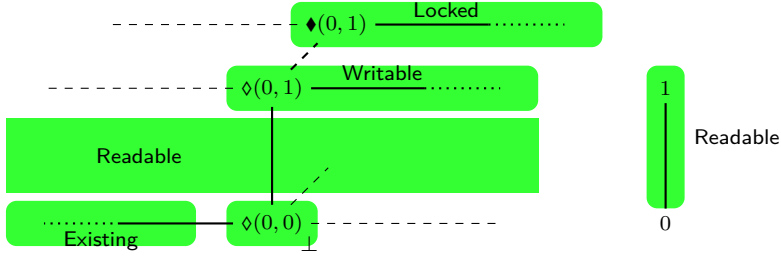
\begin{figure}
\centering
\pgfdeclarelayer{back}
\pgfsetlayers{back,main}
\begin{tikzpicture}[node distance=4em]
\node (1) {\(\LUnlocked{(0,1)}\)};
\node (1before) [below left of=1] {};
\node[below of=1,yshift=-2em] (0) {\(\LUnlocked{(0,0)}\)};
\node[above right of=0] (0after) {};
\node[above right of=1] (Locked) {\(\LLocked{(0,1)}\)};
\draw[thick] (1) -- +(2,0);
 \draw[thick,dotted] ($(1) +(2,0)$) --  +(1,0);
\draw[thin,dashed] (1) -- +(-3,0);
\draw[thin,dashed] (Locked) -- +(-3,0); 
\draw[thick] (0) -- +(-2,0);
 \draw[thick,dotted] ($(0) +(-2,0)$) -- +(-1,0);
\draw[thin,dashed] (0) -- +(3,0); 
\draw[thick] (Locked) -- +(2,0);
 \draw[thick,dotted] ($(Locked) +(2,0)$) -- +(1,0);
\draw[thick] (0) -- (1);
\draw[thick,dashed] (1) -- (Locked);
\draw[thin,dashed] (0) -- (0after);
\begin{pgfonlayer}{back}
\fill[fill=green,opacity=0.8, path fading=west]
  ($(0) +(0,0.4)$) rectangle
  node [opacity=1] {\(\Readable\)} ($(1) +(-3.5,-0.4)$);
\fill[fill=green,opacity=0.8, path fading=east]
  ($(0) +(0,0.4)$) rectangle ($(1) +(3.5,-0.4)$);
\fill[fill=green,fill opacity=0.8, rounded corners,path fading=east]
	($(1) +(-0.6,-0.3)$) rectangle
	node [above,opacity=1] {\(\Writable\)} ($(1) +(3.5,0.3)$);
\fill[fill=green,opacity=0.8, rounded corners,path fading=west]
	($(0) +(-1,-0.3)$) rectangle
	node [below,opacity=1] {\(\Existing\)} ($(0) +(-3.5,0.3)$);
\fill[fill=green,opacity=0.8,rounded corners]
	($(0) +(-0.6,-0.3)$) rectangle
	node [right,opacity=1,xshift=0.9em,yshift=-0.9em] {\(\None\)}
	($(0) +(0.6,0.3)$);
\fill[fill=green,opacity=0.8, rounded corners,path fading=east]
	($(Locked) +(-0.6,-0.3)$) rectangle
	node [above,opacity=1] {\(\Locked\)} ($(Locked) +(3.5,0.3)$);
\end{pgfonlayer}

\node[right=15em of 1] (1ro) {\(1\)};
\node[below of=1ro,yshift=-2em] (0ro) {\(0\)};
\draw[thick] (0ro) -- (1ro);
\begin{pgfonlayer}{back}
\fill[fill=green,opacity=0.8,rounded corners]
	($(0ro) +(-0.25,0.2)$) rectangle
	node [right,opacity=1,xshift=0.9em] {\(\Readable\)}
	($(1ro) +(0.25,0.3)$);
\end{pgfonlayer}
\end{tikzpicture}
\caption{The \CHtwoO{} permission system.}
\label{figure:permissions}
\end{figure}

The author's PhD thesis~\cite{kre:15:phd} gives the exact definition
%In Section~\ref{section:separation_instances} we will give
of the separation algebra operations on permissions by defining these one by one
for the counting separation algebra \(\counterSym\), the lockable separation
algebra \(\lockableSym\), and the separation algebra on sums \(+\).
This section gives a summary of the important aspects of the permission system.

We combine fractional permissions to account for read-only/writable permissions
with counting permissions to account for the number of existence permissions
that have been handed out.
Counting permissions have originally been introduced by Bornat
\etal~\cite{bor:cal:hea:par:05}.
The annotations \(\{ \LLockedSym, \LUnlockedSym \}\) describe whether a
permission is locked \(\LLockedSym\) or not \(\LUnlockedSym\).
Only writable permission have a locked variant.

Const permissions are used for objects declared with the \lstinline|const|
qualifier.
% (currently we only support string literals that are implicitly declared const).
Modifying an object with const permissions results in undefined behavior.
Const permissions do not have a locked variant or a counting
component as they do not allow writing.

Figure~\ref{figure:permissions} indicates the \(\sepvalidSym\) predicate by the
areas marked green and displays how the elements of the permission
system project onto their kinds.
The operation \(\sepunionSym\) is defined roughly as the point-wise
addition and \(\sepdifferenceSym\) as point-wise subtraction.
% and \(\sepdisjointSym\) holds in case \(\sepunionSym\) does not overflow.

We will define an operation \(\sephalf : \perm \to \perm\) to subdivide a writable
or read-only permission into read-only permissions.
\begin{flalign*}
\sephalf \gamma \defined{}&
	\begin{cases}
	\LUnlocked {\pair {0.5 \cdot x} {0.5 \cdot y}} &
		\textnormal{if \(\gamma = \LUnlocked {\pair x y}\)} \\
	0.5 \cdot x &
		\textnormal{if \(\gamma = x \in \Q\)} \\
	\gamma & \textnormal{otherwise, dummy value}
	\end{cases}
\end{flalign*}

Given a writable or read-only permission \(\gamma\), the subdivided read-only
permission \(\sephalf \gamma\) enjoys
\(\sepdisjoint {\sephalf \gamma} {\sephalf \gamma}\) and
\(\sepunion {\sephalf \gamma} {\sephalf \gamma} = \gamma\).

The existence permission \(\permtoken \defined \LUnlocked {\pair {-1} 0}\) is
used in combination with the \(\sepdifferenceSym\) operation
to subdivide a writable permission \(\gamma\) into a writable permission
\(\sepdifference \gamma \permtoken\) and
an existence permission \(\permtoken\).
We have \(\sepunion \gamma {(\sepdifference \gamma \permtoken)} = \gamma\) by
law~\ref{item:sep_union_difference} of separation algebras.
Importantly, only objects with \(\LUnlocked {\pair 0 1}\) permission can be
deallocated, whereas objects with \(\sepdifference \gamma \permtoken\)
permission cannot (see Definition~\ref{definition:memory_operations}) because
expressions such as \lstinline|(p == p) + (free(p),0)| have undefined behavior.

\subsection{Extended separation algebras}

We extend separation algebras with a split operation \(\sephalfSym\) and
predicates to distinguish permissions in our memory model.
%Moreover, we present various instances of separation algebras that will be
%used to construct the \C{} permission system.

\begin{definition}
\label{definition:ext_separation_algebra}
An \emph{extended separation algebra} extends a separation algebra with:
\begin{itemize}
\item Predicates \(\sepsplittableSym, \sepunmappedSym,
	\sepunsharedSym : A \to \Prop\) 
\item A unary operation \(\sephalfSym : A \to A\)
\end{itemize}
\noindent
Satisfying the following laws:
\begin{enumerate}[resume=sep_alg]
\item If \(\sepdisjoint x x\), then \(\sepsplittable {(\sepunion x x)}\)
	\label{item:sep_splittable_union}
\item If \(\sepsplittable x\), then \(\sepdisjoint {\sephalf x} {\sephalf x}\)
	and \(\sepunion {\sephalf x} {\sephalf x} = x\)
	\label{item:sep_union_half}
\item If \(\sepsplittable y\) and \(\sepsubseteq x y\), then
	\(\sepsplittable x\)
	\label{item:sep_splittable_weaken}
\item If \(\sepdisjoint x y\) and \(\sepsplittable {(\sepunion x y)}\),
	then \(\sephalf {(\sepunion x y)} = \sepunion {\sephalf x} {\sephalf y}\)
	\label{item:sep_union_half_distr}
\item \(\sepunmapped \sepempty\),
	and if \(\sepunmapped x\), then \(\sepvalid x\)
	\label{item:sep_unmapped_empty_valid}
\item If \(\sepunmapped y\) and \(\sepsubseteq x y\), then \(\sepunmapped x\)
\item If \(\sepdisjoint x y\), \(\sepunmapped x\) and \(\sepunmapped y\),
	then \(\sepunmapped {(\sepunion x y)}\)
	\label{item:sep_unmapped_union}
\item \(\sepunshared x\) iff \(\sepvalid x\) and
	for all \(y\) with \(\sepdisjoint x y\) we have \(\sepunmapped y\)
	\label{item:sep_unshared_spec}
\item Not both \(\sepunshared x\) and \(\sepunmapped x\)
\item There exists an \(x\) with \(\sepvalid x\) and
	not \(\sepunmapped x\)
	\label{item:sep_inhabited}
\end{enumerate}
\end{definition}

The \(\sephalfSym\)-operation is partial, but described by a total function
whose result \(\sephalf x\) is only meaningful if \(\sepsplittable x\) holds.
% because locked permissions cannot be split.
Law~\ref{item:sep_splittable_weaken} ensures that splittable permissions
are infinitely splittable, and law~\ref{item:sep_union_half_distr}
ensures that \(\sephalfSym\) distributes over \(\sepunionSym\).

The predicates \(\sepunmappedSym\) and \(\sepunsharedSym\) associate an
intended semantics to elements of a separation algebra.
Let us consider fractional permissions to indicate the intended meaning of
these predicates.

\begin{definition}
The \emph{fractional separation algebra \(\Q\)} is extended with:
\begin{align*}
\sepsplittable x \defined{}& 0 \le x \le 1
	& \sephalf x \defined{}& 0.5 \cdot x \\
\sepunmapped x \defined{}& x = 0
	& \sepunshared x \defined{}& x = 1
\end{align*}
\end{definition}

Remember that permissions will be used to annotate each individual bit in
memory.
Unmapped permissions are \emph{on the bottom}: they do not allow their bit to
be used in any way.
Exclusive permissions are \emph{on the top}: they are the sole owner of a bit
and can do anything to that bit without affecting disjoint bits.

Fractional permissions have exactly one unmapped element and exactly one
exclusive element, but in the \CHtwoO{} permission system this is not the case.
The elements of the \CHtwoO{} permission system are classified as follows:

\begin{center}
\begin{tabular}{c|c|l}
\(\ \sepunmappedSym\ \) & \(\ \sepunsharedSym\ \) & Examples \\ \hline \hline
& \checkmark & \(\Some\Writable\) and \(\Some\Locked\) permissions \\
& & \(\Some\Readable\) permissions \\
\checkmark & & The \(\sepempty\) permission and \(\Some\Existing\) permissions
\end{tabular}
\end{center}

In order to abstractly describe bits annotated with permissions we
define the tagged separation algebra \(\tagged A t T\).
In its concrete use \(\tagged \perm \BIndet \bit\) in the memory model
(Definition~\ref{definition:pbit}), the
elements \(\pair \gamma b\) consist of a permission \(\gamma \in \perm\) and bit
\(b \in \bit\).
We use the symbolic bit \(\BIndet\) that represents indeterminate
storage to ensure that bits with \(\sepunmappedSym\) permissions indeed have no 
usable value.

\begin{definition}
\label{definition:sep_tagged}
Given a separation algebra \(A\) and a set of tags \(T\) with default tag
\(t \in T\), the \emph{tagged separation algebra
\(\tagged A t T \defined A \times T\) over \(A\)} is defined as:
\begin{flalign*}
\sepvalid {\pair x y} \defined{}&
	\sepvalid x \land (\sepunmapped x \impl y = t) \\
\sepempty \defined{}& \pair \sepempty t \\
\sepdisjoint {\pair x y} {\pair {x'} {y'}} \defined{}&
	\sepdisjoint x {x'}
	\begin{array}[t]{l}
	\land\ (\sepunmapped x \lor y = y' \lor \sepunmapped {x'}) \\
	\land\ (\sepunmapped x \impl y = t)
	\land (\sepunmapped {x'} \impl y' = t)
	\end{array} \\
\sepunion {\pair x y} {\pair {x'} {y'}} \defined{}&
	\begin{cases}
	\pair {\sepunion x {x'}} {y'} & \textnormal{if \(y = t\)} \\
	\pair {\sepunion x {x'}} y & \textnormal{otherwise}
	\end{cases} \\
\sepsplittable {\pair x y} \defined{}&
	\sepsplittable x \land (\sepunmapped x \impl y = t) \\
\sephalf {\pair x y} \defined{}& \pair {\sephalf x} y \\
\sepunmapped {\pair x y} \defined{}& \sepunmapped x \land y = t \\
\sepunshared {\pair x y} \defined{}& \sepunshared x 
\end{flalign*}
The definitions of the omitted relations and operations are as expected.
\end{definition}

\section{The memory model}
\label{section:memory}

This section defines the \CHtwoO{} memory model whose external interface
consists of operations with the following types:
\allowdisplaybreaks[1]
\begin{flalign*}
\memlookupSym \Gamma :{}& \addr\to\mem\to\option\val \\
\memforceSym \Gamma :{}& \addr\to\mem\to\mem \\
\meminsertSym \Gamma :{}& \addr\to\mem\to\val\to\mem \\ 
\memwritableSym \Gamma :{}& \addr\to\mem\to\Prop \\ 
\memlockSym \Gamma :{}& \addr\to\mem\to\mem \\
\memunlockSym :{}& \lockset \to\mem\to\mem \\
\memallocSym \Gamma :{}& \memindex\to\val\to\bool\to\mem\to\mem \\
\mdom :{}& \mem \to \finset \memindex \\
\memfreeableSym :{}& \addr\to\mem\to\Prop \\
\memfreeSym :{}& \memindex\to\mem\to\mem
\end{flalign*}

\begin{notation}
We let \(\memlookup \Gamma a m \defined \memlookupPlain \Gamma a m\) and
\(\meminsert \Gamma a v m \defined \meminsertPlain \Gamma a v m\).
\end{notation}

Many of these operations depend on the typing environment \(\Gamma\) which
assigns fields to structs and unions (Definition~\ref{definition:env}).
This dependency is required because these operations need to be aware of the
layout of structs and unions.

The operation \(\memlookup \Gamma a m\) yields the value stored at address
\(a\) in memory \(m\).
It fails with \(\None\) if the permissions are insufficient,
effective types are violated, or \(a\) is an end-of-array
address.
Reading from (the abstract) memory is not a pure operation.
Although it does not affect the memory contents, it may affect the effective
types~\cite[6.5p6-7]{iso:12}.
This happens for example in case type-punning is performed (see
Section~\ref{section:challenge:type_punning}).
This impurity is factored out by the operation \(\memforce \Gamma a m\).

The operation \(\meminsert \Gamma a v m\) stores the value \(v\) at address
\(a\) in memory \(m\).
A store is only permitted in case permissions are sufficient, effective
types are not violated, and \(a\) is not an end-of-array address.
The proposition \(\memwritable \Gamma a m\) describes the side-conditions
necessary to perform a store.

After a successful store, the operation \(\memlock \Gamma a m\) is used to
lock the object at address \(a\) in memory \(m\).
The lock operation temporarily reduces the permissions to \(\Locked\)
so as to prohibit future accesses to \(a\).
Locking yields a formal treatment of the sequence point restriction (which
states that modifying an object more than once between two sequence points 
results in undefined behavior, see
Section~\ref{section:challenge:sequence_point}).

The operational semantics accumulates a set \(\Omega \in \lockset\) of addresses
that have been written to (Definition~\ref{definition:lockset})
and uses the operation \(\memunlock \Omega m\) at the subsequent sequence point
(which may be at the semicolon that terminates a full expression).
The operation \(\memunlock \Omega m\) restores the permissions of the addresses
in \(\Omega\) and thereby makes future accesses to the addresses in \(\Omega\)
possible again.
The author's PhD thesis~\cite{kre:15:phd} describes in detail how sequence points
and locks are treated in the operational semantics.

The operation \(\memalloc \Gamma o v \mu m\) allocates a new object with
value \(v\) in memory \(m\).
The object has object identifier \(o \notin \mdom m\) which is
non-deterministically chosen by the operation semantics.
The Boolean \(\mu\) expresses whether the new object is allocated by
\lstinline|malloc|.

Accompanying \(\memallocSym \Gamma\), the operation \(\memfree o m\) deallocates
a previously allocated object with object identifier \(o\) in memory \(m\).
In order to deallocate dynamically obtained memory via \lstinline|free|, the
side-condition \(\memfreeable a m\) describes that the permissions are
sufficient for deallocation, and that \(a\) points to a
\lstinline|malloc|ed object.

\subsection{Representation of pointers}
\label{section:pointers}

Adapted from \CompCert~\cite{ler:bla:08,ler:app:bla:ste:12},
we represent memory states as finite
partial functions from \emph{object identifiers} to \emph{objects}.
Each local, global and static variable, as well as each invocation of
\lstinline|malloc|, is
associated with a unique object identifier of a separate object in memory.
This approach separates unrelated objects by construction, and is therefore
well-suited for reasoning about memory transformations.

We improve on \CompCert{} by modeling objects as structured
trees instead of arrays of bytes to keep track of padding bytes and
the variants of unions.
This is needed to faithfully describe \Celeven's notion of effective types
(see page~\pageref{page:effective_types} of Section~\ref{section:introduction} for
an informal description).
This approach allows us to describe various undefined behaviors of
\Celeven{} that have not been considered by others (see
Sections~\ref{section:challenge:byte_level}
and~\ref{section:challenge:type_punning}).

In the \CompCert{} memory model, pointers are represented as pairs \((o,i)\)
where \(o\) is an object identifier and \(i\) is
a byte offset into the object with object identifier \(o\).
Since we represent objects as trees instead of as arrays of bytes, we represent
pointers as paths through these trees rather than as byte offsets.

\begin{definition}
\emph{Object identifiers \(o \in \memindex\)} are elements of a fixed countable
set.
In the \Coq{} development we use binary natural numbers, but since
we do not rely on any properties apart from countability, we keep the
representation opaque.
\end{definition}

We first introduce a typing environment to relate the shape of paths
representing pointers to the types of objects in memory.

\begin{definition}
\label{definition:memenv}
\label{definition:index_typed}
\label{definition:index_alive}
\emph{Memory typing environments \(\Delta \in \memenv\)} are finite
partial functions \(\map \memindex {(\type \times \bool)}\).
Given a memory environment \(\Delta\):
\begin{enumerate}
\item An \emph{object identifier \(o\) has type \(\tau\)},
	notation \(\indextyped \Delta o \tau\),
	if \(\mlookup o \Delta = \Some {\pair \tau \beta}\)
	for a \(\beta\).
\item An \emph{object identifier \(o\) is alive},
	notation \(\indexalive \Delta o\),
	if \(\mlookup o \Delta = \Some {\pair \tau \false}\)
	for a \(\tau\).
\end{enumerate}
\end{definition}

Memory typing environments evolve during program execution.
The code below is annotated with the corresponding memory environments in red.

\begin{lstlisting}
short x;
$\color{red}{\Delta_1 =	\{
	o_1 \mapsto \pair {\TBase {\TInt {\IntType \Signed \shortrank}}} \false \}}$
int *p;    
$\color{red}{\Delta_2 = \{
	o_1 \mapsto \pair {\TBase {\TInt {\IntType \Signed \shortrank}}} \false,
	o_2 \mapsto \pair {\TBase {\TPtr {\TType {\TBase {\TInt \TSignedInt}}}}} \false \}}$
p = malloc(sizeof(int));
$\color{red}{\Delta_3 = \{
	o_1 \mapsto \pair {\TBase {\TInt {\IntType \Signed \shortrank}}} \false,
	o_2 \mapsto \pair {\TBase {\TPtr {\TType {\TBase {\TInt \TSignedInt}}}}} \false,
	o_3 \mapsto \pair {\TBase {\TInt \TSignedInt}} \false} \}$
free(p);
$\color{red}{\Delta_4 = \{
	o_1 \mapsto \pair {\TBase {\TInt {\IntType \Signed \shortrank}}} \false,
	o_2 \mapsto \pair {\TBase {\TPtr {\TType {\TBase {\TInt \TSignedInt}}}}} \false,
	o_3 \mapsto \pair {\TBase {\TInt \TSignedInt}} \true \}}$
\end{lstlisting}

Here, \(o_1\) is the object identifier of the variable \lstinline|x|,
\(o_2\) is the object identifier of the variable \lstinline|p| and \(o_3\)
is the object identifier of the storage obtained via \lstinline|malloc|.

Memory typing environments also keep track of objects that have been
deallocated.
Although one cannot directly create a pointer to a deallocated object, existing
pointers to such objects remain in memory after deallocation (see the pointer
\lstinline|p| in the above example).
These pointers, also called \emph{dangling} pointers, cannot actually be used.
% see the discussion in Section~\ref{section:introduction_pointers}.

\begin{definition}
\label{definition:pointers}
\emph{References}, \emph{addresses} and \emph{pointers} are
inductively defined as:
\begin{flalign*}
\syntax{refseg} \\
\syntax{ref} \\
\syntax{addr} \\
\syntax{ptr}
\end{flalign*}
%with \(q \in \{ \RFrozen, \RUnfrozen \}\) and \(i \in \nat\).
\end{definition}

References are paths from the top of an object in memory to some subtree of that
object.
The shape of references matches the structure of types:

\begin{itemize}
\item The reference \(\RArray i \tau n\) is used to select the \(i\)th element
	of a \(\tau\)-array of length \(n\).
\item The reference \(\RStruct i t\) is used to select the \(i\)th field of a
	struct \(t\).
\item The reference \(\RUnion i t q\) is used to select the \(i\)th variant of a
	union \(t\).
\end{itemize}

References can describe most pointers in \C{} but cannot account for
end-of-array pointers and pointers to individual bytes.
We have therefore defined the richer notion of \emph{addresses}.
An address \(\Addr o \tau {\vec r} i \sigma {\ptype\sigma}\) consists of:

\begin{itemize}
\item An object identifier \(o\) with type \(\tau\).
\item A reference \(\vec r\) to a subobject of type \(\sigma\) in the entire
	object of type \(\tau\).
\item An offset \(i\) to a particular byte in the subobject of type \(\sigma\)
	(note that one cannot address individual bits in \C).
\item The type \(\ptype\sigma\) to which the address is cast.
	We use a points-to type in order to account for casts to the anonymous
	\lstinline|void*| pointer, which is represented as the points-to type
	\(\TAny\).
	This information is needed to define, for example, pointer arithmetic,
	which is sensitive to the type of the address.
\end{itemize}

In turn, pointers extend addresses with a \cNULL{} pointer
\(\NULL {\ptype\sigma}\) for each type \(\ptype\sigma\), and function pointers
\(\FunPtr f {\vec \tau} {\tau}\) which contain the name and type of a function.

Let us consider the following global variable declaration:

\begin{lstlisting}
struct S {
  union U { signed char x[2]; int y; } u;
  void *p;
} s;
\end{lstlisting}

The formal representation of the pointer \lstinline|(void*)(s.u.x + 2)| is:
\[\Ptr {\Addr {o_{\mathtt s}}
	{\TStruct {\mathtt S}}
	{\RStruct 0 {\mathtt S}\,
		\RUnion 0 {\mathtt U} \RUnfrozen\,
		\RArray 0 {\TBase {\TInt {\IntType \Signed \charrank}}} 2}
	2 {\TBase {\TInt {\IntType \Signed \charrank}}} \TAny}.
\]

Here, \(o_{\mathtt s}\) is the object identifier associated with the
variable \lstinline|s| of type \(\TStruct {\mathtt S}\).
The reference \(\RStruct 0 {\mathtt S}\, \RUnion 0 {\mathtt U} \RUnfrozen\,
\RArray 0 {\TBase {\TInt {\IntType \Signed \charrank}}} 2\) and byte-offset
\(2\) describe that the pointer refers to the third byte of the array
\lstinline|s.u.x|.
The pointer refers to an object of type
\(\TBase {\TInt {\IntType \Signed \charrank}}\).
The annotation \(\TAny\) describes that the pointer has been cast to type
\lstinline|void*|.

The annotations \(q \in \{ \RFrozen, \RUnfrozen \}\) on references
\(\RUnion i s q\) describe whether type-punning is allowed or not.
The annotation \(\RUnfrozen\) means that type-punning is allowed, \ie{}
accessing another variant than the current one has defined behavior.
The annotation \(\RFrozen\) means that type-punning is forbidden.
A pointer whose annotations are all of the shape \(\RFrozen\), and
thereby does not allow type-punning at all, is called \emph{frozen}.

\begin{definition}
\label{definition:ptr_freeze}
The \emph{freeze} function \(\freezeSym : \rrefseg \to \rrefseg\) is defined as:
\begin{equation*}
\freeze {\RArray i \tau n} \defined{} \RArray i \tau n \qquad
\freeze {\RStruct i t} \defined{} \RStruct i t \qquad
\freeze {\RUnion i t q} \defined{} \RUnion i t \RFrozen
\end{equation*}
A reference segment \(r\) is \emph{frozen}, notation \(\frozen r\), if
\(\freeze r = r\).
Both \(\freezeSym\) and \(\frozenSym\) are lifted to references,
addresses, and pointers in the expected way.
\end{definition}

Pointers stored in memory are always in frozen shape.
Definitions~\ref{definition:ctree_lookup} and~\ref{definition:base_val_flatten}
describe the formal treatment of effective types and frozen pointers, but for
now we reconsider the example from Section~\ref{section:challenge:type_punning}:

\begin{lstlisting}
union U { int x; short y; } u = { .x = 3 };
short *p = &u.y;
printf("%d\n", *p);  // Undefined
printf("%d\n", u.y); // OK
\end{lstlisting}

Assuming the object \lstinline|u| has object identifier \(o_{\mathtt u}\), the
pointers \lstinline|&u.x|, \lstinline|&u.y| and \lstinline|p| have the following
formal representations:
\begin{flalign*}
\textnormal{\lstinline|\&u.x|:}\quad&
\Ptr {\Addr {o_{\mathtt u}}
	{\TUnion {\mathtt U}}
	{\RUnion 0 {\mathtt U} \RUnfrozen}
	0
	{\TBase \TSignedInt}
	{\TType {\TBase \TSignedInt}}} \\
\textnormal{\lstinline|\&u.y|:}\quad&
\Ptr {\Addr {o_{\mathtt u}}
	{\TUnion {\mathtt U}}
	{\RUnion 1 {\mathtt U} \RUnfrozen}
	0
	{\TBase {\TInt {\IntType \Signed \shortrank}}}
	{\TType {\TBase {\TInt {\IntType \Signed \shortrank}}}}} \\
\textnormal{\lstinline|p|:}\quad&
\Ptr {\Addr {o_{\mathtt u}}
	{\TUnion {\mathtt U}}
	{\RUnion 0 {\mathtt U} \RFrozen}
	0
	{\TBase {\TInt {\IntType \Signed \shortrank}}}
	{\TType {\TBase {\TInt {\IntType \Signed \shortrank}}}}}
\end{flalign*}

These pointers are likely to have the same object representation on
actual computing architectures.
However, due to effective types, \lstinline|&u.y| may be used for type-punning
but \lstinline|p| may not.
It is thus important that we distinguish these pointers in the formal
memory model.

The additional structure of pointers is also needed to determine whether
pointer subtraction has defined behavior.
The behavior is only defined if the given pointers both point
to an element of the same array object~\cite[6.5.6p9]{iso:12}.
Consider:

\begin{lstlisting}
struct S { int a[3]; int b[3]; } s;
s.a - s.b;       // Undefined, different array objects
(s.a + 3) - s.b; // Undefined, different array objects
(s.a + 3) - s.a; // OK, same array objects
\end{lstlisting}

Here, the pointers \lstinline|s.a + 3| and \lstinline|s.b|
have different representations in the \CHtwoO{} memory model.
The author's PhD thesis~\cite{kre:15:phd} gives the formal definition of pointer
subtraction.

We will now define typing judgments for references, addresses and pointers.
The judgment for references \(\reftyped \Gamma {\vec r} \tau \sigma\) states
that \(\sigma\) is a \emph{subobject type of} \(\tau\) which can be obtained
via the reference \(\vec r\) (see also
Definition~\ref{definition:subtype}).
For example, \lstinline|int[2]| is a subobject type of
\lstinline|struct S { int x[2]; int y[3]; }| via
\(\RStruct 0 {\mathtt S}\).

\begin{definition}
\label{definition:ref_typed}
The judgment \(\reftyped \Gamma {\vec r} \tau \sigma\) describes that
\emph{\(\vec r\) is a valid reference from \(\tau\) to \(\sigma\)}.
It is inductively defined as:
\begin{gather*}
\AXC{\strut}
\UIC{\strut\(\reftyped \Gamma \nil \tau \tau\)}
\DP\qquad
\AXC{\strut\(\reftyped \Gamma {\vec r} \tau {\TArray \sigma n}\)}
\AXC{\strut\(i < n\)}
\BIC{\(\reftyped \Gamma {\vec r \RArray i \sigma n} \tau \sigma\)}
\DP\\[0.5em]
\renewcommand{\defaultHypSeparation}{\hskip.2em}
\AXC{\(\reftyped \Gamma {\vec r} \tau {\TStruct t}\)}
\AXC{\(\elookup t \Gamma = \Some {\vec\sigma}\)}
\AXC{\(i < \length {\vec\sigma}\)}
\TIC{\(\reftyped \Gamma {\vec r \RStruct i t} \tau {\sigma_i}\)}
\DP\quad
\renewcommand{\defaultHypSeparation}{\hskip.2em}
\AXC{\(\reftyped \Gamma {\vec r} \tau {\TUnion t}\)}
\AXC{\(\elookup t \Gamma = \Some {\vec\sigma}\)}
\AXC{\(i < \length {\vec\sigma}\)}
\TIC{\(\reftyped \Gamma {\vec r \RUnion i t q} \tau {\sigma_i}\)}
\DP
\end{gather*}
\end{definition}

The typing judgment for addresses is more involved than the judgment for
references.
Let us first consider the following example:

\begin{lstlisting}
int a[4];
\end{lstlisting}

Assuming the object \lstinline|a| has object identifier \(o_{\mathtt a}\), the
end-of-array pointer \lstinline|a+4| could be represented in at least the
following ways
(assuming \(\sizeof {} {(\TBase {\TInt {\IntType \Signed \intrank}})} = 4\)):
\begin{gather*}
\Ptr {\Addr {o_{\mathtt a}}
	{\TArray {\TBase {\TInt \TSignedInt}} 4}
	{\RArray 0 {\TBase {\TInt \TSignedInt}} 4}
	{16}
	{\TBase {\TInt \TSignedInt}}
	{\TType {\TBase {\TInt \TSignedInt}}}} \\
\Ptr {\Addr {o_{\mathtt a}}
	{\TArray {\TBase {\TInt \TSignedInt}} 4}
	{\RArray 3 {\TBase {\TInt \TSignedInt}} 4}
	4
	{\TBase {\TInt \TSignedInt}}
	{\TType {\TBase {\TInt \TSignedInt}}}}
\end{gather*}

In order to ensure canonicity of pointer representations, we let the typing
judgment for addresses ensure that the reference \(\vec r\) of
\(\Addr o \tau {\vec r} i \sigma {\ptype\sigma}\)
always refers to the first element of an array subobject.
This renders the second representation illegal.

\begin{definition}
The relation \(\ptrcastable \tau {\ptype\sigma}\), type \emph{\(\tau\) is
pointer castable to \(\ptype\sigma\)}, is inductively defined by
%\ptrcastable {\TBase {\TInt {\IntType {si_1} k}}}
%	{\TBase {\TInt {\IntType {si_2} k}}} \qquad
\(\ptrcastable \tau {\TType \tau}\),
\(\ptrcastable \tau {\TType {\TBase {\TInt {\IntType \Unsigned \charrank}}}}\),
and \(\ptrcastable \tau \TAny\).
\end{definition}

\begin{definition}
\label{definition:addr_typed}
The judgment \(\ptrtyped \Gamma \Delta a {\ptype\sigma}\) describes that
\emph{the address \(a\) refers to type \(\ptype\tau\)}.
It is inductively defined as:
\begin{gather*}
\AXC{\(\indextyped \Delta o \tau\qquad
	\typevalid \Gamma \tau\qquad
	\reftyped \Gamma {\vec r} \tau \sigma\quad\)}
\noLine\UIC{\(\refoffset {\vec r} = 0\qquad
	i \le \sizeof \Gamma \sigma \cdot \refsize {\vec r}\qquad
	\sizeof \Gamma {\ptype\sigma} \divides i\qquad
	\ptrcastable \sigma {\ptype\sigma}\)}
\UIC{\(\addrtyped \Gamma \Delta
	{\Addr o \tau {\vec r} i \sigma {\ptype\sigma}} {\ptype\sigma}\)}
\DP
\end{gather*}

\noindent
Here, the helper functions \(\refoffsetSym, \refsizeSym : \rref \to \nat\) are
defined as:
\begin{gather*}
\refoffset {\vec r} \defined{}
	\begin{cases}
	i & \textnormal{if \(\vec r = \vec r_2 \RArray i \tau n\)} \\
	0 & \textnormal{otherwise}
	\end{cases}
\qquad
\refsize {\vec r} \defined{}
	\begin{cases}
	n & \textnormal{if \(\vec r = \vec r_2 \RArray i \tau n\)} \\
	1 & \textnormal{otherwise}
	\end{cases}
%\refplus r j \defined{}
%	\begin{cases}
%	\RArray {i + j} \tau n {r'} & \textnormal{if \(r = \RArray i \tau n {r'}\)} \\
%	r & \textnormal{otherwise}
%	\end{cases}
\end{gather*}
\end{definition}

We use an intrinsic encoding of syntax, which means that terms
contain redundant type annotations so we can read off types.
Functions to read off types are named \(\typeofSym\) and will not be defined
explicitly.
%For example, the types \(\tau\) and \(\sigma\) of an address
%\(\Addr o \tau {\vec r} i \sigma {\ptype\sigma}\), and the type annotations on
%the reference \(\vec r\) can be inferred from \(\mlookup o \Delta\).
Type annotations make it more convenient to define
operations that depend on types (such as \(\refoffsetSym\) and \(\refsizeSym\)
in Definition~\ref{definition:addr_typed}).
As usual, typing judgments ensure that type annotations are consistent.

The premises \(i \le \sizeof \Gamma \sigma \cdot \refsize {\vec r}\) and
\(\sizeof \Gamma {\ptype\sigma} \divides i\) of the typing rule ensure that the
byte offset \(i\) is aligned and within range.
The inequality \(i \le \sizeof \Gamma \sigma \cdot \refsize {\vec r}\)
is non-strict so as to allow end-of-array pointers.

\begin{definition}
\label{definition:addr_strict}
An address \(\Addr o \tau {\vec r} i \sigma {\ptype\sigma}\) is called
\emph{strict}, notation \(\addrstrict \Gamma a\), in case it satisfies
\(i < \sizeof \Gamma \sigma \cdot \refsize {\vec r}\).
\end{definition}

The judgment \(\ptrcastable \tau {\ptype\sigma}\) does not describe
the typing restriction of cast expressions.
Instead, it defines the invariant that each address
\(\Addr o \tau {\vec r} i \sigma {\ptype\sigma}\) should satisfy.
Since \C{} is not type safe, pointer casting
has \(\ptrcastable \tau {\ptype\sigma}\)
as a run-time side-condition:

\begin{lstlisting}
int x, *p = &x;
void *q = (void*)p;    // OK, $\ptrcastable {\TBase {\TInt \TSignedInt}} \TAny$
int *q1 = (int*)q;     // OK, $\ptrcastable {\TBase {\TInt \TSignedInt}}%
                          {\TType {\TBase {\TInt \TSignedInt}}}$
short *q2 = (short*)p; // Statically ill-typed                    
short *q3 = (short*)q; // Undefined behavior, $\notptrcastable {\TBase {\TInt \TSignedInt}}%
                          {\TType {\TBase {\TInt {\IntType \Signed \shortrank}}}}$
\end{lstlisting}

\begin{definition}
\label{definition:ptr_typed}
The judgment \(\ptrtyped \Gamma \Delta p {\ptype\sigma}\) describes that
\emph{the pointer \(p\) refers to type \(\ptype\tau\)}.
It is inductively defined as:
\begin{gather*}
\AXC{\(\ptrtypevalid \Gamma {\ptype\sigma}\)}
\UIC{\(\ptrtyped \Gamma \Delta {\NULL {\ptype\sigma}} {\ptype\sigma}\)}
\DP\qquad
\AXC{\(\addrtyped \Gamma \Delta a {\ptype\sigma}\)}
\UIC{\(\ptrtyped \Gamma \Delta {\Ptr a} {\ptype\sigma}\)}
\DP\qquad
\AXC{\(\elookup f \Gamma = \Some {\pair {\vec\tau} \tau}\)}
\UIC{\(\ptrtyped \Gamma \Delta
	{\FunPtr f {\vec\tau} \tau} {\TFun {\vec\tau} \tau}\)}
\DP
\end{gather*}
\end{definition}

Addresses \(\Addr o \tau {\vec r \RArray j \sigma n} i \sigma {\ptype\sigma}\)
that point to an element of \(\TArray \tau n\) always have their
reference point to the first element of the array, \ie{} \(j = 0\).
For some operations we use the \emph{normalized reference} which refers to the
actual array element.

\begin{definition}
\label{definition:addrref}
The functions \(\addrindexSym : \addr \to \memindex\),
\(\addrrefSym \Gamma : \addr \to \rref\), and
\(\addrrefbyteSym \Gamma : \addr \to \nat\) obtain the \emph{index},
\emph{normalized reference}, and \emph{normalized byte offset}.
%These are defined as:
%
\begin{flalign*}
\addrindex {\Addr o \tau {\vec r} i \sigma {\ptype\sigma}} \defined{}& o \\
\addrref \Gamma {\Addr o \tau {\vec r} i \sigma {\ptype\sigma}} \defined{}&
%	\begin{cases}
%	\vec r_2 \RArray {i \div \sizeof \Gamma \sigma} \sigma n &
%		\textnormal{if \(\vec r = \vec r_2 \RArray j \sigma n\)} \\
%	\vec r & \textnormal{otherwise}
%	\end{cases} \\
	\refsetoffset {(i \div \sizeof \Gamma \sigma)} {\vec r} \\
\addrrefbyte \Gamma {\Addr o \tau {\vec r} i \sigma {\ptype\sigma}} \defined{}&
  i \mod (\sizeof \Gamma \sigma)
\end{flalign*}
Here, the function \(\refsetoffsetSym : \nat \to \rref \to \rref\) is defined
as:
\begin{gather*}
\refsetoffset j {\vec r} \defined{}
	\begin{cases}
	\vec{r_2} \RArray j \tau n &
		\textnormal{if \(\vec r = \vec r_2 \RArray i \tau n\)} \\
	r & \textnormal{otherwise}
	\end{cases}
\end{gather*}
\end{definition}

Let us display the above definition graphically.
Given an address \(\Addr o \tau {\vec r} i \sigma {\ptype\sigma}\), the
normalized reference and normalized byte offset are as follows:
\begin{equation*}
\begin{tikzpicture}[
	start chain=going right,
	node distance=-\pgflinewidth,
	every node/.style={shape=rectangle,minimum height=1.3em,minimum width=5em}
]
\node[draw,on chain] (x1) {};
\node[draw,on chain] (x2) {};
\node[draw,on chain] (x3) {};
\node[draw,on chain] (x4) {};
\draw[dashed] (x4.north east) -- +(2.5em,0);
\draw[dashed] (x4.south east) -- +(2.5em,0);
%\draw[transform canvas={yshift=0.5em},<->] (x1.north west) --
%	node[above]{\(\sizeof \Gamma {\tau_1}\)} (x1.north east);
%\draw[transform canvas={yshift=-0.5em},<->] (x0.south west) --
%	node[below]{\((\fieldsizes \Gamma {\vec\tau})_0\)} (pad0.south east);
%\draw[transform canvas={yshift=-0.5em},<->] (x1.south west) --
%	node[below]{\((\fieldsizes \Gamma {\vec\tau})_1\)} (pad1.south east);
\draw[thick,<-] (x1.south west) -- +(0,-0.4) node[below] {\(\vec r\)};
\draw[transform canvas={yshift=-0.7em},<->] (x1.south west) --
	node[below]{\(i\)} ($(x3.south) +(1.5em,0)$);
\draw[thick,<-] (x3.north west) -- +(0,0.75) node[above] {\(\addrref \Gamma a\)};
\draw[transform canvas={yshift=0.7em},<->] (x3.north west) --
	node[above]{\(\addrrefbyte \Gamma a\)} ($(x3.north) +(1.5em,0)$);
%\draw[transform canvas={yshift=-1.1em},<-] (x1.south west) -- +(0,-0.4)
%	node[below] {\(\offsetof \Gamma {\vec \tau} 1\)};
%\draw[transform canvas={yshift=-1.1em},<-] (pad1.south east) -- +(0,-0.4)
%	node[below] {\(\offsetof \Gamma {\vec \tau} 2\)};
\end{tikzpicture}
\end{equation*}

For end-of-array addresses the normalized reference is ill-typed because
references cannot be end-of-array.
For strict addresses the normalized reference is well-typed.

\begin{definition}
The judgment \(\ptralive \Delta p\) describes that \emph{the pointer \(p\) is
alive}.
It is inductively defined as:
\begin{equation*}
\AXC{\strut}
\UIC{\(\ptralive \Delta {\NULL {\ptype\sigma}}\)}
\DP\qquad
\AXC{\strut\(\indexalive \Delta  {(\addrindex a)}\)}
\UIC{\(\ptralive \Delta {\Ptr a}\)}
\DP\qquad
\AXC{\strut}
\UIC{\(\ptralive \Delta {\FunPtr f {\vec\tau} \tau}\)}
\DP
\end{equation*}
The judgment \(\indexalive \Delta o\) on object identifiers is defined in
Definition~\ref{definition:index_alive}.
\end{definition}

For many operations we have to distinguish addresses that refer to an entire
object and addresses that refer to an individual byte of an object.
We call addresses of the later kind \emph{byte addresses}.
For example:

\begin{lstlisting}
int x, *p = &x;                        // p is not a byte address
unsigned char *q = (unsigned char*)&x; // q is a byte address
\end{lstlisting}

\begin{definition}
An address \(\Addr o \tau {\vec r} i \sigma {\ptype\sigma}\) is \emph{a byte
address} if \(\TType\sigma \neq \ptype\sigma\).
\end{definition}

To express that memory operations commute (see for example
Lemma~\ref{lemma:cmap_alter_commute}), we need to express that addresses are
\emph{disjoint}, meaning they do not overlap.
Addresses do not overlap if they belong to different objects or take a
different branch at an array or struct.
Let us consider an example:

\begin{lstlisting}
union { struct { int x, y; } s; int z; } u1, u2;
\end{lstlisting}

The pointers \lstinline|&u1| and \lstinline|&u2| are disjoint because they
point to separate memory objects.
Writing to one does not affect the value of the other and \viceversa.
Likewise, \lstinline|&u1.s.x| and \lstinline|&u1.s.y| are disjoint because they
point to different fields of the same struct, and as such do not affect each
other.
The pointers \lstinline|&u1.s.x| and \lstinline|&u1.z| are disjoint because they
point to overlapping objects and thus do affect each other.

\begin{definition}
\emph{Disjointness of references \(\vec r_1\) and \(\vec r_2\)}, notation
\(\refdisjoint {\vec r_1} {\vec r_2}\), is inductively defined as:
\begin{gather*}
\AXC{\(\freeze {\vec r_1} = \freeze {\vec r_2}\)}
\AXC{\(i \neq j\)}
\BIC{\(\refdisjoint {\app {\vec r_1 \RArray i \sigma n} {\vec r_3}}
	{\app {\vec r_2 \RArray j \sigma n} {\vec r_4}}\)}
\DP\qquad
\AXC{\(\freeze {\vec r_1} = \freeze {\vec r_2}\)}
\AXC{\(i \neq j\)}
\BIC{\(\refdisjoint {\app {\vec r_1 \RStruct i t} {\vec r_3}}
	{\app {\vec r_2 \RStruct j t} {\vec r_4}}\)}
\DP
\end{gather*}
\end{definition}

Note that we do not require a special case for
\(\freeze {\vec r_1} \neq \freeze {\vec r_2}\).
Such a case is implicit because disjointness is defined in terms of prefixes.

\begin{definition}
\emph{Disjointness of addresses \(a_1\) and \(a_2\)}, notation
\(\addrdisjoint \Gamma {a_1} {a_2}\), is inductively defined as:
\begin{gather*}
\AXC{\(\addrindex {a_1} \neq \addrindex {a_2}\)}
\UIC{\(\addrdisjoint \Gamma {a_1} {a_2}\)}
\DP\qquad
\AXC{\(\addrindex {a_1} = \addrindex {a_2}\)}
\AXC{\(\refdisjoint {\addrref \Gamma {a_1}} {\addrref \Gamma {a_2}}\)}
\BIC{\(\addrdisjoint \Gamma {a_1} {a_2}\)}
\DP\\[0.3em]
\AXC{both \(a_1\) and \(a_2\) are byte addresses}
\noLine\UIC{\(\addrindex {a_1} = \addrindex {a_2}\) \quad
 \(\freeze {\addrref \Gamma {a_1}} = \freeze {\addrref \Gamma {a_2}}\) \quad
 \(\addrrefbyte \Gamma {a_1} \neq \addrrefbyte \Gamma {a_2}\)}
\UIC{\(\addrdisjoint \Gamma {a_1} {a_2}\)}
\DP
\end{gather*}
\end{definition}

The first inference rule accounts for addresses whose object identifiers are
different, the second rule accounts for addresses whose references
are disjoint, and the third rule accounts for addresses that point to different
bytes of the same subobject.

\begin{definition}
\label{definition:addr_object_offset}
The \emph{reference bit-offset
\(\refobjectoffsetSym \Gamma : \rrefseg \to \nat\)} is defined as:
\begin{flalign*}
\refobjectoffset \Gamma {(\RArray i \tau n)} \defined{}&
	i \cdot \bitsizeof \Gamma \tau \\
\refobjectoffset \Gamma {(\RUnion i t q)} \defined{}& 0 \\
\refobjectoffset \Gamma {(\RStruct i t)} \defined{}&
	\bitoffsetof \Gamma {\vec \tau} i
	\quad
	\textnormal{where \(\elookup t \Gamma = \Some {\vec \tau}\)}
\end{flalign*}
Moreover, we let \(\addrobjectoffset \Gamma a \defined
\Sigma_i\, (\refobjectoffset \Gamma {(\addrref \Gamma a)_i}) +
\addrrefbyte \Gamma a \cdot \charbits\).
\end{definition}

Disjointness implies non-overlapping bit-offsets, but the reverse implication
does not always hold because references to different variants of unions are not
disjoint.
For example, given the declaration
\lstinline|union { struct { int x, y; } s; int z; } u|, the pointers
corresponding to \lstinline|&u.s.y| and \lstinline|&u.z| are not disjoint.

\begin{lemma}
If \(\addrtyped \Gamma \Delta {a_1} {\sigma_1}\),
\(\addrtyped \Gamma \Delta {a_2} {\sigma_2}\),
\(\addrstrict \Gamma {\{a_1,a_2\}}\),
\(\addrdisjoint \Gamma {a_1} {a_2}\), and
\(\addrindex {a_1} \ne \addrindex {a_2}\), then either:
\begin{enumerate}
\item \(\addrobjectoffset \Gamma {a_1} + \bitsizeof \Gamma {\sigma_1} \le
	\addrobjectoffset \Gamma {a_2}\), or
\item \(\addrobjectoffset \Gamma {a_2} + \bitsizeof \Gamma {\sigma_2} \le
	\addrobjectoffset \Gamma {a_1}\).
\end{enumerate}
\end{lemma}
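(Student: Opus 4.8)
The plan is to recast the disjunctive conclusion as a single statement---that the two half-open bit-intervals $[\addrobjectoffset \Gamma {a_k}, \addrobjectoffset \Gamma {a_k} + \bitsizeof \Gamma {\sigma_k})$ for $k \in \{1,2\}$ are disjoint---and then to derive that disjointness from the fact that $a_1$ and $a_2$ sit inside two \emph{distinct} objects. For intervals of naturals, disjointness is equivalent to one interval lying entirely to the left of the other, which is exactly alternative (1) or (2). So it suffices to show that the two intervals do not overlap, and the whole argument is driven by the object layout rather than by the shape of the references.

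First I would record what the hypothesis $\addrindex{a_1} \ne \addrindex{a_2}$ buys. Writing $a_k = \Addr {o_k} {\tau_k} {\vec r_k} {i_k} {\sigma_k} {\ptype{\sigma_k}}$ with $\indextyped \Delta {o_k} {\tau_k}$, the two addresses refer to the distinct objects $o_1 \ne o_2$. With distinct indices the judgment $\addrdisjoint \Gamma {a_1} {a_2}$ already holds by its first rule and carries no further reference information, so the content must come from the layout. The key lemma I would isolate is a \emph{containment} fact: for any strict, well-typed address $a$, its bit-interval $[\addrobjectoffset \Gamma a, \addrobjectoffset \Gamma a + \bitsizeof \Gamma \sigma)$ lies within the bit-region that the layout assigns to the object $o = \addrindex a$ of type $\tau$. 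Combined with the observation that distinct objects occupy disjoint bit-regions (the memory being a disjoint union indexed by object identifiers), the two intervals land in disjoint regions and therefore cannot overlap, giving one of the two inequalities.

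I would prove the containment lemma by induction on the reference-typing derivation $\reftyped \Gamma {\addrref \Gamma a} \tau \sigma$, using the layout constraints of Definition~\ref{definition:env_spec} at each reference segment: an $\RArray i \sigma n$ step uses $i < n$ and $\sizeof \Gamma {(\TArray \sigma n)} = n \cdot \sizeof \Gamma \sigma$; an $\RStruct i t$ step uses $\sizeof \Gamma {(\TStruct t)} = \Sigma\,\fieldsizes \Gamma {\vec\tau}$ together with $\sizeof \Gamma \sigma \le (\fieldsizes \Gamma {\vec\tau})_i$; and an $\RUnion i t q$ step uses $\sizeof \Gamma \sigma \le \sizeof \Gamma {(\TUnion t)}$. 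The residual byte term $\addrrefbyte \Gamma a \cdot \charbits$ is bounded by the strictness hypothesis $\addrstrict \Gamma a$, i.e.\ $i < \sizeof \Gamma \sigma \cdot \refsize {\vec r}$, which is precisely what excludes an end-of-array slot and makes the subobject fit strictly inside its parent. I expect this containment induction to be the main obstacle: the union case yields only a \emph{non-strict} bound (a variant may be strictly smaller than the whole union), so one must propagate inequalities rather than equalities throughout, and the array case must invoke strictness to convert the address-typing judgment's non-strict bound $i_k \le \sizeof \Gamma {\sigma_k} \cdot \refsize {\vec r_k}$ into the strict containment needed for genuine non-overlap.
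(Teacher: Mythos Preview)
Your approach rests on a misreading of the memory model. The quantity $\addrobjectoffset \Gamma a$ (Definition~\ref{definition:addr_object_offset}) is a bit-offset \emph{within} the object $\addrindex a$, not a position in any global flat address space. In \CHtwoO{} the memory is a finite map from object identifiers to memory trees (Definition~\ref{definition:memory}); there is no layout that places distinct objects at disjoint numeric ranges. Your key step---``distinct objects occupy disjoint bit-regions (the memory being a disjoint union indexed by object identifiers)''---therefore has no content here: two addresses with $\addrindex{a_1}\ne\addrindex{a_2}$ can perfectly well both have $\addrobjectoffset \Gamma {a_k} = 0$ (take any two scalar variables), and then neither inequality in the conclusion holds. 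Under the hypothesis $\addrindex{a_1}\ne\addrindex{a_2}$ as printed, the lemma is in fact false, so no proof strategy can succeed.

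The printed hypothesis is almost certainly a typo for $\addrindex{a_1}=\addrindex{a_2}$; compare the surrounding prose (``disjointness implies non-overlapping bit-offsets'') and the lockset lemma a page later, where disjointness of the sets $\locksingleton \Gamma {a_k} = \{(\addrindex{a_k},i)\mid\ldots\}$ is automatic when the indices differ and requires exactly the present conclusion when they agree. With the corrected hypothesis the argument is driven by the structure of $\addrdisjoint \Gamma {a_1}{a_2}$, not by any global object layout: either the normalized references diverge at an array or struct segment, and one shows (by induction along the common prefix, using the $\sizeofSym$ and $\offsetofSym$ laws you already listed) that the two subobject intervals are non-overlapping; or both addresses are byte addresses with equal frozen references and distinct byte offsets, and the single-byte intervals are disjoint by arithmetic. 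Your containment lemma is a reasonable auxiliary bound for the first case, but it cannot carry the proof on its own.
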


\begin{lemma}[Well-typed addresses are properly aligned]
\label{lemma:align_of_addr_object_offset}
If \(\addrtyped \Gamma \Delta a \sigma\), then
\((\alignof \Gamma \sigma \cdot \charbits) \divides
\addrobjectoffset \Gamma a\).
\end{lemma}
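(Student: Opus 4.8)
The plan is to reduce the lemma to a divisibility fact about references and then discharge everything via the implementation-environment axioms of Definition~\ref{definition:env_spec}. Write $a = \Addr o \tau {\vec r} i {\tau_0} \sigma$, where $\tau$ is the object type, $\tau_0$ the subobject type, and $\sigma$ the cast type (the one appearing in the statement). Inverting the rule of Definition~\ref{definition:addr_typed} yields $\typevalid \Gamma \tau$, $\reftyped \Gamma {\vec r} \tau {\tau_0}$, $\refoffset {\vec r} = 0$, $\sizeof \Gamma \sigma \divides i$, and $\ptrcastable {\tau_0} \sigma$. First I would prove the auxiliary claim: if $\reftyped \Gamma {\vec r} \tau \rho$ with $\typevalid \Gamma \tau$, then $(\alignof \Gamma \rho \cdot \charbits) \divides \Sigma_k\, \refobjectoffset \Gamma {(\vec r)_k}$. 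This goes by induction on the derivation of $\reftyped \Gamma {\vec r} \tau \rho$ (peeling off the last reference segment), using the standing assumption $\envvalid \Gamma$ together with the routine fact that the subobject types and struct field lists encountered along a valid reference are valid, so that the relevant axioms apply. The empty case is immediate. The array step uses $\refobjectoffset \Gamma {(\RArray j \rho n)} = j \cdot \sizeof \Gamma \rho \cdot \charbits$, which is divisible by $\alignof \Gamma \rho \cdot \charbits$ because $\alignof \Gamma \rho \divides \sizeof \Gamma \rho$, while the inductive part, for type $\TArray \rho n$, is divisible by the same quantity because $\alignof \Gamma \rho \divides \alignof \Gamma {(\TArray \rho n)}$. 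The struct step is analogous, using $\alignof \Gamma {\tau_i} \divides \offsetof \Gamma {\vec\tau} i$ and $\alignof \Gamma {\tau_i} \divides \alignof \Gamma {(\TStruct t)}$ and the fact that $\refobjectoffset$ of a struct segment is $\charbits$ times a byte offset; the union step is trivial since $\refobjectoffset \Gamma {(\RUnion i t q)} = 0$ and $\alignof \Gamma {\tau_i} \divides \alignof \Gamma {(\TUnion t)}$.

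To obtain the lemma, I would then case on $\ptrcastable {\tau_0} \sigma$. When $\sigma = \TAny$ or $\sigma = \TType {\TBase {\TInt {\IntType \Unsigned \charrank}}}$, the alignment $\alignof \Gamma \sigma$ equals $1$, so it suffices to observe that $\addrobjectoffset \Gamma a$ is a multiple of $\charbits$ — which it is, since every $\refobjectoffset \Gamma {\cdot}$ is $\charbits$ times something and the last summand is $\addrrefbyte \Gamma a \cdot \charbits$. In the remaining case $\sigma = \TType {\tau_0}$, so $\alignof \Gamma \sigma = \alignof \Gamma {\tau_0}$ and, since $\sizeof \Gamma \sigma = \sizeof \Gamma {\tau_0}$ divides $i$, also $\addrrefbyte \Gamma a = i \mod \sizeof \Gamma {\tau_0} = 0$. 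If $\vec r$ does not end in an array segment, then $\addrref \Gamma a = \vec r$, hence $\addrobjectoffset \Gamma a = \Sigma_k\, \refobjectoffset \Gamma {(\vec r)_k}$ and the auxiliary claim with $\rho := \tau_0$ finishes it. Otherwise $\vec r = \vec{r'} \RArray 0 {\tau_0} n$ — the trailing index is $0$ because $\refoffset {\vec r} = 0$ — so $\addrref \Gamma a = \vec{r'} \RArray {(i \div \sizeof \Gamma {\tau_0})} {\tau_0} n$. Then $\addrobjectoffset \Gamma a = s + m$ with $s = \Sigma_k\, \refobjectoffset \Gamma {(\vec{r'})_k}$ and $m = (i \div \sizeof \Gamma {\tau_0}) \cdot \bitsizeof \Gamma {\tau_0}$, and $m = i \cdot \charbits$ since $\sizeof \Gamma {\tau_0} \divides i$. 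Now $(\alignof \Gamma {\tau_0} \cdot \charbits) \divides s$ by the auxiliary claim applied to $\reftyped \Gamma {\vec{r'}} \tau {\TArray {\tau_0} n}$ together with $\alignof \Gamma {\tau_0} \divides \alignof \Gamma {(\TArray {\tau_0} n)}$, and $(\alignof \Gamma {\tau_0} \cdot \charbits) \divides m$ because $\alignof \Gamma {\tau_0} \divides \sizeof \Gamma {\tau_0} \divides i$; hence $(\alignof \Gamma {\tau_0} \cdot \charbits) \divides s + m = \addrobjectoffset \Gamma a$.

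I expect the main difficulty to lie not in any single step but in the bookkeeping linking the canonical reference $\vec r$ stored in the address — whose trailing array index is forced to $0$, with the actual byte offset $i$ recorded separately — to the normalized reference $\addrref \Gamma a$ and normalized byte offset $\addrrefbyte \Gamma a$ that appear in the definition of $\addrobjectoffset$, and in collapsing their two contributions to $s + m$ above. Once that identity and the auxiliary claim are established, each remaining divisibility is a direct instance of one of the axioms of Definition~\ref{definition:env_spec}; the only side condition needing care is the validity of the subobject types and struct field lists along $\vec r$, which follows from $\typevalid \Gamma \tau$ and $\envvalid \Gamma$.
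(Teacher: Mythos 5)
Your proposal is correct, and it follows the route one would expect (the paper states this lemma without a prose proof, deferring to the \Coq{} development): an induction over the reference-typing derivation establishing that $(\alignof \Gamma \rho \cdot \charbits)$ divides the accumulated $\refobjectoffsetSym$ sum, discharged segment-by-segment via the axioms of Definition~\ref{definition:env_spec}, followed by the case split on $\ptrcastable {\tau_0} \sigma$ and the bookkeeping that relates $\vec r$ and $i$ to $\addrref \Gamma a$ and $\addrrefbyte \Gamma a$. The only cosmetic loose end — that $\alignof \Gamma \TAny$ is not literally defined by the interface of Definition~\ref{definition:env_spec} — is an artifact of the lemma statement itself rather than of your argument, and your treatment of that case (everything in $\addrobjectoffset \Gamma a$ is a multiple of $\charbits$) is exactly what is needed.
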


\subsection{Representation of bits}
\label{section:bits}

As shown in Section~\ref{section:challenge:byte_level}, each object in \C{} can
be interpreted as an \lstinline|unsigned char| array called the \emph{object
representation}.
On actual computing architectures, the object representation consists of a
sequence of concrete bits (zeros and ones).
However, so as to accurately describe all undefined behaviors, we need a
special treatment for the object representations of pointers and indeterminate
memory in the formal memory model.
To that end, \CHtwoO{} represents the bits belonging to the object
representations of pointers and indeterminate memory symbolically.

\begin{definition}
\label{definition:bits}
\emph{Bits} are inductively defined as:
\begin{flalign*}
\syntax{bit}.
\end{flalign*}
\end{definition}

\begin{definition}
\label{definition:bit_typed}
The judgment \(\bitvalid \Gamma \Delta b\) describes that a bit \(b\) is
\emph{valid}.
It is inductively defined as:
\begin{gather*}
\AXC{\strut}
\UIC{\(\bitvalid \Gamma \Delta \BIndet\)}
\DP\quad
\AXC{\strut\(\beta \in \{ 0, 1 \}\)}
\UIC{\(\bitvalid \Gamma \Delta {\BBit \beta}\)}
\DP\quad
\AXC{\strut\(\ptrtyped \Gamma \Delta p {\ptype\sigma}\)}
\AXC{\(\frozen p\)}
\AXC{\(i < \bitsizeof \Gamma {(\TBase {\TPtr {\ptype\sigma}})}\)}
\TIC{\(\bitvalid \Gamma \Delta {\BPtrSeg {\PtrSeg p i}}\)}
\DP
\end{gather*}
\end{definition}

A bit is either a concrete bit \(\BBit 0\) or \(\BBit 1\), the
\(i\)th fragment bit \(\BPtrSeg {\PtrSeg p i}\) of a pointer \(p\), or the
indeterminate bit \(\BIndet\).
Integers are represented using concrete sequences of bits, and pointers as
sequences of fragment bits.
Assuming \(\bitsizeof {} {(\TBase {\TPtr {\TBase {\TInt {\IntType
\Signed \intrank}}}})} = 32\), a pointer \(p : \TBase {\TInt {\IntType
\Signed \intrank}}\) will be represented as the bit sequence
\(\BPtrSeg {\PtrSeg p 0} \ldots \BPtrSeg {\PtrSeg p {31}}\), and assuming
\(\bitsizeof {} {(\TBase {\TInt {\IntType \Signed \intrank}})} = 32\) on a
little-endian architecture, the integer \(33 : \TBase {\TInt
{\IntType \Signed \intrank}}\) will be represented as the bit sequence
\(\mathtt{1000010000000000}\).

The approach using a combination of symbolic and concrete bits is similar to
Leroy \etal~\cite{ler:app:bla:ste:12} and has the following advantages:

\begin{itemize}
\item Symbolic bit representations for pointers avoid the need to clutter the
	memory model with subtle, implementation defined, and run-time dependent
	operations to decode and encode pointers as concrete bit sequences.
\item We can precisely keep track of memory areas that are uninitialized.
	Since these memory areas consist of arbitrary concrete bit sequences on
	actual machines, most operations on them have undefined behavior.
\item While reasoning about program transformations one has to relate the
	memory states during the execution of the source program to those during the
	execution of the target program.
	Program transformations can, among other things, make more memory defined
	(that is, transform some indeterminate \(\BIndet\) bits into determinate
	bits)	and relabel the memory.
	Symbolic bit representations make it easy to deal with such transformations
	(see Section~\ref{section:refinements}).
\item It vastly decreases the amount of non-determinism, making it possible to
	evaluate the memory model as part of an executable
	semantics~\cite{kre:wie:15,kre:15:phd}.
\item The use of concrete bit representations for integers still gives a
	semantics to many low-level operations on integer representations.
\end{itemize}

A small difference with Leroy \etal~\cite{ler:app:bla:ste:12} is that the
granularity of our memory model is on the level of bits rather than bytes.
Currently we do not make explicit use of this granularity, but it allows us to
support bit-fields more faithfully with respect to the \Celeven{} standard in
future work.

Objects in our memory model are annotated with permissions.
We use permission annotations on the level of individual bits, rather than on
the level of bytes or entire objects, to obtain the most precise way of
permission handling.

\begin{definition}
\label{definition:pbit}
\emph{Permission annotated bits} are defined as:
\begin{flalign*}
	\syntax{pbit} = \perm \times \bit.
\end{flalign*}
\end{definition}

In the above definition, \(\taggedSym\) is the tagged separation algebra
that has been defined in Definition~\ref{definition:sep_tagged}.
We have spelled out its definition for brevity's sake.

\begin{definition}
The judgment \(\bitvalid \Gamma \Delta {\ppbit b}\) describes that a
permission annotated bit \(\ppbit b\) is \emph{valid}.
It is inductively defined as:
\begin{gather*}
\AXC{\(\bitvalid \Gamma \Delta b\)}
\AXC{\(\sepvalid \gamma\)}
\AXC{\(b = \BIndet\) in case \(\sepunmapped \gamma\)}
\TIC{\(\bitvalid \Gamma \Delta {\pair \gamma b}\)}
\DP
\end{gather*}
\end{definition}

\subsection{Representation of the memory}
\label{section:ctrees}

\emph{Memory trees} are abstract trees whose structure corresponds to the shape
of data types in \C.
They are used to describe individual objects (base values, arrays,
structs, and unions) in memory.
The memory is a forest of memory trees.

\begin{definition}
\label{definition:ctrees}
\emph{Memory trees} are inductively defined as:
\begin{flalign*}
\syntax{mtree}.
\end{flalign*}
\end{definition}

The structure of memory trees is close to the structure of types
(Definition~\ref{definition:types}) and thus reflects the expected semantics of
types: arrays are lists, structs are tuples, and unions are sums.
Let us consider the following example:

\begin{lstlisting}
struct S {
  union U { signed char x[2]; int y; } u; void *p;
} s = { .u = { .x = {33,34} }, .p = s.u.x + 2 };
\end{lstlisting}

The memory tree representing the object \lstinline|s| with object identifier
\(o_{\mathtt s}\) may be as follows (permissions are omitted for brevity's
sake, and integer encoding and padding are subject to implementation defined
behavior):
\begin{equation*}
\begin{tikzpicture}[
	node distance=-\pgflinewidth,
	every node/.style={shape=rectangle,minimum height=1.1em,minimum width=3em}
]
\node (S) {\(\MStructSym {\mathtt S}\)};
\node[below=1.8em of S,xshift=-5em,minimum width=12em] (U)
	{\(\MUnionSym {\mathtt U}\)};
\node[below=1.8em of U,anchor=east,minimum width=6em] (A)
	{\(\MArraySym {}\)};
\node[draw,below=1.8em of A,anchor=east,
	label=left:\tiny\(\TInt {\IntType \Signed \charrank}\):]	(x1)
	{\scalebox{.6}{\texttt{10000100}}};
\node[draw,right=of x1] (x2) {\scalebox{.6}{\texttt{01000100}}};
\node[draw,fill=gray,fill opacity=0.4,,text opacity=1,
	minimum width=6em,right=of A] (pad)	{\scalebox{.6}{\ntimes {16} \BIndet}};
\node[draw,anchor=west,minimum width=12em,right=of U,label=left:\tiny \(\TPtr \TAny\):]
	(p) {\tiny
	\(\BPtrSeg {\PtrSeg p 0}\ \BPtrSeg {\PtrSeg p 1}
	\dotsc \BPtrSeg {\PtrSeg p {31}}\)};
\draw (S.south) -- (U.north);
\draw (S.south) -- (U.north);
\draw (U.south) -- node[left,xshift=0.7em,yshift=0.3em] {\small .0} (A.north);
\draw (A.south) -- (x1.north east);
\draw[dashed] (U.south) -- (pad.north);
\draw (S.south) -- (p.north);
\draw[<-,thick] (x2.south east) --
	node [below,pos=1,right,yshift=-0.1em,xshift=-0.6em]
	{\small \(p = \Ptr {\Addr {o_{\mathtt s}}
	{\TStruct {\mathtt S}}
	{\RStruct 0 {\mathtt S}\,
		\RUnion 0 {\mathtt U} \RUnfrozen\,
		\RArray 0 {\TBase {\TInt {\IntType \Signed \charrank}}} 2}
	2 {\TBase {\TInt {\IntType \Signed \charrank}}}	\TAny}\)} +(0,-0.4);
\end{tikzpicture}
\end{equation*}

The representation of unions requires some explanation.
We considered two kinds of memory trees for unions:

\begin{itemize}
\item The memory tree \(\MUnion t i w {\vec {\ppbit b}}\) represents a union
	whose variant is \(i\).
	Unions of variant \(i\) can only be accessed through a pointer to variant
	\(i\).
	This is essential for effective types.
	The list \(\vec {\ppbit b}\) represents the padding after the element \(w\).
\item The memory tree \(\MUnionAll t {\vec {\ppbit b}}\) represents a union
	whose variant is yet unspecified.
	Whenever the union is accessed through a pointer to variant \(i\), the
	list \(\vec {\ppbit b}\) will be interpreted as a memory tree of the
	type belonging to the \(i\)th variant.
\end{itemize}

The reason that we consider unions \(\MUnionAll t {\vec {\ppbit b}}\) with
unspecific variant at all is that in some cases the variant cannot be known.
Unions that have not been initialized do not have a variant yet.
Also, when a union object is constructed byte-wise through its object
representation, the variant cannot be known.
	
Although unions are tagged in the formal memory, actual compilers implement
untagged unions.
Information about variants should thus be internal to the formal
memory model.
In Section~\ref{section:refinements} we prove that this is indeed the case.

The additional structure of memory trees, namely type annotations, variants
of unions, and structured information about padding, can be erased by
flattening.
Flattening just appends the bytes on the leaves of the tree.

\begin{definition}
\label{definition:ctree_flatten}
The \emph{flatten operation \(\ctreeflattenSym : \mtree \to \lst \pbit\)} is
defined as:
\begin{gather*}
\ctreeflatten{\MBase {\btype \tau} {\vec {\ppbit b}}} \defined
	\vec {\ppbit b} \qquad
\ctreeflatten{\MArray \tau {\vec w}} \defined
	\ctreeflatten{w_0} \dotsc \ctreeflatten{w_{\length{\vec w} - 1}} \\
\ctreeflatten{\MStruct t {\vv{w\vec {\ppbit b}}}} \defined
	(\ctreeflatten{w_0}\,\vec {\ppbit b}_0) \dotsc
	(\ctreeflatten{w_{\length{\vec w} - 1}}\,
	 \vec {\ppbit b}_{\length{\vec w} - 1}) \qquad
\ctreeflatten{\MUnion t j w {\vec {\ppbit b}}} \defined
	\ctreeflatten w\, \vec {\ppbit b} \qquad
\ctreeflatten{\MUnionAll t {\vec {\ppbit b}}} \defined \vec {\ppbit b}
\end{gather*}
\end{definition}

The flattened version of the memory tree representing the object \lstinline|s|
in the previous example is as follows:
\[
	10000100\ 01000100\ \ntimes 8 \BIndet\ \ntimes 8 \BIndet
	\ \BPtrSeg {\PtrSeg p 0}\ \BPtrSeg {\PtrSeg p 1}
		\dotsc \BPtrSeg {\PtrSeg p {31}}
\]

\begin{definition}
\label{definition:ctree_typed}
The judgment \(\ctreetyped \Gamma \Delta w \tau\) describes that the
\emph{memory tree \(w\) has type \(\tau\)}.
It is inductively defined as:
\begin{gather*}
\AXC{\(\basetypevalid \Gamma {\btype\tau}\)}
\AXC{\(\bitvalid \Gamma \Delta {\vec{\ppbit b}}\)}
\AXC{\(\length {\vec{\ppbit b}} = \bitsizeof \Gamma {\TBase {\btype\tau}}\)}
\TIC{\(\ctreetyped \Gamma \Delta {\MBase {\btype\tau} {\vec{\ppbit b}}}
	{\TBase {\btype\tau}}\)}
\DP\qquad
\AXC{\(\ctreetyped \Gamma \Delta {\vec w} \tau\)}
\AXC{\(\length {\vec w} = n \ne 0\)}
\BIC{\(\ctreetyped \Gamma \Delta {\MArray \tau {\vec w}}
	{\TArray \tau n}\)}
\DP\\[0.4em]
\AXC{\(\begin{array}{c}
\elookup t \Gamma = \vec\tau \qquad
\ctreetyped \Gamma \Delta {\vec w} {\vec\tau} \\
\forall i \wsdot (\bitvalid \Gamma \Delta {{\vec {\ppbit b}}_i}\qquad
	{\vec {\ppbit b}}_i\textnormal{ all }\BIndet \qquad
	\length {{\vec {\ppbit b}}_i} =
	(\fieldbitsizes \Gamma {\vec \tau})_i - \bitsizeof \Gamma {\tau_i})
\end{array}\)}
\UIC{\(\ctreetyped \Gamma \Delta
	{\MStruct t {\vv{w\vec {\ppbit b}}}} {\TStruct t}\)}
\DP\\[0.4em]
\AXC{\(\begin{array}{c}
\elookup t \Gamma = \vec\tau \qquad
i < \length {\vec\tau} \qquad
\ctreetyped \Gamma \Delta w {\tau_i} \qquad
\bitvalid \Gamma \Delta {\vec{\ppbit b}} \qquad
\vec{\ppbit b}\textnormal{ all }\BIndet \\
\bitsizeof \Gamma {(\TUnion t)} = \bitsizeof \Gamma {\tau_i} +
	\length {\vec{\ppbit b}}\qquad
\neg\sepunmapped {(\ctreeflatten w\,\vec{\ppbit b})}
\end{array}\)}
\UIC{\(\ctreetyped \Gamma \Delta
	{\MUnion t i w {\vec {\ppbit b}}} {\TUnion t}\)}
\DP\\[0.4em]
\AXC{\(\elookup t \Gamma = \Some {\vec\tau}\)}
\AXC{\(\bitvalid \Gamma \Delta {\vec{\ppbit b}}\)}
\AXC{\(\length {\vec{\ppbit b}} = \bitsizeof \Gamma {(\TUnion t)}\)}
\TIC{\(\ctreetyped \Gamma \Delta {\MUnionAll t {\vec{\ppbit b}}}
	{\TUnion t}\)}
\DP
\end{gather*}
\end{definition}

Although padding bits should be kept indeterminate (see
Section~\ref{section:challenge:byte_level}),
padding bits are explicitly stored in memory trees for uniformity's sake.
The typing judgment ensures that the value of each padding bit is
\(\BIndet\) and that the padding thus only have a permission.
Storing a value in padding is a no-op (see Definition~\ref{definition:cmap_alter}).

The side-condition \(\neg\sepunmapped {(\ctreeflatten w\,\vec{\ppbit b})}\) in the
typing rule for a union \(\MUnion t i w {\vec {\ppbit b}}\) of a specified
variant ensures canonicity.
Unions whose permissions are unmapped cannot be accessed and should therefore be
in an unspecified variant.
This condition is essential for the separation algebra structure, see
Section~\ref{section:memory_separation}. % and~\ref{section:separation_singleton}.

\begin{definition}
\label{definition:memory}
\emph{Memories} are defined as:
\begin{flalign*}
\syntax{mem}.
\end{flalign*}
\end{definition}

Each object \(\pair w \mu\) in memory is annotated with a Boolean \(\mu\)
to describe whether it has been allocated using
\lstinline|malloc| (in case \(\mu = \true\)) or as a block scope local, static,
or global variable (in case \(\mu = \false\)).
The types of deallocated objects are kept to ensure that dangling pointers
(which may remain to exist in memory, but cannot be used) have a unique type.

\begin{definition}
\label{definition:cmap_valid}
The judgment \(\memvalid \Gamma \Delta m\) describes that \emph{the memory \(m\)
is valid}.
It is defined as the conjunction of:
\begin{enumerate}
\item For each \(o\) and \(\tau\) with \(\mlookup o m = \Some \tau\) we have:
	\\
	\begin{inparaenum}
	\item \(\indextyped \Delta o \tau\),
	\item \(\notindexalive \Delta o\), and
	\item \(\typevalid \Gamma \tau\).
	\end{inparaenum}
\item	For each \(o\), \(w\) and \(\mu\) with
	\(\mlookup o m = \Some {\pair w \mu}\) we have:
	\\
	\begin{inparaenum}
	\item \(\indextyped \Delta o \tau\),
	\item \(\indexalive \Delta o\),
	\item \(\ctreetyped \Gamma \Delta w \tau\), and
	\item not \(\ctreeflatten w\) all \(\pair \sepempty \BIndet\).
	\end{inparaenum}
\end{enumerate}
The judgment \(\indexalive \Delta o\) on object identifiers is defined in
Definition~\ref{definition:index_alive}.
\end{definition}

\begin{definition}
The \emph{minimal memory typing environment} \(\memenvof m \in \memenv\) of
a memory \(m\) is defined as:
\begin{gather*}
\memenvof m \defined \lambda o \wsdot \begin{cases}
	\pair \tau \true & \textnormal{if \(\mlookup o m = \Some \tau\)} \\
	\pair {\typeof w} \false &
	\textnormal{if \(\mlookup o m = \Some {\pair w \mu}\)}
\end{cases}
\end{gather*}
\end{definition}

\begin{notation}
We let \(\memvalidShort \Gamma m\) denote \(\memvalid \Gamma {\memenvof m} m\).
\end{notation}

Many of the conditions of the judgment \(\memvalid \Gamma \Delta m\)
ensure that the types of \(m\) match up with the types in the
memory environment \(\Delta\) (see Definition~\ref{definition:index_typed}).
One may of course wonder why do we not define the judgment
\(\memvalidShort \Gamma m\) directly, and even consider typing of a memory
in an arbitrary memory environment.
Consider:

\begin{lstlisting}
int x = 10, *p = &x;
\end{lstlisting}

Using an assertion of separation logic we can describe the memory
induced by the above program as \(\axSingletonShort {\mathtt x} {} 10 \axSep
  \axSingletonShort {\mathtt p} {} {\mathtt{\&x}}\).
The separation conjunction \(\axSep\) describes that the memory can be
subdivided into two parts, a part for \lstinline|x| and another part for 
\lstinline|p|.
When considering \(\axSingletonShort {\mathtt p} {} {\mathtt{\&x}}\) in
isolation, which is common in separation logic, we have a pointer that refers
outside the part itself.
This isolated part is thus not typeable by \(\memvalidShort \Gamma m\),
but it is typeable in the context of a the memory environment corresponding
to the whole memory.
See also Lemma~\ref{lemma:cmap_union_valid}.

In the remaining part of this section we will define various auxiliary
operations that will be used to define the memory operations in
Section~\ref{section:memory_operations}.
We give a summary of the most important auxiliary operations:
\begin{align*}
\ctreenewSym \Gamma \gamma :{}& \type \to \mtree
	&&\textnormal{ for }\quad \gamma : \perm \\
\cmaplookupSym \Gamma :{}& \mem \to \addr \to \option\mtree && \\
\cmapalterSym \Gamma f :{}& \mem \to \addr \to \mem
	&&\textnormal{ for }\quad f : \mtree \to \mtree
\end{align*}

Intuitively these are just basic tree operations, but unions make their
actual definitions more complicated.
The indeterminate memory tree \(\ctreenew \Gamma \gamma \tau\) consists of
indeterminate bits with permission \(\gamma\),
the lookup operation \(\cmaplookup \Gamma a m\) yields the memory tree
at address \(a\) in \(m\), and the alter operation
\(\cmapalter \Gamma f a m\) applies the function \(f\) to the memory tree at
address \(a\) in \(m\).

The main delicacy of all of these operations is that we sometimes have to
interpret bits as memory trees, or reinterpret memory trees as memory trees of
a different type.
Most notably, reinterpretation is needed when type-punning is performed:

\begin{lstlisting}
union int_or_short { int x; short y; } u = { .x = 3 };
short z = u.y;
\end{lstlisting}

This code will reinterpret the bit representation of a memory tree representing
an \lstinline|int| value \lstinline|3| as a memory tree of type
\lstinline|short|.
Likewise:

\begin{lstlisting}
union int_or_short { int x; short y; } u;
((unsigned char*)&u)[0] = 3;
((unsigned char*)&u)[1] = 0;
short z = u.y;
\end{lstlisting}

Here, we poke some bytes into the object representation of \lstinline|u|,
and interpret these as a memory tree of type \lstinline|short|.

We have defined the flatten operation \(\ctreeflatten w\) that takes a memory
tree \(w\) and yields its bit representation already in
Definition~\ref{definition:ctree_flatten}.
We now define the operation which goes in opposite direction, called the
\emph{unflatten operation}.

\begin{definition}
\label{definition:ctree_unflatten}
The \emph{unflatten operation}
\(\ctreeunflattenSym \Gamma \tau : \lst \pbit \to \mtree\) is defined as:
\begin{samepage}
\begin{flalign*}
\ctreeunflatten \Gamma {\TBase {\btype\tau}} {\vec {\ppbit b}} \defined{}&
	\MBase {\btype\tau} {\vec {\ppbit b}} \\
\ctreeunflatten \Gamma {\TArray \tau n} {\vec {\ppbit b}} \defined{}&
	\MArray \tau {(
	\ctreeunflatten \Gamma \tau {\sublist 0 s {\vec{\ppbit b}}} \dotsc 
	\ctreeunflatten \Gamma \tau
		{\sublist {(n-1)s} {ns} {\vec{\ppbit b}}})}
	\ \textnormal{where \(s \defined \bitsizeof \Gamma \tau\)} \\
\ctreeunflatten \Gamma {\TStruct t} {\vec {\ppbit b}} \defined{}&
	\MStruct t {\begin{pmatrix*}[l]
	\ctreeunflatten \Gamma {\tau_0} {\sublist 0 {s_0} {\vec{\ppbit b}}}\,
	\pbitindetify{\sublist {s_0} {z_0} {\vec{\ppbit b}}}
	\\ \dotsc \\
	\ctreeunflatten \Gamma {\tau_{n-1}}
		{\sublist {z_{n-1}} {z_{n-1} + s_{n-1}} {\vec{\ppbit b}}}\,
	\pbitindetify{\sublist {z_{n-1} + s_{n-1}} {z_n} {\vec{\ppbit b}}} \\
	\end{pmatrix*}} \\
	& \textnormal{where \(\elookup t \Gamma = \Some {\vec \tau}\),
		\(n \defined \length{\vec \tau}\),
		\(s_i \defined \bitsizeof \Gamma {\tau_i}\) and
		\(z_i \defined \bitoffsetof \Gamma {\vec\tau} i\)} \\
\ctreeunflatten \Gamma {\TUnion t} {\vec {\ppbit b}} \defined{}&
	\MUnionAll t {\vec {\ppbit b}}
\end{flalign*}
Here, the operation \(\pbitindetifySym : \pbit \to \pbit\) is defined
as \(\pbitindetify {\pair x b} \defined \pair x \BIndet\).
\end{samepage}
\end{definition}

Now, the need for \(\MUnionAll t {\vec{\ppbit b}}\) 
memory trees becomes clear.
While unflattening a bit sequence as a union, there is no way of knowing
which variant of the union the bits constitute.
The operations \(\ctreeflattenSym\) and
\(\ctreeunflatten \Gamma \tau {\_}\) are neither left nor right inverses:

\begin{itemize}
\item We do not have \(\ctreeunflatten \Gamma \tau {\ctreeflatten w} = w\) for
	each \(w\) with \(\ctreetyped \Gamma \Delta w \tau\).
	Variants of unions are destroyed by flattening \(w\).
\item We do not have
	\(\ctreeflatten {\ctreeunflatten \Gamma \tau {\vec{\ppbit b}}} =
	\vec {\ppbit b}\) for each \(\vec {\ppbit b}\) with
	\(\length {\vec {\ppbit b}} = \bitsizeof \Gamma \tau\) either.
	Padding bits become indeterminate due to \(\pbitindetifySym\) by
	unflattening.
\end{itemize}

In Section~\ref{section:refinements} we prove weaker variants of these
cancellation properties that are sufficient for proofs about program
transformations.

\begin{definition}
\label{definition:ctree_new}
Given a permission \(\gamma \in \perm\), the operation
\(\ctreenewSym \Gamma \gamma : \type \to \mtree\) that yields the
indeterminate memory tree is defined as:
\[
\ctreenew \Gamma \gamma \tau \defined \ctreeunflatten \Gamma \tau
	{\replicate {\bitsizeof \Gamma \tau} {\pair \gamma \BIndet}}.
\]
\end{definition}

The memory tree \(\ctreenew \Gamma \gamma \tau\) that consists of
indeterminate bits with permission \(\gamma\) is used for objects with
indeterminate value.
We have defined \(\ctreenew \Gamma \gamma \tau\) in terms of the unflattening
operation for simplicity's sake.
%In \C{}, most newly allocated memory is indeterminate instead of initialized to
%zeros.
This definition enjoys desirable structural
properties such as \(\ctreenew \Gamma \gamma {(\TArray \tau n)} =
\replicate n {(\ctreenew \Gamma \gamma \tau)}\).

We will now define the lookup operation \(\cmaplookup \Gamma a m\) that yields
the subtree at address \(a\) in the memory \(m\).
The lookup function is partial, it will fail in case \(a\) is end-of-array or
violates effective types.
We first define the counterpart of lookup on memory trees and then lift it
to memories.

\begin{definition}
\label{definition:ctree_lookup}
The \emph{lookup operation on memory trees}
\(\ctreelookupSym \Gamma : \mtree \to \rref \to \option\mtree\) is defined as:
\begin{flalign*}
\ctreelookup \Gamma \nil w \defined{}& w \\
\ctreelookup \Gamma {\app {(\RArray i \tau n)} {\vec r}} {(\MArray \tau {\vec w})}
	\defined{}& \ctreelookup \Gamma {\vec r} {w_i} \\
\ctreelookup \Gamma {\app {(\RStruct i t)} {\vec r}}
	{(\MStruct t {\vv{w\vec {\ppbit b}}})}
	\defined{}& \ctreelookup \Gamma {\vec r} {w_i} \\
\ctreelookup \Gamma {\app {(\RUnion i t q)} {\vec r}}
	{(\MUnion t j w {\vec {\ppbit b}})}
	\defined{}& \!\!\begin{cases}
	\ctreelookup \Gamma {\vec r} w & \textnormal{if \(i = j\)} \\
	\ctreelookup \Gamma {\vec r} {\ctreeunflatten \Gamma {\tau_i}
		{\sublist 0 s
		{(\ctreeflatten w\,\vec {\ppbit b})}}}
	& \textnormal{if \(i\neq j\), \(q = \RUnfrozen\),
		\(\sepunshared {(\ctreeflatten w\,\vec {\ppbit b})}\)} \\
	\None	& \textnormal{if \(i \neq j\), \(q = \RFrozen\)}
	\end{cases} \\
	& \textnormal{where \(\elookup t \Gamma = \Some {\vec \tau}\) and
		\(s = {\bitsizeof \Gamma {\tau_i}}\)} \\
\ctreelookup \Gamma {\app {(\RUnion i t q)} {\vec r}}
	{(\MUnionAll t {\vec {\ppbit b}})}
	\defined{}&
	\ctreelookup \Gamma {\vec r} {\ctreeunflatten \Gamma {\tau_i}
		{\sublist 0 {\bitsizeof \Gamma {\tau_i}} {\vec {\ppbit b}}}}
	\quad\textnormal{if \(\elookup t \Gamma = \Some {\vec \tau}\),
		 \(\sepunshared {\vec {\ppbit b}}\)}
\end{flalign*}
\end{definition}

The lookup operation uses the annotations \(q \in \{\RFrozen,\RUnfrozen\}\) on
\(\RUnion i s q\) to give a formal semantics to the
\emph{strict-aliasing restrictions}~\cite[6.5.2.3]{iso:12}.

\begin{itemize}
\item The annotation \(q = \RUnfrozen\) allows a union to be accessed
	via a reference whose variant is unequal to the current one.
	This is called type-punning.
\item The annotation \(q = \RFrozen\) allows a union to be accessed only
	via a reference whose variant is equal to the current one.
	This means, it rules out type-punning.
\end{itemize}

Failure of type-punning is captured by partiality of the lookup operation.
The behavior of type-punning of \(\MUnion t j w {\vec {\ppbit b}}\) via a
reference to variant \(i\) is described by the conversion
\(\ctreeunflatten \Gamma {\tau_i} {\sublist 0
{\bitsizeof \Gamma {\tau_i}} {(\ctreeflatten w\,\vec {\ppbit b})}}\).
The memory tree \(w\) is converted into bits and reinterpreted
as a memory tree of type \(\tau_i\).

\begin{definition}
\label{definition:cmap_lookup}
The \emph{lookup operation on memories}
\(\cmaplookupSym \Gamma : \mem \to \addr \to \option\mtree\) is
defined as:
\begin{flalign*}
\cmaplookup \Gamma a m \defined{}&
	\begin{cases}
	\ctreeunflatten \Gamma {\IntType \Unsigned \charrank}
		{\sublist i j {(\ctreeflatten{\ctreelookup \Gamma
			{\addrref \Gamma a} w})}}
		&\textnormal{if \(a\) is a byte address} \\
	\ctreelookup \Gamma {\addrref \Gamma a} w
		&\textnormal{if \(a\) is not a byte address}
	\end{cases}
\end{flalign*}
provided that \(\mlookup {(\addrindex a)} m = \Some {\pair w \mu}\).
In omitted cases the result is \(\None\).
In this definition we let \(i \defined \addrrefbyte \Gamma a \cdot \charbits\)
and \(j \defined (\addrrefbyte \Gamma a+1) \cdot \charbits\).
\end{definition}

We have to take special care of addresses that refer to individual bytes rather
than whole objects.
Consider:

\begin{lstlisting}
struct S { int x; int y; } s = { .x = 1, .y = 2 };
unsigned char z = ((unsigned char*)&s)[0];
\end{lstlisting}

In this code, we obtain the first byte \lstinline|((unsigned char*)&s)[0]| of
the struct \lstinline|s|.
This is formalized by flattening the entire memory tree of the struct
\lstinline|s|, and selecting the appropriate byte.

The \Celeven{} standard's description of effective types~\cite[6.5p6-7]{iso:12}
states that an access (which is either a
read or store) affects the effective type of the accessed object.
This means that although reading from memory does not affect the memory
contents, it may still affect the effective types.
Let us consider an example where it is indeed the case that effective types are
affected by a read:

\begin{lstlisting}
short g(int *p, short *q) {
  short z = *q; *p = 10; return z;
}
int main() {
  union int_or_short { int x; short y; } u;
  // initialize u with zeros, the variant of u remains unspecified
  for (size_t i = 0; i < sizeof(u); i++) ((unsigned char*)&u)[i] = 0;
  return g(&u.x, &u.y);
}
\end{lstlisting}

In this code, the variant of the union \lstinline|u| is initially unspecified.
The read \lstinline|*q| in \lstinline|g| \emph{forces} its variant to
\lstinline|x|, making the assignment \lstinline|*p| to variant \lstinline|y|
undefined.
Note that it is important that we also assign undefined behavior to this
example, a compiler may assume \lstinline|p| and \lstinline|q| to not alias
regardless of how \lstinline|g| is called.

We factor these side-effects out using a function
\(\memforceSym \Gamma : \addr \to \mem \to \mem\) that updates the effective
types (that is the variants of unions) after a successful lookup.
The \(\memforceSym \Gamma\) function, as defined in
Definition~\ref{section:memory_operations}, can be described in terms of the
alter operation \(\cmapalter \Gamma f a m\) that applies the function \(f :
\mtree \to \mtree\) to the object at address \(a\) in the memory \(m\) and
update variants of unions accordingly to \(a\).
To define \(\memforceSym \Gamma\) we let \(f\) be the identify.

\begin{definition}
Given a function \(f : \mtree \to \mtree\), the \emph{alter operation on
memory trees} \(\ctreealterSym \Gamma f : \mtree \to \rref \to \mtree\) is
defined as:
\begin{flalign*}
\ctreealter \Gamma f \nil w \defined{}& f\,w \\
\ctreealter \Gamma f {\app {(\RArray i \tau n)} {\vec r}}
	{(\MArray \tau {\vec w})}
	\defined{}&
	\MArray \tau {(\minsert i {\ctreealter \Gamma f {\vec r} {w_i}} {\vec w})} \\
\ctreealter \Gamma f {\app {(\RStruct i t)} {\vec r}}
	{(\MStruct t {\vv{w\vec {\ppbit b}}})}
	\defined{}&
	\MStruct t {(\minsert i {\ctreealter \Gamma f {\vec r} {w_i} \vec {\ppbit b}_i}
		{(\vv{w\vec {\ppbit b}})})} \\
\ctreealter \Gamma f {\app {(\RUnion j t q)} {\vec r}}
	{(\MUnion t i w {\vec {\ppbit b}})}
	\defined{}&
	\!\!\begin{cases}
	\MUnion t i {\ctreealter \Gamma f {\vec r} w} {\vec {\ppbit b}}
	& \textnormal{if \(i = j\)} \\
	\MUnion t i {\ctreealter \Gamma f {\vec r}
		{(\ctreeunflatten \Gamma {\tau_i}
			{\sublist 0 s {(\ctreeflatten w\vec {\ppbit b})}})}}
		{\sublist s z {\pbitindetify {(\ctreeflatten w\vec {\ppbit b})}}}
	& \textnormal{if \(i \neq j\)}
	\end{cases}\\
\ctreealter \Gamma f {\app {(\RUnion i t q)} {\vec r}}
	{(\MUnionAll t {\vec {\ppbit b}})}
	\defined{}&
	\MUnion t i {\ctreealter \Gamma f {\vec r}
		{(\ctreeunflatten \Gamma {\tau_i}
			{\sublist 0 s {\vec {\ppbit b}}})}}
		{\sublist s z {\pbitindetify {\vec {\ppbit b}}}}
\end{flalign*}
\noindent
In the last two cases we have \(\elookup t \Gamma = \vec\tau\),
\(s \defined \bitsizeof \Gamma {\tau_i}\) and
\(z \defined \bitsizeof \Gamma {(\TUnion t)}\).
The result of \(\ctreealter \Gamma f {\vec r} w\) is only well-defined in case
\(\ctreelookup \Gamma {\vec r} w \neq \None\).
\end{definition}

\begin{definition}
\label{definition:cmap_alter}
Given a function \(f : \mtree \to \mtree\), the \emph{alter operation
on memories} \(\cmapalterSym \Gamma f : \mem \to \addr \to \mem\) is defined as:
\begin{flalign*}
\cmapalter \Gamma f a m \defined{}&
	\begin{cases}
	\minsert {(\addrindex a)}
		{\pair {\ctreealter \Gamma {\overline f} {\addrref \Gamma a} w} \mu} m
		&\textnormal{if \(a\) is a byte address} \\
	\minsert {(\addrindex a)}
		{\pair {\ctreealter \Gamma f {\addrref \Gamma a} w} \mu} m
		&\textnormal{if \(a\) is not a byte address}
	\end{cases}
\end{flalign*}
provided that \(\mlookup {(\addrindex a)} m = \Some {\pair w \mu}\).
In this definition we let:
\[\overline f\,w \defined \ctreeunflatten \Gamma {\typeof w}
	{\sublist 0 i {\ctreeflatten w}\;
	{\ctreeflatten {f\;{\ctreeunflatten \Gamma {\IntType \Unsigned \charrank}
		{\sublist i j {\ctreeflatten w}}}}}\;
	{\sublist j {\bitsizeof \Gamma {(\typeof w)}}
		{\ctreeflatten w}}}
\] where
\(i \defined \addrrefbyte \Gamma a \cdot \charbits\) and
\(j \defined (\addrrefbyte \Gamma a+1) \cdot \charbits\).
\end{definition}

The lookup and alter operation enjoy various properties; they preserve
typing and satisfy laws about their interaction.
We list some for illustration.

\begin{lemma}[Alter commutes]
\label{lemma:cmap_alter_commute}
If \(\memvalid \Gamma \Delta m\), \(\addrdisjoint \Gamma {a_1} {a_2}\)
with:
\begin{itemize}
\item \(\addrtyped \Gamma \Delta {a_1} {\tau_1}\),
	\(\cmaplookup \Gamma {a_1} m = \Some {w_1}\), and
	\(\ctreetyped \Gamma \Delta {f_1\,w_1} {\tau_1}\), and
\item \(\addrtyped \Gamma \Delta {a_2} {\tau_2}\),
	\(\cmaplookup \Gamma {a_2} m = \Some {w_2}\), and
	\(\ctreetyped \Gamma \Delta {f_2\,w_2} {\tau_2}\),
\end{itemize}
then we have:
\[
	\cmapalter \Gamma {f_1} {a_1} {\cmapalter \Gamma {f_2} {a_2} m} =
	\cmapalter \Gamma {f_2} {a_2} {\cmapalter \Gamma {f_1} {a_1} m}.
\]
\end{lemma}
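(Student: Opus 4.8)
The plan is to case split on the derivation of $\addrdisjoint \Gamma {a_1} {a_2}$ (the definition preceding this lemma) and, in the two cases where the addresses share an object identifier, reduce the claim to a commutation statement about the alter operation on memory trees.

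In the case $\addrindex {a_1} \neq \addrindex {a_2}$, the operation $\cmapalter \Gamma {f_k} {a_k} m$ only rewrites the entry at $\addrindex {a_k}$ of the finite map $m$ (Definition~\ref{definition:cmap_alter}), and since the two object identifiers differ, the lookup that one application performs returns the same value whether or not the other has already been applied. Hence both sides of the equation unfold to the same nested insertion of terms built from $w_1$ and $w_2$, and we conclude by the standard fact that insertion into a finite map at distinct keys commutes.

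In the remaining cases $\addrindex {a_1} = \addrindex {a_2} =: o$, say $\mlookup o m = \Some {\pair w \mu}$. Using that the memory-tree alter operation preserves the top-level type of a memory tree — so the $\typeofSym$ occurring in the byte-address clause of Definition~\ref{definition:cmap_alter} agrees on both sides; here the hypotheses $\ctreetyped \Gamma \Delta {f_k w_k} {\tau_k}$ enter, together with the fact that the byte-wise variant $\overline{f_k}$ always preserves the type of the subtree it rewrites — and the idempotence $\minsert o x {(\minsert o y n)} = \minsert o x n$, both sides collapse to
\[ \ctreealter \Gamma {g_1} {\vec r_1} {(\ctreealter \Gamma {g_2} {\vec r_2} w)} = \ctreealter \Gamma {g_2} {\vec r_2} {(\ctreealter \Gamma {g_1} {\vec r_1} w)}, \]
where $\vec r_k \defined \addrref \Gamma {a_k}$ and $g_k$ is $f_k$ or $\overline{f_k}$. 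I would prove this for an arbitrary memory tree $w$ that is well-typed under $\Delta$ — well-typedness of the object $w$ coming from $\memvalid \Gamma \Delta m$ via Definition~\ref{definition:cmap_valid} — by well-founded induction on $\length {\vec r_1} + \length {\vec r_2}$. If $\vec r_1$ and $\vec r_2$ share no common head segment, then by the definition of disjointness of references they diverge at an array or struct segment with distinct indices (this is the disjoint-reference subcase; for byte addresses one instead has $\freeze {\vec r_1} = \freeze {\vec r_2}$, and since the alter operation never inspects the frozen/unfrozen annotation on union segments, one may first replace both references by their common freeze). The two alterations then act on distinct list positions of one and the same array or struct subtree, where they commute by commutativity of insertion at distinct indices; and at a shared endpoint the byte windows of $\overline{f_1}$ and $\overline{f_2}$ are disjoint, so these commute as well, once one observes that the only bits the unflatten operation alters are padding bits, which the memory-tree typing rules already force to be $\BIndet$. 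If instead $\vec r_1$ and $\vec r_2$ do share a common head segment, both alterations recurse through it into the same subtree — or, when that segment is a union segment whose index differs from the stored variant, into the same bit-reinterpreted subtree obtained by unflattening the flattening of the current variant — and the claim follows from the induction hypothesis applied to the strictly shorter tails of the two references on that subtree, whose well-typedness is again supplied by the preservation properties of the unflatten and alter operations.

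The main obstacle is exactly this last configuration: when the common prefix passes through a union whose stored variant does not match the reference's variant, the first alteration rewrites the union's bit representation before the second runs, so one must check that afterwards the second alteration proceeds through the matching-variant clause on a still well-typed subtree, which is what makes the induction hypothesis applicable. Handling this cleanly relies on the earlier auxiliary lemmas that lookup on memory trees stays defined under the alter operation at disjoint references, that flattening a valid memory tree agrees with the original bit sequence away from padding positions, and that unflattening valid bits of length $\bitsizeof \Gamma \tau$ yields a valid memory tree of type $\tau$, combined with careful bookkeeping of which clause of Definition~\ref{definition:ctree_lookup} and of the alter operation fires at each step.
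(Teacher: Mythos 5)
Your proposal is correct and follows essentially the route the formal development must take: the paper states this lemma without a printed proof, but your reduction to commutation of the alter operation on a single memory tree (after dispatching the distinct-identifier case by commutativity of finite-map insertion), the induction over the common prefix of the normalized references with the union-reinterpretation subcase, and the padding-indetification argument for the byte-address case are exactly the ingredients needed, and you correctly flag the two genuinely delicate points (re-established well-typedness after a variant switch, and lookup remaining defined after the first alteration). I see no gap.
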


\begin{lemma}
\label{lemma:cmap_lookup_alter}
If \(\memvalid \Gamma \Delta m\),
\(\cmaplookup \Gamma a m = \Some w\), 
and \(a\) is not a byte address, then:
\[\cmaplookup \Gamma a {(\cmapalter \Gamma f a m)} = \Some {f\,w}.\]
\end{lemma}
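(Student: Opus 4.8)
The plan is to peel off the memory layer, reducing the claim to a statement purely about memory trees, and then prove that by induction on the reference. Since $a$ is not a byte address, Definitions~\ref{definition:cmap_lookup} and~\ref{definition:cmap_alter} apply in their second clauses. Writing $\mlookup {(\addrindex a)} m = \Some{\pair{w_0}\mu}$ and $\vec r \defined \addrref \Gamma a$, the hypothesis $\cmaplookup \Gamma a m = \Some w$ unfolds to $\ctreelookup \Gamma {\vec r}{w_0} = \Some w$, while $\cmapalter \Gamma f a m$ is $m$ with the entry at $\addrindex a$ replaced by $\pair{\ctreealter \Gamma f {\vec r}{w_0}}\mu$ and all other entries untouched (byte-address-ness depends only on the type annotation in $a$, so the same clauses still govern the altered memory). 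Hence $\cmaplookup \Gamma a {(\cmapalter \Gamma f a m)} = \ctreelookup \Gamma {\vec r}{(\ctreealter \Gamma f {\vec r}{w_0})}$, and it suffices to prove the tree-level lemma: for every reference $\vec r$ and memory tree $w_0$, if $\ctreelookup \Gamma {\vec r}{w_0} = \Some w$ then $\ctreelookup \Gamma {\vec r}{(\ctreealter \Gamma f {\vec r}{w_0})} = \Some{f\,w}$.

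I would prove this tree-level lemma by induction on the reference $\vec r$, and \emph{not} on $w_0$: in the union--mismatch case below the recursive call is on a strictly shorter reference but on a freshly unflattened tree that is not a subterm of $w_0$, so only an induction on the reference is well-founded. For $\vec r = \nil$ everything collapses: $\ctreealter \Gamma f \nil {w_0} = f\,w_0$, $\ctreelookup \Gamma \nil {(f\,w_0)} = \Some{f\,w_0}$, and $w = w_0$. For $\vec r = \app{(\RArray i \tau n)}{\vec{r}'}$, success of $\ctreelookup$ (Definition~\ref{definition:ctree_lookup}) forces $w_0 = \MArray \tau {\vec w}$, since any other shape yields $\None$; $\ctreealter$ then replaces $w_i$ by $\ctreealter \Gamma f {\vec{r}'}{w_i}$, re-looking up descends to $\ctreelookup \Gamma {\vec{r}'}{(\ctreealter \Gamma f {\vec{r}'}{w_i})}$, and the induction hypothesis applied to $\vec{r}'$ and $w_i$ (using $\ctreelookup \Gamma {\vec{r}'}{w_i} = \Some w$) gives $\Some{f\,w}$. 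The case $\vec r = \app{(\RStruct i t)}{\vec{r}'}$ is identical up to carrying along the per-field padding lists.

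The only genuine work is in the union references $\vec r = \app{(\RUnion i t q)}{\vec{r}'}$. Whenever the variant matches (in particular for $w_0 = \MUnion t j {w'}{\vec {\ppbit b}}$ with $i = j$), both definitions take their matching branch and the argument proceeds exactly as for arrays. If $i \neq j$, then $w_0 = \MUnion t j {w'}{\vec {\ppbit b}}$, and success of the original lookup forces $q = \RUnfrozen$ and $\sepunshared{(\ctreeflatten{w'}\,\vec {\ppbit b})}$; moreover both $\ctreelookup$ and $\ctreealter$ route through the \emph{same} reinterpreted tree $w'' \defined \ctreeunflatten \Gamma {\tau_i}{\sublist 0 s{(\ctreeflatten{w'}\,\vec {\ppbit b})}}$, where $s \defined \bitsizeof \Gamma {\tau_i}$. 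The key observation is that $\ctreealter$ produces a union of the \emph{specified} variant $i$, of the form $\MUnion t i {\ctreealter \Gamma f {\vec{r}'}{w''}}{\vec {\ppbit b}'}$ for a suitable padding list $\vec {\ppbit b}'$. Therefore the re-lookup along $\RUnion i t q$ now takes the matching ($i = i$) branch --- which carries no side condition at all --- and reduces to $\ctreelookup \Gamma {\vec{r}'}{(\ctreealter \Gamma f {\vec{r}'}{w''})}$, which the induction hypothesis (using $\ctreelookup \Gamma {\vec{r}'}{w''} = \Some w$, the value the original lookup had computed) evaluates to $\Some{f\,w}$. The case $w_0 = \MUnionAll t {\vec {\ppbit b}}$ is the same, with $w'' \defined \ctreeunflatten \Gamma {\tau_i}{\sublist 0 s{\vec {\ppbit b}}}$ and the hypothesis supplying $\sepunshared{\vec {\ppbit b}}$.

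The main obstacle, modest as it is, is exactly this variant-changing behaviour of $\ctreealter$ on unions: one has to notice that it is precisely what makes the statement true, and set the induction up on the reference so that the recursive call into the reinterpreted subtree is legitimate. Two minor points worth noting: no typing assumption on $f$ is needed, since $f$ is applied only at the leaf $\vec r = \nil$ where the surrounding structure is irrelevant; and $\memvalid \Gamma \Delta m$ plays no essential role --- the side conditions needed for the re-lookup are exactly those already furnished by $\cmaplookup \Gamma a m = \Some w$ --- but it is kept as a premise for uniformity with companion results such as Lemma~\ref{lemma:cmap_alter_commute}.
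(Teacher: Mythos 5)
Your proof is correct, and since the paper states Lemma~\ref{lemma:cmap_lookup_alter} without giving a proof, the natural comparison is with the argument its definitions are designed to support — which is exactly what you give: peel off the memory layer, then induct on the reference with the tree universally quantified so that the induction hypothesis applies to the freshly unflattened tree in the union-mismatch case. You also correctly identify the one essential point, namely that \(\ctreealterSym\) always leaves behind a union whose specified variant is the one named by the reference, so the re-lookup takes the side-condition-free matching branch.
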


A variant of Lemma~\ref{lemma:cmap_lookup_alter} for byte addresses is more
subtle because a byte address can be used to modify padding.
Since modifications of padding are masked, a successive lookup may yield
a memory tree with more indeterminate bits.
In Section~\ref{section:refinements} we present an alternative lemma that
covers this situation.

We conclude this section with a useful helper function that \emph{zips} a
memory tree and a list.
It is used in for example Definitions~\ref{definition:memory_operations}
and~\ref{definition:ctree_union}.

\begin{definition}
\label{definition:ctree_merge}
Given a function \(f : \pbit \to B \to \pbit\), the operation that zips the
leaves \(\ctreemergeSym f : \mtree \to \lst B \to \mtree\) is defined as:
\begin{flalign*}
\ctreemerge f {(\MBase {\btype \tau} {\vec {\ppbit b}})} {\vec y} \defined{}&
	\MBase {\btype \tau} {(f\,\vec {\ppbit b}\;\vec y)} \\
\ctreemerge f {(\MArray \tau {\vec w})} {\vec y} \defined{}&
	\MArray \tau {(\ctreemerge f {w_0}
		{\sublist 0 {s_0} {\vec y}}
	\dotsc \ctreemerge f {w_{n-1}} {\sublist {s_{n-1}} {s_n} {\vec y}})} \\
	& \textnormal{where \(n \defined \length{\vec w}\) and
	\(s_i \defined \Sigma_{j < i} \length {\ctreeflatten {w_j}}\)} \\
\ctreemerge f {(\MStruct t {\vv{w\vec {\ppbit b}}})} {\vec y} \defined{}&
	\MStruct t {\begin{pmatrix*}[l]
		\ctreemerge f {w_0} {\sublist 0 {s_0} {\vec y}}\
		f\,\vec {\ppbit b}_0\;\sublist{s_0} {z_0} {\vec y}
		\\ \dotsc \\
		\ctreemerge f {w_{n - 1}}
			{\sublist {z_{n-1}} {z_{n-1} + s_{n-1}} {\vec y}} \
		f\,\vec {\ppbit b}_{n-1}\,\sublist{z_{n-1} + s_{n-1}} {z_n} {\vec y}
	\end{pmatrix*}} \\
	& \textnormal{where \(n \defined \length{\vec w}\),
	\(s_i \defined \length {\ctreeflatten {w_i}}\), and
	\(z_i \defined \Sigma_{j < i} \length {\ctreeflatten {w_i}\,\vec {\ppbit b}_i}\)} \\
\ctreemerge f {(\MUnion t i w {\vec {\ppbit b}})} {\vec y} \defined{}&
	\MUnion t i
		{\ctreemerge f w {\sublist 0 {\length {\ctreeflatten w}} {\vec y}}}
		{f\,\vec {\ppbit b}\;\sublist {\length {\ctreeflatten w}}
			{\length {\ctreeflatten w \,\vec {\ppbit b}}} {\vec y}} \\
\ctreemerge f {(\MUnionAll t {\vec {\ppbit b}})} {\vec y} \defined{}&
	\MUnionAll t {(f\,\vec {\ppbit b}\;\vec y)}
\end{flalign*}
%The operation \(\ctreemergeSym f w {\vec y}\) is only well-defined if
%\(\length {\vec y} = \length {\ctreeflatten w}\).
\end{definition}

\subsection{Representation of values}
\label{section:values}

Memory trees (Definition~\ref{definition:ctrees}) are still rather low-level
and expose permissions and implementation specific properties such as bit
representations.
In this section we define \emph{abstract values}, which are like memory trees
but have mathematical integers and pointers instead of bit representations as
leaves.
Abstract values are used in the external interface of the memory model.
%(Section~\ref{section:memory_interface}).

\begin{definition}
\label{definition:base_values}
\emph{Base values} are inductively defined as:
\begin{flalign*}
\syntax{baseval}.
\end{flalign*}
\end{definition}

While performing byte-wise operations (for example, byte-wise copying a struct
containing pointer values), abstraction is broken, and pointer fragment bits have to reside outside
of memory.
The value \(\VByte {\vec b}\) is used for this purpose.

\begin{definition}
\label{definition:base_val_typed}
The judgment \(\basevaltyped \Gamma \Delta {\bval v} {\btype \tau}\)
describes that \emph{the base value \(\bval v\) has base type \(\btype \tau\)}.
It is inductively defined as:
\begin{gather*}
\renewcommand{\ScoreOverhang}{0.1em}
\AXC{\strut\(\basetypevalid \Gamma {\btype \tau}\)}
\AXC{\(\btype\tau \ne \TVoid\)}
\BIC{\(\basevaltyped \Gamma \Delta {\VIndet {\btype\tau}} {\btype\tau}\)}
\DP\qquad
\AXC{\strut}
\UIC{\(\basevaltyped \Gamma \Delta \VVoid \TVoid\)}
\DP\qquad
\AXC{\strut\(\inttyped x {\itype \tau}\)}
\UIC{\(\basevaltyped \Gamma \Delta {\VInt {\itype\tau} x}
	{\TInt {\itype\tau}}\)}
\DP
\\[0.5em]
\AXC{\strut\(\ptrtyped \Gamma \Delta p {\ptype\sigma}\)}
\UIC{\(\basevaltyped \Gamma \Delta {\VPtr p} {\TPtr {\ptype\sigma}}\)}
\DP\quad
\renewcommand{\defaultHypSeparation}{\hskip.3em}
\AXC{\strut\(\bitvalid \Gamma \Delta {\vec b}\)}
\AXC{\(\length {\vec b} = \charbits\)}
\AXC{Not \(\vec b\) all in \(\{\BBit 0, \BBit 1\}\)}
\AXC{Not \(\vec b\) all \(\BIndet\)}
\QIC{\(\basevaltyped \Gamma \Delta {\VByte {\vec b}}
	{\TInt {\IntType \Unsigned \charrank}}\)}
\DP
\end{gather*}
\end{definition}

The side-conditions of the typing rule for \(\VByte {\vec b}\) ensure
canonicity of representations of base values.
It ensures that the construct \(\VByte {\vec b}\) is only used if \(\vec b\)
cannot be represented as
an integer \(\VInt {\IntType \Unsigned \charrank} x\) % for some \(x\in\Z\)
or \(\VIndet {(\TInt {\IntType \Unsigned \charrank})}\).

In Definition~\ref{definition:values} we define abstract values by
extending base values with constructs for arrays, structs and unions.
In order to define the operations to look up and store values in memory, we
define conversion operations between abstract values and memory trees.
Recall that the leaves of memory trees, which represent base values, are just
sequences of bits.
We therefore first define operations that convert base values to and from bits.
These operations are called flatten and unflatten.

\begin{definition}
\label{definition:base_val_flatten}
The \emph{flatten operation
\(\basevalflattenSym \Gamma : \baseval \to \lst\bit\)} is
defined as:
\begin{flalign*}
\basevalflatten \Gamma {\VIndet {\btype \tau}} \defined{}&
	\replicate {\bitsizeof \Gamma {\btype \tau}} \BIndet \\
\basevalflatten \Gamma \VVoid \defined{}&
	\replicate {\bitsizeof \Gamma \TVoid} \BIndet \\
\basevalflatten \Gamma {\VInt {\itype \tau} x} \defined{}&
	\inttobits {\itype\tau} x \\
\basevalflatten \Gamma {\VPtr p} \defined{}&
	\BPtrSeg {\PtrSeg {\freeze p} 0} \ldots
	\BPtrSeg {\PtrSeg {\freeze p}
		{\bitsizeof \Gamma {(\TPtr {\typeof p}}) - 1}} \\
\basevalflatten \Gamma {\VByte {\vec b}} \defined{}& \vec b
\end{flalign*}
The operation \(\inttobitsSym {\itype\tau} : \Z \to \lst\bool\) is defined
in Definition~\ref{definition:int_encode}.
\end{definition}

\begin{definition}
The \emph{unflatten operation
\(\basevalunflattenSym {\btype\tau} \Gamma : \lst\bit \to \baseval\)} is
defined as:
\begin{flalign*}
\basevalunflatten \Gamma \TVoid {\vec b} \defined{}& \VVoid \\
\basevalunflatten \Gamma {\TInt {\itype\tau}} {\vec b} \defined{}&
	\begin{cases}
	\VInt {\itype\tau} {\intofbits {\itype\tau} {\vec\beta}} &
		\textnormal{if \(\vec b\) is a \(\{\BBit 0,\BBit 1\}\) sequence
		\(\vec\beta\)}\\
	\VByte {\vec b} &
		\textnormal{if \(\itype\tau = \IntType \Unsigned \charrank\),
		not \(\vec b\) all in \(\{\BBit 0, \BBit 1\}\), and
		not \(\vec b\) all \(\BIndet\)} \\
	\VIndet {\TInt {\itype\tau}} & \textnormal{otherwise}
	\end{cases} \\
\basevalunflatten \Gamma {\TPtr {\ptype\sigma}} {\vec b} \defined{}&
	\begin{cases}
	\VPtr p &
		\textnormal{if \(\vec b = \BPtrSeg {\PtrSeg p 0} \ldots
		\BPtrSeg {\PtrSeg p {\bitsizeof \Gamma {(\TPtr {\ptype\sigma}}) - 1}}\)
		and \(\typeof p = \ptype\sigma\)} \\
	\VIndet {(\TPtr {\ptype\sigma})} & \textnormal{otherwise}
	\end{cases}
\end{flalign*}
The operation \(\intofbitsSym {\itype\tau} : \lst\bool \to \Z\) is defined
in Definition~\ref{definition:int_encode}.
\end{definition}

The encoding of pointers is an important aspect of the flatten operation
related to our treatment of effective types.
Pointers are encoded as sequences of \emph{frozen} pointer fragment bits
\(\PtrSeg {\freeze p} i\) (see Definition~\ref{definition:ptr_freeze} for
the definition of frozen pointers).
Recall that the flatten operation is used to store base values in memory,
whereas the unflatten operation is used to retrieve them.
This means that whenever a pointer \(p\) is stored and read back, the frozen
variant \(\freeze p\) is obtained.

\begin{lemma}
For each \(\basevaltyped \Gamma \Delta {\bval v} {\btype\tau}\) we have
\(\basevalunflatten \Gamma {\btype\tau} {\basevalflatten \Gamma {\bval v}}
= \freeze {\bval v}\).
\end{lemma}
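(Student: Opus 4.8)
The plan is to argue by case analysis on the derivation of $\basevaltyped \Gamma \Delta {\bval v} {\btype\tau}$, equivalently on the head constructor of $\bval v$ (which carries its own type annotation). Base values are not recursive, so no induction is needed: each of the five typing cases reduces to unfolding the flatten operation $\basevalflatten \Gamma {\bval v}$ once and then computing the unflatten operation at type $\btype\tau$ on the result, using that $\freezeSym$ --- extended to base values in the obvious way --- is the identity everywhere except that $\freeze {\VPtr p} = \VPtr {\freeze p}$.

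I would start with the two ``indeterminate-like'' cases. For $\bval v = \VVoid$ we have $\btype\tau = \TVoid$ and $\basevalflatten \Gamma \VVoid = \replicate {\bitsizeof \Gamma \TVoid} \BIndet$, while $\basevalunflatten \Gamma \TVoid {\vec b} = \VVoid$ for every $\vec b$, so the claim is immediate. For $\bval v = \VIndet {\btype\tau}$ (where the typing rule forces $\btype\tau$ to be an integer or pointer base type), $\basevalflatten \Gamma {\bval v}$ is an all-$\BIndet$ string; this is neither a $\{\BBit 0,\BBit 1\}$ sequence nor a pointer-fragment sequence, and it fails the ``not all $\BIndet$'' side-condition of the $\VByte$ clause, so unflatten lands in its ``otherwise'' branch and returns $\VIndet {\btype\tau}$, which is exactly $\freeze {\bval v}$.

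Next come the two scalar cases. If $\bval v = \VInt {\itype\tau} x$ with $\inttyped x {\itype\tau}$, then $\basevalflatten \Gamma {\bval v} = \inttobits {\itype\tau} x$, a $\{\BBit 0,\BBit 1\}$ sequence, so the first clause of unflatten produces $\VInt {\itype\tau} {(\intofbits {\itype\tau} {\inttobits {\itype\tau} x})}$, which the lemma that the integer encoding functions are mutually inverse (applicable precisely because $\inttyped x {\itype\tau}$) rewrites to $\VInt {\itype\tau} x = \freeze {\bval v}$. If $\bval v = \VByte {\vec b}$, then $\btype\tau = \TInt {\IntType \Unsigned \charrank}$, $\basevalflatten \Gamma {\bval v} = \vec b$, and the typing side-conditions of $\VByte {\vec b}$ state exactly that $\vec b$ is not a $\{\BBit 0,\BBit 1\}$ sequence and not all $\BIndet$; hence the second clause of unflatten fires and returns $\VByte {\vec b} = \freeze {\bval v}$.

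The remaining case, $\bval v = \VPtr p$ with $\ptrtyped \Gamma \Delta p {\ptype\sigma}$, is the one that needs a little care, and I expect it to be the main (albeit modest) obstacle. Here $\basevalflatten \Gamma {\VPtr p}$ is the sequence $\BPtrSeg {\PtrSeg {\freeze p} 0} \ldots \BPtrSeg {\PtrSeg {\freeze p} {\bitsizeof \Gamma {(\TBase {\TPtr {\ptype\sigma}})} - 1}}$, a genuine pointer-fragment sequence for $\freeze p$. To make unflatten take its first clause I would verify three things: every fragment carries the one pointer $\freeze p$; the indices run from $0$ through $\bitsizeof \Gamma {(\TBase {\TPtr {\ptype\sigma}})} - 1$; and $\typeof {(\freeze p)} = \ptype\sigma$, which holds because $\ptrtyped \Gamma \Delta p {\ptype\sigma}$ forces $\typeof p = \ptype\sigma$ and $\freezeSym$ preserves points-to types. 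The clause then returns $\VPtr {\freeze p}$, which is $\freeze {\VPtr p}$. The real subtlety --- and the reason this case is the trickiest --- is that this pattern match is only meaningful on a nonempty fragment sequence, so one needs $\bitsizeof \Gamma {(\TBase {\TPtr {\ptype\sigma}})} > 0$; similarly, in the $\VIndet$ case one needs $\bitsizeof \Gamma {\btype\tau} > 0$ for every valid base type, since otherwise an empty bit string would vacuously count as a $\{\BBit 0,\BBit 1\}$ sequence and select the wrong clause. Both positivity facts follow from the implementation-environment axioms --- the sizes of pointer types and of $\TVoid$ are nonzero, and the size of an integer type of rank $k$ equals $\ranksize k \ge 1$ --- together with $\charbits \ge 8$; once these are recorded, each case above is a routine unfolding, and the \Coq{} proof is a short computation obtained by destructing $\bval v$.
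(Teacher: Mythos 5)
Your proof is correct and follows the same route the paper's (Coq) proof takes: a case analysis on the head constructor of the base value, unfolding flatten and then unflatten, invoking the inverse property of the integer encoding functions for \(\VInt {\itype\tau} x\), and using that flatten emits fragments of \(\freeze p\) whose type matches \(\ptype\sigma\) for the pointer case. Your observation that the non-zero-size axioms of the implementation environment are needed to keep the empty bit string from vacuously matching the wrong unflatten clause is exactly the right technical care.
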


Freezing formally describes the situations in which type-punning is allowed
since a frozen pointer cannot be used to access a union of another variant
than its current one (Definition~\ref{definition:ctree_lookup}).
Let us consider an example:

\begin{lstlisting}
union U { int x; short y; } u = { .x = 3 };
short *p = &u.y;  // a frozen version of the pointer &u.y is stored
printf("%d", *p); // type-punning via a frozen pointer -> undefined
\end{lstlisting}

Here, an attempt to type-punning is performed via the frozen pointer
\lstinline|p|, which is formally represented as:
\[
\Addr {o_u}
	{\TUnion {\mathtt U}}
	{\RUnion 1 {\mathtt U} \RFrozen}
	0
	{\TBase {\TInt {\IntType \Signed \shortrank}}}
	{\TBase {\TInt {\IntType \Signed \shortrank}}}.
\]

The lookup operation on memory trees (which will be used to obtain the value
of \lstinline|*p| from memory, see Definitions~\ref{definition:ctree_lookup}
and~\ref{definition:memory_operations}) will fail.
The annotation \(\RFrozen\) prevents a union from being
accessed through an address to another variant than its current one.
In the example below type-punning is allowed:

\begin{lstlisting}
union U { int x; short y; } u = { .x = 3 };
printf("%d", u.y);
\end{lstlisting}

Here, type-punning is allowed because it is performed directly via
\lstinline|u.y|, which has not been stored in memory, and thus has not
been frozen.

\begin{definition}
\label{definition:values}
\emph{Abstract values} are inductively defined as:
\begin{flalign*}
\syntax{val}.
\end{flalign*}
\end{definition}

The abstract value \(\VUnionAll t {\vec v}\) represents a union whose variant
is unspecified.
The values \(\vec v\) correspond to interpretations of \emph{all} variants of
\(\TUnion t\). Consider:

\begin{lstlisting}
union U { int x; short y; int *p; } u;
for (size_t i = 0; i < sizeof(u); i++) ((unsigned char*)&u)[i] = 0;
\end{lstlisting}

Here, the object representation of \lstinline|u| is initialized with zeros, and
its variant thus remains unspecified.
The abstract value of \lstinline|u| is\footnote{Note that the \Celeven{} standard
does not guarantee that the \cNULL{} pointer is represented as
zeros, thus
\lstinline|u.p| is not necessarily \cNULL{}.}:
\[
\VUnionAll {\mathtt U} {\listlit {
	\VBase {\VInt {\TInt \TSignedInt} 0},\
	\VBase {\VInt {\TInt {\IntType \Signed \shortrank}} 0},\
	\VBase {\VIndet {(\TPtr {\TType {\TBase {\TInt \TSignedInt}}})}}
}}
\]

Recall that the variants of a union occupy a single memory area, so the
sequence \(\vec v\) of a union value \(\VUnionAll t {\vec v}\) cannot be
arbitrary.
There should be a common bit sequence representing it.
This is not the case in:
\[
\VUnionAll {\mathtt U} {\listlit {
	\VBase {\VInt {\TInt \TSignedInt} 0},\
	\VBase {\VInt {\TInt {\IntType \Signed \shortrank}} 1},\
	\VBase {\VIndet {(\TPtr {\TType {\TBase {\TInt \TSignedInt}}})}}
}}
\]

The typing judgment for abstract values guarantees that \(\vec v\) can be
represented by a common bit sequence.
In order to express this property, we first define the unflatten operation that
converts a bit sequence into an abstract value.

\begin{definition}
\label{definition:val_unflatten}
The \emph{unflatten operation}
\(\valunflattenSym \Gamma \tau : \lst \bit \to \val\) is defined as:
\begin{flalign*}
\valunflatten \Gamma {\TBase {\btype\tau}} {\vec b} \defined{}&
	\VBase {\basevalunflatten \Gamma {\btype\tau} {\vec b}} 
	\tag{the right hand side is Definition~\ref{definition:base_val_flatten}
	on base values} \\
\valunflatten \Gamma {\TArray \tau n} {\vec b} \defined{}&
	\VArray \tau {(
	\ctreeunflatten \Gamma \tau {\sublist 0 s {\vec b}} \dotsc 
	\ctreeunflatten \Gamma \tau
		{\sublist {(n-1)s} {ns} {\vec b}})}
	\ \textnormal{where \(s \defined \bitsizeof \Gamma \tau\)} \\
\valunflatten \Gamma {\TStruct t} {\vec b} \defined{}&
	\MStruct t {(
	\ctreeunflatten \Gamma {\tau_0} {\sublist 0 {s_0} {\vec b}} \dotsc
	\ctreeunflatten \Gamma {\tau_{n-1}}
		{\sublist {z_{n-1}} {z_{n-1} + s_{n-1}} {\vec b}})} \\
	& \textnormal{where \(\elookup t \Gamma = \Some {\vec \tau}\),
		\(n \defined \length {\vec\tau}\),
		\(s_i \defined \bitsizeof \Gamma {\tau_i}\) and
	\(z_i \defined \bitoffsetof \Gamma {\vec\tau} i\)} \\
\valunflatten \Gamma {\TUnion t} {\vec b} \defined{}&
	\MUnionAll t {(
	\ctreeunflatten \Gamma {\tau_0} {\sublist 0 {s_0} {\vec b}} \dotsc
	\ctreeunflatten \Gamma {\tau_{n-1}} {\sublist 0 {s_{n-1}} {\vec b}})} \\
	& \textnormal{where \(\elookup t \Gamma = \Some {\vec \tau}\),
		\(n \defined \length {\vec\tau}\) and
		\(s_i \defined \bitsizeof \Gamma {\tau_i}\)}
\end{flalign*}
\end{definition}

\begin{definition}
\label{definition:val_typed}
The judgment \(\valtyped \Gamma \Delta v \tau\) describes that \emph{the value
\(v\) has type \(\tau\)}.
It is inductively defined as:
\allowdisplaybreaks[4]
\begin{gather*}
\renewcommand{\ScoreOverhang}{0.1em}
\AXC{\(\basevaltyped \Gamma \Delta {\bval v} {\btype\tau}\)}
\UIC{\(\valtyped \Gamma \Delta
	{\VBase {\bval v}} {\TBase {\btype\tau}}\)}
\DP\qquad
\AXC{\(\valtyped \Gamma \Delta {\vec v} \tau\)}
\AXC{\(\length {\vec v} = n \ne 0\)}
\BIC{\(\valtyped \Gamma \Delta {\VArray \tau {\vec v}}
	{\TArray \tau n}\)}
\DP\\[0.5em]
\AXC{\(\elookup t \Gamma = \vec\tau\)}
\AXC{\(\valtyped \Gamma \Delta {\vec v} {\vec\tau}\)}
\BIC{\(\valtyped \Gamma \Delta {\VStruct t {\vec v}} {\TStruct t}\)}
\DP\qquad
\AXC{\(\elookup t \Gamma = \vec\tau\)}
\AXC{\(i < \length {\vec\tau}\)}
\AXC{\(\ctreetyped \Gamma \Delta v {\tau_i}\)}
\TIC{\(\ctreetyped \Gamma \Delta {\VUnion t i v} {\TUnion t}\)}
\DP\\[0.5em]
\AXC{\(\elookup t \Gamma = \vec\tau\)}
\AXC{\(\valtyped \Gamma \Delta {\vec v} {\vec\tau}\)}
\AXC{\(\bitvalid \Gamma \Delta {\vec b}\)}
\AXC{\(\forall i \wsdot (v_i = \valunflatten \Gamma {\tau_i}
	{\sublist 0 {\bitsizeof \Gamma {\tau_i}} {\vec b}})\)}
\QIC{\(\valtyped \Gamma \Delta {\VUnionAll t {\vec v}} {\TUnion t}\)}
\DP
\end{gather*}
\end{definition}

The flatten operation \(\valflattenSym \Gamma : \val \to \lst\bit\), which
converts an abstract value \(v\) into a bit representation
\(\valflatten \Gamma v\), is more difficult to define (we need this operation to
define the conversion operation from abstract values into memory trees, see
Definition~\ref{definition:of_val}).
Since padding bits are not present in abstract values, we have to insert these.
Also, in order to obtain the bit representation of an unspecified
\(\VUnionAll t {\vec v}\) value, we have to \emph{construct} the common bit sequence
\(\vec b\) representing \(\vec v\).
The typing judgment guarantees that such a sequence exists, but since it is
not explicit in the value \(\VUnionAll t {\vec v}\), we have to reconstruct
it from \(\vec v\).
Consider:

\begin{lstlisting}
union U { struct S { short y; void *p; } x1; int x2; };
\end{lstlisting}

Assuming \(\sizeof \Gamma {(\TBase {\TInt {\IntType \Signed \intrank}})}
= \sizeof \Gamma {(\TBase {\TPtr \TAny})} = 4\) and
\(\sizeof \Gamma {(\TBase {\TInt {\IntType \Signed \shortrank}})} = 2\),
a well-typed \(\TUnion {\mathtt U}\) value of an unspecified variant may
be:
\[v = \VUnionAll {\mathtt U} {\listlit {
	\VStruct {\mathtt S} {\listlit {
		\VBase {\VInt {\IntType \Signed \shortrank} 0},
		\VBase {\VPtr p}
	}}, \VBase {\VInt {\IntType \Signed \intrank} 0}
}}.\]

\noindent The flattened versions of the variants of \(v\) are:
\begin{equation*}
\begin{aligned}
\valflatten \Gamma {\VStruct {\mathtt S} {\listlit {
	\VBase {\VInt {\IntType \Signed \shortrank} 0},
	\VBase {\VPtr p}}}}={}&
	0 \dotsc 0 \; 0 \dotsc 0 \;
	\BIndet \dotsc \BIndet \; \BIndet \dotsc \BIndet \;
	\BPtrSeg {\PtrSeg p 0} \dotsc \BPtrSeg {\PtrSeg p {31}} \\
\valflatten \Gamma {\VBase {\VInt {\IntType \Signed \intrank} 0}} ={}&
	0 \dotsc 0 \; 0 \dotsc 0 \; 0 \dotsc 0 \; 0 \dotsc 0 \\ \midrule
\valflatten \Gamma v ={}&
	0 \dotsc 0 \; 0 \dotsc 0 \; 0 \dotsc 0 \; 0 \dotsc 0 \;
	\BPtrSeg {\PtrSeg p 0} \dotsc \BPtrSeg {\PtrSeg p {31}} 
\end{aligned}
\end{equation*}

This example already illustrates that so as to obtain the common bit sequence
\(\valflatten \Gamma v\) of \(v\) we have to insert padding bits and
``join'' the padded bit representations.

\begin{definition}
The \emph{join operation on bits \(\bitjoinSym : \bit\to\bit\to\bit\)} is
defined as:
\[
	\bitjoin \BIndet b \defined b \qquad
	\bitjoin b \BIndet \defined b \qquad
	\bitjoin b b \defined b.
\]
\end{definition}

\begin{definition}
The \emph{flatten operation \(\valflattenSym \Gamma : \val \to \lst\bit\)} is
defined as:
\begin{flalign*}
\valflatten \Gamma {\VBase {\bval v}} \defined{}&
	\basevalflatten \Gamma {\bval v} \\
\valflatten \Gamma {\VArray \tau {\vec v}} \defined{}&
	\valflatten \Gamma {v_0} \dotsc
	\valflatten \Gamma {v_{\length {\vec v} - 1}} \\
\valflatten \Gamma {\VStruct t {\vec v}} \defined{}&
	\sublistpad 0 {z_0} \BIndet {\valflatten \Gamma {v_0}} \dotsc
	\sublistpad 0 {z_{n-1}} \BIndet {\valflatten \Gamma {v_{n-1}}} \\
	& \textnormal{where \(\elookup t \Gamma = \Some {\vec \tau}\),
	\(n \defined \length{\vec \tau}\), and
	\(z_i \defined \bitoffsetof \Gamma {\vec\tau} i\)} \\
\valflatten \Gamma {\VUnion t i v} \defined{}&
	\sublistpad 0 {\bitsizeof \Gamma {(\TUnion t)}} \BIndet
		{\valflatten \Gamma v} \\
\valflatten \Gamma {\VUnionAll t {\vec v}} \defined{}&
	\textstyle\bigsqcup_{i=0}^{\length{\vec v}-1}
	\sublistpad 0 {\bitsizeof \Gamma {(\TUnion t)}} \BIndet
		{\valflatten \Gamma {v_i}}
%	\bitjoin
%		{(\valflatten \Gamma {v_0}\,\replicate {z_0} \BIndet)}
%	{\bitjoin \dotsc
%		{(\valflatten \Gamma {v_{\length {\vec v}-1}}\,
%		\replicate {z_{\length {\vec v} - 1}} \BIndet)}} \\
%	&\textnormal{where
%		\(z_i = \bitsizeof \Gamma {(\TUnion t)} -
%		\bitsizeof \Gamma {(\typeof {v_i})}\)}
\end{flalign*}
\end{definition}

The operation \(\ofvalSym \Gamma : \lst\perm \to \val \to \mtree\), which
converts a value \(v\) of type \(\tau\) into a memory tree
\(\ofval \Gamma {\vec\gamma} v\), is albeit technical fairly straightforward.
In principle it is just a recursive definition that uses the flatten 
operation \(\basevalflatten \Gamma {\bval v}\) for base values
\(\VBase {\bval v}\) and the flatten operation
\(\valflatten \Gamma {\VUnionAll t {\vec v}}\) for unions
\(\VUnionAll t {\vec v}\) of an unspecified variant.

The technicality is that abstract values do not contain permissions, so we
have to merge the given value with permissions.
The sequence \(\vec\gamma\) with \(\length {\vec\gamma} = \bitsizeof \Gamma \tau\)
represents a flattened sequence of permissions.
In the definition of the memory store \(\meminsert \Gamma a v m\) (see
Definition~\ref{definition:memory_operations}), we convert \(v\) into the
stored memory tree \(\ofval \Gamma {\vec\gamma} v\) where \(\gamma\) constitutes
the old permissions of the object at address \(a\).

\begin{definition}
\label{definition:of_val}
The operation \(\ofvalSym \Gamma : \lst\perm \to \val \to \mtree\) is defined
as:
\begin{flalign*}
\ofval \Gamma {\vec\gamma} {(\VBase {\bval v})} \defined{}&
	\MBase {\typeof {\bval v}} {\vv{\gamma b}}
	\quad\textnormal{where
		\(\vec b \defined \basevalflatten \Gamma {\bval v}\)}\\
\ofval \Gamma {\vec\gamma} {(\VArray \tau {\vec v})} \defined{}&
	\MArray \tau
		{(\ofval \Gamma {\sublist 0 s {\vec\gamma}} {v_0} \dotsc
		\ofval \Gamma {\sublist {(n - 1)s} {ns} {\vec\gamma}} {v_{n - 1}})} \\
	& \textnormal{where \(s \defined \bitsizeof \Gamma \tau\)
	and \(n \defined \length {\vec v}\)} \\
\ofval \Gamma {\vec\gamma} {(\VStruct t {\vec v})} \defined{}&
	\MStruct t {\begin{pmatrix*}[l]
		\ofval \Gamma {\sublist 0 {s_0} {\vec\gamma}} {v_0}\
		\zipBIndet {\sublist{s_0} {z_0} {\vec\gamma}}
		\\[-0.2em] \dotsc \\
		\ofval \Gamma {\sublist {z_{n-1}} {z_{n-1} + s_{n-1}} {\vec\gamma}}
			{v_{n - 1}}\
		\zipBIndet {\sublist{z_{n-1} + s_{n-1}} {z_n} {\vec\gamma}}
	\end{pmatrix*}} \\
	& \textnormal{where \(\elookup t \Gamma = \Some {\vec \tau}\),
		\(n \defined \length{\vec \tau}\),
		\(s_i \defined \bitsizeof \Gamma {\tau_i}\)}\\[-0.2em]
	& \textnormal{and \(z_i \defined \bitoffsetof \Gamma {\vec\tau} i\)} \\
\ofval \Gamma {\vec\gamma} {(\VUnion t i v)} \defined{}&
	\MUnion t i {\ofval \Gamma {\sublist 0 s {\vec\gamma}} v}
		{\zipBIndet {\sublist s {\bitsizeof \Gamma {(\TUnion t)}} {\vec\gamma}}} \\
	& \textnormal{where \(s \defined \bitsizeof \Gamma {(\typeof v)}\)} \\
\ofval \Gamma {\vec\gamma} {(\VUnionAll t {\vec v})} \defined{}&
	\MUnionAll t {\vv{\gamma b}}
	\quad\textnormal{where
	\(\vec b \defined \valflatten \Gamma {\VUnionAll t {\vec v}}\)}
\end{flalign*}
%The operation \(\ofval \Gamma v {\vec\gamma}\) is only well-defined in case
%\(\length {\vec\gamma} = \bitsizeof \Gamma {(\typeof v)}\).
\end{definition}

Converting a memory tree into a value is as expected: permissions are removed
and unions are interpreted as values corresponding to each variant.

\begin{definition}
The operation \(\tovalSym \Gamma : \mtree \to \val\) is defined as:
\begin{flalign*}
\toval \Gamma {(\MBase {\btype \tau} {\vv {\gamma b}})} \defined{}&
	\VBase {\basevalunflatten \Gamma {\btype\tau} {\vec b}} \\
\toval \Gamma {(\MArray \tau {\vec w})} \defined{}&
	\VArray \tau
		{(\toval \Gamma {w_0} \dotsc \toval \Gamma {w_{\length {\vec w} - 1}})} \\
\toval \Gamma {(\MStruct t {\vv{w\vec {\ppbit b}}})} \defined{}&
	\VStruct t
		{(\toval \Gamma {w_0} \dotsc \toval \Gamma {w_{\length {\vec w} - 1}})} \\
\toval \Gamma {(\MUnion t i w {\vec {\ppbit b}})} \defined{}&
	\VUnion t i {\toval \Gamma w} \\
\toval \Gamma {(\MUnionAll t {\vv {\gamma b}})} \defined{}&
	\valunflatten \Gamma {\TUnion t} {\vec b}
\end{flalign*}
\end{definition}

The function \(\tovalSym \Gamma\) is an inverse of \(\ofvalSym \Gamma\) up to
freezing of pointers.
Freezing is intended, it makes indirect type-punning illegal.

\begin{lemma}
\label{lemma:to_of_val}
Given \(\valtyped \Gamma \Delta v \tau\), and let \(\vec\gamma\) be a flattened
sequence of permissions with \(\length {\vec\gamma} = \bitsizeof \Gamma \tau\),
then we have:
\[
	\toval \Gamma {(\ofval \Gamma {\vec\gamma} v)} = \freeze v.
\]
\end{lemma}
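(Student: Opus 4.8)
The plan is to prove Lemma~\ref{lemma:to_of_val} by induction on the well-typed value $v$, using the well-founded recursion principle for well-formed types discussed after Definition~\ref{definition:type_valid} (needed because $\valtyped \Gamma \Delta v \tau$ recurses through the field types of structs and unions). In each case I unfold $\ofvalSym \Gamma$ and $\tovalSym \Gamma$ once and simplify. Throughout I use that $\freezeSym$ commutes with the value constructors — it only rewrites the pointers sitting in base-value leaves — together with the length facts $\length{\basevalflatten \Gamma {\bval v}} = \bitsizeof \Gamma {\typeof {\bval v}}$ and $\length{\valflatten \Gamma v} = \bitsizeof \Gamma {\typeof v}$, and the size constraints from Definition~\ref{definition:env_spec} (e.g.\ $\bitsizeof \Gamma {(\TArray \tau n)} = n \cdot \bitsizeof \Gamma \tau$, $\bitsizeof \Gamma {\tau_i} \le \bitsizeof \Gamma {(\TUnion t)}$, and for a struct with $\elookup t \Gamma = \Some{\vec\tau}$ the inequality $z_{n-1} + s_{n-1} \le \bitsizeof \Gamma {(\TStruct t)}$ with $z_i := \bitoffsetof \Gamma {\vec\tau} i$ and $s_i := \bitsizeof \Gamma {\tau_i}$). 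These guarantee that all the sublist and padded-sublist slices of $\vec\gamma$ and of the flattened bit sequences appearing in the definitions have exactly the lengths expected, so the recursive calls line up.

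For $v = \VBase {\bval v}$ we have $\ofval \Gamma {\vec\gamma} {(\VBase {\bval v})} = \MBase {\typeof {\bval v}} {\vv{\gamma b}}$ with $\vec b = \basevalflatten \Gamma {\bval v}$, hence $\toval \Gamma {(\ofval \Gamma {\vec\gamma} {(\VBase {\bval v})})} = \VBase {(\basevalunflatten \Gamma {\btype\tau} {\basevalflatten \Gamma {\bval v}})}$, which equals $\VBase {(\freeze {\bval v})} = \freeze {(\VBase {\bval v})}$ directly by the base-value round-trip lemma stated just before Definition~\ref{definition:values}. The array and struct cases are immediate from the induction hypotheses for the components: $\tovalSym \Gamma$ discards exactly the all-$\BIndet$ padding sub-lists that $\ofvalSym \Gamma$ inserted, and it splits the permission list $\vec\gamma$ in the same way $\ofvalSym \Gamma$ did, so the goal reduces componentwise to $\toval \Gamma {(\ofval \Gamma {\vec\gamma'} {v_i})} = \freeze {v_i}$ for the appropriate slice $\vec\gamma'$, which is the induction hypothesis for $v_i$ (legitimate since $\tau_i$ precedes $\tau$ in the well-founded order). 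The case $v = \VUnion t i v'$ is a one-step unfolding, $\toval \Gamma {(\MUnion t i {\ofval \Gamma {\sublist 0 s {\vec\gamma}} {v'}} {\_})} = \VUnion t i {(\toval \Gamma {(\ofval \Gamma {\sublist 0 s {\vec\gamma}} {v'})})}$ with $s = \bitsizeof \Gamma {\typeof{v'}} = \bitsizeof \Gamma {\tau_i} \le \length{\vec\gamma}$, after which the induction hypothesis for $v'$ finishes it.

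The substance is the case $v = \VUnionAll t {\vec v}$. Here $\ofval \Gamma {\vec\gamma} {(\VUnionAll t {\vec v})} = \MUnionAll t {\vv{\gamma b}}$ with $\vec b = \valflatten \Gamma {(\VUnionAll t {\vec v})} = \bigsqcup_{j} \sublistpad 0 {\bitsizeof \Gamma {(\TUnion t)}} \BIndet {(\valflatten \Gamma {v_j})}$, and $\toval \Gamma$ of this memory tree is $\valunflatten \Gamma {\TUnion t} {\vec b}$, the value $\VUnionAll t {\vec w}$ whose $i$-th component $w_i$ is obtained by unflattening the first $s_i := \bitsizeof \Gamma {\tau_i}$ bits of $\vec b$ at type $\tau_i$; since $\freeze {(\VUnionAll t {\vec v})} = \VUnionAll t {(\freeze {v_0} \dots \freeze {v_{n-1}})}$, it suffices to show $w_i = \freeze {v_i}$ for each $i$. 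Inverting $\valtyped \Gamma \Delta {\VUnionAll t {\vec v}} {\TUnion t}$ supplies a single valid bit sequence $\vec c$ with $v_j = \valunflatten \Gamma {\tau_j} {(\sublist 0 {s_j} {\vec c})}$ for every $j$; note $v_j$ then has type $\tau_j$, so $\valflatten \Gamma {v_j}$ has length $s_j$, and $v_j$ is an unflatten of a valid bit sequence, hence frozen ($\valunflattenSym$ only reconstructs pointers from valid fragment bits, which by Definition~\ref{definition:bit_typed} carry a frozen pointer), so $\freeze {v_j} = v_j$. The argument then rests on two monotonicity facts about flatten/unflatten with respect to the pointwise order induced by $\bitjoinSym$ (``becoming more indeterminate''), both proved by a parallel induction bottoming out in the corresponding statements for $\basevalflattenSym$ and $\basevalunflattenSym$: (i) for valid $\vec d$ of length $\bitsizeof \Gamma \sigma$, the composite $\valflatten \Gamma {(\valunflatten \Gamma \sigma {\vec d})}$ lies $\bitjoinSym$-below $\vec d$ (padding positions may turn into $\BIndet$, while integer and pointer bits are preserved); and (ii) $\valunflatten \Gamma \sigma {\_}$ is insensitive to such changes, i.e.\ it returns the same value on two valid bit sequences that differ only by replacing non-integer/non-pointer positions with $\BIndet$. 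Granting these, each joinand is $\bitjoinSym$-below $\vec c$ (by (i), applied to the valid slice of $\vec c$ producing $v_j$), hence the joinands are pairwise compatible and $\vec b$ is well-defined; restricting to the first $s_i$ bits, $\sublist 0 {s_i} {\vec b}$ sits $\bitjoinSym$-between $\valflatten \Gamma {v_i}$ (one joinand, no padding needed since its length is $s_i$) and $\sublist 0 {s_i} {\vec c}$. By (ii), unflattening at $\tau_i$ therefore gives $w_i = \valunflatten \Gamma {\tau_i} {(\sublist 0 {s_i} {\vec c})} = v_i = \freeze {v_i}$, completing the case.

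I expect the main obstacle to be exactly this $\VUnionAll$ case: isolating and proving the two $\bitjoinSym$-monotonicity lemmas — $\bitjoinSym$-domination of $\valflattenSym \Gamma \circ \valunflattenSym \Gamma$, and stability of $\valunflattenSym \Gamma$ under making padding positions indeterminate — and checking that the $\BIndet$'s introduced by padding genuinely align across all variants so that the join $\bigsqcup_j$ never encounters a real bit conflict. The remaining cases are routine once the length and size bookkeeping from Definition~\ref{definition:env_spec} is in place.
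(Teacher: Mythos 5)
Your plan is essentially the proof that the formalization carries out (the paper itself defers this lemma to the \Coq{} development), and you have correctly located the only non-routine step: the \(\VUnionAllSym\) case, where one must show that unflattening the join \(\bigsqcup_j\) of the padded, flattened variants reproduces each variant. The base, array, struct and \(\VUnionSym\) cases reduce componentwise exactly as you describe, using the stated round-trip lemma for base values and the length bookkeeping from Definition~\ref{definition:env_spec}; and your observation that the \(\vec c\) supplied by inverting the \(\VUnionAllSym\) typing rule consists of \emph{valid} bits — so the reconstructed pointers are already frozen and \(\freeze{v_j}=v_j\) — is the point that makes auxiliary fact (i) (that \(\valflatten \Gamma {(\valunflatten \Gamma \sigma {\vec d})}\) lies below \(\vec d\)) true at pointer leaves, since \(\basevalflattenSym\) re-freezes pointers on the way out.

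One caution on how you state auxiliary fact (ii). As literally phrased — unflattening is insensitive to ``replacing non-integer/non-pointer positions with \(\BIndet\)'' — it is not the lemma you need, and a syntactic notion of ``padding position'' will not survive the induction: if an integer's bit range in \(\vec c\) contains a single \(\BIndet\), then \(\valflatten \Gamma {(\valunflatten \Gamma \sigma {\vec c})}\) wipes the \emph{entire} range (including genuinely concrete bits) to \(\BIndet\), and unflattening must still return the same \(\VIndetSym\) value on the wiped sequence. The usable formulation is the sandwich property you in fact apply: for valid \(\vec d\) of the right length, every \(\vec e\) with \(\valflatten \Gamma {(\valunflatten \Gamma \sigma {\vec d})} \sqsubseteq \vec e \sqsubseteq \vec d\) (pointwise, \(\BIndet\) below everything) satisfies \(\valunflatten \Gamma \sigma {\vec e} = \valunflatten \Gamma \sigma {\vec d}\). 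Checking this at the base types requires a small case analysis (all-concrete integers force \(\vec e = \vec d\); the \(\VByteSym\) case forces \(\vec e = \vec d\); a non-pointer-shaped sequence cannot become pointer-shaped by going down in the order), after which it lifts through arrays, structs and unions by the parallel induction you sketch. With (i) and (ii) in that form, your chain \(\valflatten \Gamma {v_i} \sqsubseteq \sublist 0 {s_i} {\vec b} \sqsubseteq \sublist 0 {s_i} {\vec c}\) closes the \(\VUnionAllSym\) case, so the proposal is sound.
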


The other direction does not hold because invalid bit representations will
become indeterminate values.

\begin{lstlisting}
struct S { int *p; } s;
for (size_t i = 0; i < sizeof(s); i++) ((unsigned char*)&s)[i] = i;
// s has some bit representation that does not constitute a pointer
struct S s2 = s;
// After reading s, and storing it, there are no guarantees about s2,
// whose object representation thus consists of $\BIndet$s
\end{lstlisting}

We finish this section by defining the indeterminate abstract value
\(\valnew \Gamma \tau\), which consists of indeterminate base values.
The definition is similar to its counterpart on memory trees
(Definition~\ref{definition:ctree_new}).

\begin{definition}
\label{definition:val_new}
The operation \(\valnew \Gamma : \type \to \val\) that yields the
indeterminate value is defined as:
\[
\valnew \Gamma \tau \defined \valunflatten \Gamma \tau
	{\replicate {\bitsizeof \Gamma \tau} \BIndet}.
\]
\end{definition}

\begin{lemma}
If \(\typevalid \Gamma \tau\), then:
\[
	\toval \Gamma {(\ctreenew \Gamma \gamma \tau)} = \valnew \Gamma \tau
	\quad\textnormal{and}\quad
	\ofval \Gamma {(\replicate {\bitsizeof \Gamma \tau} \gamma)}
		{(\valnew \Gamma \tau)} = \ctreenew \Gamma \gamma \tau.
\]
\end{lemma}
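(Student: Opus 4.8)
\ The plan is to prove both equalities by well-founded induction on \(\tau\), using the custom induction principle for well-formed types referred to after Definition~\ref{definition:env_valid}, and to supplement it with one auxiliary lemma proved by the same induction: \(\valflatten \Gamma {(\valnew \Gamma \sigma)} = \replicate {\bitsizeof \Gamma \sigma} \BIndet\) for every valid \(\sigma\). The observation that drives every case is that all bits that occur are \(\BIndet\): the list \(\replicate {\bitsizeof \Gamma \tau} {\pair \gamma \BIndet}\) is fixed by \(\pbitindetifySym\); every contiguous sub-list, and every padded sub-list, of an all-\(\BIndet\) list is again an all-\(\BIndet\) list of the expected length; and the join \(\bigsqcup\) of equal-length all-\(\BIndet\) lists is all-\(\BIndet\), since \(\bitjoin \BIndet \BIndet = \BIndet\). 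Thus the type-directed recursions of \(\ctreeunflattenSym\), \(\valunflattenSym\), \(\ofvalSym\) and \(\valflattenSym\) merely thread copies of \(\BIndet\) (carrying the fixed permission \(\gamma\) on the memory-tree side) down the tree, so that after unfolding the definitions of \(\ctreenew \Gamma \gamma \tau\) and \(\valnew \Gamma \tau\) the goal reduces to a structural comparison of the two recursions.

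For the first equality, unfold \(\ctreenew \Gamma \gamma \tau = \ctreeunflatten \Gamma \tau {\replicate {\bitsizeof \Gamma \tau} {\pair \gamma \BIndet}}\) and \(\valnew \Gamma \tau = \valunflatten \Gamma \tau {\replicate {\bitsizeof \Gamma \tau} \BIndet}\) and split on \(\tau\). For a base type \(\TBase {\btype\tau}\), \(\ctreeunflattenSym\) yields the memory tree \(\MBase {\btype\tau} {\vec{\ppbit b}}\) with \(\vec{\ppbit b}\) all \(\pair \gamma \BIndet\), and \(\tovalSym \Gamma\) sends it to \(\VBase {\basevalunflatten \Gamma {\btype\tau} {\vec b}}\) with \(\vec b\) all \(\BIndet\), which is exactly \(\valunflatten \Gamma {\TBase {\btype\tau}} {\vec b}\); here the integer, pointer and \(\TVoid\) subcases of \(\basevalunflattenSym\) are covered uniformly, since an all-\(\BIndet\) leaf list is neither a \(\{\BBit 0,\BBit 1\}\) sequence nor a pointer-fragment sequence. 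For an array \(\TArray \sigma n\) or a struct, split the replicated leaf list into the element/field blocks — the padding blocks between fields are on the memory-tree side fixed by \(\pbitindetifySym\) and on the value side discarded by \(\valunflattenSym\) and \(\tovalSym \Gamma\) — observe that each element/field block is again a replication of \(\pair \gamma \BIndet\), hence the leaf list of \(\ctreenew \Gamma \gamma \sigma\) (respectively \(\ctreenew \Gamma \gamma {\tau_i}\)), and invoke the induction hypothesis through \(\tovalSym \Gamma\), which recurses into children. For a union \(\TUnion t\) the case is immediate: \(\ctreeunflattenSym\) produces \(\MUnionAll t {\vec{\ppbit b}}\), and by definition \(\toval \Gamma {(\MUnionAll t {\vv {\gamma b}})} = \valunflatten \Gamma {\TUnion t} {\vec b}\), so no appeal to the induction hypothesis is needed.

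For the second equality, the base, array and struct cases are once more unfoldings of \(\ofvalSym \Gamma\) and of \(\valnew \Gamma \tau\), matched against \(\ctreenew \Gamma \gamma \tau\) via the induction hypothesis; for instance \(\ofval \Gamma {\vec\gamma} {(\VBase {\VIndet {\btype\tau}})} = \MBase {\btype\tau} {\vv {\gamma b}}\) with \(\vec b = \basevalflatten \Gamma {\VIndet {\btype\tau}}\) all \(\BIndet\), which is precisely \(\ctreenew \Gamma \gamma {\TBase {\btype\tau}}\). The union case is where the auxiliary lemma enters: since \(\valnew \Gamma {\TUnion t} = \VUnionAll t {(\valnew \Gamma {\tau_0},\dotsc,\valnew \Gamma {\tau_{n-1}})}\), we get \(\ofval \Gamma {\vec\gamma} {(\VUnionAll t {\vec v})} = \MUnionAll t {\vv {\gamma b}}\) with \(\vec b = \valflatten \Gamma {\VUnionAll t {\vec v}} = \bigsqcup_{i} \sublistpad 0 {\bitsizeof \Gamma {(\TUnion t)}} \BIndet {\valflatten \Gamma {(\valnew \Gamma {\tau_i})}}\); by the auxiliary lemma each summand is \(\replicate {\bitsizeof \Gamma {(\TUnion t)}} \BIndet\), hence so is the join, whence \(\vv {\gamma b} = \replicate {\bitsizeof \Gamma {(\TUnion t)}} {\pair \gamma \BIndet}\), which is exactly \(\ctreenew \Gamma \gamma {\TUnion t}\).

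The main obstacle is bookkeeping rather than conceptual: one must keep the struct offsets \(s_i,z_i\) and the sub-list slices aligned so that each slice of a replicated list is provably the right-length replication, and one must prove the auxiliary flattening lemma in the union case (where \(\bigsqcup\) appears and where one already needs to know that every variant's flattening is all-\(\BIndet\)). Everything else is a direct collapse of the two mutually recursive definitions under the standing fact that no concrete or pointer bits ever arise.
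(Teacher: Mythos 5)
Your proposal is correct and follows what is essentially the intended argument (the paper defers the proof to the Coq development): well-founded induction over the structure of well-formed types, with the whole computation collapsing because every leaf bit is \(\BIndet\), every slice of a replicated list is again a replication, and \(\pbitindetifySym\), \(\sublistpad {} {} \BIndet {}\) and \(\bigsqcup\) all fix the all-indeterminate lists. The auxiliary lemma \(\valflatten \Gamma {(\valnew \Gamma \sigma)} = \replicate {\bitsizeof \Gamma \sigma} \BIndet\) is exactly the right supplement for the \(\VUnionAll {} {}\) case of the second equality, and your case analysis (including the observation that the union case of the first equality needs no induction hypothesis at all) is accurate.
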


\subsection{Memory operations}
\label{section:memory_operations}

Now that we have all primitive definitions in place, we can compose these to
implement the actual memory operations as described in the beginning of this
section.
The last part that is missing is a data structure to keep track of objects that
have been locked.
Intuitively, this data structure should represent a set of addresses, but up
to overlapping addresses.

\begin{definition}
\label{definition:lockset}
\emph{Locksets} are defined as:
\begin{flalign*}
\syntax{lockset}.
\end{flalign*}
\end{definition}

Elements of locksets are pairs \(\pair o i\) where \(o \in \memindex\) describes
the object identifier and \(i \in \nat\) a bit-offset in the object described
by \(o\).
We introduce a typing judgment to describe that the structure of locksets
matches up with the memory layout.

\begin{definition}
The judgment \(\locksetvalid \Gamma \Delta \Omega\) describes that \emph{the
lockset \(\Omega\) is valid}.
It is inductively defined as:
\begin{equation*}
\AXC{for each\quad \(\pair o i \in \Omega
	\quad\textnormal{there is a }\tau\textnormal{ with}\quad
	\indextyped \Delta o \tau \textnormal{ and }
	i < \bitsizeof \Gamma \tau\)}
\UIC{\(\locksetvalid \Gamma \Delta \Omega\)}
\DP
\end{equation*}
\end{definition}

\begin{definition}
The \emph{singleton lockset} \(\locksingletonSym \Gamma : \addr \to \lockset\)
is defined as:
\begin{equation*}
\locksingleton \Gamma a \defined
\{ (\addrindex a, i) \separator \addrobjectoffset \Gamma a \le i <
	\addrobjectoffset \Gamma a + \bitsizeof \Gamma {(\typeof a)} \}.
\end{equation*}
\end{definition}

\begin{lemma}
If \(\addrtyped \Gamma \Delta {a_1} {\sigma_1}\) and
\(\addrtyped \Gamma \Delta {a_2} {\sigma_2}\) and
\(\addrstrict \Gamma {\{a_1,a_2\}}\), then:
\begin{equation*}
	\addrdisjoint \Gamma {a_1} {a_2} \quad\textnormal{implies}\quad
	\locksingleton \Gamma {a_1} \cap \locksingleton \Gamma {a_2} = \emptyset.
\end{equation*}
\end{lemma}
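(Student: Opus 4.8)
The plan is to proceed by case analysis on the derivation of $\addrdisjoint \Gamma {a_1} {a_2}$, which by the definition of disjointness of addresses splits into three cases according to the inference rule used, and in each case to show that no pair $\pair o i$ can lie in both $\locksingleton \Gamma {a_1}$ and $\locksingleton \Gamma {a_2}$. The first case, where the rule for distinct object identifiers is used so that $\addrindex {a_1} \ne \addrindex {a_2}$, is immediate: by the definition of $\locksingletonSym$ the first component of every element of $\locksingleton \Gamma {a_i}$ equals $\addrindex {a_i}$, and these identifiers differ, so the intersection is empty.

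For the second case, where $\addrindex {a_1} = \addrindex {a_2}$ and $\refdisjoint {\addrref \Gamma {a_1}} {\addrref \Gamma {a_2}}$, I would first exploit strictness of $a_1$ and $a_2$: a strict address has a well-typed normalized reference, so $\addrref \Gamma {a_1}$ and $\addrref \Gamma {a_2}$ are well-typed references from the common object type into subobjects of type $\sigma_1$ and $\sigma_2$. I would then appeal to the preceding lemma relating disjoint well-typed references within one object to non-overlapping bit-offset ranges, obtaining that the intervals $[\addrobjectoffset \Gamma {a_i},\ \addrobjectoffset \Gamma {a_i} + \bitsizeof \Gamma {(\typeof {a_i})})$ for $i = 1,2$ are disjoint, where $\bitsizeof \Gamma {(\typeof {a_i})}$ coincides with $\bitsizeof \Gamma {\sigma_i}$ by consistency of the address type annotations. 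Hence the second component of a pair in $\locksingleton \Gamma {a_1}$ cannot satisfy the membership condition of $\locksingleton \Gamma {a_2}$. If one prefers not to route through that lemma, the same fact follows by induction on the derivation of $\refdisjoint$, using the implementation-environment axioms $\offsetof \Gamma {\vec\tau} i = \Sigma_{j < i} (\fieldsizes \Gamma {\vec\tau})_j$ together with $\sizeof \Gamma {\tau_i} \le (\fieldsizes \Gamma {\vec\tau})_i$, and $\sizeof \Gamma {(\TArray \tau n)} = n \cdot \sizeof \Gamma \tau$, which make distinct struct fields and distinct array elements occupy non-overlapping sub-ranges.

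For the third case, where $a_1$ and $a_2$ are both byte addresses with $\addrindex {a_1} = \addrindex {a_2}$, $\freeze {\addrref \Gamma {a_1}} = \freeze {\addrref \Gamma {a_2}}$ and $\addrrefbyte \Gamma {a_1} \ne \addrrefbyte \Gamma {a_2}$, I would observe that $\refobjectoffsetSym \Gamma$ is invariant under freezing, since a union reference segment contributes $\refobjectoffset \Gamma {(\RUnion i t q)} = 0$ regardless of $q$ and the array and struct cases do not inspect $q$. Equality of the frozen normalized references therefore gives $\Sigma_j \refobjectoffset \Gamma {(\addrref \Gamma {a_1})_j} = \Sigma_j \refobjectoffset \Gamma {(\addrref \Gamma {a_2})_j}$; writing $c$ for this common value, we get $\addrobjectoffset \Gamma {a_i} = c + \addrrefbyte \Gamma {a_i} \cdot \charbits$, and since a byte address occupies a single byte, i.e.\ $\bitsizeof \Gamma {(\typeof {a_i})} = \charbits$, the two offset ranges are $[c + \addrrefbyte \Gamma {a_i} \cdot \charbits,\ c + (\addrrefbyte \Gamma {a_i} + 1) \cdot \charbits)$, which are disjoint precisely because $\addrrefbyte \Gamma {a_1} \ne \addrrefbyte \Gamma {a_2}$.

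The step I expect to be the main obstacle is the second case: the argument there is essentially subsumed by the earlier non-overlapping-bit-offsets lemma, and the subtlety — which is exactly the reason its hypotheses cannot be weakened — is the union-variant phenomenon flagged right after that lemma, namely that references to different variants of a union are deliberately \emph{not} disjoint and genuinely overlap in bits. Granting that lemma, the remaining work in all three cases is routine: bookkeeping about half-open integer intervals, matching $\typeof {a_i}$ with $\sigma_i$ through the typing judgments, and the small freezing-invariance observation used in the third case.
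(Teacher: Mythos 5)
Your proof is correct and follows the route the paper intends: the paper states this lemma without printing a proof, but your case analysis on the three inference rules for \(\addrdisjoint \Gamma {a_1} {a_2}\) --- the first case by disagreement of the object-identifier components, the second discharged via the preceding lemma on non-overlapping bit-offset ranges (using strictness to get well-typed normalized references), and the third via freeze-invariance of \(\refobjectoffsetSym \Gamma\) together with byte-offset interval arithmetic --- is exactly the expected argument from the paper's machinery. The only remark worth adding is that the preceding lemma as printed carries the hypothesis \(\addrindex {a_1} \ne \addrindex {a_2}\), which must be a typo for equality (its conclusion compares bit-offsets within a single object, which is vacuous across distinct objects), and you have implicitly and correctly read it that way.
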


\begin{definition}
\label{definition:memory_operations}
The \emph{memory operations} are defined as:
\allowdisplaybreaks[4]
\begin{flalign*}
\memlookup \Gamma a m \defined{}&
	\toval \Gamma w
	\quad \textnormal{ if \(\cmaplookup \Gamma a m = \Some w\) and
	\(\forall i \wsdot \Some\Readable \subseteq \permkind (\ctreeflatten w)_i\)}\\
\memforce \Gamma a m \defined{}&
	\minsert {(\addrindex a)}
		{\pair {\ctreealter \Gamma {\lambda w' \wsdot w'}
		{\addrref \Gamma a} w} \mu} m
		\quad \textnormal{if
		\(\mlookup {(\addrindex a)} m = \Some {\pair w \mu}\)} \\
\meminsert \Gamma a v m \defined{}&
	\cmapalter \Gamma {\lambda w \wsdot
		\ofval \Gamma {(\fst {\ctreeflatten w})} v} a m \\
\memwritable \Gamma a m \defined{} &
	\exists w \wsdot \cmaplookup \Gamma a m = \Some w
	\textnormal{ and \(\forall i \wsdot
	\Some\Writable \subseteq \permkind (\ctreeflatten w)_i\)}\\
\memlock \Gamma a m \defined{}&
	\cmapalter \Gamma {\lambda w \wsdot
		\textnormal{ apply \(\permlockSym\) to all permissions of \(w\)}} a m \\
\memunlock \Omega m \defined{}&
	\{ \pair o {\pair {\ctreemerge f w {\vec y}} \mu} \separator
		\mlookup o m = \Some {\pair w \mu} \}
	\cup
	\{ \pair o \tau \separator \mlookup o m = \Some \tau \} \\
	& \textnormal{where \(\begin{array}[t]{@{}l}
		f\,\pair \gamma b\,\true \defined \pair {\permunlock \gamma} b \\
		f\,\pair \gamma b\,\false \defined \pair \gamma b,
		\end{array}\)}\\[-0.2em]
	& \textnormal{and \(\vec y \defined (\pair o 0 \in \Omega) \dotsc 
		(\pair o {\length{\bitsizeof \Gamma {(\typeof w)}} - 1} \in \Omega)\)} \\
\memalloc \Gamma o v \mu m \defined{}&
	\minsert o {\pair
		{\ofval \Gamma	{(
			\replicate {\bitsizeof \Gamma {(\typeof v)}} {\LUnlocked {\pair 0 1}}
		)} v} \mu} m\\
\memfreeable a m \defined{}&
	\exists o\,\tau\,\sigma\,n\,w\wsdot
	\begin{array}[t]{@{}l}
	a = \Addr o \tau {\RArray 0 \tau n} 0 \tau \sigma,\
	\mlookup o m = \Some {\pair w \true} \\
	\textnormal{and all } \ctreeflatten w 
	\textnormal{ have the permission } \LUnlocked {\pair 0 1}
	\end{array}	\\
\memfree o m \defined{}&
	\minsert o {\typeof w}	 m
	\quad \textnormal{ if \(\mlookup o m = \Some {\pair w \mu}\)}
\end{flalign*}
\end{definition}

The lookup operation \(\memlookup \Gamma a m\) uses the lookup operation
\(\cmaplookup \Gamma a m\) that yields a memory tree \(w\)
(Definition~\ref{definition:cmap_lookup}), and then converts \(w\) into the
value \(\toval \Gamma w\).
The operation \(\cmaplookup \Gamma a m\) already yields \(\None\) in case
effective types are violated or \(a\) is an end-of-array address.
The additional condition of \(\memlookup \Gamma a m\) ensures that the
permissions allow for a read access.
Performing a lookup affects the effective types of the object at address \(a\).
This is factored out by the operation \(\memforce \Gamma a m\) which applies
the identity function to the subobject at address \(a\) in the memory \(m\).
Importantly, this does not change the memory contents, but merely
changes the variants of the involved unions.

The store operation \(\meminsert \Gamma a v m\) uses the alter operation
\(\cmapalter \Gamma {\lambda w \wsdot
\ofval \Gamma {(\fst {\ctreeflatten w})} v} a m\) on memories
(Definition~\ref{definition:cmap_alter}) to apply
\(\lambda w \wsdot \ofval \Gamma {(\fst {\ctreeflatten w})} v\) to
the subobject at address \(a\).
The stored value \(v\) is converted into a memory tree while retaining the
permissions \(\fst {\ctreeflatten w}\) of the previously stored memory tree
\(w\) at address \(a\).

The definition of \(\memlock \Gamma a m\) is straightforward.
In the \Coq{} development we use a map operation on memory trees to apply the
function \(\permlockSym\) (Definition~\ref{definition:permissions}) to the
permission of each bit of the memory tree at address \(a\).

The operation \(\memunlock \Omega m\) unlocks a whole lockset \(\Omega\),
rather than an individual address, in memory \(m\).
For each memory tree \(w\) at object identifier \(o\), it converts \(\Omega\)
to a Boolean vector \(\vec y = (\pair o 0 \in \Omega) \dotsc 
(\pair o {\length{\bitsizeof \Gamma {(\typeof w)}} - 1} \in \Omega)\) and merges
\(w\) with \(\vec y\) (using Definition~\ref{definition:ctree_merge}) to apply
\(\permunlockSym\) (Definition~\ref{definition:permissions}) to the permissions
of bits that should be unlocked in \(w\).
We show some lemmas to illustrate that the operations for locking and unlocking
enjoy the intended behavior:

\begin{lemma}
If \(\memvalid \Gamma \Delta m\) and
\(\addrtyped \Gamma \Delta a {\TType \tau}\) and
\(\memwritable \Gamma a m\), then we have:
\[
	\memlocks {(\memlock \Gamma a m)} = \memlocks m \cup \locksingleton \Gamma a.
\]
\end{lemma}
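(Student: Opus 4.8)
The plan is to unfold \(\memlock \Gamma a m = \cmapalter \Gamma g\, a\, m\), where \(g\) denotes the memory-tree operation that applies \(\permlockSym\) (Definition~\ref{definition:permissions}) to every permission of its argument, and to read \(\memlocks\) as the function sending a memory \(m'\) to the set of pairs \(\pair o i\) such that the \(i\)th bit of the object stored at \(o\) in \(m'\) carries a \(\Locked\) permission. Since \(\memwritable \Gamma a m\) holds, \(\cmaplookup \Gamma a m = \Some w\) for some \(w\), so \(\mlookup {(\addrindex a)} m = \Some {\pair {w'} \mu}\) and the alter operation is not vacuous; moreover \(\addrtyped \Gamma \Delta a {\TType\tau}\) forces the cast type of \(a\) to be \(\TType\tau\), so — apart from the \lstinline|unsigned char| edge case discussed below — \(a\) is not a byte address and \(\cmapalter\) reduces to replacing the object at \(\addrindex a\) by \(\pair {\ctreealter \Gamma g {(\addrref \Gamma a)} {w'}} \mu\). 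As the other object identifiers of \(m\) are untouched, this reduces the goal to a statement about memory trees: the set of bit positions carrying \(\Locked\) in \(\ctreealter \Gamma g {(\addrref \Gamma a)} {w'}\) equals that of \(w'\) together with the contiguous window \([\addrobjectoffset \Gamma a,\ \addrobjectoffset \Gamma a + \bitsizeof \Gamma {(\typeof a)})\); tagged with \(\addrindex a\) that window is exactly \(\locksingleton \Gamma a\).

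For the tree-level statement I would first prove an auxiliary lemma on how \(\ctreealterSym\) interacts with \(\ctreeflattenSym\): for a well-typed \(w'\) and a reference \(\vec r\) with \(\ctreelookup \Gamma {\vec r} {w'} \ne \None\) reaching a subobject of type \(\sigma\), the sequence \(\ctreeflatten {(\ctreealter \Gamma f {\vec r} {w'})}\) agrees with \(\ctreeflatten {w'}\) outside the window \([k,\ k + \bitsizeof \Gamma \sigma)\) with \(k = \Sigma_i\, \refobjectoffset \Gamma {(\vec r)_i}\), and inside it equals \(\ctreeflatten {(f\,(\ctreelookup \Gamma {\vec r} {w'}))}\) up to indeterminization of bit \emph{values} by \(\pbitindetifySym\) in union padding — crucially the \emph{permission} component of each bit survives every \(\ctreeflatten\circ\ctreeunflatten\) round-trip that \(\ctreealter\) performs. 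This is proved by induction on \(\vec r\) following the clauses of the definition of \(\ctreealterSym\); the array and struct clauses are the \(\offsetofSym\)/\(\fieldsizesSym\) arithmetic of Definition~\ref{definition:env_spec}, and the union clauses use \(\refobjectoffset \Gamma {(\RUnion i t q)} = 0\) together with the fact that the active-variant subtree occupies the initial segment of the union's byte window. Composing with the behaviour of \(g\): by \(\memwritable \Gamma a m\) every bit of \(\ctreelookup \Gamma {(\addrref \Gamma a)} {w'}\) has permission kind at least \(\Some\Writable\), so \(g\) rewrites exactly those permissions to \(\Locked\) and leaves all other permissions — hence all locked bits elsewhere in \(w'\) — in place; and by Definition~\ref{definition:addr_object_offset}, because a whole-object address has \(\addrrefbyte \Gamma a = 0\) (the typing rule forces \(\sizeof \Gamma {\ptype\sigma} \divides i\) with \(\ptype\sigma = \TType\tau\)), the offset \(k = \Sigma_i\, \refobjectoffset \Gamma {(\addrref \Gamma a)_i}\) equals \(\addrobjectoffset \Gamma a\).

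A few side conditions must be discharged along the way. In the variant-reinterpretation clauses of \(\ctreealterSym\) the resulting \(\MUnion\) node must stay well-typed, which holds because \(\Locked\) permissions are not \(\sepunmapped\), so the side condition \(\neg\sepunmapped {(\ctreeflatten w\,\vec{\ppbit b})}\) of the \(\MUnion\) typing rule is preserved, and because no bit-size bookkeeping changes. A \(\MUnionAll\) node that \(\ctreealter\) turns into a \(\MUnion\) node has the same flattened permission sequence before and after, so it contributes the same set of locked bits. These are routine consequences of the typing-preservation lemmas already in place for \(\cmapalterSym\).

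The main obstacle is the bit-offset accounting in the auxiliary \(\ctreeflatten\)/\(\ctreealter\) lemma: making the window \([\Sigma_i\, \refobjectoffset \Gamma {(\vec r)_i},\ +\bitsizeof \Gamma \sigma)\) line up exactly with where \(\ctreealter\) splices, through nested arrays and structs and — most delicately — across union boundaries, where the subtree sits at relative offset \(0\) of the union's window although the union itself may be buried at a nonzero offset, and across the normalization from the stored reference \(\vec r\) (whose trailing array segment points at element \(0\)) to \(\addrref \Gamma a\) (which re-indexes that segment by \(i \div \sizeof \Gamma \sigma\)). This is precisely the bookkeeping that Definition~\ref{definition:addr_object_offset} and Lemma~\ref{lemma:align_of_addr_object_offset} are designed to support, so the effort is to thread those through a structural induction rather than to prove something new. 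The \lstinline|unsigned char| case — when \(\tau\) is itself \lstinline|unsigned char| and \(a\) may be a byte address — is dispatched by noting the byte-address branch of \(\cmapalterSym\) acts on a one-byte window which, since \(\bitsizeof \Gamma {(\TBase {\TInt {\IntType \Unsigned \charrank}})} = \charbits\), again coincides with \(\locksingleton \Gamma a\).
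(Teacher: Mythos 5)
The paper states this lemma without a textual proof (it is one of the illustrative facts whose proof lives only in the \Coq{} development), and your argument is the natural one: unfold \(\memlockSym\) as a \(\cmapalterSym\), reduce to a flatten-level statement that the altered tree's permission sequence agrees with the original outside the window \([\addrobjectoffset \Gamma a, \addrobjectoffset \Gamma a + \bitsizeof \Gamma {(\typeof a)})\) and is \(\permlockSym\)-rewritten inside it, using that \(\memwritable\) forces every bit in that window to kind \(\Some\Writable\) and that \(\pbitindetifySym\) and the union-reinterpretation round-trips preserve permissions positionally. Your handling of the offset bookkeeping, the non-byte-address observation via \(\sizeof \Gamma {\ptype\sigma} \divides i\), and the \lstinline|unsigned char| byte-address branch are all correct, so I see no gap.
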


\begin{lemma}
If \(\Omega \subseteq \memlocks m\), then
\(\memlocks {(\memunlock \Omega m)} = \memlocks m \setminus \Omega\).
\end{lemma}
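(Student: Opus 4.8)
The plan is to prove the set equality by showing, for an arbitrary pair $\pair o i$, that it lies in $\memlocks {(\memunlock \Omega m)}$ if and only if it lies in $\memlocks m \setminus \Omega$. First I would unfold both definitions: $\pair o i \in \memlocks m$ means $\mlookup o m = \Some {\pair w \mu}$ for some $w, \mu$ whose $i$-th bit $(\ctreeflatten w)_i$ carries a locked permission, i.e. $\permkind {(\ctreeflatten w)_i} = \Some\Locked$. Since $\memunlock \Omega m$ leaves object identifiers, deallocated objects and the $\mu$-flags untouched and only replaces each tree $w$ at $o$ by $\ctreemerge f w {\vec y}$ — where $\vec y$ is the Boolean vector $(\pair o 0 \in \Omega) \dotsc (\pair o {\bitsizeof \Gamma {(\typeof w)} - 1} \in \Omega)$ and $f\,\pair \gamma b\,\true = \pair {\permunlock \gamma} b$, $f\,\pair \gamma b\,\false = \pair \gamma b$ — the claim reduces, for each $o$ with $\mlookup o m = \Some {\pair w \mu}$, to comparing the locked bits of $w$ and of $\ctreemerge f w {\vec y}$. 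Pairs $\pair o i$ for which $o \notin \mdom m$ or $o$ names a deallocated object contribute nothing to either side; by $\Omega \subseteq \memlocks m$ such pairs are also absent from $\Omega$, so they are dismissed immediately.

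The key ingredient is that flattening commutes with merging: for every $f : \pbit \to \bool \to \pbit$, memory tree $w$, and list $\vec y$ one has $\ctreeflatten {(\ctreemerge f w {\vec y})} = f\,(\ctreeflatten w)\,{\vec y}$, with $f$ on the right lifted point-wise to lists. This is proved by induction on $w$ using the recursor of Section~\ref{section:ctrees}; the only laborious cases are $\MStruct t {\vv{w\vec{\ppbit b}}}$ and the two union shapes, where one checks that the sublist bounds $s_i, z_i$ of Definition~\ref{definition:ctree_merge} coincide with the segmentation of $\ctreeflattenSym$ in Definition~\ref{definition:ctree_flatten}. Since $\length {\vec y} = \bitsizeof \Gamma {(\typeof w)} = \length {\ctreeflatten w}$ (using that the memory trees in a valid memory are well-typed), no truncation occurs, so the $i$-th bit of $\ctreeflatten {(\ctreemerge f w {\vec y})}$ is exactly $f\,(\ctreeflatten w)_i\,(\pair o i \in \Omega)$.

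It then remains to case-split on membership of $\pair o i$ in $\Omega$. If $\pair o i \notin \Omega$, then $f\,(\ctreeflatten w)_i\,\false = (\ctreeflatten w)_i$, so the bit and its kind are unchanged, giving $\pair o i \in \memlocks {(\memunlock \Omega m)} \iff \pair o i \in \memlocks m \iff \pair o i \in \memlocks m \setminus \Omega$. If $\pair o i \in \Omega$, then the new bit is $\pair {\permunlock \gamma} b$, and I would invoke the one permission-level fact that $\permunlockSym$ never yields a locked permission, i.e. $\permkind {(\permunlock \gamma)} \ne \Some\Locked$ for all $\gamma$ — immediate from the definition of $\permunlockSym$ on the telescope $\lockable {\counter \Q} + \Q$, since it maps $\LLockedSym$ to $\LUnlockedSym$ and the $\Q$ summand has no locked variant. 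Hence $\pair o i \notin \memlocks {(\memunlock \Omega m)}$, matching $\pair o i \notin \memlocks m \setminus \Omega$. I expect the main obstacle to be the flatten/merge commutation lemma: conceptually trivial, but the index bookkeeping in the struct and union cases is fiddly; fortunately this lemma is already available in the \Coq{} development and can simply be reused.
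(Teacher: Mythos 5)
Your proof is correct and follows the natural route: the paper states this lemma without giving a proof (it is discharged in the \Coq{} development), and the argument there is exactly the one you sketch --- a pointwise membership check obtained by commuting \(\ctreeflattenSym\) with \(\ctreemergeSym\), followed by a case split on \(\pair o i \in \Omega\), using that \(\permunlockSym\) never yields a \(\Locked\) kind and that \(\Omega \subseteq \memlocks m\) dismisses pairs outside the live, in-range part of \(m\). The only point to watch is that your length bookkeeping \(\length {\vec y} = \length {\ctreeflatten w}\) quietly relies on well-typedness of the stored trees, a hypothesis not visible in the lemma as printed; this is harmless, since the formal development carries the validity of \(m\) implicitly.
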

	
Provided \(o \notin \mdom m\), allocation \(\memalloc \Gamma o v \mu m\)
extends the memory with a new object holding the value \(v\) and \emph{full}
permissions \(\LUnlocked {\pair 0 1}\).
Typically we use \(v = \valnew \Gamma \tau\) for some \(\tau\), but global
and static variables are allocated with a specific value \(v\).

The operation \(\memfree o m\) deallocates the object \(o\) in \(m\), and keeps
track of the type of the deallocated object.
In order to deallocate dynamically obtained memory via \lstinline|free|, the
side-condition \(\memfreeable a m\) describes that the permissions are
sufficient for deallocation, and that \(a\) points to the first element of an
\lstinline|malloc|ed array.

All operations preserve typing and satisfy the expected laws about their
interaction.
We list some for illustration.

\begin{fact}
\label{fact:mem_writable_lookup}
If \(\memwritable \Gamma a m\), then there exists a value \(v\) with
\(\memlookup \Gamma m a = \Some v\).
\end{fact}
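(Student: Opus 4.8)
The plan is to unfold the two definitions involved and observe that the side-condition demanded by $\memlookup$ is strictly weaker than the one guaranteed by $\memwritable$. First I would use the hypothesis $\memwritable \Gamma a m$ to extract, from Definition~\ref{definition:memory_operations}, a memory tree $w$ with $\cmaplookup \Gamma a m = \Some w$ and $\Some\Writable \subseteq \permkind (\ctreeflatten w)_i$ for every index $i$. This $w$ is forced (the lookup is a partial function), and the witness value will be $v \defined \toval \Gamma w$.

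Next I would appeal to the lattice of permission kinds from Definition~\ref{definition:permissions}, in which $\Some\Readable \subseteq \Some\Writable$. Combining this with the per-bit bound just obtained and transitivity of $\subseteq$ in $\pkind$ yields $\Some\Readable \subseteq \permkind (\ctreeflatten w)_i$ for all $i$, which is exactly the readability side-condition occurring in the defining clause of $\memlookup \Gamma a m$.

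Finally, since $\cmaplookup \Gamma a m = \Some w$ holds and this readability condition is met, the first clause of $\memlookup$ applies and gives $\memlookup \Gamma a m = \Some {\toval \Gamma w}$; taking $v = \toval \Gamma w$ completes the argument.

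I do not expect a genuine obstacle here: the only points requiring a little care are matching the precise shape of the quantified side-conditions in the two definitions (same $w$, same index range over $\ctreeflatten w$) and citing the already-established ordering on $\pkind$ rather than re-deriving it. In the \Coq{} development this is essentially an unfolding step plus the elementary fact $\Some\Readable \subseteq \Some\Writable$, so the proof is short.
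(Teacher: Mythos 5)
Your proposal is correct and is precisely the argument the paper intends for this Fact (which it states without proof): unfold $\memwritable$ to obtain $w$ with $\cmaplookup \Gamma a m = \Some w$ and per-bit kind at least $\Some\Writable$, use $\Some\Readable \subseteq \Some\Writable$ from the permission-kind lattice, and conclude $\memlookup \Gamma a m = \Some{\toval \Gamma w}$. The only nitpick is that the lattice of permission kinds is given in the unnumbered definition at the start of Section~\ref{section:cpermissions} rather than in Definition~\ref{definition:permissions}, but this does not affect the argument.
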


\begin{lemma}[Stores commute]
If \(\memvalid \Gamma \Delta m\) and
\(\addrdisjoint \Gamma {a_1} {a_2}\) with:
\begin{itemize}
\item \(\addrtyped \Gamma \Delta {a_1} {\tau_1}\),
	\(\memwritable \Gamma {a_1} m\), and
	\(\ctreetyped \Gamma \Delta {v_1} {\tau_1}\), and
\item \(\addrtyped \Gamma \Delta {a_2} {\tau_2}\),
	\(\memwritable \Gamma {a_2} m\), and
	\(\ctreetyped \Gamma \Delta {v_2} {\tau_2}\),
\end{itemize}
then we have:
\[
	\meminsert \Gamma {a_1} {v_1} {\meminsert \Gamma {a_2} {v_2} m} =
	\meminsert \Gamma {a_2} {v_2} {\meminsert \Gamma {a_1} {v_1} m}.
\]
\end{lemma}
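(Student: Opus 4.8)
The plan is to reduce the claim to the already-established commutation lemma for the alter operation, Lemma~\ref{lemma:cmap_alter_commute}. By Definition~\ref{definition:memory_operations} we have $\meminsert \Gamma {a_i} {v_i} m = \cmapalter \Gamma {f_i} {a_i} m$ with $f_i \defined \lambda w \wsdot \ofval \Gamma {(\fst {\ctreeflatten w})} {v_i}$, so the goal is exactly $\cmapalter \Gamma {f_1} {a_1} {(\cmapalter \Gamma {f_2} {a_2} m)} = \cmapalter \Gamma {f_2} {a_2} {(\cmapalter \Gamma {f_1} {a_1} m)}$, which is the conclusion of Lemma~\ref{lemma:cmap_alter_commute}. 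First I would unfold both stores into this form. Note that Lemma~\ref{lemma:cmap_alter_commute} is stated uniformly in the address shapes, so its proof already absorbs the subtle case where one or both of $a_1,a_2$ are byte addresses (where $\cmapalterSym$ internally reinterprets, masks, and writes back a single byte of a common enclosing subtree); nothing extra is needed here for that case.

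Next I would discharge the hypotheses of Lemma~\ref{lemma:cmap_alter_commute}. The premises $\memvalid \Gamma \Delta m$ and $\addrdisjoint \Gamma {a_1} {a_2}$ are assumed directly. For each $i$, $\memwritable \Gamma {a_i} m$ unfolds to give a memory tree $w_i$ with $\cmaplookup \Gamma {a_i} m = \Some {w_i}$ whose leaves all carry permissions of kind at least $\Some\Writable$; this supplies the lookup premise. The only substantial premise left is the type-preservation side condition $\ctreetyped \Gamma \Delta {f_i\,w_i} {\tau_i}$, i.e.\ $\ctreetyped \Gamma \Delta {\ofval \Gamma {(\fst {\ctreeflatten {w_i}})} {v_i}} {\tau_i}$. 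I would obtain this by chaining three facts: (i) lookup preserves typing (as asserted for the lookup operation in Section~\ref{section:ctrees}), so from $\memvalid \Gamma \Delta m$, $\addrtyped \Gamma \Delta {a_i} {\tau_i}$ and $\cmaplookup \Gamma {a_i} m = \Some {w_i}$ we get $\ctreetyped \Gamma \Delta {w_i} {\tau_i}$; (ii) a well-typed memory tree of type $\tau_i$ flattens to a list of length $\bitsizeof \Gamma {\tau_i}$ of valid permission-annotated bits, so $\fst {\ctreeflatten {w_i}}$ is a length-$\bitsizeof \Gamma {\tau_i}$ list of valid permissions none of which is unmapped (each has kind at least $\Some\Writable$); and (iii) a type-preservation property of $\ofvalSym$: given $\ctreetyped \Gamma \Delta {v_i} {\tau_i}$ and a valid permission list $\vec\gamma$ of length $\bitsizeof \Gamma {\tau_i}$ with no unmapped entry, $\ctreetyped \Gamma \Delta {\ofval \Gamma {\vec\gamma} {v_i}} {\tau_i}$.

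The main obstacle is step (iii). The $\ofvalSym$-operation splits $\vec\gamma$ according to field offsets and union layouts, recombines it with the bit representation produced by $\basevalflattenSym$ and $\valflattenSym$, and turns padding permissions into indeterminate bits, so re-establishing $\ctreetyped$ requires re-verifying all the side conditions of Definition~\ref{definition:ctree_typed}: equal lengths of the field paddings, padding bits being all $\BIndet$, bit validity (a bit under an unmapped permission must be $\BIndet$), and the canonicity condition in the union rule that the flattened union content is not unmapped. This is a structural induction over $v_i$, and it is exactly where the writability hypothesis on $a_i$ is genuinely used — not merely to guarantee that the lookup succeeds, but to ensure the recycled permissions are strong enough (in particular not unmapped) for the reconstructed memory tree to be well-typed. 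Everything else — the actual rearrangement of the two updates, including the byte-address bookkeeping — is already packaged inside Lemma~\ref{lemma:cmap_alter_commute}.
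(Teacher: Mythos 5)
Your proposal is correct and follows essentially the same route as the paper, which justifies this lemma by reducing it to Lemma~\ref{lemma:cmap_alter_commute} (together with the typing-preservation facts for lookup and \(\ofvalSym\)). Your elaboration of how the side condition \(\ctreetyped \Gamma \Delta {\ofval \Gamma {(\fst {\ctreeflatten {w_i}})} {v_i}} {\tau_i}\) is discharged is exactly the bookkeeping the paper leaves implicit.
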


\begin{lemma}[Looking up after storing]
\label{lemma:mem_lookup_insert}
If \(\memvalid \Gamma \Delta m\) and
\(\addrtyped \Gamma \Delta a \tau\) and
\(\valtyped \Gamma \Delta v \tau\) and
\(\memwritable \Gamma a m\)
and \(a\) is not a byte address, then we have: 
\[
	\memlookup \Gamma a {(\meminsert \Gamma a v m)} = \Some {\freeze v}.
\]
\end{lemma}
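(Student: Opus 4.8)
The plan is to unfold both sides through the definitions of $\meminsertSym$ and $\memlookupSym$ and then reduce the goal to Lemma~\ref{lemma:cmap_lookup_alter} and Lemma~\ref{lemma:to_of_val}. Since $\memwritable \Gamma a m$ holds, $\cmaplookup \Gamma a m = \Some w$ for some memory tree $w$ (in particular $a$ is neither end-of-array nor a violation of effective types). Because $m$ is valid and $\addrtyped \Gamma \Delta a \tau$, lookup preserves typing, so $\ctreetyped \Gamma \Delta w \tau$ and hence $\length{\ctreeflatten w} = \bitsizeof \Gamma \tau$. Write $\vec\gamma \defined \fst{\ctreeflatten w}$, so that $\length{\vec\gamma} = \bitsizeof \Gamma \tau$. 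By definition, $\meminsert \Gamma a v m = \cmapalter \Gamma {(\lambda w' \wsdot \ofval \Gamma {(\fst{\ctreeflatten{w'}})} v)} a m$.

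First I would compute the inner lookup. Since $a$ is not a byte address and $\cmaplookup \Gamma a m = \Some w$, Lemma~\ref{lemma:cmap_lookup_alter} gives $\cmaplookup \Gamma a {(\meminsert \Gamma a v m)} = \Some{\ofval \Gamma {\vec\gamma} v}$. Next I would discharge the permission side-condition in the definition of $\memlookupSym$, namely that every flattened bit of $\ofval \Gamma {\vec\gamma} v$ has permission kind at least $\Some\Readable$. The structural observation here is a small helper lemma: $\ofval \Gamma {\vec\gamma} {\_}$ threads the permission list $\vec\gamma$ unchanged onto the leaves (padding bits become $\BIndet$ but keep their permission, and the union cases merely repackage $\vv{\gamma b}$), so for well-typed $v$ with $\length{\vec\gamma} = \bitsizeof \Gamma {\typeof v}$ we get $\fst{\ctreeflatten{\ofval \Gamma {\vec\gamma} v}} = \vec\gamma = \fst{\ctreeflatten w}$, proved by induction on $v$. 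From $\memwritable \Gamma a m$ we know $\Some\Writable \subseteq \permkind(\ctreeflatten w)_i$ for all $i$, and since $\Some\Readable \subseteq \Some\Writable$ in $\pkind$, the required condition $\Some\Readable \subseteq \permkind(\ctreeflatten{\ofval \Gamma {\vec\gamma} v})_i$ follows for all $i$. Therefore $\memlookup \Gamma a {(\meminsert \Gamma a v m)} = \toval \Gamma {(\ofval \Gamma {\vec\gamma} v)}$.

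Finally, applying Lemma~\ref{lemma:to_of_val} with $\valtyped \Gamma \Delta v \tau$ and $\length{\vec\gamma} = \bitsizeof \Gamma \tau$ yields $\toval \Gamma {(\ofval \Gamma {\vec\gamma} v)} = \freeze v$, which is exactly the claim.

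The main obstacle is precisely what the hypothesis "$a$ is not a byte address" rules out: for a byte address, $\cmapalterSym$ reinterprets the whole surrounding object via $\overline f$, and a store may mask modifications to padding, leaving $\BIndet$ bits where one expected the stored data, so the clean equation of Lemma~\ref{lemma:cmap_lookup_alter} no longer holds (only the weaker variant promised in Section~\ref{section:refinements}). Thus the delicate part of the argument is concentrated in justifying the applicability of Lemma~\ref{lemma:cmap_lookup_alter} and in the bookkeeping that $\ofval \Gamma {\vec\gamma} v$ carries exactly the permissions of $w$; once these are in place, the equation $\toval \Gamma {(\ofval \Gamma {\vec\gamma} v)} = \freeze v$ is supplied wholesale by Lemma~\ref{lemma:to_of_val}, and the only reason $\freeze v$ rather than $v$ appears on the right is the deliberate freezing of pointers performed by $\basevalflattenSym \Gamma$.
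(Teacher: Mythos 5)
Your proposal is correct and follows exactly the route the paper indicates: it notes that this lemma "follows from Lemma~\ref{lemma:cmap_alter_commute}, \ref{lemma:cmap_lookup_alter} and~\ref{lemma:to_of_val}", and your argument is precisely the unfolding of \(\meminsertSym\) and \(\memlookupSym\), the application of Lemma~\ref{lemma:cmap_lookup_alter} (justified by the non-byte-address hypothesis), and the conclusion via Lemma~\ref{lemma:to_of_val}. The extra bookkeeping you supply about \(\ofvalSym \Gamma\) preserving the permission sequence, needed to discharge the readability side-condition of \(\memlookupSym\), is exactly the implicit step the paper elides.
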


Storing a value \(v\) in memory and then retrieving it, does not necessarily
yield the same value \(v\).
It intentionally yields the value \(\freeze v\) whose pointers have been
frozen.
Note that the above result does not hold for byte addresses, which may store
a value in a padding byte, in which case the resulting value is indeterminate.

\begin{lemma}[Stores and look ups commute]
If \(\memvalid \Gamma \Delta m\) and \(\addrdisjoint \Gamma {a_1} {a_2}\) and
\(\addrtyped \Gamma \Delta {a_2} {\tau_2}\) and
\(\memwritable \Gamma {a_2} m\) and
\(\valtyped \Gamma \Delta {v_2} {\tau_2}\), then we have:
\[
\memlookup \Gamma {a_1} m = \Some {v_1}
\quad\textnormal{implies}\quad
\memlookup \Gamma {a_1} {(\meminsert \Gamma {a_2} {v_2} m)} = \Some {v_1}.
\]
\end{lemma}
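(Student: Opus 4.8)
The plan is to reduce the statement to a purely local commutation property between $\cmaplookupSym \Gamma$ and $\cmapalterSym \Gamma$ at disjoint addresses, and then to discharge that property by induction over the tree structure. First I would unfold the external operations. Writing $g \defined \lambda w \wsdot \ofval \Gamma {(\fst {\ctreeflatten w})} {v_2}$, we have $\meminsert \Gamma {a_2} {v_2} m = \cmapalter \Gamma g {a_2} m$ (with $g$ replaced by its byte-address wrapper when $a_2$ is a byte address), and $\memlookup \Gamma {a_1} m = \Some {v_1}$ unfolds to: there is a memory tree $w_1$ with $\cmaplookup \Gamma {a_1} m = \Some {w_1}$, $v_1 = \toval \Gamma {w_1}$, and every leaf of $w_1$ carries a permission of kind at least $\Some\Readable$. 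From $\memwritable \Gamma {a_2} m$, $\addrtyped \Gamma \Delta {a_2} {\tau_2}$ and $\valtyped \Gamma \Delta {v_2} {\tau_2}$ it follows, by type preservation of $\cmaplookupSym \Gamma$ and the typing lemma for $\ofvalSym \Gamma$, that $g$ maps the subtree at $a_2$ to a well-typed tree of type $\tau_2$, so $\cmapalter \Gamma g {a_2} m$ is still a valid memory. Hence it suffices to show $\cmaplookup \Gamma {a_1} {(\cmapalter \Gamma g {a_2} m)} = \Some {w_1}$ with the \emph{same} witness $w_1$: the permission side-condition and the application of $\toval \Gamma {\cdot}$ then carry over verbatim.

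Next I would isolate the core lemma: under $\memvalid \Gamma \Delta m$, $\addrdisjoint \Gamma {a_1} {a_2}$, well-typedness of both addresses and type preservation of $f$ at $a_2$, we have that $\cmaplookup \Gamma {a_1} m = \Some {w_1}$ implies $\cmaplookup \Gamma {a_1} {(\cmapalter \Gamma f {a_2} m)} = \Some {w_1}$. If $\addrindex {a_1} \ne \addrindex {a_2}$ this is immediate, since $\cmapalter \Gamma f {a_2} m$ leaves the object $\addrindex {a_1}$ untouched. Otherwise both addresses operate on the same memory tree $w$, and (modulo the byte-address bookkeeping below) the goal becomes $\ctreelookup \Gamma {\addrref \Gamma {a_1}} {(\ctreealter \Gamma f {\addrref \Gamma {a_2}} w)} = \ctreelookup \Gamma {\addrref \Gamma {a_1}} w$, where by the definition of address disjointness the normalized references share a common prefix (equal up to freezing, hence passing through identical union variants) and then diverge at an array index or a struct field. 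I would prove the corresponding statement for $\ctreelookupSym \Gamma$ and $\ctreealterSym \Gamma$ by induction on $w$, i.e.\ on the common prefix. For an array or struct segment in the prefix, both operations descend into the same child and the induction hypothesis applies to the shorter references; the altered tree's that-child is exactly $\ctreealter \Gamma f {\cdot} {w_i}$, so the follow-up lookup lands on it. At the divergence point the altered child sits at a different array/struct position than the looked-up one, so the subtree returned by $\ctreelookupSym$ is literally unchanged. For a union segment $\RUnion k t q$ in the prefix against a tree $\MUnion t i w {\vec{\ppbit b}}$ or $\MUnionAll t {\vec{\ppbit b}}$: if $i = k$ both operations recurse into the same subtree and the induction hypothesis applies; if $i \ne k$ both first reinterpret the \emph{same} bit sequence $\ctreeflatten w\,\vec{\ppbit b}$ (resp.\ $\vec{\ppbit b}$) as a fresh subtree $w''$ of variant $k$ — the governing $\sepunsharedSym$ side-condition is evaluated before any modification, hence identical in the two computations — after which $\ctreealterSym$ yields a tree that now \emph{advertises} variant $k$, so the follow-up $\ctreelookupSym$ on it no longer reinterprets but descends directly into $\ctreealter \Gamma f {\cdot} {w''}$, which by the induction hypothesis looks up to the same thing as $w''$ does, namely the value the original lookup produced.

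For the lift back to memories I would finish the byte-address cases. If neither $a_1$ nor $a_2$ is a byte address, or their normalized references are disjoint, the core lemma applies directly (address disjointness for distinct references yields disjointness of the normalized references). The remaining case is that $a_1$ and $a_2$ are byte addresses of the same subobject, with $\freeze {\addrref \Gamma {a_1}} = \freeze {\addrref \Gamma {a_2}}$ but distinct byte offsets; here $\cmaplookup \Gamma {a_1} {\cdot}$ flattens the subtree and slices out byte $\addrrefbyte \Gamma {a_1}$, while $\cmapalter \Gamma {\cdot} {a_2} {\cdot}$ modifies — after flattening — only the $\charbits$ bits belonging to byte $\addrrefbyte \Gamma {a_2}$ (the wrapper turning $f$ into $\overline f$ additionally re-masks padding, which only strengthens the claim). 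Since the two byte offsets differ, the byte read at $a_1$ is unaffected, so the sliced-and-unflattened result is the same $w_1$. Combining these cases with the reduction of the first paragraph yields the statement.

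The main obstacle I expect is the union-reinterpretation sub-case of the core induction: there the two computations being compared — $\ctreelookupSym$ applied to $w$ versus $\ctreelookupSym$ applied to $\ctreealterSym$-of-$w$ — reinterpret \emph{different} bit sequences, since the alter has already rewritten part of the flattened contents, and the argument closes only because (i) the alter touched a position disjoint from the one the subsequent lookup visits, so one appeals to the induction hypothesis on the reinterpreted subtree with strictly shorter references rather than to any statement about $w$ itself, and (ii) once a union has been altered at variant $k$ it records variant $k$, so the second lookup follows the matching-variant branch instead of the reinterpreting one. Keeping this interplay correctly aligned — together with the $\sepunsharedSym$ side-conditions, the $\pbitindetifySym$ of the trailing bytes, and the $\charbits$-granular offset arithmetic hidden inside $\overline f$ — is where the real work lies; everything else is routine structural induction and type preservation already available from the lemmas in this section.
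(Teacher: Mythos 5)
Your proposal is correct and follows the same decomposition the paper uses: unfold \(\memlookupSym\Gamma\) and \(\meminsertSym\Gamma\) into \(\cmaplookupSym\Gamma\) and \(\cmapalterSym\Gamma\), and reduce everything to the fact that a lookup at \(a_1\) is unaffected by an alter at the disjoint \(a_2\) (the paper simply cites its internal lookup/alter lemmas at this point). The additional tree-level induction you carry out — including the union-reinterpretation and byte-address cases — is exactly the content of the underlying lemma the paper leaves to its \Coq{} development, and your treatment of it is sound.
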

 
\noindent
These results follow from Lemma~\ref{lemma:cmap_alter_commute},
\ref{lemma:cmap_lookup_alter} and~\ref{lemma:to_of_val}.

\section{Formal proofs}
\label{section:formal_proofs}

\subsection{Type-based alias analysis}
\label{section:aliasing}

The purpose of \Celeven's notion of effective types~\cite[6.5p6-7]{iso:12} is
to make it possible for compilers to perform typed-based alias analysis.
Consider:

\begin{lstlisting}
short g(int *p, short *q) {
  short x = *q; *p = 10; return x;
}
\end{lstlisting}

Here, a compiler should be able to assume that \lstinline|p| and \lstinline|q|
are not aliased because they point to objects with different types
(although the integer types \(\IntType \Signed \shortrank\) and
\(\IntType \Signed \intrank\) may have the same representation, they have
different integer ranks, see Definition~\ref{definition:int_coding_spec}, and
are thus different types).
If \lstinline|g| is called with aliased pointers, execution of the function
body should have undefined behavior in order to allow a compiler to soundly
assume that \lstinline|p| and \lstinline|q| are not aliased.

From the \Celeven{} standard's description of effective types it is not
immediate that calling \lstinline|g| with aliased pointers results
in undefined behavior.
We prove an abstract property of our memory model that shows that this is
indeed a consequence, and that indicates a compiler can perform type-based alias
analysis.
This also shows that our interpretation of effective types of the \Celeven{}
standard, in line with the interpretation from the \GCC{}
documentation~\cite{gnu:11}, is sensible.

\begin{definition}
\label{definition:subtype}
A type \(\tau\) is a \emph{subobject type of} \(\sigma\), notation
\(\subtype \Gamma \tau \sigma\), if there exists some reference \(\vec r\) with
\(\reftyped \Gamma {\vec r} \sigma \tau\).
\end{definition}

For example, \lstinline|int[2]| is a subobject type of
\lstinline|struct S { int x[2]; int y[3]; }| and \lstinline|int[2][2]|, but not of
\lstinline|struct S { short x[2]; }|, nor of \lstinline|int(*)[2]|.

\begin{theorem}[Strict-aliasing]
\label{theorem:aliasing}
Given \(\memvalid \Gamma \Delta m\), frozen addresses \(a_1\)
and \(a_2\) with \(\addrtyped \Delta m {a_1} {\sigma_1}\) and
\(\addrtyped \Delta m {a_2} {\sigma_2}\) and
\(\sigma_1, \sigma_2 \neq \TBase {\TInt {\IntType \Unsigned \charrank}}\), then
either:
\begin{enumerate}
\item We have \(\subtype \Gamma {\sigma_1} {\sigma_2}\) or
	\(\subtype \Gamma {\sigma_2} {\sigma_1}\).
\item We have \(\addrdisjoint \Gamma {a_1} {a_2}\).
\item Accessing \(a_1\) after accessing \(a_2\) and \viceversa{} fails.
	That means:
	\begin{enumerate}
	\item \(\memlookup \Gamma {a_1} {(\memforce \Gamma {a_2} m)} = \None\) and
	\(\memlookup \Gamma {a_2} {(\memforce \Gamma {a_1} m)} = \None\), and
	\item
	\(\memlookup \Gamma {a_1} {\meminsert \Gamma {a_2} {v_1} m} = \None\) and
	\(\memlookup \Gamma {a_2} {\meminsert \Gamma {a_1} {v_2} m} = \None\)
	for all stored values \(v_1\) and \(v_2\).
	\end{enumerate}
\end{enumerate}
\end{theorem}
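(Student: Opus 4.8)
The plan is to argue from the geometry of the two paths inside a single object, organised around a trichotomy for references. First I would use the hypothesis $\sigma_1,\sigma_2 \ne \TBase{\TInt{\IntType\Unsigned\charrank}}$ — the character type exempted from strict aliasing by the standard — to rule out that $a_1$ or $a_2$ is a \emph{byte} address: by the pointer-castability side condition in Definition~\ref{definition:addr_typed} a byte address has points-to type $\TType{\TBase{\TInt{\IntType\Unsigned\charrank}}}$ or $\TAny$ (that is, \lstinline|unsigned char| or \lstinline|void*|), which our addresses cannot have. Hence, by Definition~\ref{definition:cmap_lookup}, $\cmaplookup\Gamma{a_i}{m'} = \ctreelookup\Gamma{\addrref\Gamma{a_i}}{w_i}$ for any memory $m'$ (with $w_i$ the subtree of object $\addrindex{a_i}$), so $\memlookup\Gamma{a_i}{m'} = \None$ as soon as this $\ctreelookup$ is. If $\addrindex{a_1}\ne\addrindex{a_2}$, the definition of address disjointness gives conclusion~(2) at once, so I may assume both addresses lie in the same object and need only compare the normalised references $\vec r_1\defined\addrref\Gamma{a_1}$ and $\vec r_2\defined\addrref\Gamma{a_2}$.

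The combinatorial core is a trichotomy for two references out of a common root type, proved by induction along their longest common prefix $\vec p$: either (i) one of $\vec r_1,\vec r_2$ is a prefix of the other; or (ii) $\vec r_1$ and $\vec r_2$ first disagree at an array or struct segment, with distinct indices there; or (iii) they first disagree at a union segment, which — because $a_1$ and $a_2$ are frozen — is $\RUnion i t \RFrozen$ in one and $\RUnion j t \RFrozen$ in the other with $i\ne j$ (the frozen hypothesis is exactly what forces the $\RFrozen$ annotation here, blocking the type-punning success clause of $\ctreelookupSym$). Case~(i) gives conclusion~(1): a decomposition property for reference typing ($\reftyped\Gamma{\vec p}\tau\sigma$ and $\reftyped\Gamma{\app{\vec p}{\vec s}}\tau{\sigma'}$ imply $\reftyped\Gamma{\vec s}\sigma{\sigma'}$) turns the prefix relation into a subobject relation; this also absorbs the degenerate subcases $\vec r_1=\vec r_2$ (so $\sigma_1=\sigma_2$) and one address being end-of-array while the other lies above that array. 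Case~(ii) gives conclusion~(2) directly from the definition of reference disjointness: the prefixes at the point of disagreement are literally equal, so the side condition demanding equal frozen prefixes is immediate, and the distinct indices supply the rest; this also covers an end-of-array address against a genuine element of the same array, since the normalised index of an end-of-array address equals the array length and so differs from every in-range index.

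The only substantive case is~(iii). I would isolate the sub-lemma that, for \emph{any} $f$ and any $m'$, the memory $\cmapalter\Gamma f{a_2}{m'}$ is one in which the union node reached by following $a_2$'s normalised reference down to the point of divergence carries exactly variant $j$. By the definition of $\ctreealterSym$, whenever $\ctreealter$ meets a union node along a segment $\RUnion j t q$ it returns a node of variant $j$ — it recurses into the existing subtree when the variant already matches, and otherwise reinterprets the stored bits and switches to variant $j$; crucially it ignores the annotation $q$, so this stays true even when $\cmaplookup\Gamma{a_2}{m'}$ is itself $\None$ because of a frozen mismatch somewhere along $\vec p$. Instantiating with $m'=m$, with $f$ the identity for $\memforceSym$ and with $f=\lambda w\wsdot\ofval\Gamma{(\fst{\ctreeflatten w})}{v_2}$ for $\meminsertSym$, I then run $\ctreelookup$ for $a_1$ on the resulting memory: along $\vec p$ the two frozen references carry identical segments, so the union variants there now match and descent proceeds past $\vec p$ without failure (the only way $\ctreelookup$ returns $\None$ is at a frozen union mismatch, and there is none in $\vec p$); at the divergence union the variant is $j$ while $a_1$'s segment is $\RUnion i t \RFrozen$ with $i\ne j$, so the $\RFrozen$-mismatch case in Definition~\ref{definition:ctree_lookup} returns $\None$. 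Therefore $\cmaplookup\Gamma{a_1}{\memforce\Gamma{a_2}m}=\None$ and $\cmaplookup\Gamma{a_1}{\meminsert\Gamma{a_2}{v_2}m}=\None$, and symmetrically with $1$ and $2$ exchanged; since $\memlookup\Gamma a{m'}$ is $\tovalSym\Gamma$ applied to $\cmaplookup\Gamma a{m'}$, conclusion~(3) follows.

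The main obstacle is case~(iii), and inside it the two pieces of bookkeeping just sketched: showing that $\ctreealterSym$ along $a_2$'s reference genuinely forces \emph{every} union variant on that path — in particular at the divergence point — robustly, independently of the original variants and of whether a lookup of $a_2$ would even succeed; and showing that $\ctreelookup$ along $a_1$ then really does reach the divergence union (the descent through the shared prefix $\vec p$ does not abort earlier) and chokes on it. Both demand careful unfolding of the mutually recursive tree operations and exact tracking of how $\vec r_1$ and $\vec r_2$ split around $\vec p$; the remaining ingredients — the reference trichotomy itself, the reference-typing decomposition, and the byte-offset and end-of-array corner cases — are routine.
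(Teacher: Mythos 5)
Your proposal is correct and, as far as one can compare it against a paper that defers this proof entirely to its \Coq{} development, it follows the intended argument: the \lstinline|unsigned char| hypothesis eliminates byte addresses through the pointer-castability side condition of Definition~\ref{definition:addr_typed}, the reference trichotomy routes the prefix case to conclusion (1) and array/struct divergence to conclusion (2), and frozenness turns the union-divergence case into an \(\RFrozen\)-mismatch failure of the lookup after the alter operation has forced the other path's union variant. The end-of-array, liveness, and deallocation corner cases you defer are indeed the only remaining bookkeeping.
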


This theorem implies that accesses to addresses of disjoint type are either
non-overlapping or have undefined behavior.
Fact~\ref{fact:mem_writable_lookup} accounts for a store after a lookup.
Using this theorem, a compiler can optimize the generated code in the
example based on the assumption that \lstinline|p| and \lstinline|q| are
not aliased.
Reconsider:

\begin{lstlisting}
short g(int *p, short *q) { short x = *q; *p = 10; return x; }
\end{lstlisting}

If \lstinline|p| and \lstinline|q| are aliased, then calling \lstinline|g|
yields undefined behavior because the assignment \lstinline|*p = 10| violates
effective types.
Let \(m\) be the initial memory while executing
\lstinline|g|, and let \(a_{\mathtt p}\) and
\(a_{\mathtt q}\) be the addresses corresponding to \lstinline|p| and
\lstinline|q|, then the condition \(\memwritable \Gamma {a_{\mathtt p}}
{(\memforce \Gamma {a_{\mathtt q}} m)}\) does not hold by
Theorem~\ref{theorem:aliasing} and Fact~\ref{fact:mem_writable_lookup}.

\subsection{Memory refinements}
\label{section:refinements}

This section defines the notion of \emph{memory refinements} that allows us to
relate memory states.
The author's PhD thesis~\cite{kre:15:phd} shows that the \CHtwoO{}
operational semantics is invariant under this notion.
Memory refinements form a general way to validate many common-sense properties
of the memory model in a formal way.
For example, they show that the memory is invariant under relabeling.
More interestingly, they show that symbolic information (such as variants of
unions) cannot be observed.

Memory refinements also open to door to reason about program transformations.	
We demonstrate their usage by proving soundness of constant propagation and by
verifying an abstract version of \lstinline|memcpy|. 

Memory refinements are a variant of Leroy and Blazy's
notion of memory extensions and injections~\cite{ler:bla:08}.
A memory refinement is a relation \(\memrefineShort \Gamma f {m_1} {m_2}\)
between a source memory state \(m_1\) and target memory state \(m_2\), where:

\begin{enumerate}
\item The function \(f : \memindex \to \option {(\memindex \times \rref)}\) is
	used to rename object identifiers and to coalesce multiple objects into
	subobjects of a compound object.
\item Deallocated objects in \(m_1\) may be replaced by arbitrary objects in
	\(m_2\).
\item Indeterminate bits \(\BIndet\) in \(m_1\) may be replaced by
	arbitrary bits in \(m_2\).
\item Pointer fragment bits \(\BPtrSeg {\PtrSeg p i}\) that belong to
	deallocated	pointers in \(m_1\) may be replaced by	arbitrary bits in \(m_2\).
\item Effective types may be weakened.
	That means, unions with a specific variant in \(m_1\) may be replaced
	by unions with an unspecified variant in \(m_2\), and pointers with frozen
	union annotations \(\RFrozen\) in \(m_1\) may be replaced by
	pointers	with unfrozen union annotations \(\RUnfrozen\) in \(m_2\).
\end{enumerate}

The key property of a memory refinement
\(\memrefineShort \Gamma f {m_1} {m_2}\), as
well as of Leroy and Blazy's memory extensions and injections, is that memory
operations are more defined on the target memory \(m_2\) than on the
source memory \(m_1\).
For example, if a lookup succeeds on \(m_1\), it also succeed on \(m_2\)
and yield a related value.

The main judgment \(\memrefine \Gamma f {\Delta_1} {\Delta_2} {m_1} {m_2}\) of
memory refinements will be built using a series of refinement relations on the
structures out of which the memory consists (addresses, pointers, bits, memory
trees, values).
All of these judgments should satisfy some basic properties, which are captured
by the judgment \(\memrefineShort \Delta f {\Delta_1} {\Delta_2}\).

\begin{definition}
A \emph{renaming function} \(f : \memindex \to
\option {(\memindex \times \rref)}\) is a \emph{refinement}, notation
\(\memrefineShort \Delta f {\Delta_1} {\Delta_2}\), if the following
conditions hold:
\begin{enumerate}
\item If	\(f\,o_1 = \Some {\pair o {\vec r_1}}\) and
	\(f\,o_2 = \Some {\pair o {\vec r_2}}\), then
	\(o_1 = o_2\) or \(\refdisjoint {\vec r_1} {\vec r_2}\)
	(\emph{injectivity}).
\item If	\(f\,o_1 = \Some {\pair {o_2} {\vec r}}\), then \(\frozen {\vec r}\).
\item If	\(f\,o_1 = \Some {\pair {o_2} {\vec r}}\) and
	\(\indextyped {\Delta_1} {o_1} \sigma\), then 
	\(\indextyped {\Delta_2} {o_2} \tau\) and
	\(\reftyped \Gamma {\vec r} \tau \sigma\) for a \(\tau\).
\item If	\(f\,o_1 = \Some {\pair {o_2} {\vec r}}\) and
	\(\indextyped {\Delta_2} {o_2} \tau\), then 
	\(\indextyped {\Delta_1} {o_1} \sigma\) and
	\(\reftyped \Gamma {\vec r} \tau \sigma\) for a \(\sigma\).	
\item If	\(f\,o_1 = \Some {\pair {o_2} {\vec r}}\) and
	\(\indexalive {\Delta_1} {o_1}\), then \(\indexalive {\Delta_2} {o_2}\).	
\end{enumerate}
\end{definition}

%For structures with an explicit type, like for example addresses, the
%judgment \(\addrrefine \Gamma f {\Delta_1} {\Delta_2} {a_1} {a_2} \tau\) has an
%index \(\tau\) corresponding to the type of \(a_1\) and \(a_2\).
The renaming function \(f : \memindex \to \option {(\memindex \times \rref)}\)
is the core of all refinement judgments.
It is used to rename object identifiers and to coalesce multiple source objects
into subobjects of a single compound target object.

Consider a renaming \(f\) with
\(f\,o_1 = \Some {\pair {o_1} {\RStruct 0 t}}\)
and \(f\,o_2 = \Some {\pair {o_1} {\RStruct 1 t}}\), and an environment
\(\Gamma\) with \(\elookup t \Gamma = \listlit {\tau_1,\tau_2}\).
This gives rise to following refinement:
\begin{equation*}
\tikzset{
	level distance=3em,
	baseline=(current bounding box.center),
	sibling distance=5em,
	child anchor=north,
	leaf/.style={draw=black,regular polygon,regular polygon sides=3,
		inner sep=0.1em},
}
\begin{tikzpicture}
\node[leaf,fill=yellow] (tau1) {\(\tau_1\)};
\draw[<-|] (tau1.north) -- node[above,pos=1] {\(o_1\)} +(0,0.7em);
\node[leaf,fill=green,right=3em of tau1] (tau2) {\(\tau_2\)};
\draw[<-|] (tau2.north) -- node[above,pos=1] {\(o_2\)} +(0,0.7em);
\end{tikzpicture}
\memrefineShortSym \Gamma f \quad
\begin{tikzpicture}
\node (t) {\(\TStruct t\)}
	child[-] { node[leaf,fill=yellow] {\(\tau_1\)}}
	child[-] { node[leaf,fill=green] {\(\tau_2\)}};
\draw[<-|] (t.north) -- node[above,pos=1] {\(o_1\)} +(0,0.7em);
\node[leaf, yshift=-2em,right=5em of t] (tau3) {\(\tau_3\)};
\draw[<-|] (tau3.north) -- node[above,pos=1] {\(o_3\)} +(0,0.7em);
\end{tikzpicture}
\end{equation*}

Injectivity of renaming functions guarantees that distinct source objects are
coalesced into disjoint target subobjects.
In the case of Blazy and Leroy, the renaming functions have type \(\memindex
\to \option {(\memindex \times \nat)}\), but we replaced the natural number by a
reference since our memory model is structured using trees.

Since memory refinements rearrange the memory layout, addresses should be
rearranged accordingly.
The judgment
\(\addrrefine \Gamma f {\Delta_1} {\Delta_2} {a_1} {a_2} {\ptype\tau}\)
describes how \(a_2\) is obtained by renaming \(a_1\) according to the renaming
\(f\), and moreover allows frozen union annotations \(\RFrozen\) in \(a_1\) to
be changed into unfrozen ones \(\RUnfrozen\) in \(a_2\).
The index \(\ptype\tau\) in the judgment \(\addrrefine \Gamma f {\Delta_1}
{\Delta_2} {a_1} {a_2} {\ptype\tau}\) corresponds to the type of \(a_1\) and
\(a_2\).

The judgment for addresses is lifted to the judgment for pointers in the
obvious way.
The judgment for bits is inductively defined as:
\begin{gather*}
\AXC{\strut\(\beta \in \{ 0, 1 \}\)}
\UIC{\(\bitrefine \Gamma f {\Delta_1} {\Delta_2} {\BBit \beta} {\BBit \beta}\)}
\DP\qquad
\AXC{\strut
	\(\ptrrefine \Gamma f {\Delta_1} {\Delta_2} {p_1} {p_2} {\ptype\sigma}\)}
\AXC{\(\frozen {p_2}\)}
\AXC{\(i < \bitsizeof \Gamma {(\TBase {\TPtr {\ptype\sigma}})}\)}
\TIC{\(\bitrefine \Gamma f {\Delta_1} {\Delta_2}
	{\BPtrSeg {\PtrSeg {p_1} i}} {\BPtrSeg {\PtrSeg {p_2} i}}\)}
\DP\\[0.5em]
\AXC{\(\bitvalid \Gamma {\Delta_2} b\)}
\UIC{\(\bitrefine \Gamma f {\Delta_1} {\Delta_2} \BIndet b\)}
\DP\qquad
\AXC{\strut\(\addrtyped \Gamma {\Delta_1} a \sigma\)}
\AXC{\(\notaddralive {\Delta_1} a\)}
\AXC{\(\bitvalid \Gamma {\Delta_2} b\)}
\TIC{\(\bitrefine \Gamma f {\Delta_1} {\Delta_2}
	{\BPtrSeg {\PtrSeg {\Ptr a} i}} b\)}
\DP
\end{gather*}

The last two rules allow indeterminate bits \(\BIndet\), as well as pointer
fragment bits \(\BPtrSeg {\PtrSeg {\Ptr a} i}\) belonging to deallocated
storage, to be replaced by arbitrary bits \(b\).

The judgment is lifted to memory trees following the tree structure and
using the following additional rule:
\begin{gather*}
\AXC{\(\elookup t \Gamma = \vec\tau\)}
\AXC{\(\ctreetyped \Gamma \Delta {w_1} {\tau_i}\)}
\AXC{\(\bitrefine \Gamma f {\Delta_1} {\Delta_2}
	{\ctreeflatten {w_1}\,\vec {\ppbit b}_1} {\vec {\ppbit b}_2}\)}
\AXC{\(\vec{\ppbit b}_1\) all \(\BIndet\)}
\def\extraVskip{0pt}
\noLine\QIC{\(\bitsizeof \Gamma {(\TUnion t)} = \bitsizeof \Gamma {\tau_i} +
	\length {\vec{\ppbit b}_1}\)\qquad
	\(\neg\sepunmapped {(\ctreeflatten w_1\,\vec{\ppbit b}_1)}\)}
\def\extraVskip{2pt}
\UIC{\(\ctreerefine \Gamma f {\Delta_1} {\Delta_2}
	{\MUnion t i {w_1} {\vec {\ppbit b}_1}}
	{\MUnionAll t {\vec {\ppbit b}_2}} {\TUnion t}\)}
\DP
\end{gather*}

This rule allows a union that has a specific variant in the source to be
replaced by a union with an unspecified variant in the target.
The direction seems counter intuitive, but keep in mind that
unions with an unspecified variant allow more behaviors.

\begin{lemma}
If \(\ctreerefine \Gamma f {\Delta_1} {\Delta_2} {w_1} {w_2} \tau\), then
\(\ctreetyped \Gamma {\Delta_1} {w_1} \tau\) and
\(\ctreetyped \Gamma {\Delta_2} {w_2} \tau\).
\end{lemma}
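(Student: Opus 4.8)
The plan is to prove both conclusions simultaneously by induction on the derivation of $\ctreerefine \Gamma f {\Delta_1} {\Delta_2} {w_1} {w_2} \tau$. Since this judgment is defined by lifting the refinement judgments on bits, pointers and addresses through the tree structure, the induction cannot stand on its own: I would first establish, by a single mutual induction over all of these refinement judgments, the companion facts that $\bitrefine \Gamma f {\Delta_1} {\Delta_2} {b_1} {b_2}$ implies $\bitvalid \Gamma {\Delta_1} {b_1}$ and $\bitvalid \Gamma {\Delta_2} {b_2}$, that $\ptrrefine \Gamma f {\Delta_1} {\Delta_2} {p_1} {p_2} {\ptype\sigma}$ implies $\ptrtyped \Gamma {\Delta_1} {p_1} {\ptype\sigma}$ and $\ptrtyped \Gamma {\Delta_2} {p_2} {\ptype\sigma}$, and likewise that $\addrrefine \Gamma f {\Delta_1} {\Delta_2} {a_1} {a_2} {\ptype\sigma}$ implies two-sided well-typedness. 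The two-sidedness for addresses and pointers is essentially a reading-off of the clauses in the definition of a refinement renaming function (the clauses relating $\Delta_1$ and $\Delta_2$ through $\reftyped$ and through aliveness) together with the intrinsic type annotations on the syntax.

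For the bit refinement judgment I would case-split on its four rules. The rule for $\BBit \beta$ makes both sides valid immediately. The rule $\bitrefine \Gamma f {\Delta_1} {\Delta_2} \BIndet b$ gives target validity from its premise $\bitvalid \Gamma {\Delta_2} b$, while $\BIndet$ is always valid on the source. For the live pointer-fragment rule, the companion lemma applied to its $\ptrrefine$ premise yields well-typedness of $p_1$ and $p_2$; the premise $\frozen {p_2}$ forces $\frozen {p_1}$, because a refinement may only turn a frozen union annotation $\RFrozen$ into an unfrozen one $\RUnfrozen$, never the reverse (and the reference prepended by $f$ is itself frozen); together with the offset bound on $i$ this makes both fragment bits valid. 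For the dangling pointer-fragment rule, source validity has to be recovered from $\addrtyped \Gamma {\Delta_1} a \sigma$ plus the frozenness of the stored dangling pointer and the offset bound carried along with a well-formed memory tree, and the target bit is valid by the remaining premise.

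With these in hand the memory-tree cases are routine. For $\MBase$, $\MArray$ and $\MStruct$ the structural refinement rule supplies, via the induction hypothesis, refinement of the immediate subtrees and, for structs, of the padding bit-lists; applying the induction hypothesis and the bit-refinement companion lemma gives two-sided well-typedness of those components, and the remaining side conditions of the corresponding rule of Definition~\ref{definition:ctree_typed} --- the lengths, the non-emptiness $n \ne 0$, and the requirement that struct padding be all $\BIndet$ --- are carried by the refinement rule itself. For the union-weakening rule, the source conclusion $\ctreetyped \Gamma {\Delta_1} {\MUnion t i {w_1} {\vec {\ppbit b}_1}} {\TUnion t}$ is obtained by matching its premises one-for-one against the $\MUnion$ typing rule: $\ctreetyped \Gamma {\Delta_1} {w_1} {\tau_i}$ is already a premise, $\vec {\ppbit b}_1$ is all $\BIndet$, the size equation and $\neg\sepunmapped {(\ctreeflatten {w_1}\,\vec {\ppbit b}_1)}$ are premises, and $\bitvalid \Gamma {\Delta_1} {\vec {\ppbit b}_1}$ follows from $\bitvalid \Gamma {\Delta_1} {(\ctreeflatten {w_1}\,\vec {\ppbit b}_1)}$ given by the bit-refinement premise. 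The target conclusion $\ctreetyped \Gamma {\Delta_2} {\MUnionAll t {\vec {\ppbit b}_2}} {\TUnion t}$ follows from $\bitvalid \Gamma {\Delta_2} {\vec {\ppbit b}_2}$ together with the length computation $\length {\vec {\ppbit b}_2} = \length {(\ctreeflatten {w_1}\,\vec {\ppbit b}_1)} = \bitsizeof \Gamma {\tau_i} + \length {\vec {\ppbit b}_1} = \bitsizeof \Gamma {(\TUnion t)}$, using that bit refinement preserves list length.

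The main obstacle is organisational rather than deep: the memory-tree, bit, pointer and address refinement judgments are mutually recursive, so all of the two-sided-validity statements must be proved in one simultaneous induction, and care is needed to check that every canonicity side-condition demanded by $\ctreetyped$ and by $\bitvalid$ --- in particular the frozenness and the offset bound of pointer-fragment bits (including dangling ones), the indeterminacy of struct padding on \emph{both} sides of the refinement, and the $\neg\sepunmapped$ condition on specified-variant unions --- is genuinely entailed by the refinement rules rather than merely plausible. The frozen-annotation direction and the treatment of padding bits are the two places where the argument would need the refinement judgments to be formulated exactly as intended.
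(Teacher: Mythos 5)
Your proposal is correct and follows the same route the paper's formalization takes: a simultaneous induction over the mutually defined refinement judgments (bits, pointers, addresses, memory trees), reading the two-sided typing conditions off the clauses of the renaming-function refinement and the intrinsic type annotations; the paper gives no textual proof, but this is exactly the argument its Coq development carries out. Your one caveat is well placed — the dangling-pointer-fragment rule as printed omits the frozenness and offset-bound premises needed for source validity, and the argument does rely on the full (unabridged) definition supplying them, which it does.
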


This lemma is useful because it removes the need for simultaneous
inductions on both typing and refinement judgments.

We define \(\memrefineShort \Gamma f {m_1} {m_2}\) as
\(\memrefine \Gamma f {\memenvof {m_1}} {\memenvof {m_2}} {m_1} {m_2}\), where
the judgment \(\memrefine \Gamma f {\Delta_1} {\Delta_2} {m_1} {m_2}\) is
defined such that if \(f\,o_1 = \Some {\pair {o_2} {\vec r}}\), then:
\begin{equation*}
\tikzset{baseline=(current bounding box.center)}
\begin{tikzpicture}
\node[draw,regular polygon,regular polygon sides=3,minimum height=4em,fill=yellow]
	(small) {\(w_1\)};
\draw[<-|] (small.north) -- node[above,pos=1] {\(o_1\)} +(0,0.7em);
\end{tikzpicture}
\textnormal{implies \(\exists\, w_2\,\tau\) with}
\begin{tikzpicture}
\node[draw,regular polygon,regular polygon sides=3,minimum height=7em] (big) {};
\node[draw,regular polygon,regular polygon sides=3,minimum height=4em,fill=yellow]
	(small) at (big.south) {\(w_2\)};
\draw[thick,->,decorate, decoration={snake,amplitude=0.1em}, draw=red]
	(big.north) -- node[fill=white,fill opacity=0.8,text opacity=1,right]
	{\(\vec r\)} (small.north);
\draw[<-|] (big.north) -- node[above,pos=1] {\(o_2\)} +(0,0.7em);
\end{tikzpicture}
\textnormal{and
	\(\ctreerefine \Gamma f {\Delta_1} {\Delta_2} {w_1} {w_2} \tau\).}
\end{equation*}

The above definition makes sure that objects are renamed, and possibly coalesced
into subobjects of a compound object, as described by the renaming function \(f\).

In order to reason about program transformations modularly, we show
that memory refinements can be composed.

\begin{lemma}
\label{lemma:refine_id}
Memory refinements are reflexive for valid memories, that means, if
\(\memvalid \Gamma \Delta m\), then
\(\memrefine \Gamma \meminjid \Delta \Delta m m\) where
\(\meminjid\,o \defined \Some {\pair o \nil}\).
\end{lemma}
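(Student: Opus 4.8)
The plan is to prove reflexivity of \emph{every} refinement judgment, instantiating the renaming function with $\meminjid$, and to propagate this through the layered structure of memory refinements (renaming functions, then bits, addresses, pointers, memory trees, and finally memories); the argument is thus a short chain of lemmas rather than one monolithic induction. The first step is to check that $\meminjid$ is a refinement, i.e.\ $\memrefineShort \Gamma \meminjid \Delta \Delta$. All five conditions are immediate: since $\meminjid\,o = \Some{\pair o \nil}$ for every $o$, injectivity follows because agreement on the second component forces equality of object identifiers; the reference $\nil$ is frozen; and as the source and target memory typing environments coincide, the two type-transfer conditions reduce to $\reftyped \Gamma \nil \sigma \sigma$ (the first rule of Definition~\ref{definition:ref_typed}), while aliveness is preserved trivially.

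Next I would establish reflexivity for bits, addresses and pointers, in that order. For bits: $\BBit\beta$ refines itself directly, $\BIndet$ refines itself using the rule $\bitrefine \Gamma \meminjid \Delta \Delta \BIndet b$ with $b \defined \BIndet$ (whose validity is available), and $\BPtrSeg{\PtrSeg p i}$ reduces to reflexivity for the frozen, valid pointer $p$. For a well-typed address $a$, the identity refinement does nothing interesting: renaming only prepends $\nil$, and although the address-refinement judgment permits changing $\RFrozen$ annotations to $\RUnfrozen$, one is free not to. Reflexivity for pointers then follows, with $\NULL{\ptype\sigma}$ and $\FunPtr f {\vec\tau}\tau$ refining themselves.

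Reflexivity for memory trees is then proved by structural induction on $w$, carrying the hypothesis $\ctreetyped \Gamma \Delta w \tau$. Leaves are handled by bit reflexivity (padding bits are $\BIndet$, which refines $\BIndet$); array and struct nodes follow from the induction hypothesis and the structural lifting of the judgment; a union node $\MUnion t i w {\vec{\ppbit b}}$ refines \emph{itself} via the structural rule---not the rule that weakens a specified variant to an unspecified one---with the canonicity side conditions $\neg\sepunmapped{(\ctreeflatten w\,\vec{\ppbit b})}$ and ``$\vec{\ppbit b}$ all $\BIndet$'' inherited verbatim from Definition~\ref{definition:ctree_typed}; $\MUnionAll t {\vec{\ppbit b}}$ is likewise immediate. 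Finally I would assemble $\memrefine \Gamma \meminjid \Delta \Delta m m$: from $\memvalid \Gamma \Delta m$, every live object has $\mlookup o m = \Some{\pair w \mu}$ with $\ctreetyped \Gamma \Delta w \tau$, so in the defining condition of $\memrefine$ for $f\,o = \Some{\pair o \nil}$ we take $w_2 \defined w$ and this very $\tau$, and $\ctreerefine \Gamma \meminjid \Delta \Delta w w \tau$ is supplied by the tree-reflexivity lemma; deallocated objects and the remaining bookkeeping conditions hold because both memory typing environments are the same $\Delta$ and $\meminjid$ is a refinement.

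The main obstacle is organizational rather than mathematical: the reflexivity lemmas for bits, addresses and pointers must be stated with exactly the validity premises that $\ctreetyped$ and $\memvalid$ make available, so that the structural induction on memory trees goes through without having to strengthen hypotheses mid-proof, and one must check that none of the component refinement relations---several of which fold in the flatten/unflatten conversions in their union cases---accidentally excludes the diagonal. In effect this lemma is the statement that $\memrefine \Gamma \meminjid$, and each relation it is built from, restricts to the identity on valid data; the delicate points are precisely the union cases, where one verifies that the structural rule (rather than the variant-weakening rule) applies and that the canonicity conditions transfer unchanged.
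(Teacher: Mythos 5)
Your proposal is correct and follows the evident route: since the paper's refinement judgments are built in layers mirroring the typing judgments, reflexivity is established for the renaming function, bits, addresses, pointers, and memory trees in turn (the last by induction carrying the typing hypothesis so that the validity and canonicity side conditions of the diagonal instances are available), and then assembled for memories with \(\vec r = \nil\). You correctly identify the only delicate point, namely that the union case must go through the structural rule with the side conditions supplied by \(\ctreetyped \Gamma \Delta w \tau\), which is exactly why the validity premise \(\memvalid \Gamma \Delta m\) is needed.
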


\begin{lemma}
\label{lemma:refine_compose}
Memory refinements compose, that means, if
\(\memrefine \Gamma f {\Delta_1} {\Delta_2} {m_1} {m_2}\) and
\(\memrefine \Gamma {f'} {\Delta_2} {\Delta_3} {m_2} {m_3}\), then
\(\memrefine \Gamma {\meminjcompose {f'} f} {\Delta_1} {\Delta_3} {m_1} {m_3}\)
where:
\[(\meminjcompose {f'} f)\,o_1 \defined \begin{cases}
\Some {\pair {o_3} {\vec {r_2}\,\vec r_3}} &
	\textnormal{if}\, f\,o_1 = \Some {\pair {o_2} {\vec r_2}}
	\,\textnormal{and}\, f'\,o_2 = \Some {\pair {o_3} {\vec r_3}} \\
\None & \textnormal{otherwise}
\end{cases}\]
\end{lemma}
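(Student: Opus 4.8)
The plan is to prove compositionality by descending through the layered structure of the refinement relation: the renaming function, then bits, then pointers and addresses, then memory trees, and finally whole memories. A composition lemma is established at each layer, bottom up, and the pieces are assembled at the end. The first step is to check that $\meminjcompose {f'} f$ is again a refinement (i.e.\ satisfies the conditions on renaming functions) whenever $f$ and $f'$ are. Frozenness of the concatenated reference $\vec{r_2}\,\vec{r_3}$ follows from frozenness of $\vec r_2$ and $\vec r_3$; the two type-matching conditions chain through the intermediate environment $\Delta_2$, using that valid references compose (a reference from $\tau$ to $\sigma$ that factors through an intermediate type is again a valid reference from $\tau$ to $\sigma$); the aliveness condition chains directly. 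The delicate condition is injectivity: given $(\meminjcompose {f'} f)\,o_1 = \Some {\pair o {\vec r}}$ and $(\meminjcompose {f'} f)\,o_2 = \Some {\pair o {\vec{r'}}}$, one unfolds both through $f$ and $f'$ and case-splits on whether $f\,o_1$ and $f\,o_2$ already have the same target index, combining injectivity of $f$ in one case with injectivity of $f'$ and the stability of $\refdisjoint{\vec{r_1}}{\vec{r_2}}$ under appending a common (frozen) suffix or prefix in the other.

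For bits I would do a case analysis on the first derivation $\bitrefine \Gamma f {\Delta_1} {\Delta_2} {b_1} {b_2}$ followed by $\bitrefine \Gamma {f'} {\Delta_2} {\Delta_3} {b_2} {b_3}$. A concrete bit $\BBit \beta$ is propagated verbatim. For a pointer fragment bit $\BPtrSeg {\PtrSeg {p_1} i}$ one needs the analogous composition lemma for pointer (hence address) refinements --- which again reduces to reference concatenation and to $\meminjcompose {f'} f$ --- together with the fact that refinements send live pointers to live pointers, so that in $m_2$ the bit is again a fragment bit of a refining pointer rather than arbitrary data. The remaining cases ($\BIndet$, and fragment bits $\BPtrSeg{\PtrSeg{\Ptr a}{i}}$ of deallocated storage) already refine to an arbitrary valid bit at the first step, so composition there is immediate. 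Lifting to memory trees is an induction on the tree-refinement derivation: the leaf case uses bit composition; the only non-structural rule is $\MUnion t i {w_1} {\vec{\ppbit b}_1} \to \MUnionAll t {\vec{\ppbit b}_2}$, and when this is followed by a bit-wise refinement $\MUnionAll t {\vec{\ppbit b}_2} \to \MUnionAll t {\vec{\ppbit b}_3}$ it is handled by composing the bit refinements between $\ctreeflatten {w_1}\,\vec{\ppbit b}_1$ and $\vec{\ppbit b}_2$ and between $\vec{\ppbit b}_2$ and $\vec{\ppbit b}_3$ and re-applying the same rule (the side condition $\neg\sepunmapped {(\ctreeflatten {w_1}\,\vec{\ppbit b}_1)}$ is inherited from the first derivation).

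For the memory judgment itself: given $o_1$ with $(\meminjcompose {f'} f)\,o_1 = \Some {\pair {o_3} {\vec r}}$ and an object $w_1$ at $o_1$ in $m_1$, unfold through $f\,o_1 = \Some {\pair {o_2} {\vec{r_2}}}$ and $f'\,o_2 = \Some {\pair {o_3} {\vec{r_3}}}$. The refinement $\memrefineShort \Gamma f {m_1} {m_2}$ yields an object at $o_2$ in $m_2$ whose subtree at $\vec{r_2}$, say $w_2$, is refined by $w_1$; applying $\memrefineShort \Gamma {f'} {m_2} {m_3}$ to that whole $o_2$-object yields an object at $o_3$ in $m_3$ whose subtree at $\vec{r_3}$ is refined by the $o_2$-object. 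To conclude one needs a lemma that tree refinement commutes with subtree lookup $\ctreelookupSym$ --- so $w_2$ is refined by the subtree at $\vec{r_2}$ of that $o_3$-subtree --- together with compatibility of $\ctreelookupSym$ with reference concatenation, identifying that subtree with the one reached by $\vec r$ in the $o_3$-object. Then $w_1$ is refined by $w_2$, which is refined by the subtree at $\vec r$, and the tree-level composition lemma closes the chain; the earlier lemma that tree refinement implies well-typedness on both ends lets the inductions stay separate.

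I expect the commutation of tree refinement with $\ctreelookupSym$ to be the main obstacle. Its hard case is precisely type-punning: looking up a variant of a $\MUnionAll$ (or of a $\MUnion$ with a different current variant) reinterprets the underlying bits through $\ctreeunflattenSym$, so the lemma forces a proof that $\ctreeunflattenSym$ respects bit refinement, and dually that it respects the weakenings a refinement is allowed to make ($\RFrozen$ to $\RUnfrozen$ on union annotations, and $\MUnion$ to $\MUnionAll$). This is exactly where the symbolic bit layer, effective types, and the renaming function interact most tightly, and it is the step I would budget the most effort for.
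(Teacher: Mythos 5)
Your layered, bottom-up composition---first checking that \(\meminjcompose {f'} f\) is again a valid renaming (with the case split on whether the intermediate object identifiers coincide, using injectivity of \(f\) plus a common frozen prefix in one case and injectivity of \(f'\) plus arbitrary suffixes in the other), then composing the refinement judgments for bits, pointers/addresses, memory trees, and finally memories, with commutation of tree refinement with \(\ctreelookupSym\) (hence with \(\ctreeunflattenSym\) under type-punning) as the crux---is exactly how this lemma is established in the formal development; the paper only states the result. The one sub-case your bit-level analysis skips is a fragment bit of a pointer that is replaced by arbitrary data only at the \emph{second} step (because the intermediate pointer is dead in \(\Delta_2\)); this closes by contraposition of the aliveness-preservation condition on renaming functions, which forces the source pointer to be dead in \(\Delta_1\) as well, so the direct dead-pointer rule applies to the composite.
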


All memory operations are preserved by memory refinements.
This property is not only useful for reasoning about program transformations,
but also indicates that the memory interface does not expose internal details
(such as variants of unions) that are unavailable in the memory of a
(concrete) machine.

\begin{lemma}
If \(\memrefine \Gamma f {\Delta_1} {\Delta_2} {m_1} {m_2}\) and
\(\addrrefine \Gamma f {\Delta_1} {\Delta_2} {a_1} {a_2} \tau\) and
\(\memlookup \Gamma {a_1} {m_1} = \Some {v_1}\), then there exists a value
\(v_2\) with
\(\memlookup \Gamma {a_2} {m_2} = \Some {v_2}\) and
\(\valrefine \Gamma f {\Delta_1} {\Delta_2} {v_1} {v_2} \tau\).
\end{lemma}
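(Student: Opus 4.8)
The plan is to unfold the definition of $\memlookupSym \Gamma$ on both memories and reduce the statement to a handful of preservation facts about the auxiliary operations of Sections~\ref{section:ctrees} and~\ref{section:values}: that $\ctreelookupSym \Gamma$ carries the memory-tree refinement judgment to itself, that $\ctreeflattenSym$ and $\ctreeunflatten \Gamma \tau {\_}$ transport refinements between bits and memory trees, that $\sepunsharedSym$, $\sepunmappedSym$ and permission kinds are invariant under the refinement on permission-annotated bits, and that $\tovalSym \Gamma$ sends memory-tree refinements to value refinements. First I would unfold $\memlookup \Gamma {a_1} {m_1} = \Some {v_1}$: by Definition~\ref{definition:memory_operations} there is a memory tree $w_1$ with $\cmaplookup \Gamma {a_1} {m_1} = \Some {w_1}$, with $v_1 = \toval \Gamma {w_1}$, and every bit of $\ctreeflatten {w_1}$ has permission of kind at least $\Some\Readable$. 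Writing $f\,(\addrindex {a_1}) = \Some {\pair {o_2} {\vec r}}$ — which is defined because the successful lookup witnesses that $\addrindex {a_1}$ is alive in $m_1$ — the definition of memory refinement provides an object $\pair {W_2} {\mu_2}$ at $o_2 = \addrindex {a_2}$ in $m_2$ and a refinement of the whole object $W_1$ at $\addrindex {a_1}$ in $m_1$ to the subtree $\ctreelookup \Gamma {\vec r} {W_2}$. Since the address refinement judgment guarantees that $\addrref \Gamma {a_2}$ is $\vec r$ followed by the image of $\addrref \Gamma {a_1}$ under $f$ (with frozen union annotations $\RFrozen$ possibly weakened to $\RUnfrozen$), and since $\ctreelookupSym$ composes along concatenation of references, $\cmaplookup \Gamma {a_2} {m_2}$ is obtained by running $\ctreelookupSym$ over that weakened image of $\addrref \Gamma {a_1}$ on a memory tree refining $w_1$.

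The core lemma, proved by induction on the reference (equivalently, on the derivation of the memory-tree refinement), states: if $w_1$ refines $w_2$ and $\ctreelookup \Gamma {\vec s} {w_1} = \Some {u_1}$, and $\vec s'$ is $\vec s$ with some $\RFrozen$ annotations weakened to $\RUnfrozen$, then $\ctreelookup \Gamma {\vec s'} {w_2} = \Some {u_2}$ for some $u_2$ refining $u_1$. The empty-reference case is immediate. For a leading $\RArray$ or $\RStruct$ segment the refinement structure mirrors the lookup structure, so the claim follows from the induction hypothesis on the selected child. The delicate case is a leading $\RUnion i t q$ segment hitting a union node, mirroring the three branches of Definition~\ref{definition:ctree_lookup}: if the variant matches we again invoke the induction hypothesis; if $q = \RFrozen$ with $i \neq j$ the source lookup already yields $\None$, contradicting $\cmaplookup \Gamma {a_1} {m_1} = \Some {w_1}$, so the case is vacuous; and the genuine case is type-punning, $i \neq j$ and $q = \RUnfrozen$ with $\sepunshared {(\ctreeflatten {w}\,\vec {\ppbit b})}$. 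Here the source reinterprets via $\ctreeunflatten \Gamma {\tau_i} {\sublist 0 s {(\ctreeflatten {w}\,\vec {\ppbit b})}}$, and the target — whether a $\MUnion$ with the same variant but possibly unfrozen annotation, or a $\MUnionAll$ — by the union rule displayed for the memory-tree refinement judgment has its flattened contents refining $\ctreeflatten {w}\,\vec {\ppbit b}$ bit for bit; so preservation of refinement under $\ctreeflattenSym$ and $\ctreeunflatten \Gamma {\tau_i} {\_}$, together with invariance of $\sepunsharedSym$ under the bit refinement, give a refinement between the two reinterpreted trees and the induction hypothesis closes the case. The source $\MUnionAll$ case is handled by the same bit-level argument.

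The supporting preservation facts are routine structural inductions: $\ctreeflattenSym$ maps the memory-tree refinement judgment to the list-lifted refinement on permission-annotated bits (permissions carried unchanged); $\ctreeunflatten \Gamma \tau {\_}$ maps refining bit sequences to refining memory trees; $\sepunsharedSym$, $\sepunmappedSym$ and permission kinds are invariant under the pbit refinement, so the readability side-condition of $\memlookupSym \Gamma$ transfers from $w_1$ to $w_2$; and $\tovalSym \Gamma$ maps the memory-tree refinement judgment to the value refinement judgment, the freezing of pointers being already built into the address (hence pointer and bit) refinement. For byte addresses there is one extra wrinkle: $\cmaplookup \Gamma a m$ flattens the whole object and picks out the $\addrrefbyte \Gamma a$-th byte, reinterpreting it through $\ctreeunflatten \Gamma {\IntType \Unsigned \charrank} {\_}$; this follows from the flatten/unflatten facts together with the observation that $\addrref \Gamma {\_}$ and $\addrrefbyte \Gamma {\_}$ are preserved by the renaming. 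Assembling, with $v_2 \defined \toval \Gamma {w_2}$: we have $\cmaplookup \Gamma {a_2} {m_2} = \Some {w_2}$ and the readability side-condition, hence $\memlookup \Gamma {a_2} {m_2} = \Some {v_2}$, and $\valrefine \Gamma f {\Delta_1} {\Delta_2} {v_1} {v_2} \tau$ by the last fact. The main obstacle is the type-punning sub-case of the $\ctreelookupSym$-preservation lemma: matching a source $\MUnion$ node carrying a committed variant against a target $\MUnionAll$ node, where the two lookup definitions take syntactically different branches and one must descend to the bit level, checking that $\sepunsharedSym$ transfers, that $\ctreeflattenSym$ and $\ctreeunflatten \Gamma {\tau_i} {\_}$ commute with refinement, and that the $\RFrozen$-to-$\RUnfrozen$ weakening permitted on addresses lines up with the annotation handling in Definition~\ref{definition:ctree_lookup}; the rest is bookkeeping over the tree structure.
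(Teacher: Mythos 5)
The paper states this lemma without a prose proof (it is established only in the \Coq{} development), and your layered plan --- showing that flattening, unflattening, the lookup operation on memory trees, and $\tovalSym \Gamma$ each map the refinement judgment at one level to the refinement judgment at the next, then composing these through Definition~\ref{definition:cmap_lookup} and transferring the permission-kind side-condition --- is exactly how that proof is organized, and you correctly single out the union cases of the $\ctreelookupSym \Gamma$-preservation lemma as the crux. One sub-case needs more care than you give it: source node $\MUnion t i {w_1} {\vec{\ppbit b}_1}$, reference segment $\RUnion i t q$ with \emph{matching} variant, target node $\MUnionAll t {\vec{\ppbit b}_2}$. The source lookup succeeds unconditionally by descending into $w_1$, but the target lookup only fires under $\sepunshared {\vec{\ppbit b}_2}$; since the refinement on permission-annotated bits leaves permissions unchanged, this requires $\sepunshared {(\ctreeflatten {w_1}\,\vec{\ppbit b}_1)}$, whereas the displayed rule for collapsing a committed variant into $\MUnionAll$ only records the weaker $\neg\sepunmapped {(\ctreeflatten {w_1}\,\vec{\ppbit b}_1)}$. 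The rule in the formalization in fact demands unshared permissions for exactly this reason; without that strengthening, a read-only (mapped but shared) union with a committed variant could be refined to a $\MUnionAll$ on which the target lookup fails, and your induction would get stuck here. Moreover, in this sub-case the induction hypothesis cannot be applied to a child of the target (there is none): you must pass through the flatten/unflatten preservation facts and composition of refinements, using that $w_1$ is refined by $\ctreeunflatten \Gamma {\tau_i} {\sublist 0 s {(\ctreeflatten {w_1}\,\vec{\ppbit b}_1)}}$ with $s = \bitsizeof \Gamma {\tau_i}$ --- your text folds this into the type-punning branch, but it is needed for the matching-variant branch as well. The rest of the plan, including the byte-address detour and the vacuous $\RFrozen$-mismatch case, is sound.
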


\begin{lemma}
If \(\memrefine \Gamma f {\Delta_1} {\Delta_2} {m_1} {m_2}\) and
\(\addrrefine \Gamma f {\Delta_1} {\Delta_2} {a_1} {a_2} \tau\) and
\(\valrefine \Gamma f {\Delta_1} {\Delta_2} {v_1} {v_2} \tau\) and
\(\memwritable \Gamma {m_1} {a_1}\), then:
\begin{enumerate}
\item We have \(\memwritable \Gamma {m_2} {a_2}\).
\item We have \(\memrefine \Gamma f {\Delta_1} {\Delta_2}
 {\meminsert \Gamma {a_1} {v_1} {m_1}} {\meminsert \Gamma {a_2} {v_2} {m_2}}\).
\end{enumerate}
\end{lemma}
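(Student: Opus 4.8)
The plan is to reduce both claims to refinement‑preservation statements about the memory‑tree primitives \(\cmaplookupSym\) and \(\cmapalterSym\), exploiting that \(\memwritableSym\) is phrased through \(\cmaplookupSym\) and permission kinds (Definition~\ref{definition:memory_operations}) and that \(\meminsertSym\) is \(\cmapalterSym\) applied to the \(\ofvalSym\)‑based update function. The single ingredient that drives everything is that refinements never alter permissions: by induction on the derivation of \(\ctreerefine \Gamma f {\Delta_1} {\Delta_2} {w_1} {w_2} \tau\) I would show that \(\fst {\ctreeflatten {w_1}} = \fst {\ctreeflatten {w_2}}\) as lists of permissions, so in particular the permission kinds of the leaves of \(w_1\) and \(w_2\) agree pointwise. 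Every structural case is immediate; the only case requiring a look is the rule relating \(\MUnion t i {w_1} {\vec {\ppbit b}_1}\) to \(\MUnionAll t {\vec {\ppbit b}_2}\), where one checks that the bit‑refined pbit sequences \(\ctreeflatten {w_1}\,\vec {\ppbit b}_1\) and \(\vec {\ppbit b}_2\) carry the same permission projection, which holds because the bit refinement judgment only rewrites the bit component of a pbit (indeterminate bits, dead pointer‑fragment bits, union variants) and never its permission. Call this the permission‑invariance fact.

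For claim~1 I would unfold \(\memwritable \Gamma {m_1} {a_1}\) to obtain \(w_1\) with \(\cmaplookup \Gamma {a_1} {m_1} = \Some {w_1}\) and every leaf of \(\ctreeflatten {w_1}\) of permission kind at least \(\Some\Writable\). The preceding lemma on \(\memlookupSym\) factors through a memory‑tree‑level statement: under the same hypotheses \(\cmaplookup \Gamma {a_2} {m_2}\) is defined, equal to some \(\Some {w_2}\) with \(\ctreerefine \Gamma f {\Delta_1} {\Delta_2} {w_1} {w_2} \tau\); I would invoke that statement here. By the permission‑invariance fact the leaves of \(\ctreeflatten {w_2}\) have the same kinds as those of \(\ctreeflatten {w_1}\), hence all at least \(\Some\Writable\), so \(\memwritable \Gamma {m_2} {a_2}\). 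The byte‑address subcase is identical: address refinement preserves byte‑address‑ness and the pointed‑to type, and in that case \(\cmaplookupSym\) additionally applies \(\ctreeunflattenSym\) to a \(\sublistSym\) of the flattened tree, which merely selects a contiguous block of leaves whose permissions are again preserved.

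For claim~2, unfold \(\meminsert \Gamma {a_j} {v_j} {m_j} = \cmapalter \Gamma {g_j} {a_j} {m_j}\) with \(g_j\,w \defined \ofval \Gamma {(\fst {\ctreeflatten w})} {v_j}\) (Definitions~\ref{definition:memory_operations} and~\ref{definition:cmap_alter}); \(m_2\) has the object demanded by \(\cmapalterSym\) because a memory refinement transports the object of \(m_1\) at \(\addrindex {a_1}\) to that of \(m_2\) at \(\addrindex {a_2}\). What is then needed is a general ``alter refines'' lemma: if memories and addresses refine, \(\cmaplookup \Gamma {a_1} {m_1} = \Some {w_1}\) (so \(\cmaplookup \Gamma {a_2} {m_2} = \Some {w_2}\) with the trees refining, as above), and moreover \(\ctreerefine \Gamma f {\Delta_1} {\Delta_2} {g_1\,w_1} {g_2\,w_2} \tau\), then \(\cmapalter \Gamma {g_1} {a_1} {m_1}\) and \(\cmapalter \Gamma {g_2} {a_2} {m_2}\) refine. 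I would prove this by the induction that defines \(\ctreealterSym\) (relative to the address‑refinement judgment, which relates the two reference paths and permits \(\RFrozen \to \RUnfrozen\)): the structural cases merely push the induction hypothesis through; the interesting case is descending into a union whose variant is specific in the source while the target is in the unspecified variant, where \(\ctreealterSym\) yields a specified‑variant union on both sides and one reuses the reinterpretation‑through‑\(\ctreeflattenSym\) behaviour together with the typing lemma for memory‑tree refinements; the byte‑address case threads \(g_j\) through the flatten–replace–unflatten wrapper \(\overline{g_j}\), which preserves refinement because refinement acts leaf‑wise and \(\ctreeunflattenSym\) preserves refinement. It remains to discharge \(\ctreerefine \Gamma f {\Delta_1} {\Delta_2} {g_1\,w_1} {g_2\,w_2} \tau\): since \(\fst {\ctreeflatten {w_1}} = \fst {\ctreeflatten {w_2}} =: \vec\gamma\) by permission‑invariance (with \(\length {\vec\gamma} = \bitsizeof \Gamma \tau\)), this reduces to ``\(\ofvalSym\) preserves refinement'', namely that \(\valrefine \Gamma f {\Delta_1} {\Delta_2} {v_1} {v_2} \tau\) and a fixed valid flattened permission sequence \(\vec\gamma\) give \(\ctreerefine \Gamma f {\Delta_1} {\Delta_2} {\ofval \Gamma {\vec\gamma} {v_1}} {\ofval \Gamma {\vec\gamma} {v_2}} \tau\) (Definition~\ref{definition:of_val}), by induction on the value‑refinement derivation — the unspecified‑variant union case using that \(\valflattenSym\) is a function of the value and maps bit‑refined inputs to bit‑refined outputs, and the case of a specified‑variant union value refining an unspecified one mirroring the memory‑tree rule above. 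Type preservation of \(\meminsertSym\) then discharges the residual well‑formedness obligations on both memories.

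The hard part will be the refinement‑preservation lemmas for \(\ctreealterSym\) and for \(\ofvalSym\)/\(\valflattenSym\), and within those, the union cases: a specific variant in the source matched by an unspecified variant in the target forces one to argue that reinterpretation through flattening commutes with refinement, and to keep the delicate canonicity side‑conditions on specified‑variant union trees — non‑unmappedness of the flattened union and the padding being all \(\BIndet\) — synchronised between the source and target trees. Making the bit refinement judgment propagate cleanly through \(\ctreeflattenSym\) and \(\ctreeunflattenSym\) while verifying that these conversions leave permissions untouched is the real work; the outer bookkeeping with the renaming function \(f\), the minimal memory typing environments, and the byte‑address split is routine.
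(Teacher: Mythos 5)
Your proposal is correct and follows essentially the route taken in the formal development: both parts are reduced to refinement‑preservation of the tree‑level primitives \(\cmaplookupSym\) and \(\cmapalterSym\) and of \(\ofvalSym\), with the observation that refinements of permission‑annotated bits leave the permission component unchanged doing all the work for writability. The paper states this lemma without a printed proof, and your identification of the union/reinterpretation‑through‑flattening cases and the canonicity side‑conditions on specified‑variant unions as the delicate points is exactly where the effort lies in the \Coq{} proof.
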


As shown in Lemma~\ref{lemma:mem_lookup_insert}, storing a value \(v\) in
memory and then retrieving it, does not necessarily yield the same value \(v\).
In case of a byte address, the value may have been stored in padding and
therefore have become indeterminate.
Secondly, it intentionally yields the value \(\freeze v\) in which all pointers
are frozen.
However, the widely used compiler optimization of constant propagation,
which substitutes values of known constants at compile time, is still valid in
our memory model.

\begin{lemma}
If \(\valtyped \Gamma \Delta v \tau\), then
\(\valrefineID \Gamma \Delta {\freeze v} v \tau\).
\end{lemma}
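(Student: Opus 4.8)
The plan is to proceed by induction on the typing derivation $\valtyped \Gamma \Delta v \tau$, using the custom well-founded induction principle over types for the struct and union cases (where the recursion is on the field types $\tau_i$). Recall that $\valrefineID \Gamma \Delta {v_1} {v_2} \tau$ abbreviates the identity-renaming instance $\valrefine \Gamma \meminjid \Delta \Delta {v_1} {v_2} \tau$ with $\meminjid\,o \defined \Some{\pair o \nil}$, and that value refinement is built structurally from the refinement judgments on base values, pointers, addresses and bits. The observation driving every case is that $\freeze$ never removes an $\RFrozen$ annotation: it replaces $\RUnfrozen$ by $\RFrozen$ on union references and is the identity on everything else (and on object identifiers and offsets). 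This is exactly the direction in which the address refinement relation — hence the pointer, base-value and value refinement relations — is allowed to differ: a source address may carry $\RFrozen$ where the target carries $\RUnfrozen$.

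For the base-value case I would reduce to the pointer subcase. When $v = \VBase{\bval v}$ with $\bval v$ not a pointer (including $\VIndet$, $\VVoid$, $\VInt$, and $\VByte$, whose fragment bits are already frozen by $\bitvalid$), $\freeze$ acts as the identity and the claim follows from reflexivity of value refinement (the identity-renaming instance of the analogue of Lemma~\ref{lemma:refine_id} for values). When $\bval v = \VPtr p$, we have $\freeze{\VPtr p} = \VPtr{\freeze p}$, and it remains to show that $\freeze p$ refines $p$ at type $\TPtr{\typeof p}$ under $\meminjid$. For $\NULL$ and function pointers this is again reflexivity; for $\Ptr a$ it unfolds to the identity-renaming instance of the address refinement judgment between $\freeze a$ and $a$, which holds because, position by position, $\freeze a$ carries $\RFrozen$ exactly where $a$ carried $\RUnfrozen$ (allowed) and agrees with $a$ elsewhere, while object identifiers and byte offsets are unchanged (so the renaming component is trivial) and aliveness is inherited from $\ptrtyped \Gamma \Delta p {\ptype\sigma}$.

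The array and struct cases are immediate: $\freeze$ distributes over the element/field lists, so the induction hypotheses for the components combine through the structural rules for $\valrefine$, and the well-typedness premises on both sides are discharged by the "refinement implies typing" lemma rather than a simultaneous induction. For $\VUnion t i v$, $\freeze{\VUnion t i v} = \VUnion t i {\freeze v}$, so the same-variant rule plus the induction hypothesis on $v$ suffice. The case needing an extra argument is $\VUnionAll t {\vec v}$: the typing rule forces each $v_i$ to equal $\valunflatten \Gamma {\tau_i}$ of a slice of a valid common bit sequence $\vec b$, and since $\bitvalid$ forces every pointer fragment bit of $\vec b$ to come from a frozen pointer, one shows (by induction on the type) that $\valunflatten$ of a valid bit sequence always yields a value all of whose pointers are frozen; hence $\freeze{v_i} = v_i$, $\freeze{\VUnionAll t {\vec v}} = \VUnionAll t {\vec v}$, and this case reduces to reflexivity as well.

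I expect the main obstacle to be bookkeeping rather than depth: lining up the syntactic effect of $\freeze$ (as lifted from reference segments to references, addresses and values) against the precise side-conditions of the refinement relations — in particular the $\frozen{p_2}$ premise on the bit refinement rule, which is why the $\VByte$ and in-memory-pointer subcases need the prior observation that their fragment bits are already frozen — and establishing the auxiliary fact that well-typed $\VUnionAll$ values, and more generally values produced by $\valunflatten$ from valid bit sequences, are already frozen. None of this requires ideas beyond the memory-tree refinement machinery already developed, but it does require the refinement judgments to be spelled out precisely enough to verify each inclusion.
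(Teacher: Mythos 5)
Your proposal is correct and matches the intended argument: a structural induction on the typing derivation in which the only non-trivial observation is that \(\freezeSym\) changes \(\RUnfrozen\) to \(\RFrozen\) on union annotations, which is precisely the source-to-target direction the refinement judgments permit, with the \(\VByte\) and \(\VUnionAll\) cases discharged by noting that well-typed bit sequences already contain only frozen pointer fragments. No gaps.
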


\begin{theorem}[Constant propagation]
\label{theorem:const_prop}
If \(\memvalid \Gamma \Delta m\) and
\(\addrtyped \Gamma \Delta a \tau\) and
\(\valtyped \Gamma \Delta v \tau\) and
\(\memwritable \Gamma a m\), then there exists a value \(v'\)
with:
\[
	\memlookup \Gamma a {\meminsert \Gamma a v m} = \Some v'
	\quad\textnormal{and}\quad
	\valrefineID \Gamma \Delta {v'} v \tau.
\]
\end{theorem}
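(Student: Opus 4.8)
The plan is to proceed by a case distinction on whether the address $a$ is a byte address, that is, whether $\TType\sigma \neq \ptype\sigma$ when $a = \Addr o \tau {\vec r} i \sigma {\ptype\sigma}$; the two cases behave very differently. If $a$ is \emph{not} a byte address, there is nothing new to prove: Lemma~\ref{lemma:mem_lookup_insert} gives $\memlookup \Gamma a {\meminsert \Gamma a v m} = \Some{\freeze v}$ under exactly the present hypotheses, so we take $v' = \freeze v$, and the lemma immediately preceding this theorem ($\valtyped \Gamma \Delta v \tau$ implies $\valrefineID \Gamma \Delta {\freeze v} v \tau$) supplies the refinement. The content of the theorem is therefore concentrated in the byte-address case, together with the observation that the conclusion still has the uniform shape ``$v'$ refines $v$ under the identity injection'' even though there $v'$ need not equal $\freeze v$.

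For the byte-address case, first note that since $a$ and $v$ share the type $\tau$, the points-to type of $a$ must be $\TType{\TBase{\TInt{\IntType\Unsigned\charrank}}}$, so $\tau = \TBase{\TInt{\IntType\Unsigned\charrank}}$ (the only other points-to type a byte address can carry is $\TAny$, which admits no matching full type $\tau$ and so is vacuous here). I would then unfold $\meminsert \Gamma a v m$ through the byte-address branch of the alter operation on memories (Definition~\ref{definition:cmap_alter}): it flattens the $\sigma$-subobject reached by $\addrref \Gamma a$, overwrites byte $\addrrefbyte \Gamma a$ of that subobject by the flattened form of the stored value (produced via $\ofvalSym \Gamma$), and unflattens the whole subobject again; and unfold the subsequent $\memlookup$ through the byte-address branch of the lookup operation on memories (Definition~\ref{definition:cmap_lookup}), which re-flattens the subobject, extracts byte $\addrrefbyte \Gamma a$, and reads it back as a value of type $\TBase{\TInt{\IntType\Unsigned\charrank}}$. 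Before chasing bits I would check that nothing is undefined along the way: the object identifier and the reference $\addrref \Gamma a$ survive the store, which only forces the union variants already determined by $a$ (the byte-address variant of Lemma~\ref{lemma:cmap_lookup_alter} alluded to in Section~\ref{section:refinements}), and $\memwritable \Gamma a m$ keeps the relevant permissions at least readable, a property $\ofvalSym \Gamma$ preserves; hence the lookup yields $\Some{v'}$ for some $v'$.

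The crux is then a sub-case analysis, by induction on the structure of $\sigma$, on where byte $\addrrefbyte \Gamma a$ lands. Here I would use that every padding region occurring in a memory tree is byte-aligned --- because the layout functions $\sizeofSym \Gamma$, $\fieldsizesSym \Gamma$ and $\alignofSym \Gamma$ are byte-granular and all bit sizes are multiples of $\charbits$ --- so that the byte lies \emph{entirely} within one kind of region. Either it lands in a region that survives the flatten/unflatten round trip verbatim (a base-value leaf, or a union node with unspecified variant), in which case the stored bits come back unchanged, $v' = \freeze v$, and the preceding lemma again applies; or it lands in padding, which the $\pbitindetifySym$ step inside the unflatten resets to all $\BIndet$, so that $v' = \valnew \Gamma {\TBase{\TInt{\IntType\Unsigned\charrank}}}$ (Definition~\ref{definition:val_new}), and the bit-refinement rule that lets $\BIndet$ be refined by any valid bit lifts --- through the definition of value refinement --- to the statement that $\valnew \Gamma \tau$ refines every value of type $\tau$ under the identity injection.

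I expect the main obstacle to be exactly this byte-level bookkeeping in the second case: establishing cleanly the dichotomy ``padding resets to $\BIndet$ / determinate regions round-trip faithfully'' at the level of individual bits, justifying that it suffices to inspect only the single byte $\addrrefbyte \Gamma a$ (which is where byte-alignment of padding is used), and repackaging the resulting $\freeze v$ or $\valnew \Gamma \tau$ outcomes into the value-refinement judgment via its flatten-compatible characterisation. The non-byte case, by contrast, is a one-line assembly of Lemma~\ref{lemma:mem_lookup_insert} with the preceding freeze-refinement lemma and carries no real difficulty.
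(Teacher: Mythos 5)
Your proposal is correct and follows the route the paper intends: the non-byte case is exactly Lemma~\ref{lemma:mem_lookup_insert} combined with the freeze-refinement lemma stated just before the theorem, and the byte case reduces to observing that the stored unsigned-char byte either survives the flatten/unflatten round trip (yielding $\freeze v$) or lands in struct padding and is reset to $\BIndet$ by $\pbitindetifySym$, yielding an indeterminate value that refines $v$ via the $\BIndet$-rule of bit refinement. The only nitpick is that the surviving regions are slightly broader than you state --- bits inside a union with a \emph{specified} variant also survive, since unflattening merely demotes it to $\MUnionAllSym$ without touching its bits --- but your dichotomy ``struct padding is indetified, everything else round-trips'' is exactly right.
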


Copying an object \(w\) by an assignment results in it being converted
to a value \(\toval \Gamma w\) and back.
This conversion makes invalid representations of base values indeterminate.
Copying an object \(w\) byte-wise results in it being converted to bits
\(\ctreeflatten w\) and back.
This conversion makes all variants of unions unspecified.
The following theorem shows that a copy by assignment can be
transformed into a byte-wise copy.

\begin{theorem}[Memcpy]
If \(\ctreetyped \Gamma \Delta w \tau\), then:
\[
	\ctreerefineID \Gamma \Delta {\ofval \Gamma
		{(\fst {\ctreeflatten w})} {(\toval \Gamma w)}}
	{w \ctreerefineIDSym \Gamma \Delta
	\ctreeunflatten \Gamma \tau {\ctreeflatten w}} \tau.
\]
\end{theorem}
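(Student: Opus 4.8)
The plan is to split the chained statement into the two refinements it abbreviates and prove each on its own, then glue them with composability of the tree-level refinement (the analogue of Lemma~\ref{lemma:refine_compose} specialized to the identity renaming, under which $\meminjcompose \meminjid \meminjid = \meminjid$). Set $A \defined \ofval \Gamma {(\fst {\ctreeflatten w})} {(\toval \Gamma w)}$ and $C \defined \ctreeunflatten \Gamma \tau {\ctreeflatten w}$; it suffices to show $\ctreerefineID \Gamma \Delta {A} {w} \tau$ and $\ctreerefineID \Gamma \Delta {w} {C} \tau$. Both are proved by induction on the derivation of $\ctreetyped \Gamma \Delta w \tau$, using the well-founded recursor on well-formed types (as in Definition~\ref{definition:val_unflatten}) for the struct and union cases. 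Throughout, a routine bookkeeping obligation is that the round-tripped trees stay well typed of type $\tau$; this follows from type preservation for $\toval \Gamma$, $\ofval \Gamma$, $\ctreeflattenSym$ and $\ctreeunflattenSym$, and from the lemma that a refinement judgment entails typing of both sides, so no simultaneous induction on typing and refinement is needed.

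For the second refinement, $w$ refines $C = \ctreeunflatten \Gamma \tau {\ctreeflatten w}$: in the base case $\ctreeunflatten \Gamma {\TBase {\btype\tau}} {\ctreeflatten w} = w$ literally, so reflexivity of the tree-level refinement (the analogue of Lemma~\ref{lemma:refine_id}) closes it. In the array and struct cases, flattening concatenates the children---and, for structs, the already-indeterminate padding blocks---and unflattening re-chunks them; using $\length {\ctreeflatten {w_i}} = \bitsizeof \Gamma {\typeof {w_i}}$ one checks the chunk boundaries coincide with the field offsets $z_i$ and sizes $s_i$, so each reconstructed child portion is exactly $\ctreeflatten {w_i}$ and each padding block is returned unchanged by $\pbitindetifySym$ (it is already all $\BIndet$, with permissions untouched); the induction hypothesis on each child and reflexivity on each padding block then give the refinement. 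The two union cases need no induction hypothesis: for $\MUnion t i {w'} {\vec {\ppbit b}}$ the target is $\MUnionAll t {(\ctreeflatten {w'}\,\vec {\ppbit b})}$ and the dedicated $\MUnion$-to-$\MUnionAll$ refinement rule applies, its premises ($w'$ well typed at $\tau_i$, $\vec {\ppbit b}$ all $\BIndet$, the bit-size equation, $\neg\sepunmapped {(\ctreeflatten {w'}\,\vec {\ppbit b})}$, and the bit-refinement of $\ctreeflatten {w'}\,\vec {\ppbit b}$ into itself) being precisely the hypotheses of the typing rule for $\MUnion$ plus reflexivity of bit refinement on valid bits; for $\MUnionAll t {\vec {\ppbit b}}$ the target is literally $w$ again.

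For the first refinement, $A = \ofval \Gamma {(\fst {\ctreeflatten w})} {(\toval \Gamma w)}$ refines $w$: here the order of conversions is the reverse of Lemma~\ref{lemma:to_of_val} and does not collapse to equality, only to a refinement, since invalid bit patterns become indeterminate. The array, struct, and $\MUnion t i {w'} {\vec {\ppbit b}}$ cases again go through componentwise by the induction hypothesis, because $\toval \Gamma$ preserves the array/struct/union shape and the variant index, while $\ofval \Gamma$ re-attaches exactly the permission list it was handed---the relevant subranges of $\fst {\ctreeflatten w}$ being $\fst {\ctreeflatten {w'}}$ and $\fst {\vec {\ppbit b}}$---so padding reappears with $\BIndet$ bits and unchanged permissions, and the target union's $\neg\sepunmapped$ side-condition survives because permissions are untouched. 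The base case reduces to a bitwise claim: $\basevalflatten \Gamma {(\basevalunflatten \Gamma {\btype\tau} {\vec b})}$ refines $\vec b$ whenever $\vec b$ is valid of length $\bitsizeof \Gamma {\TBase {\btype\tau}}$. This follows by case analysis on $\vec b$: if $\vec b$ is a $\{\BBit 0, \BBit 1\}$-sequence and $\btype\tau$ an integer type we get equality from the lemma that the integer encoding functions are inverses; if $\vec b$ is a valid frozen pointer-fragment sequence of the matching points-to type we get equality since the fragments are already frozen; if $\btype\tau = \TInt {\IntType \Unsigned \charrank}$ with $\vec b$ neither all in $\{\BBit 0, \BBit 1\}$ nor all $\BIndet$ we get equality through $\VByte {\vec b}$; and in every remaining case $\basevalunflatten$ yields an indeterminate base value, so the round trip produces an all-$\BIndet$ sequence, which refines $\vec b$ bit by bit by the rule letting $\BIndet$ refine any valid bit. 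The $\MUnionAll t {\vv{\gamma b}}$ case reduces to the analogous cancellation fact that $\valflatten \Gamma {(\valunflatten \Gamma {\TUnion t} {\vec b})}$ refines $\vec b$ bitwise.

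I expect the $\MUnionAll$ case---the $\valflattenSym \circ \valunflattenSym$ round trip---to be the main obstacle, alongside the base-value lemma. The difficulty is that the common bit sequence underlying a $\VUnionAll t {\vec v}$ value is only implicitly present: $\valflattenSym$ must rebuild it as the join $\bigsqcup$ of the padded flattenings of the individual variants, and one has to show both that these joins are defined and that the result is no more defined than the original $\vec b$, i.e.\ refines it bitwise. If the weaker cancellation lemmas announced in Section~\ref{section:refinements} already cover $\valflattenSym \circ \valunflattenSym$ and $\ctreeflattenSym \circ \ctreeunflattenSym$, the theorem is then a short assembly of those lemmas, reflexivity, and the $\MUnion$-to-$\MUnionAll$ rule; otherwise one proves them first, the union case of each being exactly this delicate point.
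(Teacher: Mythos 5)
Your strategy is correct and is essentially the proof carried out in the \Coq{} development (the paper states this theorem without a written proof): split the chain into its two constituent refinements, prove each by induction on the typing derivation, and isolate the two genuinely delicate points---the round trip on base values and the flatten-after-unflatten cancellation-up-to-refinement on values needed for unions of unspecified variant, where both definedness of the join and the fact that the joined sequence still refines the original bits must be established---exactly as you do. The only nit is that the chained judgment abbreviates the \emph{conjunction} of the two refinements, so no appeal to composability of refinements (Lemma~\ref{lemma:refine_compose}) is needed to reassemble them.
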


Unused reads cannot be removed unconditionally in the \CHtwoO{} memory model
because these have side-effects in the form of uses of the
\(\memforceSym \Gamma\) operation that updates effective types.
We show that uses of \(\memforceSym \Gamma\) can be removed for frozen
addresses.

\begin{theorem}
If \(\memvalid \Gamma \Delta m\) and \(\memlookup \Gamma a m \neq \None\) and
\(\frozen a\), then \(\memrefineID \Gamma \Delta {\memforce \Gamma a m} m\).
\end{theorem}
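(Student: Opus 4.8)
The plan is to unfold the definitions, reduce the goal to a statement about memory trees, and prove that statement by induction on references. Throughout, write $g$ for the identity function on memory trees.

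First, $\memlookup \Gamma a m \neq \None$ yields $\cmaplookup \Gamma a m = \Some w$ for some memory tree $w$; in particular $\mlookup {(\addrindex a)} m = \Some {\pair {w_0} \mu}$ for some object tree $w_0$ with $\ctreelookup \Gamma {\addrref \Gamma a} {w_0} \neq \None$, so the alter operation below is well-defined. By Definition~\ref{definition:memory_operations}, the force operation $\memforce \Gamma a m$ coincides with $m$ on every object identifier other than $\addrindex a$, where $w_0$ is replaced by $\ctreealter \Gamma g {\addrref \Gamma a} {w_0}$, while $\mu$, the domain of $m$, and all object types are unchanged (the alter operation with $g$ preserves the type of a tree); hence $\memenvof {(\memforce \Gamma a m)} = \memenvof m$ and $\Delta$ stays compatible. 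Since $\meminjid$ is trivially a refinement and its reference component is always $\nil$ (so no objects are coalesced), unfolding $\memrefineID \Gamma \Delta {\memforce \Gamma a m} m$ reduces the goal, essentially, to a per-object memory-tree refinement: for every unchanged object $o$ of type $\tau_o$ with tree $w_o$ we need $\ctreerefineID \Gamma \Delta {w_o} {w_o} {\tau_o}$, which is reflexivity of memory-tree refinement on well-typed trees (the tree-level counterpart of Lemma~\ref{lemma:refine_id}, applicable since $\memvalid \Gamma \Delta m$); and for the object $\addrindex a$ of type $\tau$ we need $\ctreerefineID \Gamma \Delta {\ctreealter \Gamma g {\addrref \Gamma a} {w_0}} {w_0} {\tau}$. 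As $\frozen a$ implies $\frozen {(\addrref \Gamma a)}$ — the normalized reference only rewrites the trailing array index, leaving union annotations untouched — this last obligation is an instance of the key lemma below.

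\emph{Key lemma.} If $\ctreetyped \Gamma \Delta w \tau$, the reference $\vec r$ is frozen, and $\ctreelookup \Gamma {\vec r} w \neq \None$, then $\ctreerefineID \Gamma \Delta {\ctreealter \Gamma g {\vec r} w} w \tau$. The proof is by induction on $\vec r$. For $\vec r = \nil$ the alter operation is the identity and the claim is reflexivity. For a leading array or struct segment, $\ctreealter \Gamma g {\vec r} {\cdot}$ recurses into the selected child and leaves the siblings (and, for structs, the padding bits) unchanged, so the claim follows from the induction hypothesis together with the congruence rules of memory-tree refinement at array nodes and struct nodes. For a leading union segment $\RUnion i t q$ — where necessarily $q = \RFrozen$ since $\vec r$ is frozen — applied to $\MUnion t j {w'} {\vec {\ppbit b}}$: if $i = j$, the alter operation recurses into $w'$ and the congruence rule for specified-variant union nodes applies; if $i \neq j$, Definition~\ref{definition:ctree_lookup} makes the lookup return $\None$, contradicting the hypothesis, so this branch does not arise. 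This is exactly where $\frozen a$ enters the proof: it eliminates the only branch of the alter operation that would reinterpret an already-committed variant.

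The remaining case — a union segment $\RUnion i t q$ applied to $\MUnionAll t {\vec {\ppbit b}}$, where $\elookup t \Gamma = \vec \tau$ — is the genuinely new one and, I expect, the main obstacle. With $s \defined \bitsizeof \Gamma {\tau_i}$ and $z \defined \bitsizeof \Gamma {(\TUnion t)}$, the alter operation produces $\MUnion t i {w_1} {\sublist s z {\pbitindetify {\vec {\ppbit b}}}}$ with $w_1 \defined \ctreealter \Gamma g {\vec r} {(\ctreeunflatten \Gamma {\tau_i} {\sublist 0 s {\vec {\ppbit b}}})}$, and I must show this refines $\MUnionAll t {\vec {\ppbit b}}$ under $\meminjid$ via the union-weakening refinement rule of Section~\ref{section:refinements}. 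Its premises are discharged as follows: well-typedness of $w_1$ at type $\tau_i$, and with it the side condition $\neg\sepunmapped {(\ctreeflatten {w_1}\,\sublist s z {\pbitindetify {\vec {\ppbit b}}})}$, follow from typing preservation of unflattening (the sub-bits are valid and have length $s$, as $\sizeof \Gamma {\tau_i}\le\sizeof \Gamma {(\TUnion t)}$) and of the alter operation under $g$ — the latter goes through precisely because the successful lookup through the $\MUnionAll$ node gives $\sepunshared {\vec {\ppbit b}}$ and its nested lookup on $\ctreeunflatten \Gamma {\tau_i} {\sublist 0 s {\vec {\ppbit b}}}$ succeeds, so the induction hypothesis applies to $w_1$ as well; the padding $\sublist s z {\pbitindetify {\vec {\ppbit b}}}$ is all $\BIndet$ by construction; and the size equation is immediate from the definitions of $s$ and $z$. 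The delicate premise is the bit-refinement obligation relating $\ctreeflatten {w_1}$ extended with $\sublist s z {\pbitindetify {\vec {\ppbit b}}}$ on the refined side to $\vec {\ppbit b}$ on the other. For this I would establish an auxiliary fact — itself by induction on $\vec r$, independently of the key lemma — that the alter operation with $g$ yields a tree whose flattening bit-refines, under $\meminjid$, the flattening of its input (the only change it ever makes to bits is to replace some concrete bits by $\BIndet$, through padding-indeterminacy when committing a variant); combined with the observation that unflattening followed by flattening likewise only turns padding into $\BIndet$, this reduces the goal to a position-wise check in which every changed position carries $\BIndet$ on the refined side and so refines any valid target bit, while every unchanged position carries a concrete or pointer-fragment bit and is handled by reflexivity of bit-refinement. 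All remaining validity obligations (pbit validity, subtree typing) propagate from $\memvalid \Gamma \Delta m$ through the tree and bit preservation lemmas already present in the development. Combining the cases yields the key lemma and hence the theorem; the argument is uniform in whether $a$ is a byte address, since only $\addrref \Gamma a$ and the success of the lookup along it enter the reasoning.
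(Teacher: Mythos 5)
Your proposal is correct and takes the approach that the paper's formalization evidently takes (the paper states this theorem without a written proof): reduce to the single altered object, induct on the normalized reference, use \(\frozen a\) to rule out the variant-switching branch of the alter operation, and discharge the case where an unspecified variant gets committed via the union-weakening refinement rule together with the observation that identity-altering only ever replaces bits by \(\BIndet\). You correctly locate both the place where frozenness is essential and the genuinely delicate obligation (the bit-refinement premise of the union-weakening rule), and the direction of all refinements is right; I see no gap.
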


\subsection{Reasoning about disjointness}
\label{section:disjointness}

In order to prove soundness of the \CHtwoO{} axiomatic semantics, we were
often in need to to reason about preservation of disjointness under memory
operations~\cite{kre:15:phd}.
This section describes some machinery to ease reasoning about disjointness.
We show that our machinery, as originally developed
in~\cite{kre:14:popl}, extends to any separation algebra.

\begin{definition}
\emph{Disjointness of a list \(\vec x\)}, notation
\(\sepdisjointlist {\vec x}\), is defined as:
\begin{enumerate}
\item \(\sepdisjointlist \nil\)
\item If \(\sepdisjointlist {\vec x}\) and
	\(\sepdisjoint x {\sepunionlist {\vec x}}\),
	then \(\sepdisjointlist {(\cons x {\vec x})}\)
\end{enumerate}
\end{definition}

Notice that \(\sepdisjointlist {\vec x}\) is stronger than having
\(\sepdisjoint {x_i} {x_j}\) for each \(i \neq j\).
For example, using fractional permissions, we do not have
\(\sepdisjointlist {\listlit {0.5,\,0.5,\,0.5}}\) whereas
\(\sepdisjoint {0.5} {0.5}\) clearly holds.
Using disjointness of lists we can for example state the associativity
law (law~\ref{item:sep_associative} of
Definition~\ref{definition:separation_algebra}) in a symmetric way:

\begin{fact}
If \(\sepdisjointlist {(x\;y\;z)}\), then
\(\sepunion x {(\sepunion y z)} = \sepunion {(\sepunion x y)} z\).
\end{fact}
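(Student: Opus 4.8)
The plan is to reduce the claim to a single application of the associativity law~\ref{item:sep_associative} of Definition~\ref{definition:separation_algebra}, after first unfolding \(\sepdisjointlist {(x\;y\;z)}\) into the atomic disjointness facts it encodes. First I would unfold the inductive definition of disjointness of lists: \(\sepdisjointlist {(x\;y\;z)}\) holds precisely when \(\sepdisjointlist {(y\;z)}\) and \(\sepdisjoint x {\sepunionlist {(y\;z)}}\); the former requires \(\sepdisjointlist {(z)}\) together with \(\sepdisjoint y {\sepunionlist {(z)}}\); and \(\sepdisjointlist {(z)}\) requires \(\sepdisjoint z {\sepunionlist \nil} = \sepdisjoint z \sepempty\). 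Recalling \(\sepunionlist \nil = \sepempty\), the disjointness \(\sepdisjoint z \sepempty\) gives \(\sepvalid z\) by law~\ref{item:sep_union_valid}, whence \(\sepunion z \sepempty = z\) by laws~\ref{item:sep_left_id} and~\ref{item:sep_commutative}; so \(\sepunionlist {(z)} = z\) and \(\sepunionlist {(y\;z)} = \sepunion y z\). After this cleanup the hypothesis amounts to the two facts \(\sepdisjoint y z\) and \(\sepdisjoint x {(\sepunion y z)}\).

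Next I would derive \(\sepdisjoint x y\). From \(\sepdisjoint y z\), law~\ref{item:sep_commutative} gives \(\sepdisjoint z y\) and \(\sepunion y z = \sepunion z y\), so \(\sepdisjoint x {(\sepunion z y)}\) and, commuting again, \(\sepdisjoint {(\sepunion z y)} x\). Applying law~\ref{item:sep_associative} to the triple \(z, y, x\) — whose premises \(\sepdisjoint z y\) and \(\sepdisjoint {(\sepunion z y)} x\) are exactly what we have — yields \(\sepdisjoint y x\), hence \(\sepdisjoint x y\), and also \(\sepdisjoint z {(\sepunion y x)}\). Rewriting \(\sepunion y x = \sepunion x y\) by law~\ref{item:sep_commutative} and commuting the disjointness once more turns the latter into \(\sepdisjoint {(\sepunion x y)} z\). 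Now the premises \(\sepdisjoint x y\) and \(\sepdisjoint {(\sepunion x y)} z\) of law~\ref{item:sep_associative} hold for the triple \(x, y, z\), and its equational conclusion is precisely \(\sepunion x {(\sepunion y z)} = \sepunion {(\sepunion x y)} z\), the statement of the Fact.

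There is no genuine conceptual obstacle; the work is entirely bookkeeping of the symmetric form of \(\sepdisjointSym\) and commutativity rewrites, together with checking the side conditions of law~\ref{item:sep_associative} at each of its two uses. The only mildly fiddly point is the stray \(\sepunion z \sepempty\) produced by \(\sepunionlist\) on a singleton list, which must be eliminated using \(\sepvalid z\) (recovered from a disjointness hypothesis via law~\ref{item:sep_union_valid}) before the identity law applies — a step that is trivial or absent depending on the precise definition of \(\sepunionlist\), and in the \Coq{} development is absorbed by the existing simplification lemmas for list folds; one could equivalently characterise \(\sepdisjointlist {(x\;y\;z)}\) directly as ``\(\sepdisjoint y z\) and \(\sepdisjoint x {(\sepunion y z)}\)'' and skip straight to the second paragraph.
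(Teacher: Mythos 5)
Your proof is correct, and it is essentially the intended argument: the paper states this as a Fact without proof precisely because it reduces, after unfolding \(\sepdisjointlist{(x\;y\;z)}\) and normalizing \(\sepunionlist\), to two commutativity-twisted applications of the associativity law~\ref{item:sep_associative}. Your handling of the side conditions (recovering \(\sepvalid z\) from \(\sepdisjoint z \sepempty\) via law~\ref{item:sep_union_valid} to discharge the \(\sepunion z \sepempty\) term, and deriving \(\sepdisjoint x y\) and \(\sepdisjoint{(\sepunion x y)} z\) by first applying law~\ref{item:sep_associative} to the reversed triple) is exactly the bookkeeping required.
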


We define a relation \(\sepdisjointequiv {\vec x_1} {\vec x_2}\) that 
expresses that \(\vec x_1\) and \(\vec x_2\) behave equivalently with
respect to disjointness.

\begin{definition}
\label{definition:sep_disjoint_equiv}
\emph{Equivalence of lists \(\vec x_1\) and \(\vec x_2\) with respect to
disjointness}, notation \(\sepdisjointequiv {\vec x_1} {\vec x_2}\), is defined
as:
\begin{flalign*}
\sepdisjointle {\vec x_1} {\vec x_2} \defined{}&
	\forall x \wsdot \sepdisjointlist{(\cons x {\vec x_1})} \impl
	\sepdisjointlist{(\cons x {\vec x_2})} \\
\sepdisjointequiv {\vec x_1} {\vec x_2} \defined{}&
	\sepdisjointle {\vec x_1} {\vec x_2} \land
	\sepdisjointle {\vec x_2} {\vec x_1}
\end{flalign*}
\end{definition}

It is straightforward to show that \(\sepdisjointleSym\) is reflexive and
transitive, is respected by concatenation of lists, and is preserved by
list containment.
Hence, \(\sepdisjointequivSym\) is an equivalence relation, a congruence with
respect to concatenation of lists, and is preserved by permutations.
The following results (on arbitrary separation algebras) allow us to reason
algebraically about disjointness.

\begin{fact}
\label{fact:disjoint_le_disjoint}
If \(\sepdisjointle {\vec x_1} {\vec x_2}\), then
\(\sepdisjointlist {\vec x_1}\) implies \(\sepdisjointlist {\vec x_2}\).
\end{fact}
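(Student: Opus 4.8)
The plan is to reduce this to the observation that \emph{prepending the unit element $\sepempty$ to a list leaves its disjointness unchanged}, and then to specialise the universally quantified variable in $\sepdisjointleSym$ to $\sepempty$. The auxiliary fact I would establish first is that $\sepdisjointlist {\vec x}$ implies $\sepvalid {\sepunionlist {\vec x}}$, by a short induction on $\vec x$. In the empty case $\sepunionlist \nil = \sepempty$ and $\sepvalid \sepempty$ holds by law~\ref{item:sep_unmapped_empty_valid} (since $\sepunmapped \sepempty$). In the step case $\vec x = \cons y {\vec y}$, the hypothesis $\sepdisjointlist {\vec x}$ unfolds to $\sepdisjointlist {\vec y}$ together with $\sepdisjoint y {\sepunionlist {\vec y}}$, so law~\ref{item:sep_union_valid} gives $\sepvalid {\sepunion y {\sepunionlist {\vec y}}}$, which is $\sepvalid {\sepunionlist {\vec x}}$.

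With this in hand I would prove that, for every list $\vec x$, one has $\sepdisjointlist {\vec x}$ if and only if $\sepdisjointlist {(\cons \sepempty {\vec x})}$. The direction from right to left is immediate, since the second clause defining $\sepdisjointlist {(\cons \sepempty {\vec x})}$ already lists $\sepdisjointlist {\vec x}$ among its premises. For the other direction, assume $\sepdisjointlist {\vec x}$; the auxiliary fact gives $\sepvalid {\sepunionlist {\vec x}}$, hence law~\ref{item:sep_left_id} yields $\sepdisjoint \sepempty {\sepunionlist {\vec x}}$, and the same clause then assembles $\sepdisjointlist {(\cons \sepempty {\vec x})}$ out of this together with $\sepdisjointlist {\vec x}$.

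The fact itself now follows directly. Given $\sepdisjointle {\vec x_1} {\vec x_2}$ and $\sepdisjointlist {\vec x_1}$, the equivalence just proved gives $\sepdisjointlist {(\cons \sepempty {\vec x_1})}$; specialising the universally quantified variable in $\sepdisjointle {\vec x_1} {\vec x_2}$ to $\sepempty$ yields $\sepdisjointlist {(\cons \sepempty {\vec x_2})}$; and the same equivalence, used from right to left, delivers $\sepdisjointlist {\vec x_2}$, as required.

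I do not anticipate a genuine obstacle. The whole argument is structural induction plus unfolding of definitions, and it never inspects the internal structure of individual list elements, which is precisely why it goes through for an arbitrary separation algebra. The one step deserving care is the innocuous-looking claim that ``prepending $\sepempty$ is harmless'': this genuinely depends on the validity-of-union auxiliary fact and on law~\ref{item:sep_left_id}, rather than being immediate from the definition of $\sepdisjointleSym$.
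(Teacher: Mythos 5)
Your proof is correct and is exactly the intended argument: the paper states this as a fact without written proof, and the canonical route is precisely yours---instantiate the universally quantified element in \(\sepdisjointleSym\) at \(\sepempty\), justified by showing that prepending \(\sepempty\) both preserves and reflects list disjointness via \(\sepvalid {\sepunionlist {\vec x}}\) and law~\ref{item:sep_left_id}. Your closing remark is also well taken: the only nontrivial ingredient is \(\sepvalid \sepempty\), which you correctly draw from law~\ref{item:sep_unmapped_empty_valid} (and which in any inhabited instance also follows from laws~\ref{item:sep_left_id} and~\ref{item:sep_union_valid}).
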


\begin{fact}
If \(\sepdisjointequiv {\vec x_1} {\vec x_2}\), then
\(\sepdisjointlist {\vec x_1}\) iff \(\sepdisjointlist {\vec x_2}\).
\end{fact}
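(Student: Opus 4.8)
The plan is to obtain this \emph{immediately} from Fact~\ref{fact:disjoint_le_disjoint} by unfolding the definition of \(\sepdisjointequivSym\). By Definition~\ref{definition:sep_disjoint_equiv}, the hypothesis \(\sepdisjointequiv {\vec x_1} {\vec x_2}\) is by definition the conjunction of \(\sepdisjointle {\vec x_1} {\vec x_2}\) and \(\sepdisjointle {\vec x_2} {\vec x_1}\), so no further case analysis on the two lists is required.

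For the left-to-right implication I would assume \(\sepdisjointlist {\vec x_1}\) and invoke Fact~\ref{fact:disjoint_le_disjoint} with the first conjunct \(\sepdisjointle {\vec x_1} {\vec x_2}\), which yields \(\sepdisjointlist {\vec x_2}\). For the right-to-left implication I would assume \(\sepdisjointlist {\vec x_2}\) and invoke Fact~\ref{fact:disjoint_le_disjoint} with the second conjunct \(\sepdisjointle {\vec x_2} {\vec x_1}\), obtaining \(\sepdisjointlist {\vec x_1}\). Combining the two implications gives the stated biconditional.

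I do not expect any real obstacle; the present Fact is essentially a one-line corollary of the previous one. The only place where genuine work is hidden is Fact~\ref{fact:disjoint_le_disjoint} itself, which rests on the observation that the universally quantified witness \(x\) in the definition of \(\sepdisjointleSym\) may be instantiated with \(\sepempty\): since \(\sepdisjoint \sepempty {\sepunionlist {\vec x}}\) and \(\sepunion \sepempty {\sepunionlist {\vec x}} = \sepunionlist {\vec x}\) hold once \(\sepunionlist {\vec x}\) is valid (law~\ref{item:sep_left_id}, with validity coming from law~\ref{item:sep_union_valid}), the proposition \(\sepdisjointlist {(\cons \sepempty {\vec x})}\) is equivalent to \(\sepdisjointlist {\vec x}\). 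Given that Fact~\ref{fact:disjoint_le_disjoint} is already available, the passage from \(\sepdisjointleSym\) to preservation of disjointness of lists, and hence to the equivalence stated here, is purely propositional.
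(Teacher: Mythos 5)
Your proof is correct and is exactly the argument the paper intends: the statement is an immediate symmetrization of Fact~\ref{fact:disjoint_le_disjoint}, obtained by unfolding \(\sepdisjointequiv {\vec x_1} {\vec x_2}\) into the two inclusions \(\sepdisjointle {\vec x_1} {\vec x_2}\) and \(\sepdisjointle {\vec x_2} {\vec x_1}\) and applying that fact once in each direction. Your aside about instantiating the quantified \(x\) with \(\sepempty\) to prove Fact~\ref{fact:disjoint_le_disjoint} itself is also the right idea.
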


\begin{theorem}
\label{theorem:sep_disjoint}
We have the following algebraic properties:
\begin{equation*}
\begin{aligned}
\sepempty \sepdisjointequivSym{}& \nil \\
\sepunion {x_1} {x_2} \sepdisjointequivSym{}&
	x_1\, x_2
	&&\textnormal{provided that \(\sepdisjoint {x_1} {x_2}\)} \\
\sepunionlist {\vec x} \sepdisjointequivSym{}&
	\vec x
	&&\textnormal{provided that \(\sepdisjointlist {\vec x}\)} \\
x_2 \sepdisjointequivSym{}&
	x_1\,(\sepdifference {x_2} {x_1})
	&&\textnormal{provided that \(\sepsubseteq {x_1} {x_2}\)}
\end{aligned}
\end{equation*}
\end{theorem}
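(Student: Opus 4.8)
The plan is to reduce all four equivalences to the single binary case $\sepunion {x_1} {x_2} \sepdisjointequivSym x_1\, x_2$, which I in turn establish by unfolding the inductive definition of list disjointness. First I would record two bookkeeping facts that follow immediately from Definition~\ref{definition:separation_algebra}: (i) $\sepdisjoint x \sepempty$ holds iff $\sepvalid x$ (the forward direction is law~\ref{item:sep_union_valid}, the reverse is law~\ref{item:sep_left_id} together with commutativity, law~\ref{item:sep_commutative}); and (ii) whenever $\sepvalid y$ we have $\sepunion y \sepempty = y$, so that $\sepunionlist {\listlit y} = y$ and $\sepdisjointlist {\listlit y}$ both hold. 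These are the only non-syntactic ingredients needed to evaluate singleton and two-element iterated unions.

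For the binary case, observe first that the side condition $\sepdisjoint {x_1} {x_2}$ gives $\sepvalid {x_2}$ and $\sepvalid {(\sepunion {x_1} {x_2})}$ by law~\ref{item:sep_union_valid}. Fixing an arbitrary $x$, I would unfold the definitions of $\sepdisjointlist {\listlit{x, \sepunion {x_1} {x_2}}}$ and $\sepdisjointlist {\listlit{x, x_1, x_2}}$; using (i) and (ii), the former reduces to $\sepdisjoint x {(\sepunion {x_1} {x_2})}$ and the latter to $\sepdisjoint x {(\sepunion {x_1} {x_2})} \land \sepdisjoint {x_1} {x_2}$. Since $\sepdisjoint {x_1} {x_2}$ is assumed, the two conditions are equivalent, and this is precisely both of the inclusions $\sepdisjointle {\sepunion {x_1} {x_2}} {x_1\, x_2}$ and $\sepdisjointle {x_1\, x_2} {\sepunion {x_1} {x_2}}$. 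The empty case $\sepempty \sepdisjointequivSym \nil$ is the degenerate instance of the same unfolding, since $\sepdisjointlist {\listlit{x, \sepempty}} \iff \sepvalid x \iff \sepdisjointlist {\listlit x}$ by (i) and (ii).

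The general list case $\sepunionlist {\vec x} \sepdisjointequivSym \vec x$ (for $\sepdisjointlist {\vec x}$) then follows by induction on $\vec x$: writing $\vec x = \cons {x_0} {\vec y}$, the hypothesis $\sepdisjointlist {\vec x}$ supplies $\sepdisjoint {x_0} {(\sepunionlist {\vec y})}$ and $\sepdisjointlist {\vec y}$, so the binary case yields $\sepunion {x_0} {(\sepunionlist {\vec y})} \sepdisjointequivSym x_0\, (\sepunionlist {\vec y})$, the induction hypothesis yields $\sepunionlist {\vec y} \sepdisjointequivSym \vec y$, and I glue these together using transitivity of $\sepdisjointequivSym$ and its congruence with respect to list concatenation (both asserted in the text just before the theorem), recalling $\sepunionlist {\vec x} = \sepunion {x_0} {(\sepunionlist {\vec y})}$. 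Finally, $x_2 \sepdisjointequivSym x_1\, (\sepdifference {x_2} {x_1})$ falls out of law~\ref{item:sep_union_difference}: $\sepsubseteq {x_1} {x_2}$ yields $\sepdisjoint {x_1} {(\sepdifference {x_2} {x_1})}$ and $\sepunion {x_1} {(\sepdifference {x_2} {x_1})} = x_2$, so the binary case applies with $x_2$ rewritten as that union.

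I do not expect a genuinely hard step here: the work lies entirely in being careful with the inductive unfolding of list disjointness and with the validity premises hidden in identities such as $\sepunionlist {\listlit y} = y$. The one point worth double-checking is that the asymmetric premises of the associativity law (law~\ref{item:sep_associative}) are never actually invoked --- they are not, since every unfolding above is phrased directly in terms of $\sepunion {x_1} {x_2}$ and no reassociation is performed --- and that the auxiliary facts about $\sepdisjointleSym$ (reflexivity, transitivity, congruence under concatenation) really are available in the generality of an arbitrary separation algebra, as the discussion preceding the theorem claims.
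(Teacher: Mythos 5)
Your proposal is correct: the paper gives no proof of this theorem in the text (it is discharged in the \Coq{} development), and the route you take --- reduce everything to the binary case \(\sepunion{x_1}{x_2} \sepdisjointequivSym x_1\,x_2\) by unfolding the inductive definition of \(\sepdisjointlistSym\), then handle the general list by induction using transitivity and congruence of \(\sepdisjointequivSym\), and the difference case via law~\ref{item:sep_union_difference} --- is exactly the expected argument. Your bookkeeping facts (i) and (ii) and the observation that the asymmetric associativity premises are never invoked are the right points to be careful about, and they all check out against Definition~\ref{definition:separation_algebra}.
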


In Section~\ref{section:memory_separation} we show that we have similar
properties as the above for the specific operations of our memory model.

\subsection{The memory as a separation algebra}
\label{section:memory_separation}

We show that the \CHtwoO{} memory model is a separation algebra,
and that the separation algebra operations interact
appropriately with the memory operations that we have defined in
Section~\ref{section:memory}.

In order to define the separation algebra relations and operations on memories,
we first define these on memory trees.
Memory trees do not form a separation algebra themselves due to the
absence of a unique \(\sepempty\) element (memory trees have a distinct
identity element \(\ctreenew \Gamma \tau\) for each type \(\tau\), see
Definition~\ref{definition:ctree_new}).
The separation algebra of memories is then defined by lifting the definitions
on memory trees to memories (which are basically finite functions to
memory trees).

\begin{definition}
\label{definition:ctree_valid}
The predicate \(\sepvalidSym : \mtree \to \Prop\) is inductively defined as:
\renewcommand{\ScoreOverhang}{0.1em}
\begin{gather*}
\AXC{\(\sepvalid {\vec {\ppbit b}}\)}
\UIC{\(\sepvalid {(\MBase {\btype \tau} {\vec {\ppbit b}})}\)}
\DP\qquad
\AXC{\(\sepvalid {\vec w}\)}
\UIC{\(\sepvalid {(\MArray \tau {\vec w})}\)}
\DP\qquad
\AXC{\(\sepvalid {\vec w}\)}
\AXC{\(\sepvalid {\vv {\vec {\ppbit b}}}\)}
\BIC{\(\sepvalid {(\MStruct t {\vv{w\vec {\ppbit b}}})}\)}
\DP \\[0.5em]
\AXC{\strut\(\sepvalid w\)}
\AXC{\(\sepvalid {\vec {\ppbit b}}\)}
\AXC{\(\neg\sepunmapped {(\ctreeflatten w\,\vec {\ppbit b}})\)}
\TIC{\(\sepvalid {(\MUnion t i w {\vec {\ppbit b}})}\)}
\DP\qquad
\AXC{\strut\(\sepvalid {\vec {\ppbit b}}\)}
\UIC{\(\sepvalid {(\MUnionAll t {\vec {\ppbit b}})}\)}
\DP
\end{gather*}
\end{definition}

\begin{fact}
If \(\ctreetyped \Gamma \Delta w \tau\), then \(\sepvalid w\).
\end{fact}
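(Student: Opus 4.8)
The plan is to prove this by induction on the derivation of $\ctreetyped \Gamma \Delta w \tau$ (Definition~\ref{definition:ctree_typed}), using the mutual induction principle that simultaneously treats lists of memory trees, since the typing rules for arrays and for the fields of structs and unions appeal to the pointwise lifting $\ctreetyped \Gamma \Delta {\vec w} {\vec\tau}$. The five clauses of the inductive predicate $\sepvalidSym$ (Definition~\ref{definition:ctree_valid}) are in one-to-one correspondence with the five shapes of memory trees handled by the typing judgment, so once the leaves are dealt with the argument is essentially a transcription of premises.

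The only non-structural ingredient lives at the level of permission annotated bits: I would first establish the auxiliary fact that $\bitvalid \Gamma \Delta {\ppbit b}$ implies $\sepvalid {\ppbit b}$, and hence, lifting pointwise, that a list of valid permission annotated bits is $\sepvalidSym$. This is immediate by unfolding definitions: the validity rule for $\pair \gamma b$ (stated after Definition~\ref{definition:pbit}) has premises $\sepvalid \gamma$ and ``$b = \BIndet$ whenever $\sepunmapped \gamma$'', which is verbatim the definition of $\sepvalidSym$ for the tagged separation algebra $\tagged \perm \BIndet \bit$ with default tag $\BIndet$ (Definition~\ref{definition:sep_tagged}).

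With that fact in hand the cases are routine. For $\MBase {\btype\tau} {\vec {\ppbit b}}$ the typing premise on $\vec {\ppbit b}$ gives $\sepvalid {\vec {\ppbit b}}$ directly. For $\MArray \tau {\vec w}$ and $\MStruct t {\vv{w\vec {\ppbit b}}}$ the induction hypotheses give $\sepvalid {\vec w}$, and in the struct case the premise that the padding bits are valid makes them $\sepvalidSym$ via the bit fact. For $\MUnion t i w {\vec {\ppbit b}}$ the induction hypothesis gives $\sepvalid w$, the bit fact gives $\sepvalid {\vec {\ppbit b}}$, and the remaining premise $\neg\sepunmapped {(\ctreeflatten w\,\vec {\ppbit b})}$ of the $\sepvalidSym$ rule is literally one of the premises of the corresponding typing rule --- which, as noted after Definition~\ref{definition:ctree_typed}, is precisely why that side condition was built into the typing judgment. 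For $\MUnionAll t {\vec {\ppbit b}}$ only $\sepvalid {\vec {\ppbit b}}$ is needed, again from the bit fact.

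I do not expect a genuinely hard step: the real design work was done when the side conditions of the typing judgment and of $\sepvalidSym$ were chosen to coincide. The only mild care required is organising the simultaneous induction over memory trees and over lists of memory trees (in \Coq{} this reuses the custom recursor mentioned after Definition~\ref{definition:env_valid}) and keeping the pointwise liftings to lists of bit validity, $\sepvalidSym$ and $\sepunmappedSym$ straight. It is also convenient --- though not needed for this statement --- to prove by the same induction the companion fact that $\sepvalid w$ implies $\sepvalid {(\ctreeflatten w)}$, since it is reused elsewhere in Section~\ref{section:memory_separation}.
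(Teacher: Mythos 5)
Your proof is correct and is exactly the routine structural induction the paper has in mind (it states this as a \emph{Fact} and omits the proof as immediate): the premises of each rule of Definition~\ref{definition:ctree_typed} subsume the premises of the corresponding clause of Definition~\ref{definition:ctree_valid}, with the leaf case discharged by observing that $\bitvalid \Gamma \Delta {\pair \gamma b}$ literally contains the definition of $\sepvalid {\pair \gamma b}$ in the tagged separation algebra. No gaps.
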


The \(\sepvalidSym\) predicate specifies the subset of memory trees on which
the separation algebra structure is defined.
The definition basically lifts the \(\sepvalidSym\) predicate from the
leaves to the trees.
The side-condition \(\neg\sepunmapped {(\ctreeflatten w\,\vec{\ppbit b})}\) on
\(\MUnion t i w {\vec {\ppbit b}}\) memory trees ensures canonicity, unions
whose permissions are unmapped cannot be accessed and are thus kept in
unspecified variant.
Unmapped unions \(\MUnionAll t {\vec {\ppbit b}}\) can be combined with
other unions using \(\sepunionSym\).
The rationale for doing so will become clear in the context of the
separation logic in the author's PhD thesis~\cite{kre:15:phd}.

\begin{definition}
The relation \({\sepdisjointSym} : \mtree \to \mtree \to \Prop\) is inductively
defined as:
\renewcommand{\ScoreOverhang}{0.2em}
\begin{gather*}
\AXC{\(\sepdisjoint {\vec {\ppbit b}_1} {\vec {\ppbit b}_2}\)}
\UIC{\(\sepdisjoint {\MBase {\btype \tau} {\vec {\ppbit b}_1}}
	{\MBase {\btype \tau} {\vec {\ppbit b}_2}}\)}
\DP \qquad
\AXC{\(\sepdisjoint {\vec w_1} {\vec w_2}\)}
\UIC{\(\sepdisjoint {\MArray \tau {\vec w_1}}
	{\MArray \tau {\vec w_2}}\)}
\DP \qquad
\AXC{\(\sepdisjoint {\vec w_1} {\vec w_2}\)}
\AXC{\(\sepdisjoint {\vv{\vec {\ppbit b}_1}} {\vv{\vec {\ppbit b}_2}}\)}
\BIC{\(\sepdisjoint {\MStruct t {\vv{w_1\vec {\ppbit b}_1}}}
	{\MStruct t {\vv{w_2\vec {\ppbit b}_2}}}\)}
\DP
\\[0.5em]
\AXC{\(\sepdisjoint {w_1} {w_2}\)}
\AXC{\(\sepdisjoint {\vec {\ppbit b}_1} {\vec {\ppbit b}_2}\)}
\AXC{\(\neg\sepunmapped {(\ctreeflatten{w_1}\;\vec {\ppbit b}_1)}\)}
\AXC{\(\neg\sepunmapped {(\ctreeflatten{w_2}\;\vec {\ppbit b}_2)}\)}
\QIC{\(\sepdisjoint {\MUnion t i {w_1} {\vec {\ppbit b}_1}}
	{\MUnion t i {w_2} {\vec {\ppbit b}_2}}\)}
\DP
\\[0.5em]
\AXC{\(\strut\sepdisjoint {\vec {\ppbit b}_1} {\vec {\ppbit b}_2}\)}
\UIC{\(\sepdisjoint {\MUnionAll t {\vec {\ppbit b}_1}}
	{\MUnionAll t {\vec {\ppbit b}_2}}\)}
\DP \qquad
\renewcommand{\defaultHypSeparation}{\hskip 0.3em}
\AXC{\(\strut\sepdisjoint {\ctreeflatten{w_1}\; \vec {\ppbit b}_1}
	{\vec {\ppbit b}_2}\)}
\AXC{\(\sepvalid {w_1}\)}
\AXC{\(\neg\sepunmapped {(\ctreeflatten{w_1}\;\vec {\ppbit b}_1)}\)}
\AXC{\(\sepunmapped {\vec {\ppbit b}_2}\)}
\QIC{\(\sepdisjoint {\MUnion t i {w_1} {\vec {\ppbit b}_1}}
	{\MUnionAll t {\vec {\ppbit b}_2}}\)}
\DP
\\[0.5em]
\AXC{\(\strut\sepdisjoint {\vec {\ppbit b}_1}
	{\ctreeflatten{w_2}\; \vec {\ppbit b}_2}\)}
\AXC{\(\sepvalid {w_2}\)}
\AXC{\(\sepunmapped {\vec {\ppbit b}_1}\)}
\AXC{\(\neg\sepunmapped {(\ctreeflatten{w_2}\;\vec {\ppbit b}_2)}\)}
\QIC{\(\sepdisjoint {\MUnionAll t {\vec {\ppbit b}_1}}
	{\MUnion t i {w_2} {\vec {\ppbit b}_2}}\)}
\DP
\end{gather*}
\end{definition}

\begin{definition}
\label{definition:ctree_union}
The operation \({\sepunionSym} : \mtree \to \mtree \to \mtree\) is defined as:
\begin{equation*}
\begin{aligned}
\sepunion {\MBase {\btype \tau} {\vec {\ppbit b}_1}&{}}
	{\MBase {\btype \tau} {\vec {\ppbit b}_2}} &\defined{}&
	\MBase {\btype \tau} {(\sepunion {\vec {\ppbit b}_1} {\vec {\ppbit b}_2})} \\
\sepunion {\MArray \tau {\vec w_1}&{}}
	{\MArray \tau {\vec w_2}} &\defined{}&
	\MArray \tau {(\sepunion {\vec w_1} {\vec w_2})} \\
\sepunion {\MStruct t {\vv{w_1\vec {\ppbit b}_1}}&{}}
	{\MStruct t {\vv{w_2\vec {\ppbit b}_2}}} &\defined{}&
	\MStruct t {(\sepunion {\vv{w_1\vec {\ppbit b}_1}}
		{\vv{w_2\vec {\ppbit b}_2}})} \\
\sepunion {\MUnion t i {w_1} {\vec {\ppbit b}_1}&{}}
	{\MUnion t i {w_2} {\vec {\ppbit b}_2}} &\defined{}&
	\MUnion t i {\sepunion {w_1} {w_2}}
		{\sepunion {\vec {\ppbit b}_1} {\vec {\ppbit b}_2}} \\
\sepunion {\MUnionAll t {\vec {\ppbit b}_1}&{}}
	{\MUnionAll t {\vec {\ppbit b}_1}} &\defined{}&
	\MUnionAll t {(\sepunion {\vec {\ppbit b}_1} {\vec {\ppbit b}_2})} \\
\sepunion {\MUnion t i {w_1} {\vec {\ppbit b}_1}&{}}
	{\MUnionAll t {\vec {\ppbit b}_2}} &\defined{}&
	\ctreemergeInfix \sepunionSym
		{\MUnion t i {w_1} {\vec {\ppbit b}_1}} {\vec {\ppbit b}_2} \\
\sepunion {\MUnionAll t {\vec {\ppbit b}_1}&{}} 
	{\MUnion t i {w_2} {\vec {\ppbit b}_2}} &\defined{}&
	\ctreemergeInfix \sepunionSym
		{\MUnion t i {w_2} {\vec {\ppbit b}_2}} {\vec {\ppbit b}_1}
\end{aligned}
\end{equation*}
In the last two clauses, \(\ctreemergeInfix \sepunionSym w \vec {\ppbit b}\)
is a modified version of the memory tree \(w\) in which the elements on the
leaves of \(w\) are zipped with \(\vec {\ppbit b}\) using the \(\sepunionSym\)
operation on permission annotated bits
(see Definitions~\ref{definition:ctree_merge}
and~\ref{definition:sep_tagged}).
\end{definition}

The definitions of \(\sepvalidSym\), \(\sepdisjointSym\) and
\(\sepunionSym\) on memory trees satisfy all laws of a separation algebra (see
Definition~\ref{definition:separation_algebra}) apart from those involving
\(\sepempty\).
We prove the cancellation law explicitly since it involves the aforementioned
side-conditions on unions.

\begin{lemma}
If \(\sepdisjoint {w_3} {w_1}\) and \(\sepdisjoint {w_3} {w_2}\) then:
\[
	\sepunion {w_3} {w_1} = \sepunion {w_1} {w_2}
	\quad\textnormal{implies}\quad
	w_1 = w_2.
\]
\end{lemma}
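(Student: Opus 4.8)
We prove the \emph{cancellation law} for memory trees: if $\sepdisjoint {w_3} {w_1}$, $\sepdisjoint {w_3} {w_2}$ and $\sepunion {w_3} {w_1} = \sepunion {w_3} {w_2}$, then $w_1 = w_2$. Before the induction I would record two routine preliminaries. First, cancellation for finite lists of permission annotated bits: since $\tagged \perm \BIndet \bit$ is a separation algebra, law~\ref{item:sep_cancel} of Definition~\ref{definition:separation_algebra} lifts pointwise to lists. Second, the flattening homomorphism $\ctreeflatten {\sepunion w {w'}} = \sepunion {\ctreeflatten w} {\ctreeflatten {w'}}$ for disjoint $w, w'$, together with $\ctreeflatten {\ctreemergeInfix \sepunionSym w {\vec {\ppbit b}}} = \sepunion {\ctreeflatten w} {\vec {\ppbit b}}$; these also show that $\sepunionSym$ and the leafwise merge $\ctreemergeInfix \sepunionSym {(\cdot)} {(\cdot)}$ preserve the shape of a tree, so that equating two unions with a shared operand forces the remaining operands to have the same constructor and shape. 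From these I would extract a small \emph{merge-cancellation} lemma: with the relevant leafwise disjointness in hand, $\ctreemergeInfix \sepunionSym {w_1} {\vec {\ppbit b}} = \ctreemergeInfix \sepunionSym {w_2} {\vec {\ppbit b}}$ implies $w_1 = w_2$, and $\ctreemergeInfix \sepunionSym w {\vec {\ppbit b}_1} = \ctreemergeInfix \sepunionSym w {\vec {\ppbit b}_2}$ implies $\vec {\ppbit b}_1 = \vec {\ppbit b}_2$; both are instances of leaf-by-leaf pbit cancellation after shape preservation has aligned the leaves.

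The main argument is by structural induction on $w_3$, with an inner case analysis on $w_1$ and $w_2$ constrained by the two disjointness hypotheses. If $w_3 = \MBase {\btype\tau} {\vec{\ppbit b}_3}$, disjointness forces $w_1$ and $w_2$ to be base trees over the same $\btype\tau$, and the equation collapses to a pbit-list equation handled by the first preliminary. If $w_3 = \MArray \tau {\vec w_3}$ or $w_3 = \MStruct t {\vv{w_3 \vec {\ppbit b}_3}}$, the operands have equal shape, the equation decomposes into one equation per subtree --- discharged by the induction hypothesis applied to a strictly smaller subtree of $w_3$ --- and, in the struct case, one equation per padding list, discharged as above; reassembly gives $w_1 = w_2$.

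The union cases carry the real content. Suppose first $w_3 = \MUnion t i {w_3'} {\vec{\ppbit b}_3}$. By the disjointness rules each of $w_1, w_2$ is then either a specified-variant union with the \emph{same} variant $i$, or an unspecified-variant union $\MUnionAll t {\vec {\ppbit c}}$. The non-mixed combinations unfold --- directly for two specified-variant unions (where the equation splits into a subtree equation closed by the induction hypothesis on $w_3'$ and a bit-list equation closed by pbit-list cancellation) and via the merge for two unspecified-variant unions (closed by merge-cancellation) --- with the required leafwise disjointness read off from the side-conditions of the union disjointness rules and the flattening homomorphism. The genuinely mixed combinations, in which one of $w_1, w_2$ is a specified-variant union $\MUnion t k {w'} {\vec{\ppbit b}'}$ and the other an unspecified-variant union $\MUnionAll t {\vec {\ppbit c}}$, turn out to be \emph{vacuous}: flattening both sides of the hypothesised equation and cancelling shows $\vec {\ppbit c} = \ctreeflatten {w'} \, \vec{\ppbit b}'$, which satisfies $\neg\sepunmapped {\vec {\ppbit c}}$ by the canonicity side-condition of $\MUnion t k {w'} {\vec{\ppbit b}'}$, whereas the disjointness rule producing $\MUnionAll t {\vec {\ppbit c}}$ requires $\sepunmapped {\vec {\ppbit c}}$ --- a contradiction. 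The remaining case $w_3 = \MUnionAll t {\vec{\ppbit b}_3}$ is similar: two unspecified-variant operands reduce to pbit-list cancellation; a mix of a specified- and an unspecified-variant union for $w_1$ versus $w_2$ is impossible, since the two sides then carry different top-level constructors; and if both $w_1, w_2$ are specified-variant unions --- whose variants need not agree a priori, as $w_3$ fixes none --- the two flavours of the merge-cancellation lemma give $w_1' = w_2'$ and $\vec{\ppbit b}_1 = \vec{\ppbit b}_2$, so the variants coincide and $w_1 = w_2$.

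I expect the bookkeeping around the merge operation $\ctreemergeInfix \sepunionSym {(\cdot)} {(\cdot)}$ and the tagged separation algebra (Definition~\ref{definition:sep_tagged}) to be the main obstacle, rather than any deep idea: proving merge-cancellation in the required form means checking that the union on a tagged pbit handles the indeterminate default tag so that leafwise cancellation is genuinely licensed once leafwise disjointness is available, and it means propagating the various canonicity side-conditions about (un)mapped permissions through the flattening steps precisely enough that exactly the mixed union combinations are eliminated and no well-formed case is discarded by accident.
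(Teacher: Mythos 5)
Your proposal is correct and follows essentially the same route as the paper: induction over the two disjointness derivations with a case analysis on the tree constructors, where the only non-routine cases are the mixed specified-/unspecified-variant union combinations, eliminated exactly as you do by flattening, cancelling in the separation algebra of permission-annotated bits, and deriving a contradiction from the fact that one of the two equal flattened bit lists is unmapped while the other is not. The paper presents only that mixed case explicitly and leaves the remaining (routine) cases to the reader.
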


\begin{proof}
By induction on the derivations \(\sepdisjoint {w_3} {w_1}\) and
\(\sepdisjoint {w_3} {w_2}\).
We consider one case:
\begin{equation*}
\renewcommand{\defaultHypSeparation}{\hskip.4em}
\AXC{\(\sepdisjoint {\MUnion t i {w_3} {\vec {\ppbit b}_3}}
	{\MUnion t i {w_1} {\vec {\ppbit b}_1}}\)}
\AXC{\(\sepdisjoint {\MUnion t i {w_3} {\vec {\ppbit b}_3}}
	{\MUnionAll t {\vec {\ppbit b}_2}}\)}
\noLine
\BIC{\(\sepunion {\MUnion t i {w_3} {\vec {\ppbit b}_3}}
	{\MUnion t i {w_1} {\vec {\ppbit b}_1}}
	= \sepunion {\MUnion t i {w_3} {\vec {\ppbit b}_3}}
	{\MUnionAll t {\vec {\ppbit b}_2}}\)}
\UIC{\(\MUnion t i {w_1} {\vec {\ppbit b}_1}
	= \MUnionAll t {\vec {\ppbit b}_2}\)}
\DP
\end{equation*}
Here, we have \(\sepunion {\ctreeflatten {w_3}\;\vec {\ppbit b}_3}
{\ctreeflatten {w_1}\;\vec {\ppbit b}_1}
= \sepunion {\ctreeflatten {w_3}\;\vec {\ppbit b}_3} {\vec {\ppbit b}_2}\) by
assumption, and therefore
\(\ctreeflatten {w_1}\;\vec {\ppbit b}_1 = \vec {\ppbit b}_2\) by the
cancellation law of a separation algebra.
However, by assumption we also have
\(\neg\sepunmapped {(\ctreeflatten {w_1}\;\vec {\ppbit b}_1)}\)
and 
\(\sepunmapped {\vec {\ppbit b}_2}\), which contradicts
\(\ctreeflatten {w_1}\;\vec {\ppbit b}_1 = \vec {\ppbit b}_2\).
\end{proof}

\begin{definition}
\label{definition:cmap_separation}
The \emph{separation algebra of memories} is defined as:
\begin{flalign*}
\sepvalid m \defined{}& \forall o\,w\,\mu \wsdot
	\mlookup o m = \Some {\pair w \mu}
	\impl (\sepvalid w \textnormal{ and not } \ctreeflatten w
	\textnormal{ all } \pair \sepempty \BIndet) \\\
\sepdisjoint {m_1} {m_2} \defined{}& \forall o \wsdot P\,m_1\,m_2\,o \\ %\hspace{5.4em}
%\sepempty \defined \emptyset \quad
\sepunion {m_1} {m_2} \defined{}& \lambda o \wsdot f\,m_1\,m_2\,o
\end{flalign*}
\(P : \mem \to\mem\to\memindex\to\Prop\) and
\(f : \mem \to\mem\to\memindex\to\option\mtree\) are defined
by case analysis on \(\mlookup o {m_1}\) and \(\mlookup o {m_2}\):
\begin{equation*}
\begin{array}{l|l|l|l}
\mlookup o {m_1} & \mlookup o {m_2} &
	P\,m_1\,m_2\,o & f\,m_1\,m_2\,o \\ \hline \hline
\Some {\pair {w_1} \mu} & \Some {\pair {w_1} \mu}
	& \sepdisjoint {w_1} {w_2},
	\textnormal{ not } \ctreeflatten {w_1}
		\textnormal{ all } \pair \sepempty \BIndet
	\textnormal{ and not }
	\ctreeflatten {w_2} \textnormal{ all } \pair \sepempty \BIndet
	& \pair {\sepunion {w_1} {w_2}} \mu \\
\Some {\pair {w_1} \mu} & \None
	& \sepvalid {w_1} \textnormal{ and not } \ctreeflatten {w_1}
		\textnormal{ all } \pair \sepempty \BIndet
	& \Some {\pair {w_1} \mu} \\
\None & \Some {\pair {w_2} \mu}
	& \sepvalid {w_2} \textnormal{ and not } \ctreeflatten {w_2}
		\textnormal{ all } \pair \sepempty \BIndet
	& \Some {\pair {w_2} \mu} \\
\Some {\tau_1} & \None & \True & \Some {\tau_1} \\
\None & \Some{\tau_2} & \True & \Some {\tau_2} \\
\None & \None & \True & \None \\
\multicolumn{2}{c|}{\textnormal{otherwise}} & \False & \None
\end{array}
\end{equation*}
The definitions of the omitted relations and operations are as expected.
\end{definition}

The emptiness conditions ensure canonicity.
Objects that solely consist of indeterminate bits with \(\sepempty\) permission
are meaningless and should not be kept at all.
These conditions are needed for cancellativity.

\begin{fact}
If \(\memvalid \Gamma \Delta m\), then \(\sepvalid m\).
\end{fact}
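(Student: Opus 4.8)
The plan is to unfold the two definitions involved and observe that the statement is an immediate consequence of clauses (2c) and (2d) of \(\memvalid \Gamma \Delta m\) (Definition~\ref{definition:cmap_valid}) together with the preceding fact that \(\ctreetyped \Gamma \Delta w \tau\) implies \(\sepvalid w\) on memory trees. No induction is needed at this level: all the inductive work has already been done in that earlier fact.

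Concretely, recall that \(\sepvalid m\) (Definition~\ref{definition:cmap_separation}) asserts that for every object identifier \(o\), whenever \(\mlookup o m = \Some {\pair w \mu}\) we have both \(\sepvalid w\) and that \(\ctreeflatten w\) is not entirely \(\pair \sepempty \BIndet\). So I would fix such an \(o\), \(w\), and \(\mu\), and appeal to clause (2) of \(\memvalid \Gamma \Delta m\): this yields a type \(\tau\) with \(\ctreetyped \Gamma \Delta w \tau\) (clause 2c) and the hypothesis that \(\ctreeflatten w\) is not all \(\pair \sepempty \BIndet\) (clause 2d). The latter is exactly the second conjunct required. For the first conjunct, I would apply the preceding fact to \(\ctreetyped \Gamma \Delta w \tau\) to conclude \(\sepvalid w\). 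Since \(o\) was arbitrary, \(\sepvalid m\) follows.

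There is essentially no obstacle: the real content was discharged earlier, namely in establishing that the inductively defined \(\sepvalidSym\) predicate on memory trees is implied by well-typedness — which in turn rests on the side condition \(\neg\sepunmapped{(\ctreeflatten w\,\vec{\ppbit b})}\) baked into both the typing rule and the validity rule for \(\MUnion t i w {\vec{\ppbit b}}\) nodes, and on validity of permission-annotated bits propagating from the leaves to the whole tree. The only mild point worth flagging in the write-up is that \(\memvalid \Gamma \Delta m\) also has clause (1), covering object identifiers mapped to bare types (deallocated objects); these contribute nothing to \(\sepvalid m\), since that predicate only quantifies over entries of the form \(\Some{\pair w \mu}\), so clause (1) is simply ignored in the argument.
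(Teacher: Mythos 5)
Your proof is correct and is exactly the intended argument: the paper states this as a Fact without proof, and the evident route is the one you take --- unfold \(\sepvalid m\), apply clause (2) of Definition~\ref{definition:cmap_valid} to each entry \(\Some{\pair w \mu}\), and discharge \(\sepvalid w\) via the preceding fact that well-typed memory trees are valid. Your remark that clause (1) (deallocated objects mapped to bare types) contributes nothing is also accurate, since \(\sepvalid m\) only constrains entries of the form \(\Some{\pair w \mu}\).
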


\begin{lemma}
\label{lemma:cmap_union_valid}
If \(\sepdisjoint {m_1} {m_2}\), then:
\[
	\memvalid \Gamma \Delta {\sepunion {m_1} {m_2}}
	\qquad\textnormal{iff}\qquad
	\memvalid \Gamma \Delta {m_1}\textnormal{ and }
	\memvalid \Gamma \Delta {m_2}.
\]
\end{lemma}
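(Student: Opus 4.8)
The plan is to prove both implications by a pointwise argument over object identifiers, unfolding the three conjuncts of $\memvalid$ (Definition~\ref{definition:cmap_valid}) together with the definitions of $\sepunion{m_1}{m_2}$ and $\sepdisjoint{m_1}{m_2}$ (Definition~\ref{definition:cmap_separation}). First I would fix an object identifier $o$ and do a case analysis on $\mlookup o {m_1}$ and $\mlookup o {m_2}$ following the table in Definition~\ref{definition:cmap_separation}. Since we assume $\sepdisjoint{m_1}{m_2}$, the pointwise predicate $P\,m_1\,m_2\,o$ holds, which rules out the ``otherwise'' row, leaving the six surviving combinations: two allocated objects with the same $\mu$; an allocated object on one side and $\None$ on the other; a deallocated type on one side and $\None$ on the other; and $\None$ on both sides. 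In every case except the first, the value of $\sepunion{m_1}{m_2}$ at $o$ is literally the non-$\None$ entry among $\mlookup o {m_1}$, $\mlookup o {m_2}$, so the relevant clause of $\memvalid$ at $o$ (the type/aliveness/$\typevalid$ conditions for a deallocated entry, or the type/aliveness/$\ctreetyped$/non-emptiness conditions for an object) is the \emph{same} proposition for $\sepunion{m_1}{m_2}$ as for whichever $m_i$ carries the entry, and both directions at $o$ are immediate, the other side contributing nothing at $o$.

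Next I would dispatch the emptiness side-conditions. In the forward direction, in the case $(\sepunion{m_1}{m_2})(o) = \pair{\sepunion{w_1}{w_2}}\mu$ one must check that $\ctreeflatten{\sepunion{w_1}{w_2}}$ is not everywhere $\pair\sepempty\BIndet$: since flattening distributes over $\sepunionSym$ on memory trees and $\pair\sepempty\BIndet$ is the $\sepempty$ element of the separation algebra of permission-annotated bits, positivity (law~\ref{item:sep_positive} of Definition~\ref{definition:separation_algebra}, applied componentwise) shows that $\ctreeflatten{\sepunion{w_1}{w_2}}$ being all $\sepempty$ would force $\ctreeflatten{w_1}$ to be, contradicting validity of $m_1$ at $o$. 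In the backward direction the analogous conditions for $w_1$ and $w_2$ separately are handed to us for free by $\sepdisjoint{m_1}{m_2}$, whose first table row already demands that neither $\ctreeflatten{w_1}$ nor $\ctreeflatten{w_2}$ is all $\pair\sepempty\BIndet$.

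This leaves the genuinely substantive case: $\mlookup o {m_1} = \Some{\pair{w_1}\mu}$ and $\mlookup o {m_2} = \Some{\pair{w_2}\mu}$ with $\sepdisjoint{w_1}{w_2}$. Because $\indextyped\Delta o$ determines the type uniquely from $\Delta$, the witnessing types from validity of $m_1$ and of $m_2$ at $o$ are one and the same $\tau$; the object-identifier-typing and aliveness clauses are then shared and transfer in both directions, so the entire case reduces to relating $\ctreetyped\Gamma\Delta{\sepunion{w_1}{w_2}}\tau$ to the conjunction $\ctreetyped\Gamma\Delta{w_1}\tau$ and $\ctreetyped\Gamma\Delta{w_2}\tau$. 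The hard part will therefore be this tree-level lemma: for $\sepdisjoint{w_1}{w_2}$, well-typedness of $\sepunion{w_1}{w_2}$ at $\tau$ is equivalent to well-typedness of $w_1$ and of $w_2$ at $\tau$. I would prove it by induction on the derivation of $\sepdisjointSym$ on memory trees (Definition following Definition~\ref{definition:ctree_valid}), the routine cases being $\MBase$, $\MArray$, $\MStruct$ and same-variant $\MUnion$, and the delicate cases being the heterogeneous rules that combine $\MUnion t i {w_1}{\vec{\ppbit b}_1}$ with $\MUnionAll t{\vec{\ppbit b}_2}$: there one must reconcile the $\neg\sepunmapped{(\ctreeflatten{w}\,\vec{\ppbit b})}$ canonicity conditions that occur both in the typing judgment (Definition~\ref{definition:ctree_typed}) and in the disjointness judgment with the way $\sepunionSym$ and $\ctreeflatten$ act, and argue that, since the $\MUnionAll$ side is $\sepunmapped$, merging it into the $\MUnion$ side preserves the variant and the typing of the retained subtree. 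Once this lemma is established, reassembling the per-$o$ facts yields $\memvalid \Gamma\Delta{\sepunion{m_1}{m_2}}$ and conversely, completing the proof.
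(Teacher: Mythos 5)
The paper does not print a proof of this lemma (it is deferred to the \Coq{} development), but your proposal is correct and follows exactly the decomposition that proof must take: a pointwise case analysis over the table defining \(\sepdisjointSym\) and \(\sepunionSym\) on memories, with the shared \(\Delta\) making the per-object clauses transfer directly, the emptiness side-conditions handled by positivity in one direction and by the disjointness hypothesis in the other, and the whole thing reducing to the tree-level lemma that for \(\sepdisjoint{w_1}{w_2}\) typing of \(\sepunion{w_1}{w_2}\) is equivalent to typing of both components. You also correctly identify the only delicate spot, the heterogeneous \(\MUnion\)/\(\MUnionAll\) cases where the \(\sepunmapped\)/\(\neg\sepunmapped\) canonicity conditions in the disjointness derivation are precisely what keeps the variant and the typing of the retained subtree intact.
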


Notice that the memory typing environment \(\Delta\) is not subdivided among
\(m_1\) and \(m_2\).
Consider the memory state corresponding to \lstinline|int x = 10, *p = &x|:
\begin{equation*}
\begin{tikzpicture}[
	start chain=going right,
	node distance=1em,
	box/.style={rounded corners,draw=black,inner sep=0.8em},
]
\node[box,on chain] (o12) {\(o_{\mathtt x} \mapsto w,\ o_{\mathtt p} \mapsto \bullet\)};
\node[on chain] {\(=\)};
\node[box,on chain] (o1) {\(o_{\mathtt x} \mapsto w\)};
\node[on chain] {\(\sepunionSym\)};
\node[box,on chain] (o2) {\(o_{\mathtt p} \mapsto \bullet\)};
\draw[thick,->] ($(o12.south east) + (-1.1em,0.7em)$) ..
	controls ($(o12.south east) + (-1em,-1.4em)$)
	and ($(o12.south west) + (1em,-1.4em)$) ..
	($(o12.south west) + (1.1em,0.7em)$);
\draw[thick,->] ($(o2.south east) + (-1.1em,0.7em)$) ..
	controls ($(o2.south east) + (-1em,-1.4em)$)
	and ($(o1.south west) + (1em,-1.4em)$) ..
	($(o1.south west) + (1.1em,0.7em)$);
\end{tikzpicture}
\end{equation*}

Here, \(w\) is the memory tree that represents the integer value 10.
The pointer on the right hand side is well-typed in the memory
environment 
\(\memenvof{\vphantom{X}o_{\mathtt x} \mapsto w,\ o_{\mathtt p} \mapsto \bullet}\)
of the whole memory, but not in
\(\memenvof{\vphantom{X}o_{\mathtt p} \mapsto \bullet}\).

We prove some essential properties about the interaction between
the separation algebra operations and the memory operations.
These properties will be used in the soundness proof of the separation logic
in the author's PhD thesis~\cite{kre:15:phd}.

\begin{lemma}[Preservation of lookups]
If \(\memvalid \Gamma \Delta {m_1}\) and \(\sepsubseteq {m_1} {m_2}\),
then:
\begin{align*}
\memlookup \Gamma a {m_1} = \Some v
	&\quad\textnormal{ implies }\quad
	\memlookup \Gamma a {m_2} = \Some v \\
\memwritable \Gamma a {m_1}
	&\quad\textnormal{ implies }\quad
	\memwritable \Gamma a {m_2}
\end{align*}
The relation \(\sepsubseteqSym\) is part of a separation algebra, see
Definition~\ref{definition:separation_algebra}.
We have \(\sepsubseteq {m_1} {m_2}\) iff there is an \(m_3\) with
\(\sepdisjoint {m_1} {m_3}\) and \(m_2 = \sepunion {m_1} {m_3}\).
\end{lemma}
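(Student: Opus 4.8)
The plan is to reduce everything to a single monotonicity lemma for $\cmaplookupSym$ and then read off the two conclusions. First I would unfold $\sepsubseteq{m_1}{m_2}$ using the characterisation recalled at the end of the statement: there is an $m_3$ with $\sepdisjoint{m_1}{m_3}$ and $m_2 = \sepunion{m_1}{m_3}$; the premise $\memvalid\Gamma\Delta{m_1}$ gives that the relevant memory trees of $m_1$ are well-typed (hence $\sepvalidSym$, with consistent type annotations), and $\sepdisjoint{m_1}{m_3}$ makes the matching trees of $m_3$ $\sepvalidSym$ by the separation-algebra laws. The auxiliary lemma I would prove is: if $\sepdisjoint{m_1}{m_3}$, $\cmaplookup\Gamma a{m_1} = \Some{w_1}$, and every leaf bit of $\ctreeflatten{w_1}$ has permission kind at least $\Some\Readable$, then $\cmaplookup\Gamma a{(\sepunion{m_1}{m_3})} = \Some{w_2}$ for some $w_2$ with $\sepsubseteq{w_1}{w_2}$ in the (partial) separation structure on memory trees of Definitions~\ref{definition:ctree_union} and~\ref{definition:cmap_separation}; moreover $w_1$ and $w_2$ have the same tree shape and the same leaf bit values, differing only by pointwise larger permissions.

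For the auxiliary lemma I would first dispatch the bookkeeping of Definition~\ref{definition:cmap_lookup}: the entry of $\sepunion{m_1}{m_3}$ at $\addrindex a$ is either literally that of $m_1$ (nothing to do) or $\pair{\sepunion w{w'}}\mu$ where $m_1$ and $m_3$ store $\pair w\mu$ and $\pair{w'}\mu$ there, the disjointness table of Definition~\ref{definition:cmap_separation} forcing the $\mu$'s and object kinds to agree; the byte-address clause additionally needs that $\ctreeflattenSym$ commutes with $\sepunionSym$ on memory trees and that $\ctreeunflatten\Gamma{\IntType\Unsigned\charrank}{\_}$ is monotone on bit lists, both immediate. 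This reduces the claim to a statement about $\ctreelookup\Gamma{\addrref\Gamma a}{\_}$, which I would prove by induction on the reference together with a case analysis on the derivation of $\sepdisjoint w{w'}$, carrying the invariant that $\ctreelookupSym$ succeeds on $\sepunion w{w'}$ and the result is again a same-shape, same-leaf-bits, larger-permissions extension of the result on $w$. The base, array and struct cases are routine since both $\sepunionSym$ and $\ctreelookupSym$ descend structurally. The union cases are the work: because every leaf permission of $\ctreeflatten{w_1}$ is at least readable, none of them is $\sepunmappedSym$, so the disjointness rules for unions force $w'$ to match the variant annotation of $w$ (and, for a union of unspecified variant, force $w$ to keep that shape), whence $\sepunion w{w'}$ descends through the same branch of Definition~\ref{definition:ctree_lookup}. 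In the type-punning subcase one additionally needs that the side-condition $\sepunshared{(\ctreeflatten w\,\vec{\ppbit b})}$ is inherited by the $\sepunionSym$ of the flattened union contents of $w$ and $w'$ — which follows because any partner disjoint from an $\sepunsharedSym$ element is $\sepunmappedSym$ (law~\ref{item:sep_unshared_spec}), combined with law~\ref{item:sep_associative} and downward closure of $\sepunmappedSym$ — and that the reinterpretation $\ctreeunflatten\Gamma{\tau_i}{\sublist 0 s{(\ctreeflatten w\,\vec{\ppbit b})}}$ respects the invariant; this is exactly why the invariant is phrased as ``same shape, same leaf bits, larger permissions'' rather than as a literal commutation, since $\ctreeflattenSym$, $\ctreeunflattenSym$ and $\pbitindetifySym$ do not commute with $\sepunionSym$ on the nose (padding bits are forced to $\BIndet$, but the same positions are affected in $m_1$ and $m_2$).

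The two conclusions then follow at once. For $\memlookup$: the premise $\memlookup\Gamma a{m_1} = \Some v$ gives $\cmaplookup\Gamma a{m_1} = \Some{w_1}$ with the read side-condition, so the auxiliary lemma yields $\cmaplookup\Gamma a{m_2} = \Some{w_2}$; an elementary monotonicity property of the \CHtwoO{} permission system (visible in Figure~\ref{figure:permissions}), namely $\sepsubseteq{\gamma_1}{\gamma_2}$ implies $\permkind\gamma_1\subseteq\permkind\gamma_2$, propagates $\Some\Readable\subseteq\permkind(\ctreeflatten{w_2})_i$ for every $i$, so $\memlookup\Gamma a{m_2}$ is defined; and since $w_1$ and $w_2$ have the same shape and the same leaf bit values, while $\tovalSym\Gamma$ discards permissions and depends only on those data, we get $\memlookup\Gamma a{m_2} = \toval\Gamma{w_2} = \toval\Gamma{w_1} = v$. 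For $\memwritable$: $\memwritable\Gamma a{m_1}$ already entails the read side-condition on $\ctreeflatten{w_1}$ (indeed the stronger $\Some\Writable$ bound), so the auxiliary lemma applies unchanged, and the same kind-monotonicity argument with $\Some\Writable$ in place of $\Some\Readable$ gives $\memwritable\Gamma a{m_2}$.

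I expect the union/type-punning case of the auxiliary lemma to be the main obstacle: getting the invariant strong enough to survive the interpretation-of-bits conversions while still being preserved by $\sepunionSym$, and checking that every ``$\neg\sepunmappedSym$'' and ``$\sepunsharedSym$'' side-condition occurring in Definitions~\ref{definition:ctree_lookup}, \ref{definition:ctree_union} and~\ref{definition:ctree_valid} is inherited when permissions grow, is the delicate part; once that scaffolding is in place, the two stated implications are immediate corollaries.
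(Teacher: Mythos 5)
Your proposal is correct and follows essentially the route the formalization takes: reduce both implications to preservation of \(\cmaplookupSym \Gamma\) under \(\sepunionSym\) with a disjoint memory, proved by induction on the reference with the union constructors as the only delicate cases. You correctly identify the two load-bearing facts — that the readability side-condition rules out \(\sepunmappedSym\) leaves in the looked-up subtree, so the disjointness rules force the partner's contribution at those positions to be unmapped permissions with \(\BIndet\) bits (preserving bit values and hence \(\tovalSym \Gamma\)), and that \(\sepunsharedSym\) and the variant annotations are inherited by the union so the lookup descends through the same branches.
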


\begin{lemma}[Preservation of disjointness]
If \(\memvalid \Gamma \Delta m\) then:
\begin{align*}
\sepdisjointle m {&\memforce \Gamma a m}
	&&\textnormal{if \(\addrtyped \Gamma \Delta a \tau\)
	and \(\memlookup \Gamma a m \neq \None\)}\\
\sepdisjointle m {&\meminsert \Gamma a v m}
	&&\textnormal{if \(\addrtyped \Gamma \Delta a \tau\)
	and \(\memwritable \Gamma a m\)}\\
\sepdisjointle m {&\memlock \Gamma a m}
	&&\textnormal{if \(\addrtyped \Gamma \Delta a \tau\)
	and \(\memwritable \Gamma a m\)}\\
\sepdisjointle m {&\memunlock \Omega m}
	&&\textnormal{if \(\Omega \subseteq \memlocks m\)}
\end{align*}
The relation \(\sepdisjointleSym\) is defined in
Definition~\ref{definition:sep_disjoint_equiv}.
If \(\sepdisjointle m {m'}\), then each memory that is disjoint to \(m\) is
also disjoint to \(m'\).
\end{lemma}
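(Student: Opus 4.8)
First I would unfold Definition~\ref{definition:sep_disjoint_equiv} in the degenerate case of singleton lists: since $\sepdisjointlist {\listlit{m'',m}}$ holds iff $\sepvalid m$ and $\sepdisjoint {m''} m$, the claim $\sepdisjointle m {m'}$ reduces to: for every memory $m''$, $\sepdisjoint {m''} m$ implies $\sepdisjoint {m''} {m'}$. (The residual obligation $\sepvalid {m'}$ is independent of $m''$ and follows from the preservation-of-validity results for the memory operations of Section~\ref{section:memory}, together with the Fact that $\memvalid \Gamma \Delta {m'}$ entails $\sepvalid {m'}$.) By Definition~\ref{definition:cmap_separation}, $\sepdisjointSym$ on memories is checked object-identifier-wise; and each of $\memforce \Gamma a m$, $\meminsert \Gamma a v m$, $\memlock \Gamma a m$ differs from $m$ only at the single index $\addrindex a$, while $\memunlock \Omega m$ differs from $m$ only at the indices occurring in $\Omega$. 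So it suffices to verify the pointwise predicate $P\,m''\,m'\,o$ at those indices, given $P\,m''\,m\,o$. When $\mlookup o {m''}$ is $\None$ or a deallocated slot $\Some \tau$ this is immediate, so the whole lemma comes down to a statement about memory trees: writing $\mlookup o m = \Some {\pair w \mu}$ and $w' = \ctreealter \Gamma f {\addrref \Gamma a} w$ (with $f$ the identity for $\memforce$, $\lambda w.\;\ofval \Gamma {(\fst {\ctreeflatten w})} v$ for $\meminsert$, and the bitwise $\permlockSym$-map for $\memlock$), respectively $w' = \ctreemerge {f} w {\vec y}$ with $f$ the bitwise $\permunlockSym$-map for $\memunlock$, we must show that $\sepdisjoint {w''} w$ implies $\sepdisjoint {w''} {w'}$, and re-establish the non-all-$\pair\sepempty\BIndet$ side condition for $w'$.

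The observation that makes this tractable is that in every case the transition $w \mapsto w'$ leaves the whole \emph{permission} vector of $\ctreeflatten {w}$ unchanged except for writable$\leftrightarrow$locked tag flips: $\memforce$ touches no permissions; $\meminsert$ rewrites bit \emph{values} through $\ofvalSym$ but reinstalls the old permissions $\fst {\ctreeflatten w}$; $\memlock$ maps writable to $\Locked$; $\memunlock$ maps $\Locked$ back. Combined with the hypothesis $\memwritable \Gamma a m$ (resp. $\memlookup \Gamma a m \neq \None$, resp. $\Omega \subseteq \memlocks m$) — which forces the permissions in the affected region of $w$ to dominate $\Some\Writable$ (resp. $\Some\Readable$, resp. to equal $\Some\Locked$) — law~\ref{item:sep_unshared_spec} and the structure of the permission telescope give that the matching bits of $w''$ carry $\sepunmappedSym$ permission. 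I would then record two leaf closure lemmas at the level of $\pbit$: a bit whose permission is writable or locked stays $\sepdisjointSym$ from any $\sepunmappedSym$ bit after its value is overwritten or after $\permlockSym$/$\permunlockSym$ is applied — immediate from Definition~\ref{definition:sep_tagged}, since the clause $\sepunmapped x \impl y = t$ is vacuous on a non-unmapped $x$ and the value-agreement clause is vacuous when the opposite side is unmapped. With these, the auxiliary memory-tree lemma (that $\ctreealter \Gamma f {\vec r} {\_}$ and $\ctreemerge {f} {\_} {\vec y}$ preserve one-sided disjointness under the stated restrictions on $f$) is proved by induction on the derivation of $\sepdisjoint {w''} w$, descending along $\vec r$; the array and struct cases are routine recursion on the selected component, with ancestor padding bits copied through unchanged, and the byte-address variant of $\cmapalterSym$/$\cmaplookupSym$ is handled identically (a byte write only overwrites bit values, and writes into padding are masked).

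The main obstacle is the union cases. When $\vec r$ descends through a $\MUnion t i {w_0} {\vec {\ppbit b}_0}$ via a segment $\RUnion j t q$ with $j \neq i$, or through a $\MUnionAll$, $\ctreealterSym$ reinterprets the element via $\ctreeflattenSym$/$\ctreeunflattenSym$ at variant $j$, relocates bits between the element and padding parts, and applies $\pbitindetifySym$; meanwhile $\sepdisjointSym$ on memory trees has several asymmetric $\MUnion$-vs-$\MUnionAll$ rules gated by $\neg\sepunmapped {(\ctreeflatten{\cdot}\,\cdot)}$ on one side and $\sepunmapped{\cdot}$ on the other. I would discharge this by showing that the reinterpretation acts as a $\ctreemergeSym$-style modification of bit \emph{values} only: it commutes with $\ctreeflattenSym$ up to replacing some bit values by $\BIndet$ while leaving the entire permission vector fixed, so every $\sepunmappedSym$/$\neg\sepunmappedSym$ side condition survives it (the $\sepunmappedSym$-status of a bit vector ignores its values, by Definitions~\ref{definition:sep_tagged} and~\ref{definition:pbit}), using the $\memwritable$ hypothesis to retain $\neg\sepunmapped {(\ctreeflatten {w_0}\,\vec {\ppbit b}_0)}$ and to know that the corresponding bits of $w''$ are unmapped, so that the relevant cross rule still fires after reinterpretation. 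The $\memunlock$ clause is the easiest: $\ctreemerge {f} {\_} {\vec y}$ merely flips genuinely-$\Locked$ tags (genuine because $\Omega \subseteq \memlocks m$), so disjointness is preserved bit-for-bit by the leaf lemma above.
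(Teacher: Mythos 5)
The paper does not print a proof of this lemma (it is established only in the \Coq{} development), so there is nothing to compare your plan against line by line; I can only judge it on its own terms. Your overall architecture is the natural one and is almost certainly what the formalization does: unfold \(\sepdisjointleSym\) on singleton lists to the one\mbox{-}sided implication ``\(\sepdisjoint {m''} m\) implies \(\sepdisjoint {m''} {m'}\)'', reduce pointwise over object identifiers to a statement about the single altered memory tree, and prove that by induction on the derivation of \(\sepdisjoint {w''} w\) along the reference, with leaf lemmas on the tagged separation algebra saying that overwriting a bit's value, or applying \(\permlockSym\)/\(\permunlockSym\), preserves disjointness against an \emph{unmapped} partner. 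Identifying the union\mbox{-}restructuring branches of \(\ctreealterSym\) as the crux is also right.

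However, one inference does not go through as written. For the \(\memforceSym \Gamma\) clause you argue that \(\memlookup \Gamma a m \neq \None\) forces the affected permissions to dominate \(\Some\Readable\), and that law~\ref{item:sep_unshared_spec} then makes the matching bits of \(w''\) unmapped. Law~\ref{item:sep_unshared_spec} requires \(\sepunsharedSym\), and \(\Some\Readable\) permissions are exactly the ones that are neither unshared nor unmapped (a readable permission is disjoint from another readable permission), so read permission alone tells you nothing about \(w''\). The fact you need is still true, but its source is different: the only branches of \(\ctreelookupSym\) (Definition~\ref{definition:ctree_lookup}) in which \(\ctreealterSym\) actually restructures a union --- type\mbox{-}punning through a specified\mbox{-}variant union at a mismatched \(\RUnfrozen\) segment, or forcing an unspecified\mbox{-}variant union --- carry an explicit \(\sepunsharedSym\) guard on the affected bits, and it is that guard, via law~\ref{item:sep_unshared_spec}, which makes the corresponding bits of \(w''\) unmapped and lets the cross rules between the two kinds of union nodes fire after the reinterpretation; in the remaining branches the identity alter leaves the tree unchanged and there is nothing to prove. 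This repair also matters for your treatment of \(\pbitindetifySym\): turning a bit value into \(\BIndet\) does \emph{not} preserve disjointness against a non\mbox{-}unmapped partner (the value\mbox{-}agreement clause of Definition~\ref{definition:sep_tagged} breaks), so you genuinely need unmappedness of the other side at every position whose value changes, and in the force case only the \(\sepunsharedSym\) guards supply it. Finally, I would not call the byte\mbox{-}address variant ``handled identically'': there \(\overline f\) flattens and re\mbox{-}unflattens the entire subobject at its own type, which can change the union constructors of the tree, so it needs the same unshared/unmapped analysis rather than a routine copy of the non\mbox{-}byte case.
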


As a corollary of the above lemma and Fact~\ref{fact:disjoint_le_disjoint}
we obtain that \(\sepdisjoint {m_1} {m_2}\) implies disjointness of the
memory operations:
\begin{align*}
\sepdisjoint {\memforce \Gamma a {m_1}} {{}&m_2} &
\sepdisjoint {\meminsert \Gamma a v {m_1}} {{}&m_2} \\
\sepdisjoint {\memlock \Gamma a {m_1}} {{}&m_2} &
\sepdisjoint {\memunlock \Omega {m_1}} {{}&m_2}
\end{align*}

\begin{lemma}[Unions distribute]
\label{lemma:mem_insert_union}
If \(\memvalid \Gamma \Delta m\) and \(\sepdisjoint {m_1} {m_2}\) then:
\begin{align*}
\memforce \Gamma a {(\sepunion {m_1} {m_2})} ={}&
	\sepunion {\memforce \Gamma a {m_1}} {m_2}
	&&\textnormal{if \(\addrtyped \Gamma \Delta a \tau\)
	and \(\memlookup \Gamma a {m_1} \neq \None\)}\\
\meminsert \Gamma a v {(\sepunion {m_1} {m_2})} ={}&
	\sepunion {\meminsert \Gamma a v {m_1}} {m_2}
	&&\textnormal{if \(\addrtyped \Gamma \Delta {\{a,v\}} \tau\)
	and \(\memwritable \Gamma a {m_1}\)}\\
\memlock \Gamma a {(\sepunion {m_1} {m_2})} ={}&
	\sepunion {\memlock \Gamma a {m_1}} {m_2}
	&&\textnormal{if \(\addrtyped \Gamma \Delta a \tau\)
	and \(\memwritable \Gamma a {m_1}\)}\\
\memunlock \Omega {(\sepunion {m_1} {m_2})} ={}&
	\sepunion {\memunlock \Omega {m_1}} {m_2}
	&&\textnormal{if \(\Omega \subseteq \memlocks {m_1}\)}
\end{align*}
\end{lemma}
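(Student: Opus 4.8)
The plan is to treat all four equations uniformly. By Definitions~\ref{definition:memory_operations} and~\ref{definition:cmap_alter}, the operations $\memforce \Gamma a$, $\meminsert \Gamma a v$ and $\memlock \Gamma a$ are instances of $\cmapalter \Gamma f a$, for $f$ the identity, for $f = \lambda w \wsdot \ofval \Gamma {(\fst {\ctreeflatten w})} v$, and for $f$ the map applying $\permlockSym$ to every leaf permission, respectively; and $\memunlock \Omega$ is a pointwise application of $\ctreemergeSym$ (Definition~\ref{definition:ctree_merge}) whose Boolean vector at an object $o$ is determined solely by $\Omega$ and $\typeof w$. So the core task is a single distribution lemma: whenever $f$ commutes with $\sepunionSym$ in the relevant sense, preserves the emptiness/validity side-conditions, and the hypothesis on $a$ (resp. on $\Omega$) forces a unique branch of the case analysis defining $\sepunionSym$ on memories (Definition~\ref{definition:cmap_separation}), we have $\cmapalter \Gamma f a {(\sepunion {m_1} {m_2})} = \sepunion {\cmapalter \Gamma f a {m_1}} {m_2}$, and correspondingly for the global $\ctreemergeSym$ underlying $\memunlock$. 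Throughout I would use Lemma~\ref{lemma:cmap_union_valid} to move $\memvalid$ freely between $\sepunion {m_1} {m_2}$ and its parts.

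First I would reduce to memory trees. Since $\cmapalter \Gamma f a$ only rewrites the object at $\addrindex a$, and since $\memwritable \Gamma a {m_1}$ (resp. $\memlookup \Gamma a {m_1} \ne \None$) entails $\mlookup {(\addrindex a)} {m_1} = \Some {\pair {w_1} \mu}$ with $w_1$ not all $\pair \sepempty \BIndet$, only two rows of the table in Definition~\ref{definition:cmap_separation} can apply at $\addrindex a$: either $\mlookup {(\addrindex a)} {m_2} = \None$, where distribution is immediate, or $\mlookup {(\addrindex a)} {m_2} = \Some {\pair {w_2} \mu}$ (same $\mu$, same type) with $\sepdisjoint {w_1} {w_2}$, where the claim reduces to $\ctreealter \Gamma f {(\addrref \Gamma a)} {(\sepunion {w_1} {w_2})} = \sepunion {\ctreealter \Gamma f {(\addrref \Gamma a)} {w_1}} {w_2}$ plus preservation of the ``not all $\pair \sepempty \BIndet$'' condition. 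For $\memunlock$ the reduction is analogous, but here $\Omega \subseteq \memlocks {m_1}$ forces $w_1$ to be $\sepunshared$ exactly at the locked positions, hence $w_2$ to be $\sepunmapped$ there by law~\ref{item:sep_unshared_spec}, so the same Boolean vector acts trivially on the $m_2$ part and only touches $w_1$.

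Next I would prove the tree-level identity by induction on $\addrref \Gamma a$ (equivalently on the address-typing derivation), using that $\sepdisjointSym$ and $\sepunionSym$ on memory trees (Definition~\ref{definition:ctree_union}) are bit-wise along matching branches. The base case is the leaf commutation of $f$ with $\sepunionSym$: the identity is trivial; $\permlockSym$ and $\permunlockSym$ commute with the permission addition $\sepunionSym$, which I would take from the permission separation algebra; and $\ofval \Gamma {\_} v$ commutes with the change of permission skeleton from $\fst {\ctreeflatten {w_1}}$ to $\fst {\ctreeflatten {(\sepunion {w_1} {w_2})}}$, because $\sepdisjoint {w_1} {w_2}$ together with $\memwritable \Gamma a {m_1}$ forces $w_2$ to be $\sepunmapped$ (hence carry only $\BIndet$ bits) along the path, so unioning $\ofval \Gamma {(\fst {\ctreeflatten {w_1}})} v$ with $w_2$ merely recombines permissions bit-wise. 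The array and struct cases of the induction are routine, since $\sepunionSym$ there is the pointwise lift.

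\textbf{Main obstacle.} The hard part will be the union cases. When the reference enters a $\MUnion t j w {\vec {\ppbit b}}$ via a variant $i \ne j$, $\ctreealter$ reinterprets the leaves through $\ctreeflatten$ and $\ctreeunflatten \Gamma {\tau_i}$, so I would need that both commute with $\sepunionSym$ up to the relevant validity conditions (flattening being literally leaf concatenation, unflattening distributing because $\sepunionSym$ is bit-wise), together with the bookkeeping that $\pbitindetifySym$ and the masked padding behave well. The genuinely delicate point is that $\sepunionSym$ on a mixed $\MUnion$/$\MUnionAll$ pair is itself defined via $\ctreemergeSym$ applied with $\sepunionSym$, so I must show that altering-then-merging equals merging-then-altering along the chosen variant — a commutation between $\ctreealterSym$ and $\ctreemergeSym$ that needs its own induction, carefully tracking which operand holds the specified variant and why the $\neg\sepunmapped$ side-conditions on $\MUnion$ are maintained. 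Once this union lemma is established, the four equations follow by assembling the reductions above; the argument is finite and, as with the rest of the development, would be carried out in \Coq.
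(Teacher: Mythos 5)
The paper states this lemma without giving a proof, but your decomposition is exactly the one its definitions are set up to support: reduce all four equations to a distribution property of \(\cmapalterSym \Gamma f\) (resp.\ the pointwise \(\ctreemergeSym\) underlying \(\memunlockSym\)) over \(\sepunionSym\), localize to the object at \(\addrindex a\) via the case table of Definition~\ref{definition:cmap_separation}, and use law~\ref{item:sep_unshared_spec} together with the tagged separation algebra (Definition~\ref{definition:sep_tagged}) to conclude that \(\sepunshared{}\) permissions in \(m_1\) force the corresponding bits of \(m_2\) to be \(\sepunmapped{}\) and hence \(\BIndet\), so the \(m_2\) summand is inert under the alteration. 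The one point worth making explicit is that in the \(\memforceSym \Gamma\) case the unmapped-ness of \(w_2\) at the reinterpreted positions comes not from writability but from the \(\sepunshared{}\) side-condition in the type-punning branches of Definition~\ref{definition:ctree_lookup} (when no type-punning occurs along \(\addrref \Gamma a\), \(\ctreealter \Gamma {\mathrm{id}} {} {}\) is literally the identity and distribution is trivial); with that noted, your identification of the \(\MUnionSym/\MUnionAllSym\) mixed-union commutation with \(\ctreemergeSym\) as the genuinely delicate induction is accurate.
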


Memory trees and memories can be generalized to contain elements of an arbitrary
separation algebra as leaves instead of just permission annotated
bits~\cite{kre:14:vstte}.
These generalized memories form a functor that lifts the separation algebra
structure on the leaves to entire trees.
We have taken this approach in the \Coq{} development, but for brevity's sake,
we have refrained from doing so in this paper.

\section{Formalization in \Coq}
\label{section:coq}

Real-world programming language have a large number of features that require
large formal descriptions.
As this paper has shown, the \C{} programming language is not different in
this regard.
On top of that, the \C{} semantics is very subtle due to an abundance of
delicate corner cases.
Designing a semantics for \C{} and proving properties about such a semantics
therefore inevitably requires computer support.

For these reasons, we have used the \Coq{} proof assistant~\cite{coq:15} to
formalize all definitions and theorems in this paper.
Although \Coq{} does not guarantee the absence of mistakes in our definitions,
it provides a rigorous set of checks on our definitions.
Already \Coq's type checking of definitions provides an effective sanity check.
On top of that, we have used \Coq{} to prove all metatheoretical results
stated in this paper.
Last but not least, using \Coq's program extraction facility we have extracted
an exploration tool to test our memory model on small example
programs~\cite{kre:wie:15,kre:15:phd}.

\subsection{Overloaded typing judgments}

Type classes are used to overload notations for typing judgments (we have 25
different typing judgments).
The class \lstinline|Valid| is used for judgments without a type, such as
\(\envvalid \Gamma\) and \(\memvalid \Gamma \Delta m\).

\begin{lstlisting}[language=coq]
Class Valid (E A : Type) := valid: E → A → Prop.
Notation "✓{ Γ }" := (valid Γ).
Notation "✓{ Γ }*" := (Forall (✓{Γ})).
\end{lstlisting}

We use product types to represent judgments with multiple environments such as
\(\memvalid \Gamma \Delta m\).
The notation \lstinline|✓{Γ}*| is used to lift the judgment to lists.
The class \lstinline|Typed| is used for judgments such as
\(\valtyped \Gamma \Delta v \tau\) and
\(\exprtyped \Gamma \Delta {\vec\tau} e {\lr\tau}\).

\begin{lstlisting}[language=coq]
Class Typed (E T V : Type) := typed: E → V → T → Prop.
Notation "Γ ⊢ v : τ" := (typed Γ v τ).
Notation "Γ ⊢* vs :* τs" := (Forall2 (typed Γ) vs τs).
\end{lstlisting}

\subsection{Implementation defined behavior}

Type classes are used to parametrize the whole \Coq{} development by implementation
defined parameters such as integer sizes.
For example, Lemma~\ref{lemma:to_of_val} looks like:

\noindent
\begin{minipage}{\linewidth}
\begin{lstlisting}[language=coq]
Lemma to_of_val `{EnvSpec K} Γ Δ γs v τ :
  ✓ Γ → (Γ,Δ) ⊢ v : τ → length γs = bit_size_of Γ τ →
  to_val Γ (of_val Γ γs v) = freeze true v.
\end{lstlisting}
\end{minipage}

The parameter \lstinline|EnvSpec$\;K$| is a type class describing an
implementation environment with ranks \(K\)
(Definition~\ref{definition:env_spec}).
Just as in this paper, the type \(K\) of integer ranks is a parameter of the
inductive definition of types (see Definition~\ref{definition:int_types})
and is propagated through all syntax.

\begin{lstlisting}[language=coq]
Inductive signedness := Signed | Unsigned.
Inductive int_type (K: Set) := IntType {$\,$sign: signedness; rank: K$\,$}.
\end{lstlisting}

The definition of the type class \lstinline|EnvSpec| is based on the approach
of Spitters and van der Weegen approach~\cite{spi:wee:11}.
We have a separate class \lstinline|Env| for the operations that is an implicit
parameter of the whole class and all lemmas.

\begin{lstlisting}[language=coq]
Class Env (K: Set) := {
  env_type_env :> IntEnv K;
  size_of : env K → type K → nat;
  align_of : env K → type K → nat;
  field_sizes : env K → list (type K) → list nat
}.
Class EnvSpec (K: Set) `{Env K} := {
  int_env_spec :>> IntEnvSpec K;
  size_of_ptr_ne_0 Γ τp : size_of Γ (τp.*) ≠ 0;
  size_of_int Γ τi : size_of Γ (intT τi) = rank_size (rank τi);
  ...
}.
\end{lstlisting}

\subsection{Partial functions}
\label{section:coq_partial}

Although many operations in \CHtwoO{} are partial, we have formalized many such
operations as total functions that assign an appropriate default value.
We followed the approach presented in
Section~\ref{section:separation_algebra} where operations are combined with a
\emph{validity predicate} that describes in which case they may be used.
For example, part (2) of Lemma~\ref{lemma:mem_insert_union} is stated
in the \Coq{} development as follows:

\begin{lstlisting}[language=coq]
Lemma mem_insert_union `{EnvSpec K} Γ Δ m1 m2 a1 v1 τ1 :
  ✓ Γ → ✓{Γ,Δ} m1 → m1 ⊥ m2 →
  (Γ,Δ) ⊢ a1 : TType τ1 → mem_writable Γ a1 m1 → (Γ,Δ) ⊢ v1 : τ1 →
  <[a1:=v1]{Γ}>(m1 ∪ m2) = <[a1:=v1]{Γ}>m1 ∪ m2.
\end{lstlisting}

Here, \lstinline|m1 ⊥${}$ m2| is the side-condition of \lstinline|m1 ∪${}$ m2|,
and \lstinline|mem_writable Γ${}$ a1 m1| the side-condition of
\lstinline|<[a1:=v1]{Γ}>m1|.
Alternatives approaches include using the option monad or dependent types, but
our approach proved more convenient.
In particular, since most validy predicates are given by an inductive
definition, various proofs could be done by induction on the structure
of the validy predicate.
The cases one has to consider correspond exactly to the domain of
the partial function.

Admissible side-conditions, such as in the above example
\lstinline|<[a1:=v1]{Γ}>m1 ⊥${}$ m2|
and \lstinline|mem_writable Γ${}$ a1 (m1 ∪${}$ m2)|, do not have to be stated
explicitly and follow from the side-conditions that are already there.
By avoiding the need to state admissible side-conditions, we avoid a blow-up in
the number of side-conditions of many lemmas.
We thus reduce the proof effort needed to use such a lemma.

\subsection{Automation}

The proof style deployed in the \CHtwoO{} development combines interactive
proofs with automated proofs.
In this section we describe some tactics and forms of proof
automation deployed in the \CHtwoO{} development.

\paragraph{Small inversions.}
\Coq's \lstinline|inversion| tactic has two serious shortcomings on inductively
defined predicates with many constructors.
It is rather slow and its way of controlling of names for variables and
hypotheses is deficient.
Hence, we are often using the technique of small inversions by Monin and
Shi~\cite{mon:shi:13} that improves on both shortcomings.

\paragraph{Solving disjointness.}
We have used \Coq's setoid machinery~\cite{soz:10} to enable rewriting
using the relations \(\sepdisjointleSym\) and \(\sepdisjointequivSym\)
(Definition~\ref{definition:sep_disjoint_equiv}).
Using this machinery, we have implemented a tactic that automatically solves
entailments of the form:
\[
H_0 : \sepdisjointlist {\vec x_0},\ \dotsc,\
H_n : \sepdisjointlist {\vec x_{n-1}}
\quad \vdash\quad  \sepdisjointlist {\vec x}
\]

\noindent
where \(\vec x\) and \(\vec x_i\) (for \(i < n\)) are arbitrary \Coq{}
expressions built from \(\sepempty\), \(\sepunionSym\) and \(\sepunionlistSym\).
This tactic works roughly as follows:

\begin{enumerate}
\item Simplify hypotheses using Theorem~\ref{theorem:sep_disjoint}.
\item Solve side-conditions by simplification using
	Theorem~\ref{theorem:sep_disjoint} and a solver for list containment
	(implemented by reflection).
\item Repeat these steps until no further simplification is possible.
\item Finally,	solve the goal by simplification using
	Theorem~\ref{theorem:sep_disjoint} and list containment.
\end{enumerate}

This tactic is not implemented using reflection, but that is something we intend
to do in future work to improve its performance.

\paragraph{First-order logic.}
Many side-conditions we have encountered involve simple entailments
of first-order logic such as distributing logical quantifiers combined with
some propositional reasoning.
\Coq{} does not provide a solver for first-order logic apart from the
\lstinline|firstorder| tactic whose performance is already insufficient on
small goals.

We have used \Ltac{} to implemented an ad-hoc solver called
\lstinline|naive_solver|, which performs a simple breath-first search proof
search.
Although this tactic is inherently incomplete and suffers from some limitations,
it turned out to be sufficient to solve many uninteresting
side-conditions (without the need for classical axioms).

\subsection{Overview of the \Coq{} development}

The \Coq{} development of the memory model, which is
entirely constructive and axiom free, consists of the following parts:
\begin{center}
\begin{tabular}{l|l|r}
\textbf{Component} & \textbf{Sections} & \textbf{LOC} \\ \hline \hline
Support library (lists, finite sets, finite maps, \etc)
	& Section~\ref{section:notations} & 12\,524 \\
Types \& Integers & Section~\ref{section:types} & 1\,928 \\
Permissions \& separation algebras & Section~\ref{section:permissions} & 1\,811\\
Memory model & Section~\ref{section:memory} & 8\,736 \\
Refinements & Section~\ref{section:refinements} & 4\,046 \\
Memory as separation algebra
	& Section~\ref{section:memory_separation} & 3\,844 \\ \hline
\multicolumn{2}{l|}{\emph{Total}} & 32\,889
\end{tabular}
\end{center}

\section{Related work}
\label{section:related}

The idea of using a memory model based on trees instead of arrays of plain bits,
and the idea of using pointers based on paths instead of offsets, has already
been used for object oriented languages.
It goes back at least to Rossie and Friedman~\cite{ros:fri:95}, and has been
used by Ramananandro \etal~\cite{ram:dos:ler:11} for \Cplusplus{}.
Furthermore, many researchers have considered connections between unstructured and
structured views of data in \C~\cite{tuc:kle:nor:07,%
coh:mos:tob:sch:09,aff:sak:14,gre:lim:adr:kle:14} in the context of program
logics.

However, a memory model that combines an abstract tree based structure with
low-level object representations in terms of bytes has not been
explored before.
In this section we will describe other formalization of the \C{} memory model.

\paragraph{Norrish (1998)}
Norrish has formalized a significant fragment of the \Ceightynine{} standard
using the proof assistant \HOLfour~\cite{nor:98,nor:99}.
He was the first to describe non-determinism and sequence points formally.
Our treatment of these features has partly been based on his work.
Norrish's formalization of the \C{} type system has some similarities with
our type system: he has also omitted features that can be desugared and has
proven type preservation.

Contrary to our work, Norrish has used an unstructured memory model based on
sequences of bytes.
Since he has considered the \Ceightynine{} standard in which effective types
(and similar notions) were not introduced yet, his choice is appropriate.
For \Cninetynine{} and beyond, a more detailed memory model like ours is
needed, see also Section~\ref{section:challenges} and Defect
Report \#260 and \#451~\cite{iso:09}.

Another interesting difference is that Norrish represents abstract values
(integers, pointers and structs) as sequences of bytes instead of mathematical
values.
Due to this, padding bytes retain their value while structs are copied.
This is not faithful to the \Cninetynine{} standard and beyond.

\paragraph{Leroy \etal{} (2006)}
Leroy \etal{} have formalized a significant part of \C{} using the
\Coq{} proof assistant~\cite{ler:06,ler:09:cacm}.
Their part of \C, which is called \CompCertC{}, covers most major features of \C{}
and can be compiled into assembly (\PowerPC, \ARM{} and \xeightysix) using a
compiler written in \Coq.
Their compiler, called \CompCert{}, has been proven correct with respect to the
\CompCertC{} and assembly semantics.

The goal of \CompCert{} is essentially different from \CHtwoO's.
What can be proven with respect to the \CompCert{} semantics does not have
to hold for \emph{any} \Celeven{} compiler, it just
has to hold for the \CompCert{} compiler.
\CompCert{} is therefore in its semantics allowed to restrict implementation
defined behaviors to be very specific (for example, it uses 32 bits
\lstinline|int|s since it targets only 32-bits
computing architectures) and allowed to give a defined semantics to various
undefined behaviors (such as sequence point violations, violations of
effective types, and certain uses of dangling pointers).

The \CompCert{} memory model is used by all languages (from \C{} until
assembly) of the \CompCert{} compiler~\cite{ler:bla:08,ler:app:bla:ste:12}.
The \CompCert{} memory is a finite partial function from object
identifiers to objects.
Each local, global and static variable, and invocation of \lstinline|malloc| is
associated with a unique object identifier of a separate object in memory.
We have used the same approach in \CHtwoO, but there are some important
differences.
The paragraphs below discuss the relation of \CHtwoO{} with the first and
second version of the \CompCert{} memory model.

\paragraph{Leroy and Blazy (2008)}
In the first version of the \CompCert{} memory model~\cite{ler:bla:08}, objects
were represented as arrays of type-annotated fragments of base values.
Examples of bytes are thus ``the 2nd byte of the short 13'' or ``the 3rd byte of
the pointer \(\pair o i\)''.
Pointers were represented as pairs \(\pair o i\) where \(o\) is an
object identifier and \(i\) the byte offset into the object \(o\).

Since bytes are annotated with types and could only be retrieved from memory
using an expression of matching type, effective types on the level of
base types are implicitly described.
However, this does not match the \Celeven{} standard.
For example, Leroy and Blazy do assign the return value 11 to the following
program:

\begin{lstlisting}
struct S1 { int x; };
struct S2 { int y; };
int f(struct S1 *p, struct S2 *q) {
  p->x = 10;
  q->y = 11;
  return p->x;
}
int main() {
  union U { struct S1 s1; struct S2 s2; } u;
  printf("%d\n", f(&u.s1, &u.s2));
}
\end{lstlisting}

This code strongly resembles example~\cite[6.5.2.3p9]{iso:12} from the
\Celeven{} standard, which is stated to have undefined
behavior\footnote{We have modified the example from the
standard slightly in order to trigger optimizations by \GCC{} and \clang.}.
\GCC{} and \clang{} optimize this code to print 10, which 
differs from the value assigned by Leroy and Blazy.

Apart from assigning too much defined behavior, Leroy and Blazy's treatment of
effective types also prohibits any form of ``bit twiddling''.

Leroy and Blazy have introduced the notion of memory
injections in~\cite{ler:bla:08}.
This notion allows one to reason about memory transformations in an elegant way.
Our notion of memory refinements (Section~\ref{section:refinements})
generalize the approach of Leroy and Blazy to a tree based memory model.

\paragraph{Leroy \etal{} (2012)}
The second version of \CompCert{} memory model~\cite{ler:app:bla:ste:12} is
entirely untyped and is extended with permissions.
Symbolic bytes are only used for pointer values and indeterminate storage,
whereas integer and floating point values are represented as numerical bytes
(integers between 0 and \(2^8-1\)).

We have extended this approach by analogy to bit-representations, representing
indeterminate storage and pointer values using symbolic
bits, and integer values using concrete bits.
This choice is detailed in Section~\ref{section:bits}.

As an extension of \CompCert{}, Robert and Leroy have formally proven soundness
of an alias analysis~\cite{rob:ler:12}.
Their alias analysis is untyped and operates on the RTL intermediate language
of \CompCert{}.

Beringer \etal~\cite{ber:ste:doc:app:14} have developed an extension of
\CompCert's memory injections to reason about program transformations in the
case of separate compilation.
The issues of separate compilation are orthogonal to those that we consider.

\paragraph{Appel \etal{} (2014)}
The \VSTfull{} (\VST) by Appel \etal{} provides a higher-order separation logic
for Verifiable \C, which is a variant of \CompCert's intermediate language
\Clight~\cite{app:14}.

The \VST{} is intended to be used together with the \CompCert{} compiler.
It gives very strong guarantees when done so.
The soundness proof of the \VST{} in conjunction with the correctness proof of
the \CompCert{} compiler ensure that the proven properties also hold for the
generated assembly.

In case the verified program is compiled with a compiler different from
\CompCert{}, the trust in the program is still increased, but no full
guarantees can be given.
That is caused by the fact that \CompCert's intermediate language \Clight{}
uses a specific evaluation order and assigns defined behavior to many
undefined behaviors of the \Celeven{} standard.
For example, \Clight{} assigns defined behavior to violations of effective
types and sequence point violations.
The \VST{} inherits these defined behaviors from \CompCert{} and allows one to
use them in proofs.

Since the \VST{} is linked to \CompCert, it uses \CompCert's coarse
permission system on the level of the operational semantics.
Stewart and Appel~\cite[Chapter 42]{app:14} have introduced a way
to use a more fine grained permission system at the level of the separation
logic without having to modify the \Clight{} operational semantics.
Their approach shows its merits when used for concurrency, in which case the
memory model contains \emph{ghost} data related to the conditions of
locks~\cite{hob:08,hob:app:nar:08}.

\paragraph{Besson \etal{} (2014)}
Besson \etal{} have proposed an extension of the \CompCert{} memory
model that assigns a defined semantics to operations that rely on the numerical values
of uninitialized memory and pointers~\cite{bes:bla:wil:14}.

Objects in their memory model constitute of lazily evaluated values
described by symbolic expressions.
These symbolic expressions are used to delay the evaluation of operations on
uninitialized memory and pointer values.
Only when a concrete value is needed (for example in case of the controlling
expression of an \cifthenelse, \cfor, or \cwhile{} statement), the symbolic
expression is normalized.
Consider:

\begin{lstlisting}
int x, *p = &x;
int y = ((unsigned char*)p)[1] | 1;
// y has symbolic value "2nd pointer byte of p" | 1
if (y & 1) printf("one\n"); // unique normalization -> OK
if (y & 2) printf("two\n"); // no unique normalization -> bad
\end{lstlisting}

The value of \lstinline$((unsigned char*)p)[1] | 1$ is not evaluated eagerly.
Instead, the assignment to \lstinline|y| stores a symbolic expression denoting
this value.
During the execution of the first \cif{} statement, the actual value of
\lstinline|y & 1| is needed.
In this case, \lstinline$y & 1$ has the value 1 for any possible numerical
value of \lstinline|((unsigned char*)p)[1]|.
As a result, the string \lstinline|one| is printed.

The semantics of Besson \etal{} is deterministic by definition.
Normalization of symbolic expressions has defined behavior if and only if the
expression can be normalized to a unique value under any choice of numeral
values for pointer representations and uninitialized storage.
In the second \cif{} statement this is not the case.

The approach of Besson \etal{} gives a semantics to some programming techniques
that rely on the numerical representations of pointers and uninitialized memory.
For example, it gives an appropriate semantics to pointer tagging in which
unused bits of a pointer representation are used to store additional information.

However, as already observed by Kang \etal~\cite{kan:hur:man:gar:zda:vaf:15},
Besson \etal{} do not give a semantics to many other useful cases.
For example, printing the object representation of a struct, or computing
the hash of a pointer value, is inherently non-deterministic.
The approach of Besson \etal{} assigns undefined behavior to these use cases.

The goal of Besson \etal{} is inherently different from ours.
Our goal is to describe the \Celeven{} standard faithfully whereas Besson
\etal{} focus on \defacto{} versions of \C.
They intentionally assign defined behavior to many constructs involving
uninitialized memory that are clearly undefined according to the \Celeven{}
standard, but that are nonetheless faithfully compiled by specific compilers.

\paragraph{Ellison and Ro\c{s}u (2012)}
Ellison and Ro\c{s}u~\cite{ell:ros:12,ell:12} have
developed an executable semantics of the \Celeven{} standard using the
\Kframework\footnote{This work has been superseded by Hathhorn
\etal~\cite{hat:ell:ros:15}, which is described below.}.
Their semantics is very comprehensive and describes all features of a
freestanding \C{} implementation~\cite[4p6]{iso:12} including some parts of
the standard library.
It furthermore has been thoroughly tested against test suites (such as the
\GCC{} torture test suite), and has been used as an oracle for
compiler testing~\cite{reg:che:cuo:eid:ell:yan:12}.

Ellison and Ro\c{s}u support more \C{} features than we do, but
they do not have infrastructure for formal proofs, and thus have
not established any metatheoretical properties about their semantics.
Their semantics, despite being written in a formal framework, should more
be seen as a debugger, a state space search tool, or possibly, as a model
checker.
It is unlikely to be of practical use in proof assistants because it is defined
on top of a large \C{} abstract syntax and uses a rather ad-hoc execution state
that contains over 90 components.

Similar to our work, Ellison and Ro\c{s}u's goal is to \emph{exactly}
describe the \Celeven{} standard.
However, for some programs their semantics is less precise than ours, which
is mainly caused by their memory model, which is less principled than ours.
Their memory model is based on \CompCert's: it is essentially
a finite map of objects consisting of unstructured arrays of bytes.

\paragraph{Hathhorn \etal{} (2015)}
Hathhorn \etal~\cite{hat:ell:ros:15} have extended the work of Ellison and Ro\c{s}u to handle
more underspecification of \Celeven.
Most importantly, the memory model has been extended and support for the type
qualifiers \lstinline|const|, \lstinline|restrict| and \lstinline|volatile| has
been added.

Hathhorn \etal{} have extended the original memory model (which was based on
\CompCert's) with decorations to handle effective types, restrictions on
padding and the \lstinline|restrict| qualifier.
Effective types are modeled by a map that associates a type to each object.
Their approach is less fine-grained than ours and is unable to account for
active variants of unions.
It thus does not assign undefined behavior to important violations of effective
types and in turn does not allow compilers to perform optimizations based
on type-based alias analysis.
For example:

\begin{lstlisting}
// Undefined behavior in case f is called with aliased
// pointers due to effective types
int f(short *p, int *q) { *p = 10; *q = 11; return *p; }
int main() {
  union { short x; int y; } u = { .y = 0 };
  return f(&u.x, &u.y);
}
\end{lstlisting}

The above program has undefined behavior due to a violation of effective types.
This is captured by our tree based memory model, but Hathhorn \etal{} require
the program to return the value 11.
When compiled with \GCC{} or \clang{} with optimization level \lstinline|-O2|,
the compiled program returns the value 10.

Hathhorn \etal{} handle restrictions on padding bytes in the case of unions,
but not in the case of structs.
For example, the following program returns the value 1 according to their
semantics, whereas it has unspecified behavior according to
the \Celeven{} standard~\cite[6.2.6.1p6]{iso:12}
(see also Section~\ref{section:challenge:padding}):

\begin{lstlisting}
struct S { char a; int b; } s;
((unsigned char*)(&s))[1] = 1;
s.a = 10; // Makes the padding bytes of 's' indeterminate
return ((unsigned char*)(&s))[1];
\end{lstlisting}

The restrictions on paddings bytes are implicit in our memory model based on
structured trees, and thus handled correctly.
The above examples provide evidence that a structured approach,
especially combined with metatheoretical results, is more reliable than
depending on ad-hoc decorations.

\paragraph{Kang \etal{} (2015)}
Kang \etal{}~\cite{kan:hur:man:gar:zda:vaf:15} have proposed a memory
model that gives a semantics to pointer to integer casts.
Their memory model uses a combination of numerical and symbolic representations
of pointer values (whereas \CompCert{} and \CHtwoO{} always represent pointer
values symbolically).
Initially each pointer is represented symbolically, but whenever the numerical
representation of a pointer is needed (due to a pointer to integer cast), it is
non-deterministically \emph{realized}.

The memory model of Kang \etal{} gives a semantics to pointer to integer casts
while allowing common compiler optimizations that are invalid in a naive
low-level memory model.
They provide the following motivating example:

\begin{lstlisting}
void g(void) { ... }
int f(void) {
  int a = 0;
  g();
  return a;
}
\end{lstlisting}

In a concrete memory model, there is the possibility that the function
\lstinline|g| is able to \emph{guess} the numerical representation of
\lstinline|&a|, and thereby access or even modify \lstinline|a|.
This is undesirable, because it prevents the widely used optimization of
constant propagation, which optimizes the variable \lstinline|a| out.

In the \CompCert{} and \CHtwoO{} memory model, where pointers are represented
symbolically, it is guaranteed that \lstinline|f| has exclusive control over
\lstinline|a|.
Since \lstinline|&a| has not been leaked, \lstinline|g| can impossibly access
\lstinline|a|.
In the memory model of Kang \etal{} a pointer will only be given a numerical
representation when it is cast to an integer.
In the above code, no such casts appear, and \lstinline|g| cannot access
\lstinline|a|.

The goal of Kang \etal{} is to give a unambiguous mathematical model for
pointer to integer casts, but not necessarily to comply with \Celeven{}
or existing compilers.
Although we think that their model is a reasonable choice, it is unclear
whether it is faithful to the \Celeven{} standard in the context
of Defect Report \#260~\cite{iso:09}.
Consider:

\begin{lstlisting}
int x = 0, *p = 0;
for (uintptr_t i = 0; ; i++) {
  if (i == (uintptr_t)&x) {
    p = (int*)i; break;
  }
}
*p = 15;
printf("%d\n", x);
\end{lstlisting}

Here we loop through the range of integers of type \lstinline|uintptr_t| until
we have found the integer representation \lstinline|i| of \lstinline|&x|, which
we then assign to the pointer \lstinline|p|.

When compiled with \lstinline|gcc -O2| (version 4.9.2), the generated assembly
no longer contains a loop, and the pointers \lstinline|p| and \lstinline|q| are
assumed not to alias.
As a result, the program prints the old value of \lstinline|x|, namely
\lstinline|0|.
In the memory model of Kang \etal{} the pointer obtained via the cast
\lstinline|(int*)i| is exactly the same as \lstinline|&x|.
In their model the program thus has defined behavior and is required to
print \lstinline|15|.

We have reported this issue to the \GCC{} bug tracker\footnote{
See \url{https://gcc.gnu.org/bugzilla/show_bug.cgi?id=65752 }.}.
However it unclear whether the \GCC{} developers consider this a bug or not.
Some developers seem to believe that this program has undefined behavior and
that \GCC's optimizations are thus justified.
Note that the cast \lstinline|(intptr_t)&x| is already forbidden by the
type system of \CHtwoO.

\section{Conclusion}
\label{section:conclusion}

In this paper we have given a formal description of a significant part of the
non-concurrent \Celeven{} memory model.
This formal description has been used in~\cite{kre:wie:15,kre:15:phd} as part
of an an operational, executable and axiomatic semantics of \C.
On top of this formal description, we have provided a comprehensive collection
of metatheoretical results.
All of these results have been formalized using the Coq proof assistant.

It would be interesting to investigate whether our memory model can be
used to help the standard committee to improve future versions of the standard.
For example, whether it could help to improve the standard's prose description
of effective types.
As indicated on page~\pageref{page:effective_types} of
Section~\ref{section:introduction}, the standard's
description is not only ambiguous, but also does not cover its intent to enable
type-based alias analysis.
The description of our memory model is unambiguous and allows
one to express intended consequences formally.
We have formally proven soundness of an abstract version of type-based alias
analysis with respect to our memory model (Theorem~\ref{theorem:aliasing}).

An obvious direction for future work is to extend the memory model with
additional features.
We give an overview of some features of \Celeven{} that are
absent.

\begin{itemize}
\item \textbf{Floating point arithmetic.}
	Representations of floating point numbers and the behaviors of floating
	point arithmetic are subject to a considerable amount of implementation
	defined behavior~\cite[5.2.4.2.2]{iso:12}.
	
	First of all, one could restrict to IEEE-754 floating point arithmetic,
	which has a clear specification~\cite{iee:08}
	and a comprehensive formalization in \Coq~\cite{bol:mel:11}.
	Boldo \etal{} have taken this approach in the context of
	\CompCert~\cite{bol:jou:ler:mel:13} and we see no fundamental problems
	applying it to \CHtwoO{} as well.
	
	Alternatively, one could consider formalizing all implementation defined
	aspects of the description of floating arithmetic in the \Celeven{} standard.
\item \textbf{Bitfields.}
	Bitfields are fields of struct types that occupy individual
	bits~\cite[6.7.2.1p9]{iso:12}.
	We do not foresee fundamental problems adding bitfields to \CHtwoO{} as
	bits already constitute the smallest unit of storage in our memory model.
\item \textbf{Untyped malloc.}
	\CHtwoO{} supports dynamic memory allocation via an operator
	\(\EAlloc \tau e\) close to \Cplusplus's \lstinline|new| operator.
	The \(\EAlloc \tau e\) operator yields a \(\TPtr \tau\) pointer
	to storage for an \(\tau\)-array of length \(e\).
	This is different from \C's \lstinline|malloc| function that yields a
	\lstinline|void*| pointer to storage of unknown type~\cite[7.22.3.4]{iso:12}.
	
	Dynamic memory allocation via the untyped \lstinline|malloc| function 
	is closely related to unions and effective types.
	Only when dynamically allocated storage is actually used, it will
	receive an effective type.
	We expect one could treat \lstinline|malloc|ed objects as unions that
	range over all possible types that fit.
\item \textbf{Restrict qualifiers.}
	The \lstinline|restrict| qualifier can be applied to
	any pointer type to
	express that the pointers do not alias.
	Since the description in the \Celeven{} standard~\cite[6.7.3.1]{iso:12} is
	ambiguous (most notably, it is unclear how it interacts with nested
	pointers and data types), formalization and metatheoretical proofs may
	provide prospects for clarification.

\item \textbf{Volatile qualifiers.}
	The \lstinline|volatile| qualifier can be applied to any type to indicate
	that its value may be changed by an external process.
	It is meant to prevent compilers from optimizing away data accesses or
	reordering these~\cite[footnote 134]{iso:12}.
	Volatile accesses should thus be considered as a form of I/O.

\item \textbf{Concurrency and atomics.}
	Shared-memory concurrency and atomic operations are the main omission from
	the \Celeven{} standard in the \CHtwoO{} semantics.
	Although shared-memory concurrency is a relatively new addition to the \C{}
	and \Cplusplus{} standards, there is already a large body of ongoing work in
	this direction, see for example~\cite{sew:sar:owe:zap:myr:10,%
bat:owe:sar:sew:web:11,sev:vaf:zap:jag:sew:13,vaf:bal:cha:mor:zap:15,
bat:mem:nie:pic:sew:05}.
	These works have led to improvements of the standard text.

	There are still important open problems in the area of concurrent
	memory models for already small sublanguages of
	\C~\cite{bat:mem:nie:pic:sew:05}.
	Current memory models for these sublanguages involve just features specific
	to threads and atomic operations whereas we have focused on
	structs, unions, effective types and indeterminate memory.
	We hope that both directions are largely orthogonal and will
	eventually merge into a fully fledged \Celeven{} memory model and semantics.
\end{itemize}

\paragraph{Acknowledgments.}
I thank my supervisors Freek Wiedijk and Herman Geuvers for their helpful
suggestions.
I thank Xavier Leroy, Andrew Appel, Lennart Beringer and Gordon Stewart for
many discussions on the \CompCert{} memory model, and the anonymous reviewers
for their feedback.
This work is financed by the Netherlands Organisation for Scientific 
Research (NWO), project 612.001.014.

\end{document}